\definecolor{grey1}{rgb}{0.5,0.5,0.5}
\newenvironment{breakablealgorithm}
  {
   \begin{center}
     \refstepcounter{algorithm}
     \hrule height.8pt depth0pt \kern2pt
     \renewcommand{\caption}[2][\relax]{
       {\raggedright\textbf{\fname@algorithm~\thealgorithm} ##2\par}%
       \ifx\relax##1\relax 
         \addcontentsline{loa}{algorithm}{\protect\numberline{\thealgorithm}##2}%
       \else 
         \addcontentsline{loa}{algorithm}{\protect\numberline{\thealgorithm}##1}%
       \fi
       \kern2pt\hrule\kern2pt
     }
  }{
     \kern2pt\hrule\relax
   \end{center}
  }
\definecolor{grau}{rgb}{0.8,0.8,0.8}
\newcommand{\chen}[1]{\color{orange}}
\numberwithin{equation}{section}
\numberwithin{equation}{section}
\newtheorem{theorem}{Theorem}[section]
\newtheorem{lemma}[theorem]{Lemma}
\newtheorem{proposition}{Proposition}[section]
\theoremstyle{remark}
\newtheorem{remark}{Remark}
\newtheorem{definition}[theorem]{Definition}
\newtheorem{assumption}{Assumption}
\newtheorem{result}{Result}[section]
\DeclareMathOperator*{\argmax}{arg\,max}
\newcommand{\prob}{{\mathbb{P}}}
\newcommand{\var}{{\mathrm{var}}}
\newcommand{\expect}{\mathbb{E}}
\newcommand{\Optilde}{{\widetilde{O}_{p}}}
\newcommand{\transpose}{^{\mathrm{T}}}
\newcommand{\bdelta}{{\bm{\delta}}}
\newcommand{\calA}{{\mathcal{A}}}
\newcommand{\calB}{{\mathcal{B}}}
\newcommand{\calD}{{\mathcal{D}}}
\newcommand{\calE}{{\mathcal{E}}}
\newcommand{\calH}{{\mathcal{H}}}
\newcommand{\calI}{{\mathcal{I}}}
\newcommand{\calK}{{\mathcal{K}}}
\newcommand{\calL}{{\mathcal{L}}}
\newcommand{\calM}{{\mathcal{M}}}
\newcommand{\calN}{{\mathcal{N}}}
\newcommand{\calQ}{{\mathcal{Q}}}
\newcommand{\calS}{{\mathcal{S}}}
\newcommand{\calT}{{\mathcal{T}}}
\newcommand{\balpha}{{\boldsymbol{\alpha}}}
\newcommand{\bU}{{\mathbf{U}}}
\newcommand{\bP}{{\mathbf{P}}}
\newcommand{\bY}{{\mathbf{Y}}}
\newcommand{\bv}{{\mathbf{v}}}
\newcommand{\bx}{{\mathbf{x}}}
\newcommand{\bA}{{\mathbf{A}}}
\newcommand{\bB}{{\mathbf{B}}}
\newcommand{\bC}{{\mathbf{C}}}
\newcommand{\bD}{{\mathbf{D}}}
\newcommand{\bE}{{\mathbf{E}}}
\newcommand{\bF}{{\mathbf{F}}}
\newcommand{\bG}{{\mathbf{G}}}
\newcommand{\bH}{{\mathbf{H}}}
\newcommand{\bM}{{\mathbf{M}}}
\newcommand{\bK}{{\mathbf{K}}}
\newcommand{\bL}{{\mathbf{L}}}
\newcommand{\bJ}{{\mathbf{J}}}
\newcommand{\bR}{{\mathbf{R}}}
\newcommand{\bQ}{{\mathbf{Q}}}
\newcommand{\bS}{{\mathbf{S}}}
\newcommand{\bW}{{\mathbf{W}}}
\newcommand{\bZ}{{\mathbf{Z}}}
\newcommand{\ba}{{\mathbf{a}}}
\newcommand{\bw}{{\mathbf{w}}}
\newcommand{\bh}{{\mathbf{h}}}
\newcommand{\br}{{\mathbf{r}}}
\newcommand{\bt}{{\mathbf{t}}}
\newcommand{\bu}{{\mathbf{u}}}
\newcommand{\by}{{\mathbf{y}}}
\newcommand{\bz}{{\mathbf{z}}}
\newcommand{\be}{{\mathbf{e}}}
\newcommand{\bb}{{\mathbf{b}}}
\newcommand{\bV}{{\mathbf{V}}}
\newcommand{\bkappa}{{\bm{\kappa}}}
\newcommand{\bgamma}{{\bm{\gamma}}}
\newcommand{\bGamma}{{\bm{\Gamma}}}
\newcommand{\bbeta}{{\bm{\beta}}}
\newcommand{\bDelta}{{\bm{\Delta}}}
\newcommand{\bomega}{{\bm{\omega}}}
\newcommand{\bsigma}{{\bm{\sigma}}}
\newcommand{\bSigma}{{\bm{\Sigma}}}
\newcommand{\bXi}{{\bm{\Xi}}}
\newcommand{\bxi}{{\boldsymbol{\xi}}}
\newcommand{\beps}{{\bm{\varepsilon}}}
\newcommand{\eye}{{\mathbf{I}}}
\newcommand{\one}{{\mathbf{1}}}
\newcommand{\bTheta}{{\bm{\Theta}}}
\newcommand{\btheta}{{\bm{\theta}}}
\newcommand{\bzeta}{{\bm{\zeta}}}
\newcommand{\bmu}{{\bm{\mu}}}
\newcommand{\bnu}{{\bm{\nu}}}
\newcommand{\zero}{{\bm{0}}}
\newcommand{\eps}{\epsilon}
\begin{document}

{
  \title{\bf Uncertainty quantification for mixed membership in multilayer networks with degree heterogeneity using Gaussian variational inference}
  \author{Fangzheng Xie \\
    Department of Statistics\\
    Indiana University\\
    and \\
    Hsin-Hsiung Huang\thanks{
    Correspondence should be addressed to Hsin-Hsiung Huang (Hsin-Hsiung.Huang@ucf.edu)}
    \hspace{.2cm}\\
    School of Data, Mathematical
    and Statistical Sciences\\ University of Central Florida}
  \maketitle
}


\maketitle

\begin{abstract}
Analyzing multilayer networks is central to understanding complex relational measurements collected across multiple conditions or over time. A pivotal task in this setting is to quantify uncertainty in community structure while appropriately pooling information across layers and accommodating layer-specific heterogeneity. Building on the multilayer degree-corrected mixed-membership (ML-DCMM) model, which captures both stable community membership profiles and layer-specific vertex activity levels, we propose a Bayesian inference framework based on a spectral-assisted likelihood. We then develop a computationally efficient Gaussian variational inference algorithm implemented via stochastic gradient descent. Our theoretical analysis establishes a variational Bernstein--von Mises theorem, which provides a frequentist guarantee for using the variational posterior to construct confidence sets for mixed memberships. We demonstrate the utility of the method on a U.S. airport longitudinal network, where the procedure yields robust estimates, natural uncertainty quantification, and competitive performance relative to state-of-the-art methods.
\end{abstract}

\noindent%
{\it Keywords:} Bernstein-von Mises Theorem, Computational Efficiency, Mixed Membership Inference, Spectral-Assisted Likelihood, Stochastic Gradient Descent

\newpage

\setcounter{tocdepth}{1}
\tableofcontents

\section{Introduction}
\label{sec:introduction}

Networks are a class of non-Euclidean data that characterize relationships among multiple entities rather than their individual attributes. Network data arise in a wide variety of application domains, including social science \citep{doi:10.1126/science.1236498,wasserman1994social}, neuroscience \citep{eichler2017complete,10.1001/jamapsychiatry.2022.0020}, and microbiome studies \citep{doi:10.1073/pnas.1809349115}, among others. Statistical network analysis typically views an observed network as a random realization from a latent network distribution, where vertices are treated as fixed and edges, which encode relational features, are modeled as random variables. Within this framework, numerous probabilistic network models have been proposed, such as the stochastic block model \citep{holland_stochastic_1983} and its degree-corrected and mixed membership variants \citep{airoldi_mixed_2008,JIN2023,PhysRevE.83.016107}, generalized random dot product graphs \citep{https://doi.org/10.1111/rssb.12509,10.1007/978-3-540-77004-6_11}, latent space models \citep{Hoff01122002}, and exchangeable random graphs \citep{10.1111/rssb.12233}. We refer readers to \cite{JMLR:v18:16-480,JMLR:v18:17-448,gao_minimax_2021} for reviews.

Multilayer networks \citep{10.1093/comnet/cnu016}, in which multiple network layers are observed over a common set of vertices, have become increasingly prevalent in contemporary applications such as protein--protein interaction networks \citep{10.1371/journal.pcbi.1000807} and public transportation systems \citep{bts-style-manual-2025}. Such multilayer networks naturally arise when relational data are collected across different sources or contexts, for example social interactions across distinct platforms, brain connectivity under various cognitive states, or international trade flows over a sequence of years or across product categories. These data exhibit intrinsic heterogeneity across layers and often involve large numbers of vertices and edges. This combination calls for network models and inferential methods that are expressive enough to capture salient structural features while remaining computationally tractable at scale, a requirement that goes well beyond their single-layer counterparts.

Community detection \citep{JMLR:v18:16-480} plays a central role in statistical network analysis, with the stochastic block model \citep{holland_stochastic_1983} providing a foundational framework for this task. Multilayer extensions, such as the multilayer stochastic block model and its degree-corrected variants, have been widely investigated for community detection in multilayer networks \citep{Agterberg28072025,bhattacharyya2018spectral,bhattacharyya2020consistent,7990044,6265414,PhysRevX.6.031005,10.1093/biomet/asz068,Lei02102023,doi:10.1073/pnas.1718449115,10023524,10.1111/rssb.12200,paul2016consistent,paul2020spectral,5360349,xie2024biascorrected,6758385,doi:10.1080/01621459.2016.1260465}. Nevertheless, these models typically assume that vertices are partitioned into disjoint communities and therefore do not accommodate mixed memberships or the uncertainty associated with community allocation.

Uncertainty quantification for heterogeneous multilayer networks has been studied in \cite{ArroyoEtAlCOSIE,xie2024biascorrected,zheng2022limit} under the common subspace independent edge (COSIE) model. In this model, each layer has a low-rank expected adjacency matrix that may vary across layers, but all layers share a common principal subspace. This framework effectively captures layer-wise eigenvalue heterogeneity, but does not account for layer-wise degree heterogeneity, which is essential in many longitudinal and multilayer networks \citep{he2023semiparametric}. Moreover, existing methods in this line of work are primarily based on spectral decompositions of adjacency matrices rather than likelihood-based modeling, which limits their ability to provide fully probabilistic uncertainty quantification.

Bayesian methods have recently been developed for multilayer network models to facilitate uncertainty quantification \citep{loyal2024fast,loyal2025generalized,loyal2023eigenmodel,zhao2024structured}. These approaches typically adopt a dynamic perspective, in which latent feature vectors evolve over time according to a Markov or random-walk process, and inference is carried out via Markov chain Monte Carlo or structured mean-field variational inference. Convergence rates for the corresponding posteriors have been established in several settings. However, for these dynamic models the validity of the resulting uncertainty quantification, such as the coverage properties of posterior credible sets, has not been rigorously justified.

In this paper, we focus on the \emph{multilayer degree-corrected mixed membership} (ML-DCMM) network model, which extends the degree-corrected mixed membership (DCMM) model of \cite{JIN2023} to the multilayer setting. In the single-layer DCMM model with $n$ vertices labeled by $\{1,\ldots,n\}$ and $d$ communities, each vertex $i$ is associated with a $d$-dimensional probability vector $\bz_i$ that encodes its membership allocation across communities, together with a degree correction parameter $\theta_i$ that accounts for degree heterogeneity. The edge probability between vertices $i$ and $j$ is given by $\theta_i\theta_j\bz_i\transpose\bB\bz_j$, where $\bB$ is a $d\times d$ block probability matrix. In the ML-DCMM network model, each layer is a DCMM network on the same vertex set. The community membership profile $\bz_i$ of each vertex is shared across layers, capturing an intrinsic and stable structural role, whereas the degree correction parameters and block probability matrices are allowed to vary by layer, capturing dynamic or context-specific activity patterns. This structure is particularly relevant for applications such as international trade, where a country's fundamental economic profile evolves slowly, but its trade volumes and partner relationships fluctuate over time in response to policy changes or economic shocks.

The overarching goal of this work is to develop a statistically principled and computationally efficient framework for uncertainty quantification of the mixed membership structure in heterogeneous multilayer networks generated under the ML-DCMM network model. Our main contributions are summarized as follows:
\begin{enumerate}[(1), noitemsep, topsep = 0ex]
  \item We construct a Bayesian inference framework for the ML-DCMM network model by introducing a \emph{spectral-assisted likelihood} that combines spectral information from the adjacency matrices with the likelihood structure. Building on this formulation, we develop a Gaussian variational inference (VI) algorithm that is scalable to multiple networks with a large number of layers. A key ingredient is a block Cholesky factorization of the joint Fisher information matrix of the membership profile and degree correction parameters for each vertex. This yields a structured class of covariance matrices that substantially reduces computational cost relative to a fully unstructured Gaussian VI while still allowing for nontrivial dependence between membership profiles and degree parameters.

  \item We establish the theoretical validity of using the variational posterior for uncertainty quantification by proving a variational Bernstein--von Mises theorem. This result clarifies a trade-off between computational efficiency and the fidelity of uncertainty quantification. A fully unstructured Gaussian VI can, in principle, deliver valid uncertainty quantification but is computationally burdensome, whereas a structured mean-field VI is computationally attractive but generally fails to capture the correct uncertainty. In contrast, we show that the proposed Gaussian VI, with a carefully structured covariance matrix, achieves a practical compromise: it is both computationally efficient and asymptotically valid for uncertainty quantification.
\end{enumerate}

The remainder of the paper is organized as follows. Section \ref{sec:MLDCMM} introduces the ML-DCMM model and presents a preliminary estimation algorithm based on layer-wise mixed-SCORE and aggregation. Section \ref{sec:gaussian_variational_inference} develops the proposed Gaussian VI method via the spectral-assisted likelihood and exploits the Cholesky factorization of the per-vertex Fisher information matrix. Section \ref{sec:theoretical_properties} presents the main theoretical results, including Bernstein--von Mises theorems for both the exact posterior and the variational posterior. Section \ref{sec:real_world_data_analysis} illustrates the methodology on a U.S. airport transportation network dataset. Section \ref{sec:conclusion} concludes with a discussion.

\noindent
\textbf{Notations:} Given a positive integer $n\in\mathbb{N}_+$, let $[n] = \{1,2,\ldots,n\}$ denote the set of consecutive positive integers from $1$ through $n$. For a symmetric matrix $\bA\in\mathbb{R}^{n\times n}$, let $\lambda_k(\bA)$ denote its $k$th largest eigenvalue, ordered so that $\lambda_1(\bA)\geq\ldots\geq\lambda_n(\bA)$. For any matrix $\bA\in\mathbb{R}^{m\times n}$, let $\sigma_k(\bA)$ denote its $k$th largest singular value, ordered so that $\sigma_1(\bA)\geq\ldots\geq\sigma_{\min(m,n)}(\bA)$. Given a vector $\ba = [a_1,\ldots,a_n]\transpose\in\mathbb{R}^n$, define $\textsf{diag}(\ba) = \textsf{diag}(a_1,\ldots,a_n)$ as the diagonal matrix whose $k$th diagonal element is $a_k$. For a vector $\bu = [u_1,\ldots,u_n]\transpose\in\mathbb{R}^n$, we write $\|\bu\|_2 = (\sum_{i = 1}^nu_i^2)^{1/2}$ for its Euclidean norm. When the dimension is clear from context, $\be_i$ denotes the $i$th standard basis vector, whose $i$th element is $1$ and all other elements are zero. We use $\one_d$ for the $d$-dimensional vector of ones and $\eye_d$ for the $d\times d$ identity matrix.

For two nonnegative sequences $(a_n)_{n = 1}^\infty$ and $(b_n)_{n = 1}^\infty$, we write $a_n \gg b_n$ if $\lim_{n\to\infty}a_n/b_n = \infty$, $a_n \ll b_n$ if $b_n \gg a_n$, $a_n \gtrsim b_n$ if there exists a constant $C > 0$ such that $a_n\geq Cb_n$ for all $n\in\mathbb{N}_+$, $a_n\lesssim b_n$ if $b_n\gtrsim a_n$, and $a_n\asymp b_n$ if both $a_n\gtrsim b_n$ and $a_n\lesssim b_n$ hold. For a sequence of random variables $(X_n)_{n = 1}^\infty$ and a nonnegative deterministic sequence $(\epsilon_n)_{n = 1}^\infty$, we write $X_n = \Optilde(\epsilon_n)$ if for any $c > 0$ there exist constants $K_c,M_c > 0$ such that $\prob(|X_n| > K_c\epsilon_n)\leq M_c n^{-c}$. Similarly, a sequence of events $(\calE_n)_{n = 1}^\infty$ is said to occur with high probability (w.h.p.) if for any $c > 0$ there exists a constant $K_c > 0$ with $\prob(\calE_n)\geq 1 - K_c n^{-c}$. For two probability measures $\mu_1(\mathrm{d}\bx)$ and $\mu_2(\mathrm{d}\bx)$ on $\mathbb{R}^d$, the total variation distance is defined as
\[
d_{\mathrm{TV}}(\mu_1, \mu_2) = \sup_{\calA} \bigg|\int_\calA \mu_1(\mathrm{d}\bx) - \int_\calA \mu_2(\mathrm{d}\bx)\bigg|.
\]
If $\mu_1$ and $\mu_2$ are mutually absolutely continuous, we denote their Kullback--Leibler (KL) divergence by
\[
D_{\mathrm{KL}}(\mu_1\|\mu_2) = \expect_{\mu_1}\bigg[\log\bigg(\frac{\mathrm{d}\mu_1(\bx)}{\mathrm{d}\mu_2}\bigg)\bigg].
\]
Finally, we write $\calN(\bx\mid\bmu, \bSigma)$ to indicate that $\bx\sim\mathrm{N}(\bmu, \bSigma)$.

\section{Preliminaries}
\label{sec:MLDCMM}



We begin by reviewing the degree-corrected mixed-membership (DCMM) network model introduced in \cite{JIN2023} for single-layer network data. For generality, we assume networks are weighted and the edge weight densities are in the following exponential family
\begin{align}
\label{eqn:exponential_family}
\mathcal{F} = \{f(x; \mu) = h(x)\exp\{\eta(\mu)x - B(\mu)\}:\mu\in\mathcal{I} \subset (0, +\infty)\}
\end{align} 
with respect to some underlying $\sigma$-finite measure, where $B,\eta:\calI\to\mathbb{R}$ are continuously differentiable functions such that $B'(\mu) =  \mu\eta'(\mu)$. By construction, if $X$ has density $f(x; \mu)$, then $\expect X = \mu$. Throughout, we assume that each (random) weighted edge in the networks follows an exponential family distribution of the form \eqref{eqn:exponential_family}.For the theoretical analysis we assume that $\calI$ is a compact interval contained in $(0,+\infty)$; see Assumption~\ref{assumption:likelihood}(a).

\begin{definition}[DCMM]
\label{def:DCMM}
Consider a network over the set of vertices $[n]$. For each $i\in[n]$, let $\bz_i = [z_{i1},\ldots,z_{id}]\transpose$ be the membership profile explaining the percentage of community assignment for vertex $i$, and it is subject to the constraints $\sum_{k = 1}^dz_{ik} = 1$ and $z_{ik}\in[0, 1]$ for all $k\in[d]$. Denote by $\bZ = [\bz_1,\ldots,\bz_n]\transpose$ and let $\bTheta = \textsf{diag}(\theta_1,\ldots,\theta_n)$, where $\theta_i\in\calI$ is the degree-correction parameter for vertex $i$ capturing the degree heterogeneity. Suppose $\bB = [B_{kl}]_{d\times d}\in(0, 1)^{d\times d}$ is a block mean matrix such that $B_{kk} = 1$ for all $k\in [d]$. Then, we say that $\bA = [A_{ij}]_{n\times n}$ is the adjacency matrix of a (weighted) degree-corrected mixed-membership network, denoted by $\bA\sim\textsf{DCMM}(\bB, \bZ, \bTheta)$, if $A_{ij}\sim f(\cdot; P_{ij})$ independently for all $i,j\in[n]$, $i\leq j$, and $A_{ij} = A_{ji}$ if $i > j$, where
$P_{ij} = \theta_i\bz_i\transpose\bB\bz_j\theta_j$. 
\end{definition}
As mentioned in \cite{JIN2023}, DCMM networks form a rich class of models that encompass several popular examples in network science. For instance, the stochastic block model \citep{holland_stochastic_1983} coincides with the DCMM network model when there is no degree heterogeneity ($\theta_i = 1$ for all $i\in[n]$) and each vertex can only belong to one community ($\bz_i = \be_k$ for some $k\in[d]$ for all $i\in[n]$). Similarly, the mixed-membership stochastic block model \citep{airoldi_mixed_2008} is a special case of the DCMM network model without degree heterogeneity. Another perspective on the DCMM network model is through the signal-plus-noise formulation: if $\bA\sim\textsf{DCMM}(\bB, \bZ, \bTheta)$, then $\bA = \bP + \bE$, where $\bP = \bTheta\bZ\bB\bZ\transpose\bTheta$, 
and $\bE = [E_{ij}]_{n\times n}$ is a symmetric noise matrix whose upper triangular entries are independent mean-zero random variables. 

Now we are in a position to introduce the multilayer degree-corrected mixed membership (ML-DCMM) network model. Roughly speaking, the ML-DCMM network model is a collection of independent DCMM networks over a common vertex set, and they share a consensus membership profile matrix across layers. Formally, we define the ML-DCMM network model as follows. 
\begin{definition}[ML-DCMM]
\label{def:ML_DCMM}
Consider $m$ networks over a common set of vertices $[n]$. Let $\bZ = [\bz_1,\ldots,\bz_n]\transpose$ be a consensus membership profile matrix, where $\bz_i = [z_{i1},\ldots,z_{id}]\transpose$ is the membership profile for vertex $i$ such that $\sum_{k = 1}^dz_{ik} = 1$ and $z_{ik}\geq 0$ for all $k\in[d]$. For each $t\in[m]$, suppose $\bTheta^{(t)} = \textsf{diag}(\theta_1^{(t)},\ldots,\theta_n^{(t)})$ is the degree correction matrix for layer $t$, and $\bB^{(t)} = [B_{kl}^{(t)}]_{d\times d}$ is the block mean matrix for layer $t$, where $\theta_i^{(t)}\in\calI$ and $B_{kl}^{(t)}\in(0, 1)$ hold for all $i\in[n]$, $k,l\in[d]$. Then, we say that $(\bA^{(t)})_{t = 1}^m = ([A_{ij}^{(t)}]_{n\times n})_{t = 1}^m$ are the adjacency matrices of a collection of (weighted) multilayer degree-corrected mixed-membership networks, denoted by $(\bA^{(1)},\ldots,\bA^{(m)})\sim\textsf{ML-DCMM}(\bZ, (\bB^{(t)})_{t = 1}^m, (\bTheta^{(t)})_{t = 1}^m)$, if $\bA^{(t)}\sim\textsf{DCMM}(\bZ, \bB^{(t)}, \bTheta^{(t)})$ independently for all $t\in[m]$. 
\end{definition}


Estimating the consensus membership profile matrix $\bZ$ in the ML-DCMM network model is a nontrivial task. A natural heuristic is to apply the spectral-decomposition-based mixed-SCORE algorithm \citep{JIN2023} for each layer separately and then aggregate by averaging. We describe this procedure in Algorithm \ref{alg:aggregated_mixed_SCORE} below. 

\begin{breakablealgorithm}
\caption{Aggregated mixed-SCORE}
\label{alg:aggregated_mixed_SCORE}
\begin{algorithmic}[1]
\State \textbf{Input:} Adjacency matrices $(\bA^{(t)})_{t = 1}^m$ and number of communities $d$.
\For{$t = 1,2,\ldots,m$}
    \State Apply mixed-SCORE to obtain estimators $(\overline{\bZ}^{(t)}, \widetilde{\bTheta}^{(t)}, \overline{\bB}^{(t)})$ from $\bA^{(t)}$. 
    \State Find a permutation matrix $\bC^{(t)}\in\{0, 1\}^{d\times d}$ that minimizes $\|\overline{\bZ}^{(t)}\bC^{(t)} - \overline{\bZ}^{(1)}\|_{\mathrm{F}}^2$. 
    \State Compute $\widetilde{\bB}^{(t)} = (\bC^{(t)})\transpose\overline{\bB}^{(t)}(\bC^{(t)})$. 
\EndFor
\State \textbf{Output: } Aggregated estimators $\widetilde{\bZ} = (1/m)\sum_{t = 1}^m\overline{\bZ}^{(t)}\bC^{(t)}$, $(\widetilde{\bTheta}^{(t)})_{t = 1}^m$, and $(\widetilde{\bB}^{(t)})_{t = 1}^m$. 
\end{algorithmic}
\end{breakablealgorithm}
Although the aggregated mixed-SCORE estimator given by Algorithm \ref{alg:aggregated_mixed_SCORE} is straightforward and intuitive, two issues remain. First, Algorithm \ref{alg:aggregated_mixed_SCORE} aggregates the layer-wise mixed-SCORE estimator via simple averaging, which implicitly assigns equal weight to each layer. Nonetheless, for any weight vector $\bomega = [\omega_1,\ldots,\omega_m]\transpose$ with $\sum_{t = 1}^m\omega_t = 1$ and $\omega_t\geq 0$, the weighted estimator $\widetilde{\bZ}_\bomega = \sum_{t = 1}^m\omega_t\overline{\bZ}^{(t)}\bC^{(t)}$ also pools the shared membership profiles across layers. Nevertheless, it is unclear how to choose an optimal, possibly data-dependent weighting scheme. Second, the mixed-SCORE estimators rely exclusively on the truncated spectral information of the adjacency matrices and do not fully exploit the likelihood structure. These limitations call for a statistically principled framework for uncertainty quantification of the membership profiles in ML-DCMM networks. 

\section{Spectral-Assisted Gaussian Variational Inference}
\label{sec:gaussian_variational_inference}

This section elaborates on the proposed spectral-assisted Gaussian variational inference for the ML-DCMM network model. Following the spirit of \cite{wuxie2025,wu2022statistical,XieWu2024,doi:10.1080/01621459.2021.1948419}, we begin by considering the following oracle problem for a single vertex $i$ across layers $t\in[m]$: the parameters of inferential interest are the membership profile $\bz_i$ and the degree correction parameter $\btheta_i = [\theta_i^{(1)},\ldots,\theta_i^{(m)}]\transpose$, while the remaining parameters, \emph{i.e.}, $(\bz_{0j}, \btheta_{0j})_{j\in[n]\backslash\{i\}}$, and $(\bB_0^{(t)})_{t = 1}^m$, are treated as known at their true values. Here, the subscript $0$ denotes the true value of a parameter. For each $i\in[n]$, the oracle problem can be described by the model
\begin{align}\label{eqn:oracle_problem}
A_{ij}^{(t)} \sim f(\cdot; P_{ij}^{(t)}),\quad P_{ij}^{(t)} = \theta_i^{(t)}\bz_i\transpose\bB_0^{(t)}\bz_{0j}\theta_{0j}^{(t)},\quad j\in[n]\backslash\{i\},\quad t\in[m].
\end{align}
The oracle problem \eqref{eqn:oracle_problem} is not practically implementable because it requires knowledge of the true values of unknown parameters. In the context of single-layer random dot product graphs, prior work \citep{wuxie2025,wu2022statistical,doi:10.1080/01621459.2021.1948419} addressed this challenge by replacing the unknown parameters with the adjacency spectral embedding (\emph{i.e.}, the scaled eigenvectors corresponding to the top-$d$ eigenvalues of the adjacency matrix). However, this strategy is not entirely applicable to the ML-DCMM network model due to (i) the more intricate relationship between the spectral decomposition of the adjacency matrices and the membership profiles, and (ii) the heterogeneity introduced by the layer-specific degree correction parameters. In what follows, we first introduce the spectral-assisted likelihood and then present the corresponding Gaussian variational inference method. 

\subsection{Spectral-Assisted Likelihood}
\label{sub:spectral_assisted_likelihood}
The main preliminary for spectral-assisted likelihood is to construct suitable spectral-based estimators for $(\bz_{0j}, \btheta_{0j})_{j\in[n]\backslash\{i\}}$ and $(\bB_0^{(t)})_{t = 1}^m$. To address the two challenges noted above, we estimate $(\bz_{0j})_{j\in[n]\backslash\{i\}}$, $(\btheta_{0j})_{j\in[n]\backslash\{i\}}$, $(\bB_0^{(t)})_{t = 1}^m$ using the aggregated mixed-SCORE estimators $(\widetilde{\bz}_j)_{j\in[n]\backslash\{i\}}$, $(\widetilde{\btheta}_j)_{j\in[n]\backslash\{i\}}$, $(\widetilde{\bB}^{(t)})_{t = 1}^m$, where $\widetilde{\bz}_j$ denotes the $j$th row of $\widetilde{\bZ}$, $\widetilde{\btheta}_j = [\widetilde{\theta}_j^{(1)},\ldots,\widetilde{\theta}_j^{(m)}]\transpose$, and $\widetilde{\theta}_j^{(t)}$ is the $j$th diagonal entry of $\widetilde{\bTheta}^{(t)}$. 

Now denote by
\begin{align*}
\Delta^{d - 1} &= \bigg\{\bz = [z_1,\ldots,z_d]\transpose\in\mathbb{R}^d:\sum_{k = 1}^dz_k = 1,z_k\geq 0\text{ for all }k\in[d]\bigg\},\\
\calS^{d - 1} &= \bigg\{\bz^* = [z_1^*,\ldots,z_{d - 1}^*]\transpose\in\mathbb{R}^{d - 1}:\sum_{k = 1}^{d - 1}z_k^* \leq 1,z_k^*\geq 0\text{ for all }k\in[d - 1]\bigg\},
\end{align*}
and define the one-to-one function $\bkappa:\calS^{d - 1}\to\Delta^{d - 1}$ by $\kappa(\bz^*) = [z_1^*,\ldots,z_{d - 1}^*,1 - \sum_{k = 1}^{d - 1}z_k^*]\transpose$, where $\bz^* = [z_1^*,\ldots,z_{d - 1}^*]\transpose$. Here, $\Delta^{d - 1}$ is the $(d - 1)$-dimensional unit simplex of all probability vectors in $\mathbb{R}^d$, $\calS^{d - 1}$ is a compact $(d - 1)$-dimensional reparameterization domain for $\Delta^d$, and $\bz^*$ corresponds to the first $(d - 1)$ entries of $\bkappa(\bz^*)$. Writing $\bJ = [\eye_{d - 1}, -\one_{d - 1}]\transpose$, it is immediate that $\bkappa(\bz_i^*) = \bJ\bz_i^* + \be_d$, and the inverse map is given by $\bkappa^{-1}(\bz_i) = (\bJ\transpose\bJ)^{-1}\bJ\transpose(\bz_i - \be_d)$ for any $\bz_i\in\Delta^{d - 1}$. 
Given the preliminary estimators $(\widetilde{\bZ}, (\widetilde{\btheta}_j)_{j = 1}^n, (\widetilde{\bB}^{(t)})_{t = 1}^m)$, we define the following spectral-assisted log-likelihood function:
\begin{align}\label{eqn:spectral_assisted_loglik}
\widetilde{\ell}_{in}(\bz_i^*, \btheta_i) & = \sum_{t = 1}^m\sum_{j = 1}^n\log f(A_{ij}^{(t)}, \theta_i^{(t)}\bkappa(\bz_i^*)\transpose\widetilde{\bB}^{(t)}\widetilde{\bz}_j\widetilde{\theta}_{j}^{(t)}), 
\end{align}
Note that \eqref{eqn:spectral_assisted_loglik} and \eqref{eqn:oracle_problem} differ by a term corresponding to $A_{ii}^{(t)}\sim f(\cdot;P_{ij}^{(t)})$, but such a distinction is immaterial and does not affect the theoretical properties in Section \ref{sec:theoretical_properties}. 
\begin{remark}[Separability]
The proposed spectral-assisted log-likelihood \eqref{eqn:spectral_assisted_loglik} enjoys the following convenient advantage:
Once the preliminary estimators $(\widetilde{\bZ}, (\widetilde{\btheta}_j)_{j = 1}^n, (\widetilde{\bB}^{(t)})_{t = 1}^m)$ are obtained, the spectral-assisted log-likelihood $\widetilde{\ell}_{in}$ is only a function of $(\bz_i^*, \btheta_i)$ and no longer depends on the remaining unknown parameters.
This decoupling arises because replacing the unknown $\bz_j$, $\btheta_j$, $\bB^{(t)}$ with their preliminary estimators decomposes the interaction between the vertex-$i$-specific parameters $(\bz_i, \btheta_i)$ and the vertex-$j$-specific parameters $(\bz_j, \btheta_j)$ for $j\neq i$. 
Such separability is beneficial for both theoretical analysis and practical computation. 
\end{remark}

\subsection{Gaussian Variational Inference}
\label{sub:gaussian_variational_inference}

To describe VI, we first formulate the inference problem associated with \eqref{eqn:spectral_assisted_loglik} in a Bayesian framework. Let $\pi_\theta(\cdot)$ be a prior density for the degree correction parameters $(\theta_i^{(t)}:i\in[n],t\in[m])$, and let $\pi_{\bz^*}(\cdot)$ be a prior density function over $\calS^{d - 1}$. For any $i\in[n]$, we consider the following joint posterior distribution of $(\bz_i^*, \btheta_i)$ given by
\begin{align}\label{eqn:posterior}
\pi_{(\bz_i^*, \btheta_i)}(\bz_i^*, \btheta_i\mid\mathbb{A}) = \frac{\exp\{\widetilde{\ell}_{in}(\bz_i^*, \btheta_i)\}\pi_{\bz^*}(\bz_i^*)\prod_{t = 1}^m\pi_\theta(\theta_i^{(t)})}{\iint_{\calS^{d - 1}\times\calI^m}\exp\{\widetilde{\ell}_{in}(\bz_i^*, \btheta_i)\}\pi_{\bz^*}(\bz_i^*)\prod_{t = 1}^m\pi_\theta(\theta_i^{(t)})\mathrm{d}\bz_i^*\mathrm{d}\btheta_i},
\end{align}
where we denote by $\mathbb{A} = (\bA^{(t)})_{t = 1}^m$. The above posterior distribution provides a natural ground for uncertainty quantification. Classical Bayesian inference proceeds by computing this posterior using an MCMC sampler that generates Markov chains whose stationary distribution coincides with $\pi_{(\bz_i^*, \btheta_i)}(\bz_i^*, \btheta_i\mid\mathbb{A})$. In contrast, this section focuses on VI as a computationally efficient alternative to MCMC for Bayesian inference. 

In general, VI is a class of algorithms that seeks a more tractable distribution, referred to as the \emph{variational posterior distribution}, within a pre-specified class of distributions $\calQ$, to approximate the \emph{exact} posterior distribution $\pi_{(\bz_i^*, \btheta_i)}(\bz_i^*, \btheta_i\mid\mathbb{A})$ defined in \eqref{eqn:posterior}. This is done by solving the minimization problem
\begin{align*}
\min_{q_i(\bz_i^*, \btheta_i)\in\calQ}D_{\mathsf{KL}}(q_i(\bz_i^*, \btheta_i) \| \pi_{(\bz_i^*, \btheta_i)}(\bz_i^*, \btheta_i\mid\mathbb{A})).
\end{align*}
In Gaussian VI, the candidate class $\calQ$ is taken as the class of all multivariate Gaussian distributions with non-degenerate covariance matrices. Note, however, that this is not directly implementable in our setting because the support of $\bz_i^*$ in $\pi_{(\bz_i^*, \btheta_i)}(\bz_i^*, \btheta_i\mid\mathbb{A})$ is the bounded set $\calS^{d - 1}$, while Gaussian VI requires $q_i(\bz_i^*, \btheta_i)$ to be supported over $\mathbb{R}^{d - 1}\times\mathbb{R}^m$. 

To address this, we introduce continuously differentiable one-to-one transformations
$\calT_\bz:\mathbb{R}^{d - 1}\to\calS^{d - 1}$ and $\calT_\btheta:\mathbb{R}^m\to\calI^m$:
\begin{align}\label{eqn:Tz_transformation}
\calT_\bz(\bx) = \left\{
\begin{aligned}
&\frac{1}{1 + e^{x_1}},&\quad&\text{if }d = 2,\\
&\begin{bmatrix}
1 - \frac{1}{1 + e^{-x_{1}}} \\ 
\frac{1}{1 + e^{-x_{1}}}\left(1 - \frac{1}{1 + e^{-x_2}}\right) \\
\vdots \\
\prod_{k = 1}^{d - 2}\frac{1}{1 + e^{-x_k}}\left(1 - \frac{1}{1 + e^{-x_{d - 1}}}\right)
\end{bmatrix},&\quad&\text{if }d\geq 3.
\end{aligned}\right.
\end{align}
and
\begin{align}\label{eqn:Ttheta_transformation}
\calT_\btheta(\bnu) = 
&\begin{bmatrix}
    \calT_\btheta^{(1)}(\nu_1)
    \\
    \vdots
    \\
    \calT_\btheta^{(m)}(\nu_m)
\end{bmatrix},\quad\text{where } \calT_\btheta^{(t)}(\nu_{t}) = \inf\calI + \frac{\sup\calI - \inf\calI}{1 + e^{-\nu_t}},\quad t\in[m].
\end{align}
Let $\partial\calT_\bz(\bx)/\partial\bx\transpose$ and $\partial\calT_\btheta(\bnu)/\partial\bnu\transpose$ be the Jacobians of $\calT_\bz$ and $\calT_\btheta$, respectively. Denote by
\begin{align}
\label{eqn:prior_change_of_variable}
\pi_\bx(\bx) = \pi_{\bz^*}(\calT_\bz(\bx))\bigg|\det\bigg(\frac{\partial\calT_\bz}{\partial\bx\transpose}(\bx)\bigg)\bigg|,\; \pi_{\bnu}(\bnu) = \prod_{t = 1}^m\pi_\nu(\nu_t)
\end{align}
the prior distributions of $\bx$ and $\bnu$ induced from \eqref{eqn:Tz_transformation}--\eqref{eqn:Ttheta_transformation} and \eqref{eqn:posterior}, respectively, where
\[
\pi_\nu(\nu_{t}) = \pi_\theta(\calT_\btheta^{(t)}(\nu_{t}))\bigg|\frac{\mathrm{d}\calT_\btheta^{(t)}(\nu_{t})}{\mathrm{d}\nu_{t}}\bigg|.
\]
Correspondingly, let  
\begin{align}
\label{eqn:posterior_change_of_variable}
\pi_{(\bx_i, \bnu_i)}(\bx_i, \bnu_i\mid\mathbb{A})
& = \pi_{(\bz_i^*, \btheta_i)}(\calT_\bz(\bx_i), \calT_\btheta(\bnu_i)\mid\mathbb{A})\bigg|\det\bigg(\frac{\partial\calT_\bz}{\partial\bx\transpose}(\bx_i)\bigg)
\det\bigg(\frac{\partial\calT_\btheta}{\partial\bnu\transpose}(\bnu_i)\bigg)
\bigg|
\end{align}
be the joint posterior distribution of $(\bx_i, \bnu_i)$ induced from $\bz_i^* = \calT_\bz(\bx_i)$, $\btheta_i = \calT_\btheta(\bnu_i)$, and \eqref{eqn:posterior}. 

Rather than minimizing the KL divergence between the variational posterior distribution and the exact posterior distribution $\pi_{(\bz_i^*, \btheta_i)}(\bz_i^*, \btheta_i\mid\mathbb{A})$, we consider the equivalent problem
\begin{align}
\label{eqn:VI}
\min_{q_i(\bx_i, \bnu_i)\in\calQ}D_{\mathsf{KL}}(q_i(\bx_i, \bnu_i) \| \pi_{(\bx_i, \bnu_i)}(\bx_i, \bnu_i\mid\mathbb{A})),
\end{align}
where $\calQ$ is a class of multivariate Gaussian distributions on $\mathbb{R}^{d - 1}\times\mathbb{R}^m$. 
Directly taking $\calQ$ to be all $(d-1+m)$-dimensional Gaussian distributions, nevertheless, can still be computationally expensive when the number of layers $m$ is even moderately large. Meanwhile, an overly simplified class of Gaussian distributions (\emph{e.g.}, the class of all multivariate Gaussians with diagonal covariance matrices) may lead to unreliable uncertainty quantification. Therefore, it is crucial to identify a class of multivariate Gaussian distributions that remains computationally manageable while delivering valid uncertainty quantification. The proposition below sheds light on this goal. 
\begin{proposition}[Fisher Information Matrix]
\label{prop:Fisher_information_matrix}
Under oracle model \eqref{eqn:oracle_problem}, for a fixed vertex $i\in[n]$, the inverse Fisher information matrix of $(\bz_i^*, \btheta_i)$ has the following Cholesky factorization:
\begin{align*}
&\left[\frac{1}{mn}\expect\left\{
\begin{bmatrix}
\nabla_{\bz_i^*}\ell_{0in}(\bz_i^*, \btheta_i) \nabla_{\bz_i^*}\ell_{0in}(\bz_i^*, \btheta_i)\transpose & \nabla_{\bz_i^*}\ell_{0in}(\bz_i^*, \btheta_i)\nabla_{\btheta_i}\ell_{0in}(\bz_i^*, \btheta_i)\transpose\\
\nabla_{\btheta_i}\ell_{0in}(\bz_i^*, \btheta_i) \nabla_{\bz_i^*}\ell_{0in}(\bz_i^*, \btheta_i)\transpose & \nabla_{\btheta_i}\ell_{0in}(\bz_i^*, \btheta_i)\nabla_{\btheta_i}\ell_{0in}(\bz_i^*, \btheta_i)\transpose
\end{bmatrix}
\right\}\right]^{-1}\\
&\quad = \begin{bmatrix}
\bL_i & \\
-\bM_i\bL_i & \bD_i
\end{bmatrix}
\begin{bmatrix}
\bL_i & \\
-\bM_i\bL_i & \bD_i
\end{bmatrix}\transpose,
\end{align*}
where $\ell_{0in}(\bz_i^*, \btheta_i) = \sum_{t = 1}^m\sum_{j = 1}^n\log f(A_{ij}^{(t)}; \theta_i^{(t)}\bkappa(\bz_i^*)\transpose\bB_0^{(t)}\bz_{0j}\theta_{0j}^{(t)})$, $\bL_i\in\mathbb{R}^{(d - 1)\times(d - 1)}$ is a lower triangular matrix, $\bM_i\in\mathbb{R}^{m\times (d - 1)}$, and $\bD_i\in\mathbb{R}^{m\times m}$ is a diagonal matrix with positive diagonal elements. 
\end{proposition}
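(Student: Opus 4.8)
The plan is to read the three factors directly off the block inverse of the Fisher information and to reduce the entire statement to a single structural fact: the bottom-right block of the Fisher information \emph{itself} is diagonal. Write the Fisher information of $(\bz_i^*,\btheta_i)$ under the oracle model \eqref{eqn:oracle_problem} in block form,
\[
\bF_i = \begin{bmatrix} \bF_{zz} & \bF_{z\theta} \\ \bF_{\theta z} & \bF_{\theta\theta} \end{bmatrix},\qquad \bF_{\theta\theta} = \frac{1}{mn}\expect\left[\nabla_{\btheta_i}\ell_{0in}\nabla_{\btheta_i}\ell_{0in}\transpose\right],
\]
with the remaining blocks defined analogously, and denote the blocks of the inverse by $\bSigma_{zz},\bSigma_{z\theta},\bSigma_{\theta z},\bSigma_{\theta\theta}$ (so $\bSigma_{\theta z}=\bSigma_{z\theta}\transpose$). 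Here $\bSigma_{zz}$ is symmetric positive definite as a principal block of a symmetric positive-definite matrix, so its lower-triangular Cholesky factor exists; positive definiteness of $\bF_i$ follows from the compactness of $\calI$ and the exponential-family structure under Assumption~\ref{assumption:likelihood}.

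First I would establish that $\bF_{\theta\theta}$ is diagonal. The key observation is separability across layers: $\theta_i^{(t)}$ enters $\ell_{0in}$ only through the layer-$t$ summand, so the score component $\partial\ell_{0in}/\partial\theta_i^{(t)}=\sum_{j=1}^n(\partial/\partial\theta_i^{(t)})\log f(A_{ij}^{(t)};P_{ij}^{(t)})$ is a function of the layer-$t$ adjacency entries alone. Since the layers $(\bA^{(t)})_{t=1}^m$ are mutually independent in the ML-DCMM model, the variables $\partial\ell_{0in}/\partial\theta_i^{(s)}$ and $\partial\ell_{0in}/\partial\theta_i^{(t)}$ are independent for $s\neq t$; evaluating at the true parameter and invoking the mean-zero property of the score (each per-layer log-likelihood is a genuine exponential-family likelihood, so its score integrates to zero) gives $\expect[(\partial\ell_{0in}/\partial\theta_i^{(s)})(\partial\ell_{0in}/\partial\theta_i^{(t)})]=0$ for $s\neq t$. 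Hence $\bF_{\theta\theta}$ is diagonal with positive diagonal entries.

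Next I would assemble the three factors: take $\bL_i$ to be the lower-triangular Cholesky factor of $\bSigma_{zz}$, so $\bL_i\bL_i\transpose=\bSigma_{zz}$; set $\bM_i=-\bSigma_{\theta z}\bSigma_{zz}^{-1}$; and define $\bD_i=\bF_{\theta\theta}^{-1/2}$, the entrywise positive square root of the diagonal matrix $\bF_{\theta\theta}^{-1}$. Multiplying out the proposed factorization produces the blocks $\bL_i\bL_i\transpose$, $-\bL_i\bL_i\transpose\bM_i\transpose$, $-\bM_i\bL_i\bL_i\transpose$, and $\bM_i\bL_i\bL_i\transpose\bM_i\transpose+\bD_i\bD_i\transpose$. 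The top-left block is $\bSigma_{zz}$ by construction; substituting $\bM_i=-\bSigma_{\theta z}\bSigma_{zz}^{-1}$ turns the two off-diagonal blocks into $\bSigma_{z\theta}$ and $\bSigma_{\theta z}$; and the bottom-right block becomes $\bSigma_{\theta z}\bSigma_{zz}^{-1}\bSigma_{z\theta}+\bD_i\bD_i\transpose$, which equals $\bSigma_{\theta\theta}$ provided $\bD_i\bD_i\transpose=\bSigma_{\theta\theta}-\bSigma_{\theta z}\bSigma_{zz}^{-1}\bSigma_{z\theta}$.

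It therefore remains to verify that this Schur complement of $\bSigma_{zz}$ in $\bF_i^{-1}$ equals $\bF_{\theta\theta}^{-1}$. This is the standard block-inverse identity: factoring $\bF_i$ through its block $\mathrm{LDL}\transpose$ decomposition relative to the $\bF_{\theta\theta}$ block and inverting shows that the Schur complement of the top-left block of $\bF_i^{-1}$ coincides with the inverse of the bottom-right block of $\bF_i$, namely $\bSigma_{\theta\theta}-\bSigma_{\theta z}\bSigma_{zz}^{-1}\bSigma_{z\theta}=\bF_{\theta\theta}^{-1}$. Combined with the first step, $\bD_i\bD_i\transpose=\bF_{\theta\theta}^{-1}$ is diagonal with positive entries, completing the factorization. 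The only genuinely model-specific step, and hence the main point to get right, is the diagonality of $\bF_{\theta\theta}$, which hinges on layer independence and the separable appearance of the $\theta_i^{(t)}$; everything else is linear algebra. I would also confirm that differentiation under the integral sign and positive definiteness of $\bF_i$ hold under Assumption~\ref{assumption:likelihood}, but these are routine given the compact $\calI$ and the exponential-family form.
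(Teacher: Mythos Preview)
Your proposal is correct and matches the paper's approach. The paper does not provide a standalone proof of this proposition; instead, the explicit form of $\bGamma_{in}$ displayed just before Assumption~\ref{assumption:likelihood} exhibits the diagonal $\btheta\btheta$-block directly (as $\textsf{diag}(\frac{1}{m}\bz_{0i}\transpose\bG_{0in}^{(t)}\bz_{0i}:t\in[m])$), and the block inverse formula is then invoked at the start of the proof of Lemma~\ref{lemma:aggregated_MLE_theory} to obtain $\bGamma_{in}^{-1}$ in the form $\bSigma_{zz}$, $-\bM_{in}\bJ\bSigma_{zz}$, $\bL_{in}^{-1}+\bM_{in}\bJ\bSigma_{zz}\bJ\transpose\bM_{in}\transpose$---precisely your Schur-complement decomposition with $\bD_i\bD_i\transpose=\bL_{in}^{-1}=\bF_{\theta\theta}^{-1}$. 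Your write-up makes the layer-independence argument for diagonality explicit, which the paper leaves implicit in its display of $\bGamma_{in}$, but the content is the same.
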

It is well known that, under suitable regularity conditions, the Bernstein-von Mises (BvM) theorem implies that the exact posterior distribution of the parameter of interest is asymptotically normal with covariance matrix proportional to the inverse Fisher information matrix. This result will be stated rigorously in Section \ref{sec:theoretical_properties}. The key implication of the BvM theorem, together with Proposition \ref{prop:Fisher_information_matrix}, is that one can take  $\calQ$ to be the following class of multivariate Gaussian distributions with structured covariance matrices:
\begin{align}
\calQ& = \Bigg\{\calN\bigg(
    \begin{bmatrix} \bx_i \\ \bnu_i \end{bmatrix} \bigg|
    \bmu_i,
    \frac{1}{mn}\bSigma_i\bigg):\bmu_i = \begin{bmatrix}\bmu_{i1}\\\bmu_{i2}\end{bmatrix}, \bSigma_i = \begin{bmatrix}
    \bL_i & \\
    -\bM_i\bL_i & \bD_i
    \end{bmatrix}
    \begin{bmatrix}
    \bL_i & \\
    -\bM_i\bL_i & \bD_i
    \end{bmatrix}\transpose,\nonumber\\
    &\quad\quad \bmu_{i1}\in\mathbb{R}^{d - 1},\bmu_{i2}\in\mathbb{R}^m,\bL_i\in\mathbb{R}^{(d - 1)\times (d - 1)}\text{ is lower triangular with positive diagonals}, \nonumber\\
    \label{eqn:Gaussian_VI_distribution_class}
    &\quad\quad \bM_i\in\mathbb{R}^{m\times(d - 1)}, \bD_i = \textsf{diag}(\sigma_{i1},\ldots,\sigma_{im}),\sigma_{i1},\ldots,\sigma_{im} > 0\Bigg\}.
\end{align}
The above Cholesky factorization of $\bSigma$ allows us to reparameterize any $q_i(\bx_i, \bnu_i)\in\calQ$ using
\[
\begin{bmatrix}
\bx_i\\\bnu_i
\end{bmatrix} = \begin{bmatrix}\bmu_{i1}\\\bmu_{i2}\end{bmatrix} + \frac{1}{\sqrt{mn}}\begin{bmatrix}\bL_i & \\ -\bM_i\bL_i & \textsf{diag}(\bsigma_i)\end{bmatrix}\beps_i,
\]
where $\beps_i = [\beps_{i1}\transpose, \beps_{i2}\transpose]\transpose$, $\beps_{i1}\sim\mathrm{N}(\zero_{d - 1}, \eye_{d - 1})$ and $\beps_{i2}\sim\mathrm{N}(\zero_m, \eye_m)$ are independent, and $\bsigma_i = [\sigma_{i1},\ldots,\sigma_{im}]\transpose$ with $\sigma_{it} > 0$, $t\in[m]$. 
Now denote by $\calL_{in}(\bx_i, \bnu_i) = \widetilde{\ell}_{in}(\calT_\bz(\bx_i), \calT_\btheta(\bnu_i))$. 
Then, a simple algebra (see, for example, \cite{wuxie2025,xu2022computational}) shows that the minimization problem \eqref{eqn:VI} is equivalent to
\begin{equation}
\label{eqn:Gaussian_VI}
\begin{aligned}
&\min_{\bmu_i,\bL_i,\bM_i,\bsigma_i}F_{in}(\bmu_i, \bL_i, \bM_i, \bsigma_i)\\
&\text{s.t.} \;\bL_i\text{ lower triangular with positive diagonals},\;
\sigma_{i1},\ldots,\sigma_{im} > 0,
\end{aligned}
\end{equation}
where the objective function is given by
\begin{equation}\label{eqn:Gaussian_VB_objective}
\begin{aligned}
&F_{in}(\bmu_i, \bL_i, \bM_i, \bsigma_i)\\
&\quad = -\log\det(\bL_i) - \sum_{t = 1}^m\log\sigma_{it}\\ 
&\qquad- \expect_{\beps_i}\bigg\{\calL_{in}\bigg(\bmu_{i1} + \frac{\bL_i\beps_{i1}}{\sqrt{mn}},\bmu_{i2} - \frac{\bM_i\bL_i\beps_{i1}}{\sqrt{mn}} + \frac{\textsf{diag}(\bsigma_i)\beps_{i2}}{\sqrt{mn}}\bigg)\bigg\}\\ 
&\qquad- \expect_{\beps_{i}}\bigg\{\log \pi_{\bx_i}\bigg(\bmu_{i1} + \frac{\bL_i\beps_{i1}}{\sqrt{mn}}\bigg)\bigg\}
 - \expect_{\beps_{i}}\bigg\{\log \pi_{\bnu_i}\bigg(\bmu_{i2} - \frac{\bM_i\bL_i\beps_{i1}}{\sqrt{mn}} + \frac{\textsf{diag}(\bsigma_i)\beps_{i2}}{\sqrt{mn}}\bigg)\bigg\}.
\end{aligned}
\end{equation}
The closed-form formula for the objective function $F_{in}$ above is not directly available. However, $F_{in}$ is the expected value of a noisy function whose gradient can be computed in a closed form, thereby enabling the use of stochastic gradient descent. We provide the detailed algorithm in the Supplementary Material. 

\subsection{Comparison With Prior Work}
\label{sub:comparison_with_prior_work}

We briefly compare our work with a selective set of closely related prior studies on the statistical analysis of multilayer networks with latent space structures. See Table \ref{tab:comparison} for a sketch of the comparison. 
\begin{table*}[t]
\caption{Comparison between our work and prior work on multilayer network inference. The column ``UQ'' represents theoretical guarantee for uncertainty quantification, layer-wise means the uncertainty is assessed with respect to a layer-specific parameter, and vertex-wise means that the uncertainty is assessed with respect to a vertex-specific parameter. 
}
\label{tab:comparison}
\centering
\begin{tabular}{c | c | c | c}
\hline
Method & Model & Degree correction & UQ \\
\hline
MASE \citep{ArroyoEtAlCOSIE} & COSIE & No & Layer-wise \\
MASE \citep{zheng2022limit} & COSIE & No & Vertex-wise \\
BCJSE \citep{xie2024biascorrected} & COSIE & No & Vertex-wise\\
OSE \citep{he2023semiparametric} &Longitudinal LSM & Yes & No\\
DC-MASE \citep{Agterberg28072025} & ML-DCSBM & Yes & No\\
SMF-VI \citep{zhao2024structured} & Dynamic LSM & Yes & Not justified\\
SMF-VI \citep{loyal2023eigenmodel} &Eigenmodel & Yes & Not justified\\
SMF-VI \citep{loyal2024fast} &Dynamic LSM & Yes & Not justified\\
MCMC \citep{loyal2025generalized} &Dynamic GRDPG & Yes & Not justified\\
Our work & ML-DCMM & Yes & Vertex-wise\\
\hline
\end{tabular}
\end{table*}

Uncertainty quantification in multilayer network analysis has previously been explored and theoretically justified in \cite{ArroyoEtAlCOSIE,xie2024biascorrected,zheng2022limit} under the common subspace independent edge (COSIE) model, where each network layer has its own low-rank edge probability matrix, but all layers share a common column subspace. The authors of \cite{ArroyoEtAlCOSIE} proposed a multiple adjacency spectral embedding (MASE) method, which is similar to Algorithm \ref{alg:aggregated_mixed_SCORE} in spirit, and established the asymptotic normality of the layer-wise score matrix estimator for uncertainty quantification. Vertex-wise uncertainty quantification for MASE was later explored by \cite{zheng2022limit}, in which the author derived a row-wise limit theorem for the MASE estimator of the common subspace. To address the challenging regime where each network layer is extremely sparse, the author of \cite{xie2024biascorrected} developed a bias-corrected joint spectral embedding (BCJSE) algorithm that aggregates signals across layers and established a row-wise central limit theorem for vertex-wise uncertainty quantification. In contrast to our work, this line of research is not likelihood-based and does not accommodate layer-wise degree heterogeneity. 

Multilayer network analysis methods that incorporate layer-wise degree heterogeneity have been explored by \cite{Agterberg28072025,he2023semiparametric}. In \cite{he2023semiparametric}, the authors modeled multilayer weighted networks using a Poisson longitudinal latent space model (LSM) with the exponential link function, where the degree correction parameters capture the time-varying latent structure, and proposed a one-step estimator (OSE) for efficient estimation of the time-invariant latent feature vectors. The authors of \cite{Agterberg28072025} studied community detection in the multilayer degree-corrected stochastic block model (ML-DCSBM) using a degree-corrected multiple adjacency spectral embedding (DC-MASE) method. Among existing approaches, ML-DCSBM in \cite{Agterberg28072025} is perhaps the most related to the ML-DCMM network model. However, the inferential goals differ. \cite{Agterberg28072025} focused on community detection, whereas our work centers on uncertainty quantification for membership profiles. Neither \cite{he2023semiparametric} nor \cite{Agterberg28072025} discussed uncertainty quantification, which is precisely one of the features of our framework.

From the perspective of Bayesian inference, the authors of \cite{zhao2024structured} and \cite{loyal2024fast} proposed structured mean-field variational inference (SMF-VI) for dynamic LSM networks, and the authors of \cite{loyal2023eigenmodel} further extended SMF-VI to a dynamic multilayer network model called eigenmodel. In \cite{loyal2025generalized}, the author introduced another dynamic network modeling approach based on random dot product graphs (RDPGs) \citep{10.1007/978-3-540-77004-6_11} and a Gaussian random walk prior and developed a generalized Bayesian inference procedure in which the likelihood function is replaced by the exponentiated negative least squares loss function. There, the authors further proposed a computationally efficient MCMC algorithm for practical computation. These Bayesian or generalized Bayesian approaches provide natural environments for uncertainty quantification via full (variational) posterior distributions. Nonetheless, the theoretical results established in these studies primarily focused on convergence rates, while BvM-type results, along with rigorous guarantees for the validity of posterior-based uncertainty quantification, remain unaddressed. By contrast, our work is supported by the BvM theorem (see Section \ref{sec:theoretical_properties}), thereby justifying the validity of Bayesian inference using the variational posterior distribution. 

\section{Theoretical Properties}
\label{sec:theoretical_properties}

This section establishes the main theoretical results of the proposed spectral-assisted Gaussian VI. To prepare, we first impose several assumptions that are standard in the literature. The first assumption is a requirement for the parameters of the ML-DCMM network model. 
\begin{assumption}[DCMM network model]
\label{assumption:identifiability}
The following conditions are satisfied:
\begin{enumerate}[(a)]
  \item The diagonal entries of $(\bB_0^{(t)})_{t = 1}^m$ are equal to $1$, $(\bB_0^{(t)})_{t = 1}^m$ are non-singular matrices, and for each $k\in [d]$, there exists some $i_k\in[n]$ such that $\bz_{i_k} = \be_k$, namely, each community has at least one pure node. 

  \item Let $\calN_k = \{i\in[n]:\bz_i = \be_k\}$ be the set of pure nodes in the $k$th community and $\calM = [n]\backslash(\bigcup_{k = 1}^{d}\calN_k)$ be the set of all mixed nodes. Then, $\min(\min_{k\in[d]}|\calN_k|,|\calM|) \asymp n$, $\min\calN_1 < \ldots < \min\calN_d$, and there exist an integer $L_0 \geq 1$, a partition of $\calM = \bigcup_{l = 1}^{L_0}\calM_l$, a set of probability vectors $\bgamma_1,\ldots,\bgamma_{L_0}$, and constants $c_1,c_2 > 0$, such that
  \begin{align*}
  \min_{1\leq j\leq l\leq L_0}\|\bgamma_j - \bgamma_l\|_2\geq c_1,\;
  \min_{j\in[L_0], k\in[d]}\|\bgamma_l - \be_k\|_2 \geq c_2,\;
  \max_{l\in[L_0]}\max_{i\in\calM_l}\|\bz_i - \bgamma_l\|\leq 1/(\log n).
  \end{align*}

  \item There exists constants $c_3, c_4 > 0$, such that $c_3\leq \theta_i^{(t)}\leq c_4$ for all $i\in[n]$, $t\in[m]$, and the elements of $\bP_0^{(t)}\bZ_0\transpose\bB_0^{(t)}\bZ_0$ are bounded between $c_3$ and $c_4$. 
\end{enumerate}
\end{assumption}
Assumption \ref{assumption:identifiability} is an adaptation of those in \cite{JIN2023} for the single-layer DCMM network model. Among them, item (a) ensures that each layer is identifiable, \emph{i.e.}, $\bZ_0$ can be recovered up to a permutation of columns, and item (b) corresponds to setting 2 in \cite{JIN2023}, which assumes that the mixed node membership profiles are coarsely partitioned into $L_0$ clusters centered around well-separated vectors $(\bgamma_l)_{l = 1}^{L_0}$. The requirement that $\min\calN_1 < \ldots < \min\calN_d$ in item (b) is a labeling convention: the first pure node in the $k$th community always appears before the first pure node in the $(k + 1)$th community for $k\in\{1,\ldots,d - 1\}$. This convention allows us to implement the following label alignment algorithm.


The second assumption below is related to the spectral properties of the population matrices of the ML-DCMM network model. 
\begin{assumption}[Eigenvalues]
\label{assumption:condition_number}
There exist constants $C_1, C_2, c > 0$, such that:
\begin{enumerate}[(a)]
   \item $C_1\leq \sigma_d(\bB_0^{(t)})\leq \sigma_1(\bB_0^{(t)})\leq C_2$ for all $t\in[m]$;

   \item $C_1n\leq \lambda_d(\bZ_0\transpose(\bTheta_0^{(t)})^2\bZ_0)\leq \lambda_1(\bZ_0\transpose(\bTheta_0^{(t)})^2\bZ_0)\leq C_2n$ for all $t\in[m]$;

   \item $|\lambda_2(\bP_0^{(t)})|\leq (1 - c)\lambda_1(\bP_0^{(t)})$ for all $t\in[m]$;

   \item For each $t\in[m]$, there exists a positive integer $d_+^{(t)}$ and a nonnegative integer $d_-^{(t)}$, such that $d_+^{(t)} + d_-^{(t)} = d$, $\lambda_k(\bP_0^{(t)}) = 0$ for all $t \in \{d_+^{(t)} + 1,\ldots, n - d_-^{(t)}\}$, and
    \[
    \lambda_1(\bP_0^{(t)})\geq\ldots\geq\lambda_{d^{(t)}_+}(\bP_0^{(t)}) > 0 > \lambda_{n - d_-^{(t)} + 1}(\bP_0^{(t)})\geq\ldots\geq \lambda_n(\bP_0^{(t)}),
    \] 
  and $|\lambda_k(\bP_0^{(t)})| \asymp n$ for any $k\in\{1,\ldots,d_+^{(t)}\}\cup\{n - d_-^{(t)} + 1,\ldots,n\}$. 
 \end{enumerate}
\end{assumption}
Assumption \ref{assumption:condition_number} requires that the eigenvalues of $(\bB_0^{(t)})_{t = 1}^m$, $(\bZ_0\transpose(\bTheta_0^{(t)})^2\bZ_0)_{t = 1}^m$, and $(\bP_0)_{t = 1}^m$ are well behaved and is standard for the single-layer DCMM network model. In particular, item (b) above corresponds to Assumption 2 in \cite{JIN2023}, and items (c) and (d) correspond to Assumption 3 there. 

In order to describe the last assumption on the likelihood function, we need to introduce additional notations. Denote by 
\begin{align*}
\bG_{0in}^{(t)}
& = \sum_{j = 1}^n\frac{(\theta_{0j}^{(t)})^2\bB_0^{(t)}\bz_{0j}\bz_{0j}\transpose\bB_0^{(t)}}{n\var(A_{ij}^{(t)})},\;
\bDelta_{in}^{(t)}
 = \bG_{0in}^{(t)} - \frac{\bG_{0in}^{(t)}\bz_{0i}\bz_{0i}\transpose\bB_0^{(t)}}{\bz_{0i}\transpose\bG_{0in}^{(t)}\bz_{0i}},\\
\bGamma_{in}& = \begin{bmatrix}\bJ\transpose & \\ & \eye_m\end{bmatrix}
\begin{bmatrix}
\frac{1}{m}\sum_{t = 1}^m\theta_{0i}^{2(t)}\bG_{0in}^{(t)} & \frac{1}{m}\bG_{0in}^{(1)}\bz_{0i}\theta_{0i}^{(1)} & \ldots & \frac{1}{m}\bG_{0in}^{(m)}\bz_{0i}\theta_{0i}^{(m)}\\ 
\frac{1}{m}\theta_{0i}^{(1)}\bz_{0i}\transpose\bG_{0in}^{(1)} & \frac{1}{m}\bz_{0i}\transpose\bG_{0in}^{(1)}\bz_{0i} & \ldots & 0\\
\vdots & \vdots & \ddots & \vdots \\
\frac{1}{m}\theta_{0i}^{(m)}\bz_{0i}\transpose\bG_{0in}^{(m)} & 0 & \ldots & \frac{1}{m}\bz_{0i}\transpose\bG_{0in}^{(m)}\bz_{0i}
\end{bmatrix}
 \begin{bmatrix}\bJ & \\ & \eye_m\end{bmatrix}. 
\end{align*}

\begin{assumption}[Likelihood and Noise Distribution]
\label{assumption:likelihood}
The following conditions hold:
\begin{enumerate}[(a)]
  \item $\calI = [a, b]\subset(0, +\infty)$ is an interval not depending on $n$ with $0 < a < b < \infty$, and $P_{0ij}^{(t)}\in\calI,\bz_i\transpose\bB_0^{(t)}\bz_j\in\calI$ for all $i,j\in[n]$, $\bz_i,\bz_j\in\Delta^{d - 1}$, and $t\in[m]$, where $P_{0ij}^{(t)}$ denote the $(i, j)$th element of $\bP_0^{(t)}$. 

  \item $\eta'(\theta) > 0$ for all $\theta\in\calI$ abd $\eta(\cdot)$ is three times continuously differentiable over $\calI$.

  \item $\lambda_k(\bJ\transpose\bDelta_{in}^{(t)}\bJ)\asymp 1$ for all $k\in[d]$, $t\in[m]$, and $i\in[n]$.

  \item There exists some constant $C > 0$, such that $\|E_{ij}^{(t)}\|_{\psi_1} := \sup_{p\geq 1}p^{-1}(\expect|E_{ij}^{(t)}|^p)^{1/p}\leq C$ for all $i,j\in[n],t\in[m]$, where $E_{ij}^{(t)}:=A_{ij}^{(t)} - P_{ij}^{(t)}$.   
\end{enumerate}
\end{assumption}
Assumption \ref{assumption:likelihood} imposes mild regularity conditions on the likelihood function and the noise distribution. Item (b) is on the differentiability of $\eta(\cdot)$, item (c) ensures that the $\bGamma_{in}^{(t)}$ is invertible, and item (d) requires that the noise distributions have sub-exponential tails. Note that sub-exponential noise is quite flexible and includes unweighted binary networks, Poisson networks, and Gamma networks as special cases. 

Our first main result is summarized in Theorem \ref{thm:MLE} below, which guarantees the existence of a local maximum spectral-assisted likelihood estimator and its asymptotic normality. 
\begin{theorem}\label{thm:MLE}
Suppose Assumptions \ref{assumption:identifiability}--\ref{assumption:likelihood} hold. Further assume that there exist some $\xi > 1$, such that $m \ll n/(\log n)^{6\xi}$.
Let $i\in[n]$ be a vertex such that $\min_{k\in[d]}z_{0ik} \geq c$ for some constant $c > 0$, and denote by $\bz_{0i}^* = \bkappa^{-1}(\bz_{0i})$. Then, w.h.p.,
there exists some $\widehat{\bz}_i^*$ inside the interior of $\calS^{d - 1}$, $\widehat{\btheta}_i = [\widehat{\theta}_i^{(1)},\ldots,\widehat{\theta}_i^{(m)}]\transpose\in\calI^m$ satisfying the following likelihood equation 
    \[
        \frac{\partial\widetilde{\ell}_{in}}{\partial\bz_i^*}(\bz_i^*, \btheta_i)\mathrel{\Bigg|_{\bz_i^* = \widehat{\bz}_i^*, \btheta_i = \widehat{\btheta}_i}} = \zero_{d - 1},\quad
        \frac{\partial\widetilde{\ell}_{in}}{\partial\btheta_i}(\bz_i^*, \btheta_i)\mathrel{\Bigg|_{\bz_i^* = \widehat{\bz}_i^*, \btheta_i = \widehat{\btheta}_i}} = \zero_m.
    \]
    Furthermore, we have the following asymptotic normality of $\widehat{\bz}_i^*$:
    \begin{align*}
    \bigg(\frac{1}{m}\sum_{t = 1}^m\theta_{0i}^{2(t)}\bJ\transpose\bDelta_{in}^{(t)}\bJ\bigg)^{1/2}\sqrt{mn}(\widehat{\bz}_i^* - \bz_{0i}^*) \overset{\calL}{\to}\mathrm{N}(\zero_{d - 1}, \eye_{d - 1})\quad\text{as }n\to\infty.
    \end{align*}
\end{theorem}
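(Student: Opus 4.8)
The plan is to analyze $\widehat{\bpsi}_i := (\widehat{\bz}_i^*, \widehat{\btheta}_i)$ as a local maximizer of the spectral-assisted objective $\widetilde{\ell}_{in}$ and to transfer its behavior to the oracle objective $\ell_{0in}$ through a perturbation argument. Writing $\bpsi_i = (\bz_i^*, \btheta_i)$ and $\bpsi_{0i} = (\bz_{0i}^*, \btheta_{0i})$, the organizing decomposition is
\[
\nabla_{\bpsi_i}\widetilde{\ell}_{in}(\bpsi_{0i}) = \nabla_{\bpsi_i}\ell_{0in}(\bpsi_{0i}) + \bR_{in},
\]
where $\bR_{in}$ isolates the effect of substituting the aggregated mixed-SCORE estimators $(\widetilde{\bz}_j, \widetilde{\btheta}_j, \widetilde{\bB}^{(t)})$ for the true nuisance values. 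For fixed $i$, the oracle gradient is a sum over $j\neq i$ and $t\in[m]$ of independent, mean-zero exponential-family scores, so its normalization $(mn)^{-1/2}\nabla_{\bpsi_i}\ell_{0in}(\bpsi_{0i})$ satisfies a Lindeberg--Feller central limit theorem with limiting covariance equal to the scaled per-vertex Fisher information $\bGamma_{in}$ (whose inverse is factored in Proposition \ref{prop:Fisher_information_matrix}); here Assumption \ref{assumption:likelihood}(d) supplies the Lindeberg condition and the information identity gives both the score covariance and the negative expected Hessian as $\bGamma_{in}$, which is well-conditioned by Assumptions \ref{assumption:condition_number} and \ref{assumption:likelihood}(c). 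It then remains to prove $\bR_{in} = \optilde(\sqrt{mn})$ together with a uniform Hessian estimate.

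For existence I would combine a curvature estimate with a boundary argument. Setting $\bH_{in}(\bpsi) = -\nabla_{\bpsi_i}^2\widetilde{\ell}_{in}(\bpsi)$, I would establish
\[
\sup_{\|\bpsi - \bpsi_{0i}\|\leq r_n}\bigg\|\frac{1}{mn}\bH_{in}(\bpsi) - \bGamma_{in}\bigg\| = \optilde(1)
\]
for a radius $r_n\to 0$, by concentrating the empirical Hessian about its mean, bounding its plug-in perturbation by the same devices used for $\bR_{in}$, and using local Lipschitz continuity in $\bpsi$. Since $\bGamma_{in}$ has eigenvalues bounded away from $0$ and $\infty$, $\bH_{in}$ is positive definite with eigenvalues of order $mn$ on this ball w.h.p. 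Together with $\|\nabla_{\bpsi_i}\widetilde{\ell}_{in}(\bpsi_{0i})\| = \Optilde(\sqrt{mn})$, a second-order Taylor expansion shows that on the sphere $\|\bpsi - \bpsi_{0i}\| = r_n$ with $r_n\asymp\sqrt{(\log n)/(mn)}$ the quadratic term dominates the linear term, so $\widetilde{\ell}_{in}$ is strictly smaller on the boundary than at $\bpsi_{0i}$; a standard topological (Brouwer-type) argument then yields an interior maximizer $\widehat{\bpsi}_i$ solving the stationarity equations. Because $\bz_{0i}^*$ lies in the interior of $\calS^{d-1}$ under the nondegeneracy hypothesis $\min_{k}z_{0ik}\geq c$ and $\btheta_{0i}$ lies in the interior of $\calI^m$ by Assumption \ref{assumption:identifiability}(c), this critical point is interior for $r_n$ small.

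For asymptotic normality I would linearize the stationarity condition as $0 = \nabla_{\bpsi_i}\widetilde{\ell}_{in}(\bpsi_{0i}) - \bH_{in}(\bar{\bpsi}_i)(\widehat{\bpsi}_i - \bpsi_{0i})$ for some $\bar{\bpsi}_i$ on the segment joining $\widehat{\bpsi}_i$ to $\bpsi_{0i}$, giving
\[
\sqrt{mn}(\widehat{\bpsi}_i - \bpsi_{0i}) = \bigg[\frac{1}{mn}\bH_{in}(\bar{\bpsi}_i)\bigg]^{-1}\frac{1}{\sqrt{mn}}\nabla_{\bpsi_i}\widetilde{\ell}_{in}(\bpsi_{0i}) = \bGamma_{in}^{-1}\frac{1}{\sqrt{mn}}\nabla_{\bpsi_i}\ell_{0in}(\bpsi_{0i}) + \optilde(1),
\]
upon inserting the Hessian estimate at $\bar{\bpsi}_i$ and $\bR_{in} = \optilde(\sqrt{mn})$. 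The score CLT then yields $\sqrt{mn}(\widehat{\bpsi}_i - \bpsi_{0i})\overset{\calL}{\to}\mathrm{N}(\zero, \bGamma_{in}^{-1})$. To isolate $\widehat{\bz}_i^*$, I would apply a block-inversion identity: the leading $(d-1)\times(d-1)$ block of $\bGamma_{in}^{-1}$ is the inverse of the Schur complement of the $\btheta_i$-block of $\bGamma_{in}$, and inserting the definition of $\bG_{0in}^{(t)}$ reduces this Schur complement to $\frac{1}{m}\sum_{t=1}^m\theta_{0i}^{2(t)}\bJ\transpose\bDelta_{in}^{(t)}\bJ$. Left-multiplying the $\bz_i^*$-block by the square root of this matrix produces the $\mathrm{N}(\zero_{d-1}, \eye_{d-1})$ limit, with Slutsky's theorem absorbing the replacement of any random normalizer by its deterministic limit.

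The main obstacle is the control of $\bR_{in}$ and its Hessian analogue, which is exactly where the growth condition $m\ll n/(\log n)^{6\xi}$ is consumed. A crude bound based only on the row-wise rates $\max_j(\|\widetilde{\bz}_j - \bz_{0j}\|_2 + \|\widetilde{\btheta}_j - \btheta_{0j}\|_2) = \Optilde(\rho_n)$ with $\rho_n\asymp(\log n)^{c}/\sqrt{mn}$ would only give $\bR_{in} = \Optilde(mn\,\rho_n) = \Optilde(\sqrt{mn}\,(\log n)^c)$, which fails to be $\optilde(\sqrt{mn})$. Two refinements are needed. First, a leave-one-out decoupling: each preliminary estimator must be shown to coincide, up to higher-order error, with a surrogate computed without the $i$-th row and column of the adjacency matrices, so that the dominant part of $\bR_{in}$ becomes a sum of terms conditionally independent of the vertex-$i$ scores and hence amenable to concentration. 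Second, a first-order cancellation: because the oracle score is mean-zero and its derivative structure is, to leading order, orthogonal to the nuisance perturbation, the linear-in-$\rho_n$ contribution vanishes and only a second-order remainder of size $\Optilde(mn\,\rho_n^2) = \Optilde((\log n)^{2c})$ survives, which is $\optilde(\sqrt{mn})$ precisely under the stated bound on $m$. Supplying the requisite uniform row-wise rates for the aggregated mixed-SCORE estimators, including the alignment step of Algorithm \ref{alg:aggregated_mixed_SCORE}, via Davis--Kahan and entrywise eigenvector perturbation bounds, and making the cancellation rigorous, is the technically heaviest part of the argument.
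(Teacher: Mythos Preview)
Your high-level architecture---existence via a local curvature/boundary argument, linearization of the score equation, Hessian convergence to $\bGamma_{in}$, and extraction of the $\bz_i^*$-block via the Schur complement---is essentially the same as the paper's, and your identification of the plug-in remainder $\bR_{in}$ as the crux is exactly right. The Schur complement calculation you outline is also what the paper does.

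Where you diverge is in the mechanism for controlling $\bR_{in}$. You propose a leave-one-out surrogate for the preliminary estimators plus an ``orthogonality'' argument that kills the linear-in-$\rho_n$ contribution. The paper instead develops explicit first-order stochastic expansions for every ingredient of the mixed-SCORE pipeline (eigenvector ratios, vertex weights, layer-wise $\overline{\bz}_i^{(t)}$, and finally $\widetilde{\by}_j^{(t)}$), showing that $\widetilde{\by}_j^{(t)}-\by_{0j}^{(t)}=\sum_a E_{ja}^{(t)}\bgamma_{ja}^{(t)}+\frac{1}{m}\sum_{s,a}E_{ja}^{(s)}\bdelta_{ja}^{(s,t)}+\Optilde\{(\log n)^{2\xi}/n\}$ for deterministic $\bgamma,\bdelta=O(1/n)$. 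Plugging this into $\bR_{in}$ turns the dominant cross-term into quadratic forms $\sum_{j,a}E_{ij}^{(t)}E_{ja}^{(t)}\times O(1/n)$, which are bounded directly by decoupling/concentration to $\Optilde\{(\log n)^{2\xi}/n\}$ per layer---no leave-one-out and no appeal to orthogonality. Your route could be made to work, but the ``orthogonality'' framing is misleading: nothing vanishes by structure; the gain comes purely from concentration of a mean-zero quadratic form in the noise, which the explicit linearization makes transparent.

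Two rate corrections. First, the layer-specific nuisances $\widetilde{\theta}_j^{(t)},\widetilde{\bB}^{(t)}$ only achieve the single-layer rate $(\log n)^\xi/\sqrt{n}$, not $(\log n)^c/\sqrt{mn}$; your $\rho_n$ should be the former, and your back-of-envelope $mn\rho_n^2$ then gives $m(\log n)^{2\xi}$, which matches the paper's requirement. Second, your existence radius $r_n\asymp\sqrt{(\log n)/(mn)}$ presupposes $\|\nabla\widetilde{\ell}_{in}(\bpsi_{0i})\|=\Optilde(\sqrt{mn})$, which already requires the refined $\bR_{in}$ bound; the paper instead proves existence first with the coarser radius $(\log n)^{2\xi}/\sqrt{n}$ (using only a law-of-large-numbers lemma), then sharpens the rate in a second pass. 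Either ordering works, but be aware that your existence step is not self-contained.
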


Although Theorem \ref{thm:MLE} establishes the existence of a solution to the likelihood equation w.h.p. and its asymptotic normality, it does not ensure the uniqueness, and hence, it is not guaranteed that any solution to the likelihood equation or any local maximizer of the spectral-assisted log-likelihood function satisfies the desired asymptotic normality. The next result
shows that the Bayes estimator not only has the desired asymptotic normality but also can be computed or approximated in a natural way, such as via MCMC or VI. To state this result, we need an additional assumption regarding the prior densities. 

\begin{assumption}\label{assumption:prior}
The prior densities $\pi_{\bz^*}(\cdot)$ and $\pi_\theta(\cdot)$ are supported over $\calS^{d - 1}$ and $\calI$, respectively, with $\pi_{\bz^*}(\bz_{0i}^*) \geq c$ and $\pi_\theta(\theta_{0i}^{(t)}) > c$ for all $i\in[n]$ and $t\in[m]$ for some constant $c > 0$, where $\bz_{0i}^* = \bkappa^{-1}(\bz_{0i})$
\end{assumption}

\begin{theorem}[BvM theorem]
\label{thm:BvM}
Suppose the conditions of Theorem \ref{thm:MLE} and Assumption \ref{assumption:prior} hold. 
Denote by $\bt_i = (\bt_{\bz_i}, \bt_{\btheta_i}) = (\sqrt{mn}(\bz_i^* - \widehat{\bz}_i^*), \sqrt{mn}(\btheta_i - \widehat{\btheta}_i))$ and $\pi_{(\bt_{\bz_i}, \bt_{\btheta_i})}(\bt_{\bz_i}, \bt_{\btheta_i}\mid\mathbb{A})$ be the density function of $\bt_i$ induced from $\pi_{(\bz_i^*, \btheta_i)}(\bz_i^*, \btheta_i\mid\mathbb{A})$. If vertex $i\in[n]$ satisfies $\min_{k\in[d]}z_{0ik} \geq c$ for some constant $c > 0$, then, for any $\alpha > 0$, 
\begin{align*}
\iint_{\mathbb{R}^{d + m - 1}}(1 + \|\bt_{\bz_i}\|_2^\alpha)\bigg|
\pi_{(\bt_{\bz_i}, \bt_{\btheta_i})}(\bt_{\bz_i}, \bt_{\btheta_i}\mid\mathbb{A}) - \frac{e^{-\bt_i\transpose\bGamma_{in}\bt_i/2}}{\det(2\pi\bGamma_{in}^{-1})^{1/2}}
\bigg|\mathrm{d}\bt_{\bz_i}\mathrm{d}\bt_{\btheta_i} = \Optilde\bigg\{\frac{\sqrt{m}(\log n)^{3\xi}}{\sqrt{n}}\bigg\}.
\end{align*}
In particular, if we denote by $\widehat{\bz}_i^{*(\mathrm{BE})} = \iint_{\mathbb{R}^{d + m - 1}}\bz_i^*\pi_{(\bz_i^*, \btheta_i)}(\bz_i^*, \btheta_i\mid\mathbf{A})\mathrm{d}\bz_i^*\mathrm{d}\btheta_i$, then 
\begin{align*}
  \bigg(\frac{1}{m}\sum_{t = 1}^m\theta_{0i}^{2(t)}\bJ\transpose\bDelta_{in}^{(t)}\bJ\bigg)^{1/2}\sqrt{mn}(\widehat{\bz}_i^{*(\mathrm{BE})} - \bz_{0i}^*) \overset{\calL}{\to}\mathrm{N}(\zero_{d - 1}, \eye_{d - 1})\quad\text{as }n\to\infty.
\end{align*}
\end{theorem}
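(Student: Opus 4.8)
The plan is to establish a quantitative local asymptotic normality (LAN) expansion for $\widetilde{\ell}_{in}$ around the local maximizer $(\widehat{\bz}_i^*, \widehat{\btheta}_i)$ and then convert it, through a localization/Laplace argument, into the weighted total-variation bound. I would work throughout on the high-probability event of Theorem~\ref{thm:MLE}, on which $(\widehat{\bz}_i^*, \widehat{\btheta}_i)$ exists in the interior of $\calS^{d-1}\times\calI^m$ and solves the likelihood equations, intersected with an event on which the preliminary estimators $(\widetilde{\bz}_j, \widetilde{\btheta}_j, \widetilde{\bB}^{(t)})$ from Algorithm~\ref{alg:aggregated_mixed_SCORE} are uniformly close to their true values. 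First I would write the rescaled posterior of $\bt_i$ explicitly: under the change of variables $\bz_i^* = \widehat{\bz}_i^* + \bt_{\bz_i}/\sqrt{mn}$ and $\btheta_i = \widehat{\btheta}_i + \bt_{\btheta_i}/\sqrt{mn}$, the Jacobian factor $(mn)^{-(d+m-1)/2}$ is common to the numerator and the normalizing constant of \eqref{eqn:posterior} and cancels, so the density of $\bt_i$ is proportional to $\exp\{\widetilde{\ell}_{in}(\widehat{\bz}_i^* + \bt_{\bz_i}/\sqrt{mn}, \widehat{\btheta}_i + \bt_{\btheta_i}/\sqrt{mn})\}$ times the prior evaluated at the perturbed point.

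Next I would Taylor expand $\widetilde{\ell}_{in}$ to third order about $(\widehat{\bz}_i^*, \widehat{\btheta}_i)$. The first-order term vanishes because of the likelihood equations, leaving $\widetilde{\ell}_{in}(\cdot) - \widetilde{\ell}_{in}(\widehat{\bz}_i^*, \widehat{\btheta}_i) = -\tfrac12 \bt_i\transpose\widehat{\bGamma}_{in}\bt_i + R_{in}(\bt_i)$, where $\widehat{\bGamma}_{in} = -(mn)^{-1}\nabla^2\widetilde{\ell}_{in}$ is the rescaled observed information and $R_{in}$ is the cubic remainder. The two analytic inputs I would prove are a uniform operator-norm bound $\|\widehat{\bGamma}_{in} - \bGamma_{in}\|_2 = \Optilde(\sqrt{m/n}\,(\log n)^{3\xi})$ and a remainder bound $\sup_{\|\bt_i\|_2\le r}|R_{in}(\bt_i)| = \Optilde(r^3/\sqrt{mn})$, so that on a bulk ball of radius $r_n\asymp(\log n)^{\xi}$ the exponent equals $-\tfrac12\bt_i\transpose\bGamma_{in}\bt_i$ up to an additive error of the stated order. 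The log-prior contribution $\log\pi_{\bz^*}(\widehat{\bz}_i^* + \bt_{\bz_i}/\sqrt{mn}) + \sum_{t=1}^m\log\pi_\theta(\cdots)$ differs from its value at the center by $O(\|\bt_i\|_2/\sqrt{mn})$, using the continuity and positivity at the truth from Assumption~\ref{assumption:prior}, and is therefore negligible on the bulk.

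I would then split the integral into the bulk $\{\|\bt_i\|_2\le r_n\}$ and its complement. On the bulk, exponentiating the expansion and matching the two normalizing constants (the ratio of normalizers is controlled by integrating the same pointwise bound) shows that the rescaled posterior density equals the target Gaussian $e^{-\bt_i\transpose\bGamma_{in}\bt_i/2}/\det(2\pi\bGamma_{in}^{-1})^{1/2}$ times $1 + \Optilde(\sqrt{m/n}\,(\log n)^{3\xi})$, uniformly. On the complement, the eigenvalue lower bound of Assumption~\ref{assumption:likelihood}(c) yields strict concavity and a quadratic decay $\widetilde{\ell}_{in}(\cdot) - \widetilde{\ell}_{in}(\widehat{\bz}_i^*, \widehat{\btheta}_i)\le -c\|\bt_i\|_2^2$ out to radius $r_n$, while the sub-exponential control of Assumption~\ref{assumption:likelihood}(d) handles the region beyond $r_n$; together these force the posterior tail, even after multiplication by $1 + \|\bt_{\bz_i}\|_2^\alpha$, to be $\Optilde$ of the rate, and the Gaussian tail with the polynomial weight is bounded by elementary moment estimates using the positive-definiteness of $\bGamma_{in}$. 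Adding the bulk and tail contributions gives the weighted total-variation bound. For the corollary I would take $\alpha = 1$ and integrate $\bt_{\bz_i}$ against the difference of the two densities; since the target Gaussian in $\bt_i$ is centered, its $\bt_{\bz_i}$-marginal has mean zero, so the bound gives $\|\sqrt{mn}(\widehat{\bz}_i^{*(\mathrm{BE})} - \widehat{\bz}_i^*)\|_2 = \Optilde(\sqrt{m}(\log n)^{3\xi}/\sqrt{n}) = o(1)$ under $m\ll n/(\log n)^{6\xi}$, after checking that Schur complementation of $\bGamma_{in}$ over the $\btheta_i$-block reproduces the marginal precision $\frac{1}{m}\sum_{t=1}^m\theta_{0i}^{2(t)}\bJ\transpose\bDelta_{in}^{(t)}\bJ$; the claim then follows from Theorem~\ref{thm:MLE} and Slutsky's theorem.

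The hard part will be the uniform control of $\widehat{\bGamma}_{in} - \bGamma_{in}$ and of the cubic remainder at the stated rate, since $\widetilde{\ell}_{in}$ is assembled from the plug-in estimators $\widetilde{\bz}_j, \widetilde{\btheta}_j, \widetilde{\bB}^{(t)}$ rather than the oracle quantities, so its Hessian and third derivatives carry the estimation errors of those preliminaries, propagated through sums over both $j\in[n]$ and $t\in[m]$. The factor $\sqrt{m}$ in the rate reflects the accumulation of layer-wise fluctuations, and the delicate point is to show that the preliminary-estimator error and the genuine sampling noise combine to give precisely $\Optilde(\sqrt{m/n}\,(\log n)^{3\xi})$ rather than something larger; this is where the separability noted after \eqref{eqn:spectral_assisted_loglik} and a quantitative consistency guarantee for Algorithm~\ref{alg:aggregated_mixed_SCORE} must be invoked.
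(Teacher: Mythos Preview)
Your high-level outline (Taylor expand $\widetilde{\ell}_{in}$ at the MLE, split into bulk and tail, handle the normalizing constant) matches the paper's strategy, but the geometry you impose on $\bt_i$ is wrong and the argument breaks down. You take the bulk to be $\{\|\bt_i\|_2\le r_n\}$ with $r_n\asymp(\log n)^{\xi}$, yet the target Gaussian $\mathrm{N}(0,\bGamma_{in}^{-1})$ is \emph{not} isotropic: the $\btheta$-block of $\bGamma_{in}$ equals $m^{-1}\mathsf{diag}(\bz_{0i}\transpose\bG_{0in}^{(t)}\bz_{0i})$, so the smallest eigenvalue of $\bGamma_{in}$ is $\asymp 1/m$ and each coordinate of $\bt_{\btheta_i}$ has variance $\asymp m$ under the target law. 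Hence $\|\bt_i\|_2\asymp m$ typically, and an $\ell_2$ ball of radius $(\log n)^{\xi}$ captures negligible Gaussian mass once $m\gg(\log n)^{2\xi}$, which the hypothesis $m\ll n/(\log n)^{6\xi}$ certainly permits. The paper resolves this by localizing in the mixed norm $\|\bt_{\bz}\|_2+\|\bt_{\btheta}\|_\infty$ and by controlling each block of the Hessian error separately (Lemma~\ref{lemma:Hessian_local_convergence}), so that the quadratic form $\bt\transpose\bXi_{in}\bt$ stays small even though $\|\bt\|_2^2$ can be as large as $m(\log n)^{2\xi}$ on the bulk. Without this block-adapted analysis, neither your Hessian bound nor your cubic remainder $\Optilde(r^3/\sqrt{mn})$ is usable at the required radius.

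Two further points. First, your bookkeeping of where the $\sqrt{m}$ in the rate comes from is off: you assert $\|\widehat{\bGamma}_{in}-\bGamma_{in}\|_2=\Optilde(\sqrt{m/n}\,(\log n)^{3\xi})$ and treat the prior as negligible, but the paper proves the Hessian error is only $\Optilde((\log n)^{\xi}/\sqrt{n})$, and the $\sqrt{m}$ actually arises from the \emph{product of $m$ prior factors} $\prod_{t=1}^m\pi_\theta(\theta_i^{(t)})$, whose log differs from its value at the truth by $\asymp m\eps_n=\sqrt{m}(\log n)^{\xi}/\sqrt{n}$ on the bulk. Second, your far-tail control via ``sub-exponential control of Assumption~\ref{assumption:likelihood}(d)'' is too vague: beyond the radius where the second-order expansion is valid one needs a \emph{global} separation of $\widetilde{\ell}_{in}$ from its maximum, which the paper obtains through a uniform law of large numbers (Lemma~\ref{lemma:ULLN}) combined with a KL lower bound, decomposing the domain into four shells rather than two. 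Your corollary argument for the Bayes estimator (take $\alpha=1$, exploit the centered Gaussian, then Slutsky via Theorem~\ref{thm:MLE}) is correct.
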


Our last result is the BvM theorem for the variational posterior distribution. We need an additional assumption on the differentiability of the prior densities \eqref{eqn:prior_change_of_variable}.
\begin{assumption}\label{assumption:prior_transform}
There exists a constant $C > 0$, such that:
\begin{enumerate}[(a)]
  \item $\|\partial\log\pi_\bx(\bx_{0i})/\partial\bx_i\|_2 + |\mathrm{d}\log\pi_\nu(\nu_{0it})/\mathrm{d}\nu_{it}|\leq C$ for all $i\in[n]$, $t\in[m]$. 

  \item $\log\pi_\bx(\cdot)$ and $\log\pi_\nu(\cdot)$ are concave and 
  \[
    \sup_{\bx\in\mathbb{R}^{d - 1}}\bigg\|\frac{\partial^2}{\partial\bx\partial\bx\transpose}\log\pi_\bx(\bx)\bigg\|_2 + \sup_{\nu\in\mathbb{R}}\bigg|\frac{\mathrm{d}^2}{\mathrm{d}\nu^2}\log\pi_\nu(\nu)\bigg|\leq C.
  \]
\end{enumerate}
\end{assumption}

\begin{theorem}[Variational BvM theorem]
\label{thm:variational_BvM}
Suppose the conditions of Theorem \ref{thm:BvM} and Assumption \ref{assumption:prior_transform} hold. Denote by $\widehat{\bx}_i = \calT_\bz^{-1}(\widehat{\bz}_i^*)$ and $\widehat{\bnu}_i = \calT_\btheta^{-1}(\widehat{\btheta}_i)$. Let $(\bmu_i^\star,{\bL}_i^\star,\bM_i^*,\bsigma_i^*)$ solve \eqref{eqn:Gaussian_VI}, and denote by
\begin{align*}
\bGamma_{in}^* &= \begin{bmatrix}
\frac{\partial\calT_\bz}{\partial\bx\transpose}(\bx_{0i}) & \\ 
& \frac{\partial\calT_\btheta}{\partial\bnu\transpose}(\bnu_{0i})
\end{bmatrix}\transpose\bGamma_{in}
\begin{bmatrix}
\frac{\partial\calT_\bz}{\partial\bx\transpose}(\bx_{0i}) & \\ 
& \frac{\partial\calT_\btheta}{\partial\bnu\transpose}(\bnu_{0i})
\end{bmatrix},\\
\bSigma_{in}^\star &= \begin{bmatrix}
\bL_i^\star & \\
-\bM_i\bL_i^\star & \textsf{diag}(\sigma_{i1}^{\star},\ldots,\sigma_{im}^\star)
\end{bmatrix}\begin{bmatrix}
\bL_i^\star & \\
-\bM_i\bL_i^\star & \textsf{diag}(\sigma_{i1}^{\star},\ldots,\sigma_{im}^\star)
\end{bmatrix}\transpose,
\end{align*}
where $\bx_{0i} = \calT_\bz^{-1}(\bz_{0i}^*)$ and $\bnu_{0i} = \calT_\btheta^{-1}(\btheta_{0i})$. 
If vertex $i\in[n]$ satisfies $\min_{k\in[d]}z_{0ik} \geq c$ for some constant $c > 0$, then, for any $\alpha > 0$, 
\begin{align*}
d_{\mathrm{TV}}\bigg(
\calN\bigg(\begin{bmatrix}\bx_i\\\bnu_i\end{bmatrix}\mathrel{\bigg|}\begin{bmatrix}\widehat{\bx}_i\\\widehat{\bnu}_i\end{bmatrix}, \frac{1}{{mn}}(\bGamma_{in}^*)^{-1}\bigg),
  \calN\bigg(\begin{bmatrix}\bx_i\\\bnu_i\end{bmatrix}\mathrel{\bigg|}
  \bmu_i^\star, \bSigma_{in}^\star\bigg)\bigg) = \Optilde\bigg\{\frac{m^{1/4}(\log n)^{3\xi/2}}{n^{1/4}}\bigg\}.
\end{align*}
Furthermore, if $\widehat{\bx}_i^{(\mathrm{VI})}$ is the variational posterior mean of $\bx_i$, namely, $\bmu_{i1}^\star$, where $\bmu_i^\star = (\bmu_{i1}^\star,\bmu_{i2}^\star)$ and $\bmu_{i1}^\star\in\mathbb{R}^{d - 1}$, $\bmu_{i2}^\star\in\mathbb{R}^m$, then 
\[
\bigg(\frac{1}{m}\sum_{t = 1}^m\theta_{0i}^{2(t)}\bJ\transpose\bDelta_{in}^{(t)}\bJ\bigg)^{-1/2}\sqrt{mn}(\calT_\bz(\widehat{\bx}_i^{(\mathrm{VI})}) - \bz_{0i}^*)\overset{\calL}{\to}\mathrm{N}(\zero_{d - 1}, \eye_{d - 1}).
\]
\end{theorem}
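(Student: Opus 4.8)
The plan is to route the comparison through the exact transformed posterior $\pi_{(\bx_i,\bnu_i)}(\cdot\mid\mathbb{A})$, whose Bernstein--von Mises approximation is already supplied by Theorem \ref{thm:BvM}. Write $q_i^0 = \calN(\cdot\mid(\widehat\bx_i,\widehat\bnu_i), \frac{1}{mn}(\bGamma_{in}^*)^{-1})$ for the target Gaussian and $q_i^\star = \calN(\cdot\mid\bmu_i^\star,\bSigma_{in}^\star)$ for the variational solution, and abbreviate the BvM rate of Theorem \ref{thm:BvM} by $\epsilon_n = \sqrt m(\log n)^{3\xi}/\sqrt n$. The structural crux is that $q_i^0$ is feasible for the variational program \eqref{eqn:VI}, i.e.\ $q_i^0\in\calQ$. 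Indeed, the block-diagonal Jacobian appearing in the definition of $\bGamma_{in}^*$ has a \emph{diagonal} lower-right block, because $\calT_\btheta$ acts coordinatewise across layers; consequently the $\btheta$--$\btheta$ block of $\bGamma_{in}^*$ is diagonal, inheriting the block pattern of $\bGamma_{in}$ established in Proposition \ref{prop:Fisher_information_matrix}. By the block-inverse identity, the Schur complement of the top-left block of $(\bGamma_{in}^*)^{-1}$ equals the inverse of this diagonal $\btheta$--$\btheta$ block and is therefore itself diagonal; taking the lower-triangular Cholesky factor of the top-left block and reading off the cross block then writes $\frac{1}{mn}(\bGamma_{in}^*)^{-1}$ in exactly the factored form \eqref{eqn:Gaussian_VI_distribution_class}, with lower-triangular $\bL_i$ and diagonal $\bD_i$.

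Feasibility of $q_i^0$ and optimality of $q_i^\star$ give $D_{\mathrm{KL}}(q_i^\star\|\pi_{(\bx_i,\bnu_i)}(\cdot\mid\mathbb{A}))\le D_{\mathrm{KL}}(q_i^0\|\pi_{(\bx_i,\bnu_i)}(\cdot\mid\mathbb{A}))$. I would first transfer Theorem \ref{thm:BvM} to the $(\bx_i,\bnu_i)$ coordinates through the smooth maps $\calT_\bz,\calT_\btheta$ --- the curvature of these maps contributes only terms of order $1/\sqrt{mn} = \optilde(\epsilon_n)$ because the posterior concentrates at scale $1/\sqrt{mn}$ --- and then upgrade the resulting weighted total-variation control to a divergence bound $D_{\mathrm{KL}}(q_i^0\|\pi) = \Optilde(\epsilon_n)$. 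Here the arbitrary-order weight $1+\|\bt_{\bz_i}\|_2^\alpha$ in Theorem \ref{thm:BvM} controls all polynomial moments of the posterior--target discrepancy, and Assumption \ref{assumption:prior_transform} (bounded and concave log-priors) controls the prior's contribution to $\log(q_i^0/\pi)$. To pass from $q_i^0$ to $q_i^\star$, I would use the identity $D_{\mathrm{KL}}(q_i^\star\|q_i^0) = D_{\mathrm{KL}}(q_i^\star\|\pi) + \expect_{q_i^\star}[\log(\pi/q_i^0)]$: the first term is $\Optilde(\epsilon_n)$ by the displayed inequality, and the second is bounded by the same weighted-moment estimate once $q_i^\star$ is localized a priori --- having small KL to $\pi$, it concentrates within $\Optilde(1/\sqrt{mn})$ of the MLE and hence shares the moment bounds of $q_i^0$. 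This yields $D_{\mathrm{KL}}(q_i^\star\|q_i^0) = \Optilde(\epsilon_n)$, and Pinsker's inequality $d_{\mathrm{TV}}\le\sqrt{D_{\mathrm{KL}}/2}$ produces $d_{\mathrm{TV}}(q_i^\star,q_i^0) = \Optilde(\sqrt{\epsilon_n}) = \Optilde(m^{1/4}(\log n)^{3\xi/2}/n^{1/4})$, the claimed rate; the halving of the exponent relative to $\epsilon_n$ is precisely the Pinsker conversion.

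For the marginal central limit theorem, note that $\bGamma_{in}^*$ has eigenvalues bounded above and below under Assumptions \ref{assumption:condition_number}--\ref{assumption:likelihood} together with the boundedness of the transformation Jacobians, so both covariances are $\asymp\frac{1}{mn}\eye$. Small total-variation distance between two Gaussians with nearly identical, well-conditioned covariances forces their means close: $\sqrt{mn}\,\|\bmu_i^\star - (\widehat\bx_i,\widehat\bnu_i)\|_2 = \Optilde(\sqrt{\epsilon_n}) = \optilde(1)$, and in particular $\sqrt{mn}(\bmu_{i1}^\star - \widehat\bx_i) = \optilde(1)$. A first-order expansion of $\calT_\bz$, whose Jacobian is bounded on the relevant neighborhood, then gives $\sqrt{mn}(\calT_\bz(\widehat\bx_i^{(\mathrm{VI})}) - \widehat\bz_i^*) = \optilde(1)$ since $\calT_\bz(\widehat\bx_i) = \widehat\bz_i^*$; substituting into the asymptotic normality of $\widehat\bz_i^*$ from Theorem \ref{thm:MLE} and invoking Slutsky's theorem delivers the stated limit for $\calT_\bz(\widehat\bx_i^{(\mathrm{VI})})$.

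The main obstacle is the second step: upgrading the weighted-$L^1$ Bernstein--von Mises control of Theorem \ref{thm:BvM} into KL-type bounds on the specific Gaussian pairs, uniformly as the number of layers $m$ grows. Because the ambient dimension $d-1+m$ diverges, one must track operator norms and moment constants explicitly in $m$ --- this is where the scaling condition $m\ll n/(\log n)^{6\xi}$ is consumed --- and the term $\expect_{q_i^\star}[\log(\pi/q_i^0)]$ is delicate since $q_i^\star$ is known only through its optimality. A secondary difficulty is that $\widetilde{\ell}_{in}$ is built from the plug-in spectral estimators $(\widetilde{\bZ},(\widetilde{\btheta}_j)_{j},(\widetilde{\bB}^{(t)})_{t})$ rather than the true nuisance parameters, so the underlying quadratic expansion is taken around a perturbed model; one must verify that this plug-in error propagates only into the remainder and leaves the leading Gaussian (hence $\bGamma_{in}^*$) intact, which again relies on $\epsilon_n$ dominating the spectral estimation error.
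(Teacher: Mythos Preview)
Your backbone matches the paper's: show $q_i^0\in\calQ$, use optimality to get $D_{\mathrm{KL}}(q_i^\star\|\pi)\le D_{\mathrm{KL}}(q_i^0\|\pi)$, bound the right-hand side by $\Optilde(\epsilon_n)$, then convert to total variation via Pinsker. Two execution points differ. First, the paper does \emph{not} upgrade the weighted $L^1$ bound of Theorem~\ref{thm:BvM} to a KL bound; that direction is not Pinsker and the weight $1+\|\bt_{\bz_i}\|_2^\alpha$ does not by itself control $\log(q_i^0/\pi)$. Instead the paper proves $D_{\mathrm{KL}}(q_i^0\|\pi)=\Optilde(\epsilon_n)$ directly (Lemma~\ref{lemma:VB_lemma}) by expanding $\log(q_i^0/\pi)$ through a second-order Taylor expansion of $\calL_{in}$ around $(\widehat\bx_i,\widehat\bnu_i)$, splitting the integral over the event $\{\sqrt{mn}(\|\bx_i-\widehat\bx_i\|_2+\|\bnu_i-\widehat\bnu_i\|_\infty)\le(\log n)^\xi\}$ and its complement, and invoking two auxiliary lemmas---global Lipschitz continuity of the Hessian of $\calL_{in}$ (Lemma~\ref{lemma:Lipschitz_Hessian}) and its local convergence to $-\bGamma_{in}^*$ (Lemma~\ref{lemma:Hessian_local_convergence_transform})---together with the normalizing-constant convergence already established inside the proof of Theorem~\ref{thm:BvM}. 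Second, and more importantly, the paper avoids your ``main obstacle'' entirely: rather than bounding $D_{\mathrm{KL}}(q_i^\star\|q_i^0)$ via the identity $D_{\mathrm{KL}}(q_i^\star\|q_i^0)=D_{\mathrm{KL}}(q_i^\star\|\pi)+\expect_{q_i^\star}[\log(\pi/q_i^0)]$, it simply uses the total-variation triangle inequality
\[
d_{\mathrm{TV}}(q_i^\star,q_i^0)\le d_{\mathrm{TV}}(q_i^\star,\pi)+d_{\mathrm{TV}}(q_i^0,\pi),
\]
with the first term controlled by Pinsker and optimality and the second by Theorem~\ref{thm:BvM}. This sidesteps any need to localize $q_i^\star$ a priori or to control the cross term $\expect_{q_i^\star}[\log(\pi/q_i^0)]$ in growing dimension. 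For the CLT, your Gaussian mean-comparison argument is a valid alternative; the paper instead passes to characteristic functions of the $(d-1)$-dimensional $\bx_i$-marginals and extracts $\sqrt{mn}(\bmu_{i1}^\star-\widehat\bx_i)=o_p(1)$ from their uniform proximity, which amounts to the same conclusion.
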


\begin{remark}[Necessity of Joint VI]
Although the inferential parameter of interest is the membership profile $\bz_i$, the VI framework posits a joint Gaussian distribution over $\bz_i$ and the nuisance degree correction parameter $\btheta_i$ with a structured covariance matrix specified by \eqref{eqn:Gaussian_VI_distribution_class}. In contrast to the (structured) mean-field VI methods \citep{loyal2024fast,loyal2023eigenmodel,zhao2024structured} where the latent feature vectors and the degree correction parameters are modeled independently under the class of candidate distributions $\calQ$, the proposed Gaussian VI accounts for the correlation between $\bz_i$ and $\btheta_i$. This joint VI approach is necessary to ensure valid uncertainty quantification. Indeed, one can show \citep{han2019statistical,doi:10.1080/01621459.2018.1473776} that for a structured mean-field VI, under mild regularity conditions, 
\[
  d_{\mathrm{TV}}\bigg(q_i(\bx_i, \bnu_i), \mathrm{N}\bigg(\begin{bmatrix}\widehat{\bx}_i\\\widehat{\bnu}_i\end{bmatrix}, \frac{1}{{mn}}(\textsf{block-diag}(\bGamma_{in}^*))^{-1}\bigg)\bigg) \overset{\prob_0}{\to}0\quad\text{as }n\to\infty, 
\]
where
\begin{align*}
  \textsf{block-diag}(\bGamma_{in}^*)
& = \begin{bmatrix}
\frac{\partial\calT_\bz}{\partial\bx\transpose}(\bx_{0i}) & \\ 
& \frac{\partial\calT_\btheta}{\partial\bnu\transpose}(\bnu_{0i})
\end{bmatrix}\transpose
\begin{bmatrix}
\sum_{t = 1}^m\frac{\theta_{0i}^{2(t)}\bG_{0in}^{(t)}}{m} &  & & \\
& \frac{\bz_{0i}\transpose\bG_{0in}^{(1)}\bz_{0i}}{m} & &\\
& & \ddots & \\
& & & \frac{\bz_{0i}\transpose\bG_{0in}^{(m)}\bz_{0i}}{m}
\end{bmatrix}
\\
&\quad\times \begin{bmatrix}
\frac{\partial\calT_\bz}{\partial\bx\transpose}(\bx_{0i}) & \\ 
& \frac{\partial\calT_\btheta}{\partial\bnu\transpose}(\bnu_{0i})
\end{bmatrix}.
\end{align*}
Such a discrepancy between the BvM limit covariance matrix and the asymptotic covariance matrix of $(\widehat{\bx}_i, \widehat{\bnu}_i)$ can lead to miscalibrated uncertainty quantification. 
In contrast, the structured covariance matrix in \eqref{eqn:Gaussian_VI_distribution_class} allows the proposed Gaussian VI to produce credible sets with the desired coverage probability, while still being more computationally efficient than a Gaussian VI with a fully unstructured covariance matrix. 
\end{remark}

\section{U.S. airport transportation network data analysis}
\label{sec:real_world_data_analysis}

We illustrate the proposed methodology using a longitudinal network dataset of U.S. domestic air transportation from January 2016 to September 2021, which is publicly available from the U.S. Bureau of Transportation Statistics \citep{jacobini2020bureau}. Each vertex represents a commercial airport in the 48 contiguous states, and each layer corresponds to one calendar month. The edge weight $A_{ij}^{(t)}$ records the number of flights from airport $i$ to airport $j$ in month $t$. 

Following \cite{Agterberg28072025}, we restrict attention to the intersection of the largest connected components across all months, which yields $n = 343$ airports observed in every layer. After preprocessing, the dataset consists of $69$ vertex-aligned networks. To study the impact of the COVID-19 pandemic, we split the data into two regimes and analyze them separately:
\begin{itemize}[nosep]
    \item \emph{Pre-COVID period:} January 2016 to December 2019 ($m = 48$ layers).
    \item \emph{During-COVID period:} January 2020 to September 2021 ($m = 21$ layers).
\end{itemize}
This split allows the multilayer model to capture relatively stable structural features within each regime while accommodating changes in community structure and traffic intensity across regimes.


\textbf{Model specification and estimation.} Flight counts in this dataset are visibly overdispersed relative to a Poisson model. To accommodate this, we adopt a negative binomial edge model within the ML-DCMM network framework.  Conditional on membership profiles $\bz_i$, degree correction parameters $\theta_i^{(t)}$, and block mean matrices $\bB^{(t)}$, we use
$A_{ij}^{(t)} \sim \mathrm{NB}(\rho^{(t)}, \rho^{(t)} / (\mu_{ij}^{(t)} + \rho^{(t)}))$, $i,j\in[n]$, $t\in[m]$,  as the working model, where the mean is $\mu_{ij}^{(t)} = \theta_i^{(t)}\bz_i\transpose\bB^{(t)}\bz_j\theta_j^{(t)}$ and $\rho^{(t)} > 0$ is a layer-specific dispersion parameter. The probability mass function is
\begin{align*}
f\!\left(A_{ij}^{(t)};\mu_{ij}^{(t)},\rho^{(t)}\right)
&=
\frac{\Gamma(A_{ij}^{(t)} + \rho^{(t)})}{\Gamma(\rho^{(t)})\Gamma(A_{ij}^{(t)} + 1)}
\frac{(\rho^{(t)})^{\rho^{(t)}} \left(\mu_{ij}^{(t)}\right)^{A_{ij}^{(t)}}}
     {\left(\mu_{ij}^{(t)} + \rho^{(t)}\right)^{A_{ij}^{(t)} + \rho^{(t)}}}.
\end{align*}

The dispersion parameters $(\rho^{(t)})_{t=1}^m$ are treated as layer-specific nuisance parameters. For each regime (pre-COVID and during-COVID), we first compute preliminary estimators $\widetilde{\bZ}$, $(\widetilde{\btheta}_i)_{i=1}^n$, and $(\widetilde{\bB}^{(t)})_{t=1}^m$ using Algorithm~\ref{alg:aggregated_mixed_SCORE} with $d = 4$, guided by \citet{Agterberg28072025}. We then estimate the dispersion parameters by maximizing a profile likelihood in which the mean parameters are frozen at these preliminary values:
\[
\widehat{\rho}^{(t)}
=
\argmax_{\rho^{(t)} > 0}
\sum_{1\leq i\leq j\leq n}
\log f\!\left(A_{ij}^{(t)};\,
\widetilde{\theta}_i^{(t)}\widetilde{\bz}_i\transpose\widetilde{\bB}^{(t)}\widetilde{\bz}_j\widetilde{\theta}_j^{(t)},\,\rho^{(t)}\right),
\quad t\in[m].
\]
With $(\widehat{\rho}^{(t)})_{t=1}^m$ in hand, we form the spectral-assisted likelihood \eqref{eqn:spectral_assisted_loglik} and solves the Gaussian VI optimization problem \eqref{eqn:Gaussian_VI} separately for the pre-COVID and during-COVID networks. For airport $i$, the variational posterior for the transformed parameters $(\bx_i,\bnu_i)$ is Gaussian with mean $\bmu_i = (\bmu_{i1},\bmu_{i2})$ and structured covariance as in \eqref{eqn:Gaussian_VI_distribution_class}. We obtain point estimates of the membership profiles via
$\widehat{\bz}_i = \calT_\bz(\bmu_{i1})$, $i\in[n]$, and credible sets by back-transforming the corresponding Gaussian variational posterior.


\textbf{Estimated community structure.} To visualize the inferred network structure, we hard-cluster airports according to their dominant membership: $\widehat{g}_i = \argmax_{k\in[d]}\widehat{z}_{ik}$, $i\in[n]$, and plot the resulting community labels on a map of the contiguous United States. Figure~\ref{fig:USAirport_community_assignment} displays the estimated communities in the pre-COVID (left) and during-COVID (right) regimes, with community labels for the during-COVID period aligned to pre-COVID communities for ease of comparison. Before the onset of COVID-19, the estimated communities exhibit a clear east versus west segmentation. One community is concentrated on major western hubs, another on the eastern seaboard, and the remaining communities primarily capture central and cross-country connectivity. This pattern is consistent with the geography of U.S. airline networks and with the findings of \cite{Agterberg28072025}. During the COVID-19 period, the community pattern has changed remarkably. Several western airports that formed a distinct group pre-COVID have become more tightly integrated with eastern and central hubs, resulting in a large mixed community spanning most of the country. These shifts in community assignments suggest that pandemic-era schedule adjustments and demand shocks altered the effective structural roles of many airports in the national network, even after accounting for degree heterogeneity through the layer-specific parameters $\theta_i^{(t)}$.
\begin{figure}[htbp]
  \centering
  \includegraphics[width=\textwidth]{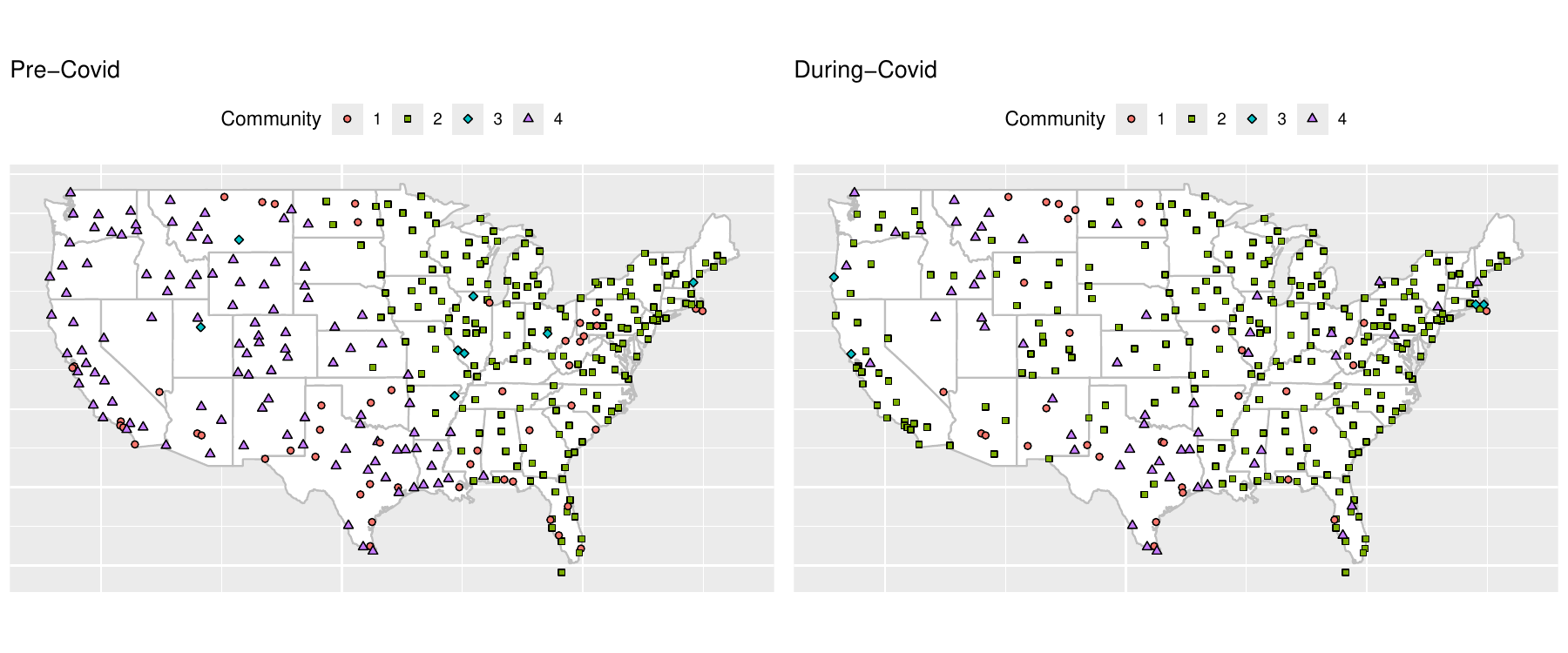}
  \caption{Community detection for U.S. airports in Section~\ref{sec:real_world_data_analysis}: Pre-COVID (left) versus during-COVID (right). Labels for during-COVID are aligned to the pre-COVID anchor for comparison.}
  \label{fig:USAirport_community_assignment}
\end{figure}


\textbf{Membership uncertainty for major hubs.} A distinctive advantage of the proposed approach is that it delivers airport-specific uncertainty quantification for the mixed memberships. Figure~\ref{fig:USAirport_membership_large_hubs} compares the $90\%$ variational posterior credible ellipses for several large hubs (ATL, CLT, DEN, DFW, JFK, LAS, LAX, ORD) between the pre-COVID and the during-COVID networks, plotted in the $(d-1)$-dimensional simplex, together with those for the remaining airports.

In the pre-COVID period (Figure~\ref{fig:USAirport_membership_large_hubs}, left panel), the ellipses for most large hubs are concentrated near the vertices of the simplex. This indicates that these hubs behave as nearly pure members of specific communities, with relatively small uncertainty in their roles. For instance, major western hubs such as LAX and LAS are strongly associated with one community, and ATL and CLT are tightly associated with an eastern community.

In contrast, during the COVID-19 period (Figure~\ref{fig:USAirport_membership_large_hubs}, right panel), the inferred membership profiles of several large hubs shift toward the interior of the simplex and their credible ellipses enlarge. Hubs such as DEN, ATL, and DFW exhibit posterior mass that spreads across multiple communities, indicating that they play more mixed roles in the network. This shift is consistent with pandemic-era route reconfiguration and temporary capacity reductions, which change the relative prominence of long-haul versus regional connections and alter how these hubs mediate traffic between communities.
\begin{figure}[t]
  \centering
  \includegraphics[width=1\textwidth]{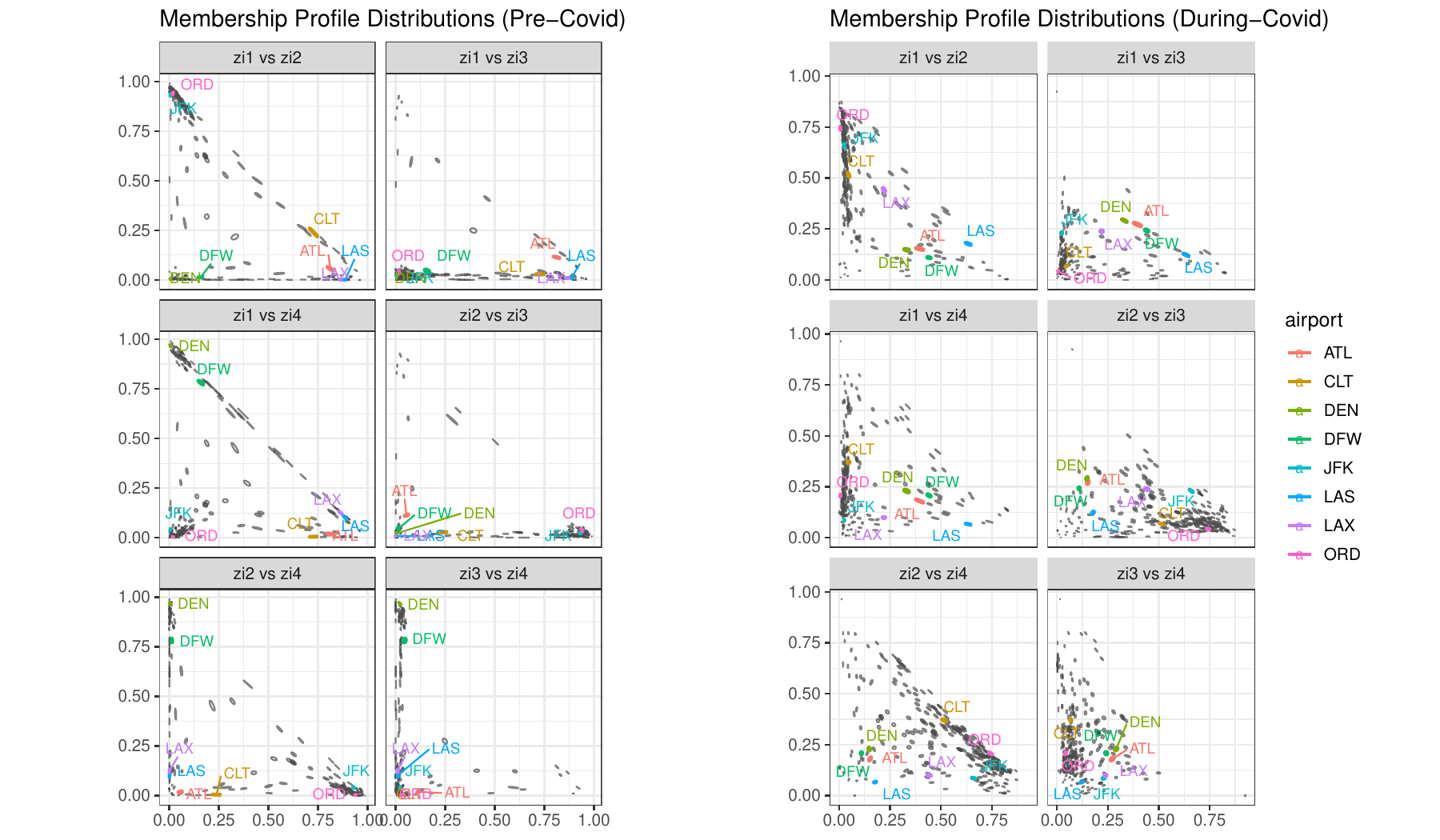}
  \caption{Variational posterior distributions of the membership profiles of large hubs (ATL, CLT, DEN, DFW, JFK, LAS, LAX, ORD) versus other airports for the pre-COVID versus during-COVID networks. Ellipses represent $90\%$ credible regions in the simplex.}
  \label{fig:USAirport_membership_large_hubs}
\end{figure}


\textbf{Degree correction and temporal patterns.} The degree correction parameters $(\theta_i^{(t)})_{i\in[n],t\in[m]}$ summarize airport-specific activity levels after controlling for the block structure. To make these parameters interpretable across layers, we normalize $(\theta_i^{(t)})_{i\in[n],t\in[m]}$ and $(\bB^{(t)})_{t = 1}^m$ such that $\max_{k,l\in[d]} B_{kl}^{(t)} = 1$ for each $t$ (Note that $\theta_i^{(t)}$ and $\bB^{(t)}$ are generic values under the variational posterior distribution). Under this normalization, $\theta_i^{(t)}$ can be interpreted as a relative multiplicative factor for the overall traffic intensity of airport $i$ in month $t$.

Figure~\ref{fig:USAirport_degree_correction_large_hubs} plots the posterior means of $\theta_i^{(t)}$ over time, together with $95\%$ time-wise variational credible intervals, for the large hubs and for an aggregate of the remaining airports. Two features are apparent. First, all groups experience a sharp decline in degree correction parameters around the start of 2020, consistent with the abrupt collapse in U.S. air traffic following the onset of the COVID-19 pandemic. The drop is pronounced and falls well outside the pre-COVID credible bands, providing clear evidence that the observed changes cannot be explained by typical month-to-month variability. Second, the magnitude and pace of recovery differ across hubs. Some hubs exhibit relatively rapid rebounds in $\theta_i^{(t)}$, while others remain substantially below their pre-COVID levels throughout 2020 and 2021. The proposed model quantifies these differences while jointly accounting for global community structure and edge overdispersion.


Taken together, the U.S. airlines analysis shows that the spectral-assisted Gaussian VI method recovers an interpretable multilayer community structure, reveals meaningful shifts in airport roles across regimes, and delivers airport-specific uncertainty quantification, all at a computational cost that is practical for networks of this size.
\begin{figure}[htbp]
  \centering
  \includegraphics[width=.7\textwidth]{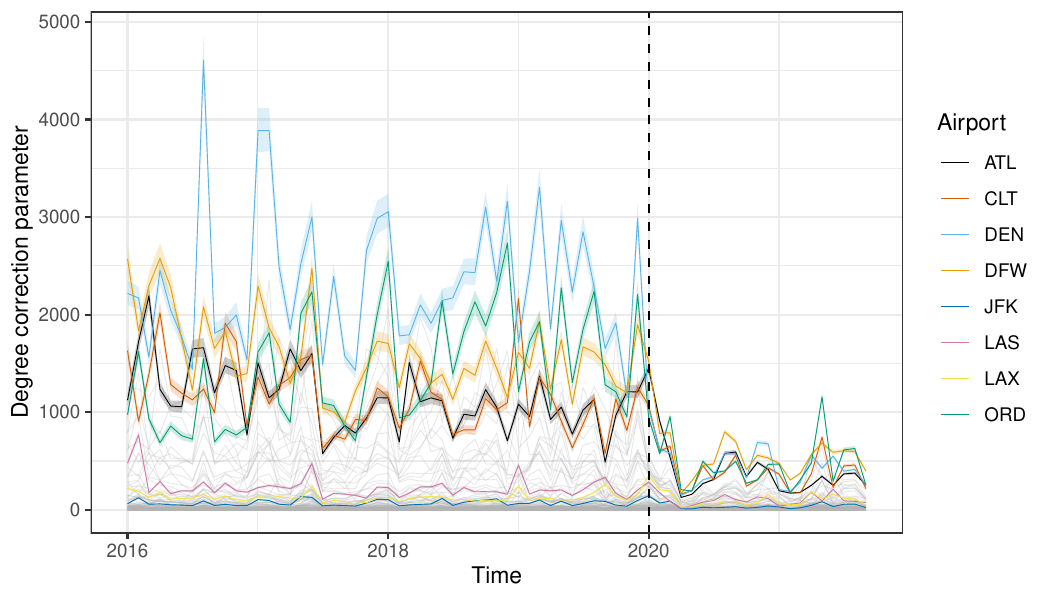}
  \caption{Degree correction parameters for large hubs (ATL, CLT, DEN, DFW, JFK, LAS, LAX, ORD) and for other airports versus time. Shaded regions indicate $95\%$ variational posterior credible intervals for $(\theta_i^{(t)}: i\in[n], t\in[m])$.}
  \label{fig:USAirport_degree_correction_large_hubs}
\end{figure}

\section{Conclusion}
\label{sec:conclusion}

We propose a Gaussian variational inference framework for uncertainty quantification in the multilayer degree-corrected mixed-membership (ML-DCMM) network model that avoids the computational burden of full MCMC. The key ingredient is a spectral-assisted likelihood that combines layer-wise spectral estimators with exponential family likelihood contributions. This construction leads to vertex-wise inference problems that are separable across vertices yet retain the essential dependence between mixed memberships and degree correction parameters. Leveraging this likelihood, we propose a computationally efficient Gaussian VI with a structured covariance matrix derived from a block Cholesky factorization of the Fisher information matrix. 
From a theoretical standpoint, we establish the Bernstein--von Mises theorems for both the exact posterior and the variational posterior. The variational Bernstein--von Mises result shows that the proposed Gaussian variational inference achieves asymptotically correct uncertainty quantification for mixed membership profiles, provided that the covariance structure captures the dependence between membership and degree parameters.

Our analysis clarifies a trade-off between computational efficiency and the fidelity of uncertainty quantification. A fully unstructured Gaussian variational family would, in principle, match the exact posterior covariance, but is computationally more expensive in moderately large multilayer settings. In contrast, block-diagonal or structured mean-field variational families ignore important correlations and typically yield credible sets with poor frequentist coverage. The structured covariance class specified by \eqref{eqn:Gaussian_VI_distribution_class} strikes a balance between these extremes: it preserves the relevant dependence structure needed for valid uncertainty quantification while remaining tractable to optimize with large network datasets. Our simulation studies confirm that this choice delivers both accurate point estimation and empirical coverage close to the nominal level, whereas simpler covariance structures, such as block-diagonal covariance (corresponding to a structured mean-field family), do not.

Several extensions remain open. First, our asymptotic guarantees are derived for vertices whose membership proportions are strictly positive in all communities. In applications where interest centers on pure or nearly pure nodes, it would be natural to consider spike-and-slab priors and variational distributions for the membership profiles. Establishing Bernstein--von Mises type results in such sparsified settings would require new techniques that accommodate boundary parameters and model selection effects. Second, the current methodology follows an ``estimate-then-aggregate'' strategy, in which layer-specific spectral information is first distilled and then combined. In regimes where individual layers are extremely sparse, an ``aggregate-then-estimate'' strategy in the spirit of \cite{Lei02102023,xie2024biascorrected} may extract more signal, but adapting such ideas to multilayer models with degree heterogeneity is nontrivial.

Beyond these directions, it would be interesting to integrate covariates at the vertex or layer level, to extend the framework to directed or bipartite multilayer networks, and to explore fully dynamic versions of the ML-DCMM network model in which membership profiles evolve over time. We expect the combination of spectral-assisted likelihoods with carefully structured variational families to remain a useful paradigm for these extensions, providing a principled and scalable route to uncertainty quantification in increasingly complex network models.

\begin{appendices}
\counterwithin{theorem}{section}
\counterwithin{result}{section}
\counterwithin{proposition}{section}
\counterwithin{corollary}{section}
\counterwithin{remark}{section}

\appendix

\allowdisplaybreaks

\begin{center}
\LARGE\bf Supplementary Material
\end{center}

\begin{abstract}
This supplementary material provides the details of the related algorithms, simulation studies, and the proofs of the results in Section \ref{sec:theoretical_properties} of the manuscript. Section \ref{sec:additional_notations} introduces additional notations. Section \ref{sec:detailed_algorithms} provides additional details of the related algorithms. Section \ref{sec:simulation_studies} reports empirical results via simulation studies. Section \ref{sec:auxiliary_results} provides several auxiliary results that are used throughout the proofs. Section \ref{sec:proof_of_theorem_mle} establishes the first-order stochastic expansions of the preliminary estimators. Sections \ref{sec:proof_of_theorem_mle}, \ref{sec:proof_of_theorem_ref_bvm}, and \ref{sec:proof_of_variational_BvM_theorem} elaborate the proofs of Theorems \ref{thm:MLE}, \ref{thm:BvM}, and \ref{thm:variational_BvM}, respectively. 
\end{abstract}

\section{Additional Notations}
\label{sec:additional_notations}

We first provide additional mathematical notations that are used throughout the proofs below. For any positive integers $n\geq d$, we denote by $\mathbb{O}(n, d)$ the collection of all $n\times d$ Stiefel matrices $\mathbb{O}(n, d) = \{\bU\in\mathbb{R}^{n\times d}:\bU\transpose\bU = \eye_d\}$ and write simply $\mathbb{O}(d) = \mathbb{O}(d, d)$. The matrix sign $\mathsf{sgn}(\bH)$ of a square matrix $\bH\in\mathbb{R}^{d\times d}$ is defined as follows: Let $\bH$ yield singular value decomposition $\bW_1\textsf{diag}(\sigma_1(\bH),\ldots,\sigma_d(\bH))\bW_2\transpose$, where $\bW_1,\bW_2\in\mathbb{O}(d)$. Then, $\textsf{sgn}(\bH) = \bW_1\bW_2\transpose$. For a matrix $\bU = [u_{ik}]_{n\times d}\in\mathbb{R}^{n\times d}$, we denote by $\|\bU\|_{\mathrm{F}} = (\sum_{i = 1}^n\sum_{k = 1}^du_{ik}^2)^{1/2}$ the Frobenius norm of $\bU$, $\|\bU\|_2 = \sigma_1(\bU)$ the spectral norm of $\bU$, $\|\bU\|_\infty = \max_{i\in[n]}\sum_{k = 1}^d|u_{ik}|$ the infinity norm of $\bU$, and $\|\bU\|_{2\to\infty} = \max_{i\in[n]}(\sum_{k = 1}^du_{ik}^2)^{1/2}$ the two-to-infinity norm of $\bU$. 
For a vector $\bu = [u_1,\ldots,u_n]\transpose\in\mathbb{R}^n$, we denote by $\|\bu\|_\infty = \max_{i\in[n]}|u_i|$ the infinity norm of $\bu$. We use $c,c_1,c_2,C,C_1,C_2,\ldots$ to denote generic constants that may vary from line to line but are fixed and independent of $n$ throughout. For a set $\calN$ with finitely many elements, we use $|\calN|$ to record its cardinality.

Next, we introduce several new notations that are related for the ML-DCMM network model, our proposed method, and the proof techniques employed in this supplementary material. Let $\bP_0^{(t)}$ denote the true value of $\bP^{(t)}$ and $\bU_0^{(t)}\bS_0^{(t)}(\bU_0^{(t)})\transpose$ denote the spectral decomposition of $\bP_0^{(t)}$, where $\bU^{(t)}_0\in\mathbb{O}(n, d)$ is the matrix of orthonormal eigenvectors and $\bS_0^{(t)} = \textsf{diag}(\lambda_1(\bP_0^{(t)}),\ldots,\lambda_d(\bP_0^{(t)}))$. For simplicity, we also denote by $\lambda_k^{(t)} = \lambda_k(\bP_0^{(t)})$. Similarly, we denote by $\bZ_0,\bTheta^{(t)}_0,\bB^{(t)}_0$ the true values of $\bZ,\bTheta^{(t)},\bB^{(t)}$, respectively. Define $\bXi^{(t)}_0 = (\bZ_0\transpose\bTheta_0^{2(t)}\bZ_0)^{-1}\bU_0^{(t)}$ and let $\xi_{0lk}^{(t)}$ be the $(l, k)$th element of $\bXi_0^{(t)}$. Denote by $\bv_{0k}^{(t)} = [\xi_{02k}^{(t)}/\xi_{01k}^{(t)},\ldots,\xi_{0dk}^{(t)}/\xi_{01k}^{(t)}]\transpose\in\mathbb{R}^{d - 1}$ for each $k\in[d]$ and let $\bV^{(t)} = [\bv_{01}^{(t)},\ldots,\bv_{0d}^{(t)}]\in\mathbb{R}^{(d - 1)\times d}$. By Lemma C.1 in \cite{JIN2023}, one can select the sign of $\bu_{01}^{(t)}$ such that every element of it is positive, and for this choice of $\bu_{01}^{(t)}$, we have $\xi_{01k}^{(t)} > 0$ for all $k\in [d]$. Let $\bxi_{01}^{(t)} = [\xi_{011}^{(t)},\ldots,\xi_{0d1}^{(t)}]\transpose$ and $\bw_{0i}^{(t)} = [\one_d,\bV^{(t)\mathrm{T}}]^{-\mathrm{T}}[1,\br_{0i}^{(t)\mathrm{T}}]\transpose$. Let $\by_{0j}^{(t)} = \bB_0^{(t)}\bz_{0j}\theta_{0j}^{(t)}$ and $\widetilde{\by}_{j}^{(t)} = \widetilde{\bB}^{(t)}\widetilde{\bz}_j\widetilde{\theta}_j^{(t)}$, where $\widetilde{\bB}^{(t)}$, $\widetilde{\bz}_j$, and $\widetilde{\theta}_j^{(t)}$ are computed using Algorithm \ref{alg:aggregated_mixed_SCORE}. Denote by $\widehat{\bU}^{(t)} = [\widehat{\bu}_1^{(t)},\ldots,\widehat{\bu}_d^{(t)}]$ and $\bU_0^{(t)} = [\bu_{01}^{(t)},\ldots,\bu_{0d}^{(t)}]$. 
Now denote by 
$\widehat{\bS}^{(t)} = \textsf{diag}(\widehat{\lambda}_1^{(t)},\ldots,\widehat{\lambda}_d^{(t)})$
and $\bE^{(t)} = \bA^{(t)} - \bP_0^{(t)} = [E_{ij}^{(t)}]_{n\times n}$. For simplicity, we also denote by $\bS_0^{-(t)} = (\bS_0^{(t)})^{-1}$, $\widehat{\bS}^{-(t)} = (\widehat{\bS}^{(t)})^{-1}$, $\bS_0^{-m(t)} = (\bS_0^{(t)})^{-m}$, $\bE^{m(t)} = (\bE^{(t)})^m$, $\widehat{\bS}^{-m(t)} = (\widehat{\bS}^{(t)})^{-m}$, $\bU^{(\mathrm{t})\mathrm{T}} = (\bU^{(t)})\transpose$, and $\widehat{\bU}^{(t)\mathrm{T}} = (\widehat{\bU}^{(t)})\transpose$. Let $u_{0ik}^{(t)},\widehat{u}_{ik}^{(t)}$ denote the $(i, k)$th element of $\bU_0^{(t)},\widehat{\bU}^{(t)}$, $\br_{0i}^{(t)},\widehat{\br}_i^{(t)}$ denote the $i$th row of $\bR_0^{(t)},\widehat{\bR}^{(t)}$, and $r_{0ik}^{(t)},\widehat{r}_{ik}^{(t)}$ denote the $(i, k)$ element of $\bR_0^{(t)},\widehat{\bR}^{(t)}$, respectively, where we define $r_{0ik}^{(t)} = u_{0i(k + 1)}^{(t)}/u_{0i1}^{(t)}$, $k\in[d - 1]$. 

\section{Detailed Algorithms}
\label{sec:detailed_algorithms}

This section provides additional details on the algorithms related to the manuscript. 

\subsection{Mixed-SCORE Algorithm}
\label{sub:mixed_score_algorithm}

For ease of reference, we review the mixed-SCORE method \citep{JIN2023} in Algorithm \ref{alg:mixed_SCORE} below.
\begin{breakablealgorithm}
\caption{\texttt{Mixed-SCORE}}
\label{alg:mixed_SCORE}
\begin{algorithmic}[1]
\State \textbf{Input:} Adjacency matrices $\bA$ and number of communities $d$.
    \State Compute the spectral decomposition $\bA = \sum_{k = 1}^n\widehat{\lambda}_k\widehat{\bu}_k\widehat{\bu}_k\transpose$, where $(\widehat{\bu}_k)_{k = 1}^n$ are orthonormal eigenvectors and the eigenvalues $|\widehat{\lambda}_1|\geq\ldots\geq|\widehat{\lambda}_n|$ are sorted in nonincreasing order in magnitude. 

    \State Let $\widehat{d}_+$ be the number of positive eigenvalues in $\{\widehat{\lambda}_1,\ldots,\widehat{\lambda}_d\}$. Rearrange the indices such that $\widehat{\lambda}_1\geq\ldots\geq\widehat{\lambda}_{\widehat{d}_+} > 0 > \widehat{\lambda}_{\widehat{d}_+ + 1}\geq\ldots\geq\widehat{\lambda}_d$ and let $\widehat{\bu}_k$ be the corresponding eigenvector. Let $\bU_{\bA} = [\widehat{u}_{ik}]_{n\times d} = [\widehat{\bu}_1, \ldots, \widehat{\bu}_d]$. Pick $\widehat{\bu}_1$ such that $\sum_{i = 1}^n\mathbbm{1}(\widehat{u}_{i1} > 0) \geq 2n/3$. 

    \State Compute the ratios of eigenvectors $\widehat{\bR} = [\widehat{R}_{ik}] = [\widehat{\br}_1,\ldots,\widehat{\br}_n]\in \mathbb{R}^{n\times (d - 1)}$
    \[
    \widehat{R}_{ik} = \textsf{sgn}\bigg(\frac{\widehat{u}_{i(k + 1)}}{\widehat{u}_{i1}}\bigg)\min\bigg(
    \bigg|\frac{\widehat{u}_{i(k + 1)}}{\widehat{u}_{i1}}\bigg|,\log n\bigg),\quad i\in[n],\;k\in[d],\;t\in[m].
    \]

    \State Compute row indices $i_1,\ldots,i_d$ corresponding to the sketched vertex search (SVS) \citep{JIN2023}. Denote by $\widehat{\bV} = [\widehat{\br}_{i_1},\ldots,\widehat{\br}_{i_d}] = [\widehat{\bv}_1,\ldots,\widehat{\bv}_d]\in\mathbb{R}^{(d - 1)\times d}$.

    \State Compute vector $\widehat{\bb}_1 = [\widehat{b}_{11},\ldots,\widehat{b}_{1d}]\transpose$, where 
    \[ 
    \widehat{b}_{1k} = \left\{\widehat{\lambda}_1 + \widehat{\bv}_k\transpose\textsf{diag}\left(\widehat{\lambda}_2,\ldots,\widehat{\lambda}_d\right)\widehat{\bv}_k\right\}^{-1/2},\quad \widehat{\bw}_i = [\widehat{w}_{i1},\ldots,\widehat{w}_{id}]\transpose = \begin{bmatrix}\one_d\transpose\\\widehat{\bV}\end{bmatrix}^{-1}\begin{bmatrix}1 \\ \widehat{\br}_i\end{bmatrix}.
    \]
    \State Estimate the membership profiles $\overline{\bZ} = [\overline{z}_{ik}]_{n\times d}$ by 
    \[
    \overline{z}_{ik} = \frac{\max(0, \widehat{w}_{ik}/\widehat{b}_{1k})}{\sum_{l = 1}^d\max(0, \widehat{w}_{il}/\widehat{b}_{1l})},\quad i \in[n],\; k\in[d].
    \]
\State \textbf{Output: } $\overline{\bZ}$, $\widehat{\bV}$, $\widehat{\bb}_1$, $(\widehat{\bu}_k, \widehat{\lambda}_k)_{k = 1}^d$. 
\end{algorithmic}
\end{breakablealgorithm}

\subsection{Preliminary Estimators for the ML-DCMM Network Model}
\label{sub:preliminary_estimators_ML_DCMM}

Next, we provide additional details on the construction of the preliminary estimators $\widetilde{\bZ}$, $(\widetilde{\btheta}_j)_{j\in[n]}$, and $(\widetilde{\bB}^{(t)})_{t = 1}^m$. The detailed
step-by-step estimation method is described in Algorithm \ref{alg:preliminary_estimators_ML_DCMM} below. Note that the preliminary estimators $(\widetilde{\btheta}_j)_{j\in[n]}$ and $(\widetilde{\bB}^{(t)})_{t = 1}^m$ are slightly different here: We re-estimate these parameters by plugging in the aggregated estimator $\widetilde{\bZ}$ instead of the layer-specific initial estimators $(\bar{\bZ}^{(t)})_{t = 1}^m$ from the mixed-SCORE algorithm. 
\begin{breakablealgorithm}
\caption{Preliminary Estimators for $(\btheta_j)_{j = 1}^n$ and $(\bB^{(t)})_{t = 1}^m$}
\label{alg:preliminary_estimators_ML_DCMM}
\begin{algorithmic}[1]
\State \textbf{Input:} Adjacency matrices $(\bA^{(t)})_{t = 1}^m$ and number of communities $d$. 
\For{$t = 1,2,\ldots,m$}
    \State Run Algorithm \ref{alg:mixed_SCORE} to obtain $(\overline{\bZ}^{(t)}, \widehat{\bV}^{(t)}, \widehat{\bb}_1^{(t)}, (\widehat{\bu}_k^{(t)}, \widehat{\lambda}_k^{(t)})_{k = 1}^d) = \texttt{mixed-SCORE}(\bA^{(t)}, d)$. 
    \State Find a permutation matrix $\bC^{(t)}\in\{0, 1\}^{d\times d}$ that minimizes $\|\overline{\bZ}^{(t)}\bC^{(t)} - \overline{\bZ}^{(1)}\|_{\mathrm{F}}^2$. 
\EndFor
\State Compute the aggregated mixed-SCORE estimator $\widetilde{\bZ} = [\widetilde{\bz}_1,\ldots,\widetilde{\bz}_n]\transpose = (1/m)\sum_{t = 1}^m\overline{\bZ}^{(t)}\bC^{(t)}$. 
\For{$t = 1,2,\ldots,m$}
    \For {$i = 1,2,\ldots,n$}
        \State Compute degree correction parameter estimator $\widetilde{\theta}_i^{(t)} = \widehat{u}_{i1}^{(t)}/(\widetilde{\bz}_i\transpose\bC^{(t)}\widehat{\bb}_1^{(t)})$
    \EndFor 
    \State Compute block mean matrix estimator
    \begin{align*}    
    \widetilde{\bB}^{(t)} = \bC^{(t)}\textsf{diag}(\widehat{\bb}_1^{(t)})[\one_d,\widehat{\bV}^{(t)\top}]\textsf{diag}(\widehat{\lambda}_1^{(t)},\ldots,\widehat{\lambda}_d^{(t)})
    \begin{bmatrix}
    \one_d^\top\\\widehat{\bV}^{(t)}
    \end{bmatrix}\textsf{diag}(\widehat{\bb}_1^{(t)})\bC^{(t)\mathrm{T}}.
    \end{align*}
\EndFor
\State \textbf{Output: } Preliminary estimators $\widetilde{\bZ}$, $(\widetilde{\btheta}_i)_{i = 1}^n = ([\widetilde{\theta}_1^{(t)},\ldots,\widetilde{\theta}_n^{(t)}]\transpose)_{i = 1}^n$, and $(\widetilde{\bB}^{(t)})_{t = 1}^m$. 
\end{algorithmic}
\end{breakablealgorithm}

\subsection{Detailed Spectral-Assisted Gaussian VI Algorithm}
\label{sub:detailed_algorithm}
We now describe the algorithm for solving the Gaussian VI problem \eqref{eqn:Gaussian_VI}. Let $\widehat{F}_{in}(\bmu_i, \bL_i, \bM_i, \bsigma_i, \beps_i)$ denote the noisy version of ${F}_{in}$ given by
\begin{align}
\widehat{F}_{in}(\bmu_i, \bL_i, \bM_i, \bsigma_i, \beps_i)
& = -\log\det(\bL_i) - \sum_{t = 1}^m\log\sigma_{it}
\nonumber
\\
\label{eqn:Gaussian_VB_objective_noisy}
&\quad - \calL_{in}\bigg(\bmu_{i1} + \frac{\bL_i\beps_{i1}}{\sqrt{mn}},\bmu_{i2} - \frac{\bM_i\bL_i\beps_{i1}}{\sqrt{mn}} + \frac{\textsf{diag}(\bsigma_i)\beps_{i2}}{\sqrt{mn}}\bigg)
\\ &\quad
 - \log\pi_{\bx}\bigg(\bmu_{i1} + \frac{\bL_i\beps_{i1}}{\sqrt{mn}}\bigg)
 - \log\pi_{\bnu}\bigg(\bmu_{i2} - \frac{\bM_i\bL_i\beps_{i1}}{\sqrt{mn}} + \frac{\textsf{diag}(\bsigma_i)\beps_{i2}}{\sqrt{mn}}\bigg).
 \nonumber
\end{align}
Clearly, $F_{in}(\bmu_i, \bL_i, \bM_i, \bsigma_i) = \expect_\beps \widehat{F}_{in}(\bmu_i, \bL_i, \bM_i, \bsigma_i, \beps_i)$. The gradient of the noisy function $\widehat{F}_{in}$ is much easier to compute than that of the noise-free objective function $F_{in}$, allowing us to use a stochastic gradient descent (SGD) algorithm for solving \eqref{eqn:Gaussian_VI}. Our SGD algorithm is based on the Adam optimizer \citep{kingma-ba-2017-adam} with the following adjustments on $\nabla_{\bL_i}\widehat{F}_{in}$ and $\nabla_{\bsigma_i}\widehat{F}_{in}$. Observe that $\nabla_{\bL_i}\log\det(\bL_i) = \textsf{diag}(L_{i11},\ldots,L_{i(d - 1)(d - 1)})^{-1}$, $\nabla_{\bsigma_i}\log\det\textsf{diag}(\bsigma_i) = [\sigma_{i1}^{-1},\ldots,\sigma_{im}^{-1}]\transpose$, where $L_{ikk}$ is the $k$th diagonal element of $\bL_i$. Both gradients become unbounded as $L_{ikk}$ or $\sigma_{it}$ approaches $0$ from the right. Such a non-Lipschitz property can cause numerical instability in practical implementation \citep{xu2022computational}. To mitigate this issue, we adopt the scaled gradient approach of \cite{xu2022computational,wuxie2025} and introduce the function
\[
\widetilde{h}_n(x) = \frac{c_n}{c_nx + 1}\mathbbm{1}(x > 0) + (c_n - c_n^2x)\mathbbm{1}(x\leq 0),
\]
where $c_n > 0$ is an $n$-dependent tuning parameter. This yields a globally Lipschitz continuous function $\widetilde{h}(\cdot)$ that closely mimics the behavior of the function $x\mapsto 1/x$ when $x > 0$. In practice, we find that choosing $c_n = 10\sqrt{mn}$ leads to satisfactory numerical performance. Denote by 
\[
\widetilde{\bH}_n(\bL_i) = \textsf{diag}(\widetilde{h}_n(L_{i11}),\ldots,\widetilde{h}_n(L_{i(d - 1)(d - 1)})),\quad\widetilde{\bh}_n(\bsigma_i) = [\widetilde{h}_n(\sigma_{i1}),\ldots,\widetilde{h}_n(\sigma_{im})]\transpose.
\]
We then define the adjusted gradients $\widetilde{\nabla}_{\bL_i}\widehat{F}_{in}$ and $\widetilde{\nabla}_{\bsigma_i}\widehat{F}_{in}$ by
\begin{align*}
\widetilde{\nabla}_{\bL_i}\widehat{F}_{in}(\bmu_i, \bL_i, \bM_i, \bsigma_i, \beps_i)
&
 = - \widetilde{\bH}_n(\bL_i)-\nabla_{\bL_i}\calL_i\bigg(\bmu_{i1} + \frac{\bL_i\beps_{i1}}{\sqrt{mn}},\bmu_{i2} - \frac{\bM_i\bL_i\beps_{i1}}{\sqrt{mn}} + \frac{\textsf{diag}(\bsigma_i)\beps_{i2}}{\sqrt{mn}}\bigg)
\\ &
\quad
 - \nabla_{\bL_i}\log\pi_{\bx}\bigg(\bmu_{i1} + \frac{\bL_i\beps_{i1}}{\sqrt{mn}}\bigg)
 \\&\quad
 - \nabla_{\bL_i}\log\pi_{\bnu}\bigg(\bmu_{i2} - \frac{\bM_i\bL_i\beps_{i1}}{\sqrt{mn}} + \frac{\textsf{diag}(\bsigma_i)\beps_{i2}}{\sqrt{mn}}\bigg),\\
\widetilde{\nabla}_{\bsigma_i}\widehat{F}_{in}(\bmu_i, \bL_i, \bM_i, \bsigma_i, \beps_i) 
& = -\widetilde{\bh}_n(\bsigma_i) - \nabla_{\bsigma_i}\calL_i\bigg(\bmu_{i1} + \frac{\bL_i\beps_{i1}}{\sqrt{mn}},\bmu_{i2} - \frac{\bM_i\bL_i\beps_{i1}}{\sqrt{mn}} + \frac{\textsf{diag}(\bsigma_i)\beps_{i2}}{\sqrt{mn}}\bigg)\\
&\quad - \nabla_{\bsigma_i}\log\pi_{\bnu}\bigg(\bmu_{i2} - \frac{\bM_i\bL_i\beps_{i1}}{\sqrt{mn}} + \frac{\textsf{diag}(\bsigma_i)\beps_{i2}}{\sqrt{mn}}\bigg). 
\end{align*}
The remaining gradients $\nabla_{\bmu_i}\widehat{F}_{in}$, $\nabla_{\bM_i}\widehat{F}_{in}$ can be computed directly. The complete procedure is detailed in Algorithm \ref{alg:SGD_Gaussian_VI}. 
\begin{breakablealgorithm}
\caption{Stochastic gradient descent for Gaussian VI \eqref{eqn:Gaussian_VI}}
\label{alg:SGD_Gaussian_VI}
\begin{algorithmic}[1]
\State \textbf{Input:} Adjacency matrices $(\bA^{(t)})_{t = 1}^m$, number of communities $d$, and preliminary estimators $\widetilde{\bZ}$, $(\widetilde{\bB}^{(t)})_{t = 1}^m$ $(\widetilde{\btheta}_i)_{i = 1}^n$.

\State \textbf{Set:} $\tau\in(0,\frac{1}{2})$, batch size $1\leq{}s\leq{}n$, step size $\alpha_0>0$, exponential decay rates for the moments of gradients $\beta_1,\beta_2\in[0,1)$, constant $\epsilon_0=10^{-8}$.

\For {$i = 1,2,\ldots,n$}

\State Compute 
\[
\widetilde{\bG}_{in}^{(t)} = \frac{1}{n}\sum_{j=1}^n \eta'(\widetilde{\theta}_i^{(t)}\widetilde{\bz}_i\transpose\widetilde{\bz}_j\widetilde{\theta}_j^{(t)})(\widetilde{\theta}_j^{(t)})^2\widetilde{\bz}_j\widetilde{\bz}_j\transpose,\quad 
\widetilde{\bDelta}_{in}^{(t)} = \widetilde{\bG}_{in}^{(t)} - \frac{\widetilde{\bG}_{in}^{(t)}\widetilde{\bz}_i\widetilde{\bz}_i\transpose\widetilde{\bG}_{in}^{(t)}}{\widetilde{\bz}_i\transpose\widetilde{\bG}_{in}^{(t)}\widetilde{\bz}_i}.
\]
\State  Compute the Cholesky factorization $(\bJ\transpose\sum_{t = 1}^m(\widetilde{\theta}_i^{(t)})^2\widetilde{\bDelta}_{in}^{(t)}\bJ/m)^{-1} =\widetilde{\bL}_{i}\widetilde{\bL}_{i}\transpose$.

\State  Set the iteration counter $\texttt{iter = 0}$.

\State  Initialize (entrywise) gradient moments 
\begin{align*}
\mathbf{m}_{\bmu_i,1}&=\mathbf{m}_{\bmu_i,2} = \bm{0}_{d + m - 1},
\mathbf{m}_{\bL_i,1} = \mathbf{m}_{\bL_i,2}=\bm{0}_{(d -1 )\times(d - 1)},\\
\mathbf{m}_{\bM_i,1} &= \mathbf{m}_{\bM_i, 2}= \zero_{m\times(d - 1)},
\mathbf{m}_{\bsigma_i,1} = \mathbf{m}_{\bsigma_i,2} = \zero_m
\end{align*}

\State Initialize $\bmu_i = (\calT_\bz^{-1}(\bkappa^{-1}(\widetilde{\bz}_i)),\calT_\btheta^{-1}(\widetilde{\btheta}_i))$, $\bL_i = \widetilde{\bL}_{i}$, $\bM_i = \zero_{m\times(d - 1)}$, $\bsigma_i = \one_m$.

\While{not converging}

\State Set $\texttt{iter} \longleftarrow \texttt{iter + 1}$.

\State Generate independent  $\beps_{i1}, \ldots, \beps_{is}\sim\mathrm{N}(\zero_{d + m - 1}, \eye_{d + m - 1})$.

\State Compute
\begin{align*}
\mathbf{m}_{\bmu_i,1} &= \beta_1 \mathbf{m}_{\bmu_i,1} + (1-\beta_1) \frac{1}{s\sqrt{mn}}\sum_{r = 1}^s\nabla_{\bmu_i}\widehat{F}_{in}(\bmu_i, \bL_i, \bM_i, \bsigma, \beps_{ir}),\\
\mathbf{m}_{\bL_i,1} &= \beta_1 \mathbf{m}_{\bL_i,1} + (1-\beta_1) \frac{1}{s\sqrt{mn}}\sum_{r = 1}^s\widetilde{\nabla}_{\bL_i}\widehat{F}_{in}(\bmu_i, \bL_i, \bM_i, \bsigma_i, \beps_{ir}),\\
\mathbf{m}_{\bM_i,1} &= \beta_1 \mathbf{m}_{\bL_i,1} + (1-\beta_1) \frac{1}{s\sqrt{mn}}\sum_{r = 1}^s\nabla_{\bM_i}\widehat{F}_{in}(\bmu_i, \bL_i, \bM_i, \bsigma_i, \beps_{ir}),\\
\mathbf{m}_{\bsigma_i,1} &= \beta_1 \mathbf{m}_{\bsigma_i,1} + (1-\beta_1) \frac{1}{s\sqrt{mn}}\sum_{r = 1}^s\widetilde{\nabla}_{\bsigma_i}\widehat{F}_{in}(\bmu_i, \bL_i, \bM_i, \bsigma_i, \beps_{ir}),\\
\mathbf{m}_{\bmu_i,2} &= \beta_2 \mathbf{m}_{\bmu_i,2} + (1-\beta_2) \frac{1}{s\sqrt{mn}}\sum_{r = 1}^s\{\nabla_{\bmu_i}\widehat{F}_{in}(\bmu_i, \bL_i, \bM_i, \bsigma_i, \beps_{is})\}^{\odot2},\\
\mathbf{m}_{\bL_i,2} &= \beta_2 \mathbf{m}_{\bL_i,2} + (1-\beta_2) \frac{1}{s\sqrt{mn}}\sum_{r = 1}^s\{\widetilde{\nabla}_{\bL_i}\widehat{F}_{in}(\bmu_i, \bL_i, \bM_i, \bsigma_i, \beps_{is})\}^{\odot2},\\
\mathbf{m}_{\bM_i,2} &= \beta_2 \mathbf{m}_{\bM_i,2} + (1-\beta_2) \frac{1}{s\sqrt{mn}}\sum_{r = 1}^s\{\nabla_{\bM_i}\widehat{F}_{in}(\bmu_i, \bL_i, \bM_i, \bsigma_i, \beps_{is})\}^{\odot2},\\
\mathbf{m}_{\bsigma_i,2} &= \beta_2 \mathbf{m}_{\bsigma_i,2} + (1-\beta_2) \frac{1}{s\sqrt{mn}}\sum_{r = 1}^s\{\widetilde{\nabla}_{\bsigma_i}\widehat{F}_{in}(\bmu_i, \bL_i, \bM_i, \bsigma_i, \beps_{is})\}^{\odot2},
\end{align*}
\quad\qquad where $\odot2$ denotes the entry-wise square of a vector or a matrix.


\State Update
\begin{align*}
\bmu_i &\longleftarrow \bmu_i - \alpha_0\cdot \frac{\mathbf{m}_{\bmu_i,1} / (1-\beta_1^{\texttt{iter}})}{\sqrt{\mathbf{m}_{\bmu_i,2} / (1-\beta_2^{\texttt{iter}})} + \epsilon_0},\\
\bL_i &\longleftarrow \bL_i - \alpha_0\cdot \frac{\mathbf{m}_{\bL_i,1} / (1-\beta_1^{\texttt{iter}})}{\sqrt{\mathbf{m}_{\bL_i,2} / (1-\beta_2^{\texttt{iter}})} + \epsilon_0},\\
\bM_i &\longleftarrow \bM_i - \alpha_0\cdot \frac{\mathbf{m}_{\bM_i,1} / (1-\beta_1^{\texttt{iter}})}{\sqrt{\mathbf{m}_{\bM_i,2} / (1-\beta_2^{\texttt{iter}})} + \epsilon_0},\\
\bsigma_i &\longleftarrow \bsigma_i - \alpha_0\cdot \frac{\mathbf{m}_{\bsigma_i,1} / (1-\beta_1^{\texttt{iter}})}{\sqrt{\mathbf{m}_{\bsigma_i,2} / (1-\beta_2^{\texttt{iter}})} + \epsilon_0},
\end{align*}
\quad\qquad where the division and square root is computed entry-wise for vectors or matrices.

\EndWhile

\EndFor

\State \textbf{Output: } $(\bmu_i, \bL_i, \bM_i, \bsigma_i)_{i = 1}^n$.

\end{algorithmic}
\end{breakablealgorithm}

\section{Simulation Studies}
\label{sec:simulation_studies}

In this section, we evaluate the statistical accuracy and computational efficiency of the proposed spectral-assisted VI for the ML-DCMM network model using synthetic datasets. We consider the three experimental setups described below. For all setups, we let the number of communities be $d = 3$,  the number of layers be $m = 10$, and the number of vertices be $n = 500$. 
\begin{itemize}[noitemsep, topsep = 0.5ex]
    \item \textbf{Experiment 1:} We set the number of pure nodes for each community to be $n_0 = 80$. For the mixed nodes, we consider the following $4$ distinct membership profiles: $[0.4, 0.4, 0.2]\transpose$, $[0.4, 0.2, 0.4]\transpose$, $[0.2, 0.4, 0.4]\transpose$, and $[1/3, 1/3, 1/3]\transpose$, each assigned to $(n - 3n_0)/4$ vertices. The degree correction parameters $\theta_i^{(t)}$ are drawn from the $1 / \textsf{Unif}(1, 5)$ distribution independently over $i\in[n]$ and $t\in[m]$. For the block mean matrices $(\bB^{(t)})_{t = 1}^m$, the diagonal entries are set to $1$ according to the identifiability requirement in Assumption \ref{assumption:identifiability} (a), and the upper-triangular elements $B_{kl}^{(t)}$ are generated from $\max(0.01,\mathrm{N}(0.08, 0.1^2))$ independently over $t\in[m]$ and $r,s\in[d]$, $r < s$. We take the edge weight distribution to be Poisson. A similar simulation design for the single-layer unweighted DCMM network model has been adopted by \cite{JIN2023}. 

    \item \textbf{Experiment 2:} We set the number of pure nodes in each community to be $n_0 = 40$. The membership profiles $(\bz_i)_{i = 1}^n = ([z_{i1},z_{i2},z_{i3}]\transpose)_{i = 1}^n$ are generated as follows: $z_{i1},z_{i2}$ are drawn from $\textsf{Unif}(1/6, 1/2)$ independently over $i\in[n]$, and $z_{i3} = 1 - z_{i1} - z_{i2}$. The degree correction parameters $\theta_i^{(t)}$ are drawn from $\textsf{Unif}(1, 2)$ independently over $i\in[n]$ and $t\in[m]$. The block mean matrices $(\bB^{(t)})_{t = 1}^m$ are generated as Experiment 1 above, and we again use Poisson as the edge weight distribution. This setup can be viewed as a multilayer extension of Experiment 6 in \cite{JIN2023}.

    \item \textbf{Experiment 3:} In this setup, we consider unweighted networks by taking the edge weights to be Bernoulli distributed. All remaining parameters are the same as in Experiment~1.
\end{itemize}
For each of the experimental setups above, we consider the following methods: the aggregated mixed-SCORE (computed via Algorithm \ref{alg:aggregated_mixed_SCORE}), our proposed Gaussian VI with a structured covariance matrix specified by  \eqref{eqn:Gaussian_VI_distribution_class} (computed via Algorithm \ref{alg:SGD_Gaussian_VI}), and Gaussian VI with a block-diagonal covariance matrix (computed via Algorithm \ref{alg:SGD_Gaussian_VI}, except that $\bM_i$ is fixed to $\zero_{m\times(d - 1)}$ and is not updated throughout). Each experiment is replicated $100$ times independently. 

We first measure estimation accuracy for the membership profile matrix $\bZ$ defined using the squared Frobenius norm $\|\widehat{\bZ} - \bZ_0\|_{\mathrm{F}}^2$ as the estimation error, where $\widehat{\bZ}$ denotes a generic estimator. The resulting errors across the $100$ Monte Carlo replicates are summarized via side-by-side boxplots in Figure \ref{fig:Simulation_Error_Z}. 
\begin{figure}[htbp]
  \centering{
  \includegraphics[width=\textwidth]{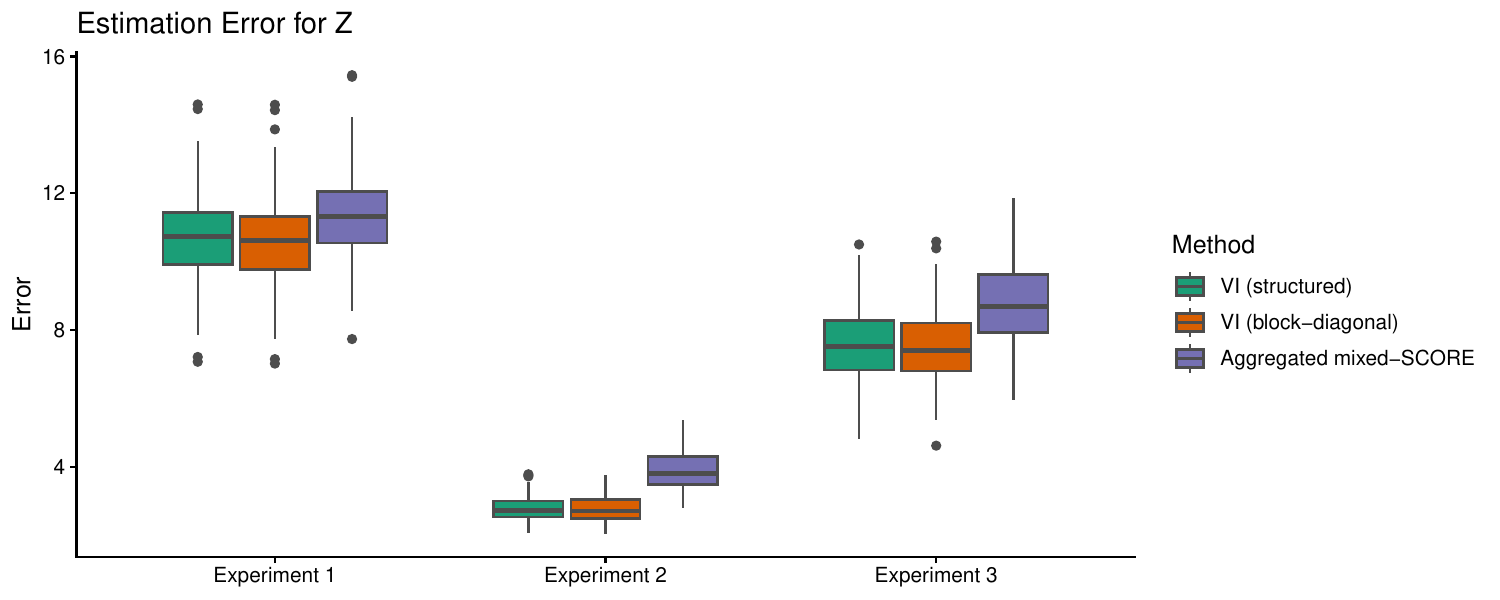}
  }
  \caption{\label{fig:Simulation_Error_Z}Simulation results: Estimation errors for the membership profile matrix $\bZ$ over $100$ repeated Monte Carlo replicates.}
\end{figure}
For the VI methods, we use the point estimates by setting $\widehat{\bz}_i = \calT_\bz(\bmu_{i1})$ for $i\in[n]$, where $\bmu_{i1}$ is the variational posterior mean of $\bx_i$. We observe that the aggregated mixed-SCORE estimates yield larger estimation errors than the point estimates of the VI methods. The paired two-sample $t$-tests between the estimation errors of the aggregated mixed-SCORE estimates and those of the VI estimates produce $p$-values less than $2.2\times 10^{-6}$, indicating that these differences are statistically significant. Note also that the two VI estimates are comparable in terms of point estimation accuracy for $\bZ$. 

To further distinguish their performance in terms of uncertainty quantification, we compute the empirical coverage rates of vertex-wise $95\%$ variational posterior credible sets for the mixed nodes by averaging over the $100$ Monte Carlo replicates. The empirical coverage rates are displayed in Figure \ref{fig:Simulation_CR}. 
\begin{figure}[htbp]
  \centering
  \includegraphics[width=\textwidth]{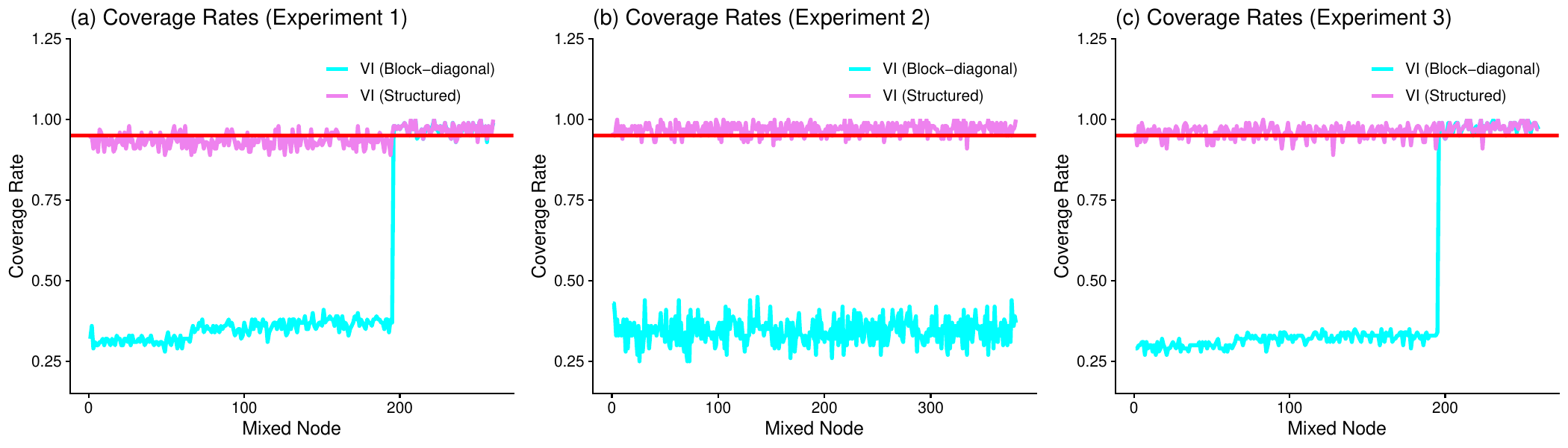}
  \caption{Simulation results: Empirical coverage rates of vertex-wise variational posterior credible sets for membership profiles over $100$ repeated Monte Carlo replicates.}
  \label{fig:Simulation_CR}
\end{figure}
The empirical coverage rates for the variational posterior credible sets are close to the nominal $95\%$ level when the covariance matrix is structured according to \eqref{eqn:Gaussian_VI_distribution_class}, thereby supporting the validity of uncertainty quantification guaranteed by Theorem \ref{thm:variational_BvM}. In contrast, when the covariance matrix is restricted to be block-diagonal, the empirical coverage rates deviate substantially from the nominal level due to the ignorance of the correlation between $\bz_i$ and $\btheta_i$ under the exact posterior distribution, leading to untrustworthy uncertainty quantification.

\section{Auxiliary Results}
\label{sec:auxiliary_results}

This section collects several auxiliary results that prepare the first-order stochastic expansion for the aggregated mixed-SCORE estimator computed using Algorithm \ref{alg:aggregated_mixed_SCORE}. The results obtained here critically rely on some properties of leading eigenvectors and eigenvalues from \cite{cape_signal-plus-noise_2019} and their adaptations, some properties of the mixed-SCORE estimator from \cite{JIN2023}, and some properties on the noise distributions originated from \cite{10.1214/11-AOP734} in the form of \cite{xie2025higher}. 

\begin{result}[Results C.2, C.3, C.4, and Lemma C.1 in \citealp{xie2025higher}]
\label{result:preliminary}
Suppose Assumptions \ref{assumption:identifiability}--\ref{assumption:likelihood} hold. Then for any $\bU_1\in\mathbb{O}(n, d_1)$, $\bU_2\in\mathbb{O}(n, d_2)$ with fixed $d_1,d_2$, and for any fixed positive integer $k$ and $i\in[n]$, we have
\begin{align*}
&\|\bE^{(t)}\|_2 = \Optilde(\sqrt{n}),\; \|\bU_1\transpose\bE^{(t)}\bU_2\|_2 = \Optilde\{n(\log n)^\xi\|\bU_1\|_{2\to\infty}\|\bU_2\|_{2\to\infty}\},\\
&\|\be_i\transpose\bE^{k(t)}\bU_1\|_{2\to\infty} = \Optilde\{\|\bU_1\|_{2\to\infty}n^{k/2}(\log n)^{k\xi}\},\;\|\bS_0^{(t)}\bU_0^{(t)\mathrm{T}}\widehat{\bU}^{(t)} - \bU_0^{(t)\mathrm{T}}\widehat{\bU}^{(t)}\widehat{\bS}^{(t)}\|_2 = \Optilde\{(\log n)^\xi\}.
\end{align*}
\end{result}

\begin{result}[Eigenvector expansion]
\label{result:eigenvector_expansion}
Suppose Assumptions \ref{assumption:identifiability}--\ref{assumption:likelihood} hold. Then for each fixed $t\in[m]$, there exists a matrix $\bQ_{d - 1}^{(t)}\in\mathbb{O}(d - 1)$, such that
\begin{align*}
&\|\bU^{(t)}_0\|_{2\to\infty}\lesssim \frac{1}{\sqrt{n}},\quad\|\widehat{\bU}^{(t)}\|_{2\to\infty} = \Optilde\bigg(\frac{1}{\sqrt{n}}\bigg),
\\
&\|\widehat{\bU}^{(t)\mathrm{T}}\bU^{(t)}_0 - \bQ^{(t)}\|_2 = \Optilde\bigg(\frac{1}{n}\bigg),\quad \|\widehat{\bU}^{(t)}\bQ^{(t)} - \bU^{(t)}_0\|_{2\to\infty} = \Optilde\bigg\{\frac{(\log n)^\xi}{n}\bigg\}\\
&\widehat{\bU}^{(t)}\bQ^{(t)} - \bU^{(t)}_0 = \bE^{(t)}\bU^{(t)}_0\bS^{-(t)}_0 + \bR_{\bU^{(t)}},\quad \|\bR_{\bU^{(t)}}\|_{2\to\infty} = \Optilde\bigg\{\frac{(\log n)^{2\xi}}{n^{3/2}}\bigg\},
\end{align*}
where $\bQ^{(t)} = \textsf{diag}(\mathsf{sgn}(\widehat{\bu}_1^{(t)\mathrm{T}}\bu_{01}^{(t)}),\bQ^{(t)}_{d - 1})$. In particular, the matrix $\bQ_{d - 1}^{(t)}$ can be taken as 
\[
\bQ_{d - 1}^{(t)} = \mathsf{sgn}\left(\begin{bmatrix}\widehat{\bu}_2^{(t)\mathrm{T}}\\\vdots\\\widehat{\bu}_d^{(t)\mathrm{T}}\end{bmatrix}\begin{bmatrix}\bu_{02}^{(t)}&\ldots&\bu_{0d}\transpose\end{bmatrix}\right)
\]
\end{result}
\begin{proof}[\bf Proof]
For the first equality, we obtain
\begin{align*}
\|\bU_0^{(t)}\|_{2\to\infty}
& = \|\bTheta^{(t)}_0\bZ_0(\bZ_0\transpose(\bTheta_0^{(t)})^2\bZ_0)^{-1/2}\|_{2\to\infty}
\leq \|\bTheta_0^{(t)}\|_{\infty}\|\bZ_0\|_{2\to\infty}\|(\bZ_0\transpose(\bTheta_0^{(t)})^2\bZ_0)^{-1/2}\|_{2\to\infty} \lesssim\frac{1}{\sqrt{n}}.
\end{align*}
The third equality follows directly from Lemma C.2 in \cite{xie2025higher}, and the Davis-Kahan theorem. The second and the fourth equality follow from the last equality, the first equality, and . Therefore, it is sufficient to show the last one. We modify the proof of Theorem 2 in \cite{cape_signal-plus-noise_2019}. By the matrix Sylvester equation $\widehat{\bU}^{(t)}\widehat{\bS}^{(t)} - \bE^{(t)}\widehat{\bU}^{(t)} = \bP^{(t)}_0\widehat{\bU}^{(t)}$, we obtain the matrix series expansion for $\widehat{\bU}^{(t)}$ by \cite{bhatia2013matrix}:
\begin{equation}\label{eqn:Eigenvector_series_expansion}
\begin{aligned}
\widehat{\bU}^{(t)}\bQ^{(t)} - \bU^{(t)}_0& = \bE^{(t)}\bU^{(t)}_0\bS_0^{-(t)} + \bE^{(t)}\bU_0^{(t)}\bS_0^{-(t)}(\bU_0^{(t)\mathrm{T}}\widehat{\bU}^{(t)} - \bQ^{(t)\mathrm{T}})\bQ^{(t)}\\
&\quad + \bE^{(t)}\bU_0^{(t)}\bS_0^{(t)}(\bU_0^{(t)\mathrm{T}}\widehat{\bU}\widehat{\bS}^{-2(t)} - \bS_0^{-2(t)}\bU_0^{(t)\mathrm{T}}\widehat{\bU}^{(t)})\bQ^{(t)}\\ 
&\quad + \sum_{m = 2}^\infty\bE^{m(t)}\bU_0^{(t)}\bS_0^{(t)}\bU_0^{(t)\mathrm{T}}\widehat{\bU}^{(t)}\widehat{\bS}^{-(m + 1)(t)}\bQ^{(t)}_0
\\
&\quad + \bU_0^{(t)}(\bU_0^{(t)\mathrm{T}}\widehat{\bU}^{(t)} - \bQ^{(t)\mathrm{T}})\bQ^{(t)}\\
&\quad + \bU_0^{(t)}(\bS_0^{(t)}\bU_0^{(t)\mathrm{T}}\widehat{\bU}^{(t)} - \bU_0^{(t)\mathrm{T}}\widehat{\bU}^{(t)}\widehat{\bS}^{(t)})\widehat{\bS}^{-(t)}\bQ^{(t)}.
\end{aligned}
\end{equation}
The two-to-infinity norm of the second term is $\Optilde(n^{-3/2})$
by Result \ref{result:preliminary}. 
The two-to-infinity norms of the third term, the fifth term, and the last term are $\Optilde\{(\log n)^\xi/n^{3/2}\}$
by Result \ref{result:preliminary} and the proof of Lemma C.1 in \cite{xie2025higher}. For the fourth term, by Lemma C.1 in \cite{xie2025higher}, by taking $M_n = 2$ we derive
\begin{align*}
&\left\|\sum_{m = 2}^\infty \bE^{m(t)}\bU_0^{(t)}\bS_0^{(t)}\bU_0^{(t)\mathrm{T}}\widehat{\bU}^{(t)}\widehat{\bS}^{-(m + 1)(t)}\right\|_{2\to\infty}\\
&\quad\leq \sum_{m = 2}^{M_n}\|\bE^{(t)m}\bU_0^{(t)}\|_{2\to\infty}\|\bS_0^{(t)}\|_2\|\widehat{\bS}^{-(m + 1)(t)}\|_2
 + \sum_{m = M_n + 1}^\infty\|\bE^{(t)}\|_2^m\|\bS^{(t)}_0\|_2\|\widehat{\bS}^{-(m + 1)(t)}\|_2\\
&\quad\leq \|\bU_0^{(t)}\|_{2\to\infty}
\sum_{m = 2}^{M_n}\left[\widetilde{O}_{\prob}\left\{\frac{c(\log n)^{2\xi}}{n}\right\}\right]^{m/2} + 
\sum_{m = M_n + 1}^\infty\left\{\widetilde{O}_{\prob}\left(\frac{1}{n}\right)\right\}^{m/2}
 = \Optilde\bigg\{\frac{(\log n)^{2\xi}}{n^{3/2}}\bigg\}.
\end{align*}
The proof is thus completed. 
\end{proof}

\begin{result}\label{result:leading_eigenvector}
Suppose Assumptions \ref{assumption:identifiability}--\ref{assumption:likelihood} hold. Then, one can pick the sign of $\bu_{01}^{(t)}$, such that every element of $\bu_{01}^{(t)}$ are positive, $\min_{i\in[n]}u_{0i1}^{(t)}\asymp n^{-1/2}$, and $\min_{i\in[n]}\widehat{u}_{i1}^{(t)}\mathsf{sgn}(\widehat{\bu}_1^{(t)\mathrm{T}}\bu_{01}^{(t)})\asymp n^{-1/2}$ w.h.p..
\end{result}
\begin{proof}[\bf Proof]
The result follows directly from Lemma C.3 in \cite{JIN2023} and Result \ref{result:eigenvector_expansion}. 
\end{proof}

\section{First-Order Expansions of Mixed-SCORE}
\label{sec:analysis_of_mixed_score_estimator}

This section establishes the first-order expansions of the estimators obtained by applying the mixed-SCORE algorithm (Algorithm \ref{alg:mixed_SCORE}) directly to each layer as well as the preliminary estimators from Algorithm \ref{alg:preliminary_estimators_ML_DCMM}. These results go beyond the perturbation bounds obtained in \cite{JIN2023} and they crucially rely on the first-order expansion result of eigenvectors of $(\bA^{(t)})_{t = 1}^m$ from Result \ref{result:eigenvector_expansion} above. We begin by the expansion of the ratio-of-eigenvector statistic $\widehat{\bR}^{(t)}$. 
 
\begin{lemma}\label{lemma:R_expansion}
Suppose Assumptions \ref{assumption:identifiability}--\ref{assumption:likelihood} hold. Then for each fixed $t\in[m]$, there exists an orthogonal matrix $\bH^{(t)}\in\mathbb{O}(d - 1)$, such that
\begin{align*}
\bH^{(t)}\widehat{\br}_i^{(t)} - \br_{0i}^{(t)} & = 
\sum_{j = 1}^nE_{ij}^{(t)}\balpha_{ij}^{(t)} + \Optilde\bigg\{\frac{(\log n)^{2\xi}}{n}\bigg\},
\end{align*}
where
\begin{align}\label{eqn:alpha_ijt}
\balpha_{ij}^{(t)} = \frac{1}{u_{0i1}^{(t)}}\left(\begin{bmatrix}u_{0j2}^{(t)}/\lambda_2^{(t)}\\\vdots\\u_{0jd}^{(t)}/\lambda_d^{(t)} \end{bmatrix} - \frac{\br_{0i}^{(t)}u_{0j1}^{(t)}}{\lambda_1^{(t)}}\right).
\end{align}
In particular, we have $\|\bH^{(t)}\widehat{\br}_i^{(t)} - \br_{0i}^{(t)}\|_2 = \Optilde\{(\log n)^\xi/\sqrt{n}\}$. 
\end{lemma}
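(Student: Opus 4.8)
The plan is to reduce the statement to a scalar delta-method computation for ratios of perturbed eigenvector entries, after peeling off the orthogonal ambiguity in the trailing eigenvectors. Write $\bu_{0i}^{(t)}$ and $\widehat{\bu}_i^{(t)}$ for the $i$th rows (as column vectors) of $\bU_0^{(t)}$ and $\widehat{\bU}^{(t)}$, set $s^{(t)} = \mathsf{sgn}(\widehat{\bu}_1^{(t)\mathrm{T}}\bu_{01}^{(t)})$, and recall from Result \ref{result:eigenvector_expansion} that $\bQ^{(t)} = \textsf{diag}(s^{(t)},\bQ_{d-1}^{(t)})$. First I would define the rotated row $\widetilde{\bu}_i^{(t)} = \bQ^{(t)\mathrm{T}}\widehat{\bu}_i^{(t)}$, which is exactly the $i$th row of $\widehat{\bU}^{(t)}\bQ^{(t)}$, and the rotated ratio vector $\widetilde{\br}_i^{(t)}$ with $k$th entry $\widetilde{u}_{i(k+1)}^{(t)}/\widetilde{u}_{i1}^{(t)}$. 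Because $\bQ^{(t)}$ is block diagonal, $\widetilde{u}_{i1}^{(t)} = s^{(t)}\widehat{u}_{i1}^{(t)}$ while the trailing entries are rotated by $\bQ_{d-1}^{(t)\mathrm{T}}$, and a one-line computation gives $\widetilde{\br}_i^{(t)} = s^{(t)}\bQ_{d-1}^{(t)\mathrm{T}}\widehat{\br}_i^{(t)}$, so the claim holds with the orthogonal matrix $\bH^{(t)} = s^{(t)}\bQ_{d-1}^{(t)\mathrm{T}}$. Before proceeding I would dispose of the truncation in $\widehat{R}_{ik}^{(t)}$: by Result \ref{result:leading_eigenvector} we have $|\widehat{u}_{i1}^{(t)}|\asymp n^{-1/2}$ w.h.p., while $|\widehat{u}_{i(k+1)}^{(t)}|\le\|\widehat{\bU}^{(t)}\|_{2\to\infty} = \Optilde(n^{-1/2})$ from Result \ref{result:eigenvector_expansion}, so $|\widehat{u}_{i(k+1)}^{(t)}/\widehat{u}_{i1}^{(t)}| = \Optilde(1) < \log n$ w.h.p., the truncation is inactive, and $\widehat{R}_{ik}^{(t)} = \widehat{u}_{i(k+1)}^{(t)}/\widehat{u}_{i1}^{(t)}$.

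With $\bH^{(t)}$ fixed, the problem becomes analyzing $\widetilde{r}_{ik}^{(t)} - r_{0ik}^{(t)} = \widetilde{u}_{i(k+1)}^{(t)}/\widetilde{u}_{i1}^{(t)} - u_{0i(k+1)}^{(t)}/u_{0i1}^{(t)}$. I would read off the entrywise expansion from Result \ref{result:eigenvector_expansion}: writing $\delta_{ik} := \widetilde{u}_{ik}^{(t)} - u_{0ik}^{(t)}$, the $i$th row of $\bE^{(t)}\bU_0^{(t)}\bS_0^{-(t)}$ has $k$th entry $(\lambda_k^{(t)})^{-1}\be_i\transpose\bE^{(t)}\bu_{0k}^{(t)} = (\lambda_k^{(t)})^{-1}\sum_{j=1}^n E_{ij}^{(t)}u_{0jk}^{(t)}$, and the residual $\bR_{\bU^{(t)}}$ obeys $\|\bR_{\bU^{(t)}}\|_{2\to\infty} = \Optilde\{(\log n)^{2\xi}/n^{3/2}\}$, so $\delta_{ik} = (\lambda_k^{(t)})^{-1}\sum_{j} E_{ij}^{(t)}u_{0jk}^{(t)} + \Optilde\{(\log n)^{2\xi}/n^{3/2}\}$. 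Combining $|\lambda_k^{(t)}|\asymp n$ with $\|\be_i\transpose\bE^{(t)}\bU_0^{(t)}\|_2 = \Optilde\{(\log n)^\xi\}$ from Result \ref{result:preliminary} gives the input bound $\delta_{ik} = \Optilde\{(\log n)^\xi/n\}$.

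Next I would linearize the ratio. Writing $a = u_{0i(k+1)}^{(t)}$ and $b = u_{0i1}^{(t)}$ with $b\asymp n^{-1/2}$ bounded away from $0$ by Result \ref{result:leading_eigenvector}, and expanding $(b+\delta_{i1})^{-1} = b^{-1}(1 - \delta_{i1}/b + O(\delta_{i1}^2/b^2))$, I obtain
\[
\widetilde{r}_{ik}^{(t)} - r_{0ik}^{(t)} = \frac{1}{u_{0i1}^{(t)}}\Big(\delta_{i(k+1)} - r_{0ik}^{(t)}\delta_{i1}\Big) + \Optilde\Big\{\frac{(\log n)^{2\xi}}{n}\Big\},
\]
where the quadratic error terms $\delta_{i(k+1)}\delta_{i1}/b^2$ and $(a+\delta_{i(k+1)})\delta_{i1}^2/b^3$ are each $\Optilde\{(\log n)^{2\xi}/n\}$ after using $b^{-2}\asymp n$, $b^{-3}\asymp n^{3/2}$, $|a|\lesssim n^{-1/2}$, and $\delta_{i\cdot} = \Optilde\{(\log n)^\xi/n\}$. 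Substituting the linear forms of $\delta_{i(k+1)}$ and $\delta_{i1}$ (whose residuals, once amplified by $1/u_{0i1}^{(t)}\asymp\sqrt{n}$, remain $\Optilde\{(\log n)^{2\xi}/n\}$) and collecting the coefficient of $E_{ij}^{(t)}$ across $k\in[d-1]$ reproduces exactly $\balpha_{ij}^{(t)}$ in \eqref{eqn:alpha_ijt}, giving the stated expansion. For the $\ell_2$ bound I would express $\sum_j E_{ij}^{(t)}\balpha_{ij}^{(t)}$ through $\be_i\transpose\bE^{(t)}\bU_0^{(t)}$, whose norm is $\Optilde\{(\log n)^\xi\}$ by Result \ref{result:preliminary}; together with the prefactors $1/u_{0i1}^{(t)}\asymp\sqrt{n}$ and $1/\lambda_k^{(t)}\asymp 1/n$ this yields $\|\sum_j E_{ij}^{(t)}\balpha_{ij}^{(t)}\|_2 = \Optilde\{(\log n)^\xi/\sqrt{n}\}$, which dominates the $\Optilde\{(\log n)^{2\xi}/n\}$ remainder and gives $\|\bH^{(t)}\widehat{\br}_i^{(t)} - \br_{0i}^{(t)}\|_2 = \Optilde\{(\log n)^\xi/\sqrt{n}\}$.

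The main obstacle will be the bookkeeping of the division by the small quantity $u_{0i1}^{(t)}\asymp n^{-1/2}$: every perturbation is amplified by a factor of $\sqrt{n}$, so both the second-order ratio remainders and the eigenvector residual $\bR_{\bU^{(t)}}$ must be shown to be genuinely of smaller order after this amplification. This is precisely where the lower bound $\min_i|\widehat{u}_{i1}^{(t)}|\asymp n^{-1/2}$ from Result \ref{result:leading_eigenvector} and the sharp two-to-infinity residual bound $\Optilde\{(\log n)^{2\xi}/n^{3/2}\}$ from Result \ref{result:eigenvector_expansion} are both indispensable; a cruder spectral residual bound would not survive the $\sqrt{n}$ amplification and would fail to produce the $\Optilde\{(\log n)^{2\xi}/n\}$ remainder.
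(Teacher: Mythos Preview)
Your proposal is correct and follows essentially the same approach as the paper: both identify $\bH^{(t)} = s^{(t)}\bQ_{d-1}^{(t)\mathrm{T}}$ (the paper writes it as $\bQ_{d-1}^{(t)\mathrm{T}}/w_1^{(t)}$, which is the same since $w_1^{(t)}=\pm1$), invoke the entrywise eigenvector expansion from Result~\ref{result:eigenvector_expansion}, and linearize the ratio $\widetilde{u}_{i(k+1)}^{(t)}/\widetilde{u}_{i1}^{(t)}$ around $u_{0i(k+1)}^{(t)}/u_{0i1}^{(t)}$, with the $\sqrt{n}$ amplification from $1/u_{0i1}^{(t)}$ handled via Result~\ref{result:leading_eigenvector}. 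Your explicit check that the $\log n$ truncation in the definition of $\widehat{R}_{ik}^{(t)}$ is inactive w.h.p.\ is a detail the paper's proof leaves implicit, and is worth keeping.
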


\begin{proof}[\bf Proof]
By Result \ref{result:eigenvector_expansion}, we have
\[
\widehat{u}_{i1}^{(t)}w_1^{(t)} - u_{0i1}^{(t)} = \sum_{j = 1}^n\frac{E_{ij}^{(t)}u_{0jk}^{(t)}}{\lambda_1^{(t)}} + r_{u_{0i1}^{(t)}},\quad |r_{u_{i1}^{(t)}}| = \Optilde\bigg\{\frac{(\log n)^{2\xi}}{n^{3/2}}\bigg\},
\]
where we denote by $w_1^{(t)} = \mathsf{sgn}(\widehat{\bu}_1^{(t)\mathrm{T}}\bu_{01}^{(t)})$. Let $\bQ^{(t)}_{d - 1}$ be the orthogonal matrix given by Result \ref{result:eigenvector_expansion} and let $\bH^{(t)} = \bQ^{(t)\mathrm{T}}_{d - 1}/w_1^{(t)}$. Then, 
\begin{align*}
\bH^{(t)}\widehat{\br}_i^{(t)} - \br_{0i}^{(t)}
& = \frac{1}{\widehat{u}_{i1}^{(t)}w_1^{(t)}}\bQ^{(t)\mathrm{T}}_{d - 1}\begin{bmatrix}
\widehat{u}_{i2}^{(t)}\\\vdots\\\widehat{u}_{id}^{(t)}
\end{bmatrix}
 - \frac{1}{u_{0i1}^{(t)}}\begin{bmatrix}u_{i2}^{(t)}\\\vdots\\u_{id}^{(t)}\end{bmatrix}\\
& = \bQ^{(t)\mathrm{T}}_{d - 1}\begin{bmatrix}
\widehat{u}_{i2}^{(t)}\\\vdots\\\widehat{u}_{id}^{(t)}
\end{bmatrix}\bigg(\frac{u_{0i1}^{(t)} - \widehat{u}_{i1}^{(t)}w_1^{(t)}}{\widehat{u}_{i1}^{(t)}w_1^{(t)}u_{0i1}^{(t)}}\bigg) + \frac{1}{u_{0i1}}\sum_{j = 1}^nE_{ij}^{(t)}\begin{bmatrix}u_{0j2}^{(t)}/\lambda_2^{(t)}\\\vdots\\u_{0jd}^{(t)}/\lambda_d^{(t)} \end{bmatrix} + \Optilde\bigg\{\frac{(\log n)^{2\xi}}{n}\bigg\}\\
& = \left(\bQ^{(t)\mathrm{T}}_{d - 1}\begin{bmatrix}
\widehat{u}_{i2}^{(t)}\\\vdots\\\widehat{u}_{id}^{(t)}
\end{bmatrix} - \begin{bmatrix}u_{0i2}^{(t)}\\\vdots\\u_{0id}^{(t)}\end{bmatrix}\right)\bigg(\frac{u_{0i1}^{(t)} - \widehat{u}_{i1}^{(t)}w_1^{(t)}}{\widehat{u}_{i1}^{(t)}w_1^{(t)}u_{0i1}^{(t)}}\bigg)
 + \begin{bmatrix}u_{0i2}^{(t)}\\\vdots\\u_{0id}^{(t)}\end{bmatrix}\bigg(\frac{u_{0i1}^{(t)} - \widehat{u}_{i1}^{(t)}w_1^{(t)}}{\widehat{u}_{i1}^{(t)}w_1^{(t)}u_{0i1}^{(t)}}\bigg)\\ 
&\quad + \frac{1}{u_{0i1}}\sum_{j = 1}^nE_{ij}^{(t)}\begin{bmatrix}u_{0j2}^{(t)}/\lambda_2^{(t)}\\\vdots\\u_{0jd}^{(t)}/\lambda_d^{(t)} \end{bmatrix} + \Optilde\bigg\{\frac{(\log n)^{2\xi}}{n}\bigg\}.
\end{align*}
By Results \ref{result:eigenvector_expansion} and \ref{result:leading_eigenvector}, the first term is $\Optilde\{(\log n)^{2\xi}/n\}$. For the second term, by Result \ref{result:eigenvector_expansion}, we have
\begin{align*}
\begin{bmatrix}u_{0i2}^{(t)}\\\vdots\\u_{0id}^{(t)}\end{bmatrix}\bigg(\frac{u_{0i1}^{(t)} - \widehat{u}_{i1}^{(t)}w_1^{(t)}}{\widehat{u}_{i1}^{(t)}w_1^{(t)}u_{0i1}^{(t)}}\bigg)
& = \begin{bmatrix}u_{0i2}^{(t)}/u_{0i1}^{(t)}\\\vdots\\u_{0id}^{(t)}/u_{0i1}^{(t)}\end{bmatrix}\bigg(1 + \frac{u_{0i1}^{(t)} - \widehat{u}_{i1}^{(t)}w_1^{(t)}}{\widehat{u}_{i1}^{(t)}w_1^{(t)}}\bigg)\bigg(\frac{u_{0i1}^{(t)} - \widehat{u}_{i1}^{(t)}w_1^{(t)}}{u_{0i1}^{(t)}}\bigg)\\
& = -\br_{0i}^{(t)}\sum_{j = 1}^n\frac{E_{ij}^{(t)}u_{0j1}^{(t)}}{\lambda_1^{(t)}u_{0i1}^{(t)}} + \Optilde\bigg\{\frac{(\log n)^{2\xi}}{n}\bigg\}.
\end{align*}
Therefore, we conclude that
\begin{align*}
\bH^{(t)}\widehat{\br}_i^{(t)} - \br_{0i}^{(t)}
& = \sum_{j = 1}^n\frac{E_{ij}^{(t)}}{u_{0i1}^{(t)}}\left(\begin{bmatrix}u_{0j2}^{(t)}/\lambda_2^{(t)}\\\vdots\\u_{0jd}^{(t)}/\lambda_d^{(t)} \end{bmatrix} - \frac{\br_{0i}^{(t)}u_{0j1}^{(t)}}{\lambda_1^{(t)}}\right) + \Optilde\bigg\{\frac{(\log n)^{2\xi}}{n}\bigg\}.
\end{align*}
\end{proof}
Next, we also need the expansion of the vector $\widehat{\bb}_1^{(t)}$ obtained from applying the mixed-SCORE algorithm to the $t$th layer $\bA^{(t)}$. 
\begin{lemma}\label{lemma:w_expansion}
Suppose Assumptions \ref{assumption:identifiability}--\ref{assumption:likelihood} hold. Then, there exists a permutation matrix $\bK^{(t)}\in\mathbb{O}(d)$, such that
\begin{align*}
\bK^{(t)}\textsf{diag}(\widehat{\bb}_1^{(t)})^{-1}\widehat{\bw}_i^{(t)} - \textsf{diag}(\bxi_{01}^{(t)})^{-1}\bw_{0i}^{(t)}
& = \sum_{j = 1}^nE_{ij}\textsf{diag}(\bxi_{01}^{(t)})^{-1}\begin{bmatrix}\one_d\transpose\\\bV_0^{(t)}\end{bmatrix}^{-1}\begin{bmatrix} 0\\\balpha_{ij}^{(t)}\end{bmatrix} + \Optilde\bigg\{\frac{(\log n)^{2\xi}}{\sqrt{n}}\bigg\}, 
\end{align*}
where $\balpha_{ij}^{(t)}$ is given by \eqref{eqn:alpha_ijt}. In particular, we have 
\[
\|\bK^{(t)}\textsf{diag}(\widehat{\bb}_1^{(t)})^{-1}\widehat{\bw}_i^{(t)} - \textsf{diag}(\bxi_{01}^{(t)})^{-1}\bw_{0i}^{(t)}\|_2 = \Optilde\{(\log n)^\xi\}\quad\text{and}\quad
\textsf{diag}(\widehat{\bb}_1^{(t)})^{-1}\widehat{\bw}_i^{(t)} \asymp \sqrt{n}\quad\text{w.h.p..}
\]
\end{lemma}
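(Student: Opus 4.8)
The plan is to regard $\textsf{diag}(\widehat{\bb}_1^{(t)})^{-1}\widehat{\bw}_i^{(t)}$ as a smooth function of three ingredients — the per-vertex ratio statistic $\widehat{\br}_i^{(t)}$, the vertex-hunting matrix $\widehat{\bV}^{(t)}$, and the leading eigenvalues $\widehat{\lambda}_1^{(t)},\ldots,\widehat{\lambda}_d^{(t)}$ — and to run a first-order perturbation expansion around their population counterparts. Writing $\widehat{M}=[\one_d,(\widehat{\bV}^{(t)})\transpose]\transpose$ and $M_0=[\one_d,(\bV_0^{(t)})\transpose]\transpose$, we have $\widehat{\bw}_i=\widehat{M}^{-1}\begin{bmatrix}1\\\widehat{\br}_i^{(t)}\end{bmatrix}$, while $\widehat{b}_{1k}^{-2}=\widehat{\lambda}_1+\widehat{\bv}_k\transpose\textsf{diag}(\widehat{\lambda}_2,\ldots,\widehat{\lambda}_d)\widehat{\bv}_k$. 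I would first absorb the rotation and permutation ambiguities: let $\bH^{(t)}\in\mathbb{O}(d-1)$ be the matrix of Lemma \ref{lemma:R_expansion} and let $\bK^{(t)}$ be the permutation matching the hunted vertices to the true communities. The guiding principle is that the \emph{only} contribution surviving at the stated leading order comes from $\widehat{\br}_i$; the errors in $\widehat{\bV}$ and $\widehat{\lambda}$ are global (shared across all vertices) and, after the $\sqrt{n}$-type inflation caused by $\textsf{diag}(\bb_1)^{-1}\asymp\sqrt{n}$, must be shown to fall into the remainder $\Optilde\{(\log n)^{2\xi}/\sqrt{n}\}$.

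Next I would bound the two global ingredients. For the eigenvalues, Result \ref{result:preliminary} applied with $\bU_1=\bU_2=\bu_{0k}^{(t)}$ gives $|\widehat{\lambda}_k^{(t)}-\lambda_k^{(t)}|=\Optilde\{(\log n)^\xi\}$, so the \emph{relative} error is $\Optilde\{(\log n)^\xi/n\}$ since $\lambda_k^{(t)}\asymp n$. For the hunted vertices, the crucial point is that a single row $\widehat{\br}_{i_k}$ carries error only $\Optilde\{(\log n)^\xi/\sqrt{n}\}$, which would \emph{not} be negligible after the rescaling; the improvement to $\Optilde\{(\log n)^{2\xi}/n\}$ comes from the clustering structure of Assumption \ref{assumption:identifiability}(b), under which each community contains $\asymp n$ pure nodes and the sketched vertex search effectively averages the expansion $\sum_j E_{i'j}^{(t)}\balpha_{i'j}^{(t)}$ of Lemma \ref{lemma:R_expansion} over $i'$ in a community, cutting the fluctuation by a factor $n^{-1/2}$. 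Granting these two facts, a Taylor expansion of $x\mapsto x^{-1/2}$ for $\widehat{b}_{1k}^{-1}$ and the Neumann expansion $\widehat{M}^{-1}=M_0^{-1}-M_0^{-1}(\widehat{M}-M_0)M_0^{-1}+\cdots$ show that every term carrying $\widehat{\bV}-\bV_0$ or $\widehat{\lambda}-\lambda$ contributes at most $\sqrt{n}\cdot\Optilde\{(\log n)^{2\xi}/n\}=\Optilde\{(\log n)^{2\xi}/\sqrt{n}\}$.

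It then remains to isolate the $\widehat{\br}_i$ contribution. From $\widehat{M}^{-1}-M_0^{-1}$ the relevant piece is $M_0^{-1}\begin{bmatrix}0\\\bH^{(t)}\widehat{\br}_i-\br_{0i}\end{bmatrix}$, into which I substitute $\bH^{(t)}\widehat{\br}_i^{(t)}-\br_{0i}^{(t)}=\sum_j E_{ij}^{(t)}\balpha_{ij}^{(t)}+\Optilde\{(\log n)^{2\xi}/n\}$ from Lemma \ref{lemma:R_expansion}. Multiplying by $\textsf{diag}(\widehat{\bb}_1)^{-1}$ and replacing it by $\textsf{diag}(\bb_1)^{-1}$ (the replacement error already being remainder), the deterministic parts cancel exactly by the population relationship between the normalization vector $\bb_1^{(t)}$ and $\bxi_{01}^{(t)}$ (Lemma C.1 in \cite{JIN2023}), and the same relationship rewrites the coefficient of the stochastic term as $\textsf{diag}(\bxi_{01}^{(t)})^{-1}[\one_d,(\bV_0^{(t)})\transpose]^{-\mathrm{T}}\begin{bmatrix}0\\\balpha_{ij}^{(t)}\end{bmatrix}$, producing the stated leading term. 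The two ``in particular'' bounds then follow by bookkeeping: the sum $\sum_j E_{ij}\balpha_{ij}$ has norm $\Optilde\{(\log n)^\xi/\sqrt{n}\}$ (variance $\asymp\sum_j\|\balpha_{ij}^{(t)}\|_2^2\asymp n^{-1}$, with sub-exponential tails handled via Assumption \ref{assumption:likelihood}(d)), which becomes $\Optilde\{(\log n)^\xi\}$ after the $\textsf{diag}(\bxi_{01})^{-1}\asymp\sqrt{n}$ scaling; and since the deterministic part $\textsf{diag}(\bxi_{01})^{-1}\bw_{0i}\asymp\sqrt{n}$ dominates this $o(\sqrt{n})$ deviation, $\textsf{diag}(\widehat{\bb}_1)^{-1}\widehat{\bw}_i\asymp\sqrt{n}$ w.h.p.

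The main obstacle is the second step: establishing that the vertex-hunting error is $\Optilde\{(\log n)^{2\xi}/n\}$ rather than the naive $\Optilde\{(\log n)^\xi/\sqrt{n}\}$. This is exactly where the argument must go beyond the perturbation bounds of \cite{JIN2023}: one needs a row-wise expansion of the hunted vertices exhibiting the averaging over the $\asymp n$ pure nodes, together with a \emph{uniform} control (over the random hunted indices $i_1,\ldots,i_d$) of the higher-order remainder in Lemma \ref{lemma:R_expansion}, so that the global errors genuinely drop one order relative to the per-vertex term and land in the remainder after the $\sqrt{n}$ rescaling.
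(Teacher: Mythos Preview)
Your proposal is correct and follows essentially the same route as the paper's proof: expand $\widehat{\bw}_i^{(t)}$ via a Neumann-type inversion of $[\one_d,\widehat{\bV}^{(t)\mathrm{T}}]\transpose$, control $\widehat{\bb}_1^{(t)}$ through the quadratic form $\widehat{\lambda}_1+\widehat{\bv}_k\transpose\textsf{diag}(\widehat{\lambda}_2,\ldots,\widehat{\lambda}_d)\widehat{\bv}_k$, and show that only the $\widehat{\br}_i^{(t)}$ contribution survives at leading order. The one point worth noting is the obstacle you flag at the end: you worry about needing a uniform expansion over the random hunted indices $i_1,\ldots,i_d$, but in fact the paper's argument (and the SVS construction under Assumption~\ref{assumption:identifiability}(b), via (E.84) and Lemma~E.4 in \cite{JIN2023}) gives the exact identity $\widehat{\bv}_{\sigma(k)}^{(t)}=|\calN_k|^{-1}\sum_{i\in\calN_k}\widehat{\br}_i^{(t)}$ w.h.p., so the hunted vertex is \emph{literally} an average over the $\asymp n$ pure nodes and no uniform-in-$i_k$ control is needed --- the averaging is automatic, and $\|\bH^{(t)}\widehat{\bv}_{\sigma(k)}^{(t)}-\bv_{0k}^{(t)}\|_2=\Optilde\{(\log n)^{2\xi}/n\}$ follows by directly summing the expansion of Lemma~\ref{lemma:R_expansion}.
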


\begin{proof}[\bf Proof]
By (E.84) and the proof of Lemma E.4 in \cite{JIN2023}, we know that there exists a permutation $\sigma:[d]\to[d]$, such that $\widehat{\bv}_{\sigma(k)}^{(t)} = (1/|\calN_k|)\sum_{i\in\calN_k}\widehat{\br}_i^{(t)}$. Note here that although Lemma E.4 in \cite{JIN2023} is under the unweighted network model, the proof carries over to our setup because $\max_{i\in[n]}\|\bH^{(t)}\widehat{\br}_i^{(t)} - \br_{0i}^{(t)}\|_2 = \Optilde\{(\log n)^\xi/\sqrt{n}\}$ by Lemma \ref{lemma:R_expansion} and the condition that $|\calM|,|\calN_k|\gg (\log n)^{2\xi}/n$. By Lemma \ref{lemma:R_expansion} and Lemma 3.8 in \cite{10.1214/11-AOP734}, we have
\begin{align*}
\|\bH^{(t)}\widehat{\bv}_{\sigma(k)}^{(t)} - \bv_{0k}^{(t)}\|_2
& = \bigg\|\frac{1}{|\calN_k|}\sum_{i\in\calN_k}(\bH^{(t)}\widehat{\br}_i^{(t)} - \br_{0i}^{(t)})\bigg\|_2
  = \frac{1}{|\calN_k|}\bigg\|\sum_{i\in\calN_k}\sum_{j = 1}^nE_{ij}^{(t)}\balpha_{ij}^{(t)}\bigg\|_2 + \Optilde\bigg\{\frac{(\log n)^{2\xi}}{n}\bigg\}\\ 
& = \Optilde\bigg\{\frac{n(\log n)^{\xi}}{|\calN_k|}\max_{i,j\in[n]}\|\balpha_{ij}^{(t)}\|_2 + \frac{(\log n)^{2\xi}}{n}\bigg\} = \Optilde\bigg\{\frac{(\log n)^{2\xi}}{n}\bigg\}. 
\end{align*}
This implies that 
\begin{align}\label{eqn:Vt_concentration}
\|\bH^{(t)}\widehat{\bV}^{(t)}\bK^{(t)\mathrm{T}} - \bV_0^{(t)}\|_2 = \Optilde\bigg\{\frac{(\log n)^{2\xi}}{n}\bigg\}
\end{align}
for some permutation matrix $\bK^{(t)}\in\mathbb{O}(d)$. Also, by (C.26) in \cite{JIN2023}, we have
\begin{align*}
&\left\|\begin{bmatrix}\one_d\transpose\\\bV_0^{(t)}\end{bmatrix}\right\|_2 = O(1),
\quad
\left\|\begin{bmatrix}\one_d\transpose\\\bV_0^{(t)}\end{bmatrix}^{-1}\right\|_2 = O(1),\quad
\left\|\begin{bmatrix}\one_d\transpose\\\widehat{\bV}^{(t)}\end{bmatrix}\right\|_2 = \Optilde(1),
\quad
\left\|\begin{bmatrix}\one_d\transpose\\\widehat{\bV}^{(t)}\end{bmatrix}^{-1}\right\|_2 = \Optilde(1),\\
&\left\|\begin{bmatrix}\one_d\transpose\\\bH^{(t)}\widehat{\bV}^{(t)}\bK^{(t)\mathrm{T}}\end{bmatrix}^{-1} - \begin{bmatrix}\one_d\transpose\\\bV_0^{(t)}\end{bmatrix}^{-1}\right\|_2
 = \Optilde\bigg\{\frac{(\log n)^{2\xi}}{n}\bigg\}.
\end{align*}
We then obtain
\begin{align*}
\bK^{(t)}\widehat{\bw}_i^{(t)} - \bw_{0i}^{(t)}
& = \begin{bmatrix}\one_d\transpose\\\bH^{(t)}\widehat{\bV}^{(t)}\bK^{(t)\mathrm{T}}\end{bmatrix}^{-1}\begin{bmatrix}1\\\bH^{(t)}\widehat{\br}_i^{(t)}\end{bmatrix} - 
\begin{bmatrix}\one_d\transpose\\\bV_0^{(t)}\end{bmatrix}^{-1}\begin{bmatrix}1\\\br_{0i}^{(t)}\end{bmatrix}\\
& = \left(
    \begin{bmatrix}\one_d\transpose\\\bH^{(t)}\widehat{\bV}^{(t)}\bK^{(t)\mathrm{T}}\end{bmatrix}^{-1} - \begin{bmatrix}\one_d\transpose\\\bV_0^{(t)}\end{bmatrix}^{-1}
\right)\begin{bmatrix}1\\\bH^{(t)}\widehat{\br}_i^{(t)}\end{bmatrix} + 
\begin{bmatrix}\one_d\transpose\\\bV_0^{(t)}\end{bmatrix}^{-1}\begin{bmatrix}0\\\bH^{(t)}\widehat{\br}_i^{(t)} - \br_{0i}^{(t)}\end{bmatrix}\\
& = \sum_{j = 1}^nE_{ij}^{(t)}\begin{bmatrix}\one_d\transpose\\\bV_0^{(t)}\end{bmatrix}^{-1}\begin{bmatrix}0\\\balpha_{ij}^{(t)}\end{bmatrix} + \Optilde\bigg\{\frac{(\log n)^{2\xi}}{n}\bigg\},
\end{align*}
where $\balpha_{ij}^{(t)}$ is given by \eqref{eqn:alpha_ijt}. In particular, we have $\|\bK^{(t)}\widehat{\bw}_i^{(t)} - \bw_{0i}^{(t)}\|_2 = \Optilde\{(\log n)^{\xi}/\sqrt{n}\}$ and $\|\widehat{\bw}_i^{(t)}\|_2 = \Optilde(1)$. 
Let $\widehat{\bS}_{2:d}^{(t)} = \textsf{diag}(\widehat{\lambda}_2^{(t)},\ldots,\widehat{\lambda}_{d}^{(t)})$ and
 $\bS_{2:d}^{(t)} = \textsf{diag}(\lambda_2^{(t)},\ldots,\lambda_d^{(t)})$. 
By Result \ref{result:preliminary} and the fact that $\bA^{(t)}\widehat{\bU}^{(t)} = \widehat{\bU}^{(t)}\widehat{\bS}^{(t)}$ and $\bU_0^{(t)\mathrm{T}}\bP_0^{(t)} = \bS^{(t)}_0\bU_0^{(t)\mathrm{T}}$, 
\begin{equation}
\label{eqn:HS_exchange}
\begin{aligned}
&\left\|
\begin{bmatrix}1 & \\ & \bH^{(t)}\end{bmatrix}\widehat{\bS}^{(t)} - \bS_0^{(t)}\begin{bmatrix}1 & \\ & \bH^{(t)}\end{bmatrix}
\right\|_2\\
&\quad = 
\left\|
\begin{bmatrix}w_1^{(t)} & \\ & \bQ_{d - 1}^{(t)\mathrm{T}}\end{bmatrix}\widehat{\bS}^{(t)} - \bS_0^{(t)}\begin{bmatrix}w_1^{(t)} & \\ & \bQ_{d - 1}^{(t)\mathrm{T}}\end{bmatrix}
\right\|_2 = \|\bQ^{(t)\mathrm{T}}\widehat{\bS}^{(t)} - \bS_0^{(t)}\bQ^{(t)\mathrm{T}}\|_2\\ 
&\quad\leq \|\bQ^{(t)\mathrm{T}} - \bU_0^{(t)\mathrm{T}}\widehat{\bU}^{(t)}\|_2\|\widehat{\bS}^{(t)}\|_2 + \|\bU_0^{(t)\mathrm{T}}\bE^{(t)}(\widehat{\bU}^{(t)} - \bU_0^{(t)}\bQ^{(t)\mathrm{T}})\|_2\\ 
&\quad\quad + \|\bU_0^{(t)}\bE^{(t)}\bU_0^{(t)}\|_2 + \|\bS_0\|_2\|\bQ^{(t)\mathrm{T}} - \bU_0^{(t)\mathrm{T}}\widehat{\bU}^{(t)}\|_2\\ 
&\quad = \Optilde\{(\log n)^\xi\}.
\end{aligned}
\end{equation}
In particular, we have $|\widehat{\lambda}_1^{(t)} - \lambda_1^{(t)}| + \|\bH^{(t)}\widehat{\bS}_{2:d}^{(t)} - \bS_{2:d}^{(t)}\bH^{(t)}\|_2 = \Optilde\{(\log n)^\xi\}$. 
By the definition of $\widehat{\bb}_1^{(t)}$ and $\bxi_{01}^{(t)}$ and Lemma 2.1 in \cite{JIN2023}, we follow the proof of Theorem 3.2 in \cite{JIN2023} and obtain
\begin{align*}
\left|\widehat{b}_{1\sigma(k)}^{-2(t)} - \xi_{01k}^{-2(t)}\right|
& \leq 
\left|\widehat{\lambda}_1^{(t)} - \lambda_1^{(t)}\right| + \left|(\widehat{\bv}_k^{(t)\mathrm{T}}\bH^{(t)\mathrm{T}} - \bv_{0k}^{(t)\mathrm{T}})\bH^{(t)}\widehat{\bS}^{(t)}\widehat{\bv}_k^{(t)}\right|\\ 
&\quad + \left|\bv_{0k}^{(t)\mathrm{T}}(\bH^{(t)}\widehat{\bS}_{2:d}^{(t)} - \bS_{2:d}^{(t)}\bH^{(t)})\widehat{\bv}_k^{(t)}\right| + \left|\bv_{0k}^{(t)\mathrm{T}}\bS_{2:d}^{(t)}(\bH^{(t)}\widehat{\bv}_k^{(t)} - \bv_{0k}^{(t)})\right|\\
& = \Optilde\{(\log n)^{2\xi}\}. 
\end{align*}
By the proof of Theorem 3.2 in \cite{JIN2023}, $\xi_{01k}^{-2(t)} \asymp n$, implying that $\widehat{b}_{1k}^{-2(t)} \asymp n$ w.h.p., and
\begin{align}
\label{eqn:b_concentration}
\left|\widehat{b}_{1\sigma(k)}^{-(t)} - \xi_{01k}^{-(t)}\right|
&\leq \left|\widehat{b}_{1\sigma(k)}^{-(t)} + \xi_{01k}^{-(t)}\right|^{-1}\left|\widehat{b}_{1\sigma(k)}^{-2(t)} - \xi_{01k}^{-2(t)}\right| = \Optilde\bigg\{\frac{(\log n)^{2\xi}}{\sqrt{n}}\bigg\}. 
\end{align}
Namely, $\|\bK^{(t)}\textsf{diag}(\widehat{\bb}_1^{(t)})^{-1}\bK^{(t)\mathrm{T}} - \textsf{diag}(\bxi_{01}^{(t)})^{-1}\|_2 = \Optilde\{(\log n)^{2\xi}/\sqrt{n}\}$, and 
\begin{align*}
\bK^{(t)}\textsf{diag}(\widehat{\bb}_1^{(t)})^{-1}\widehat{\bw}_i^{(t)} - \textsf{diag}(\bxi_{01}^{(t)})^{-1}\bw_{0i}^{(t)}
& = \bK^{(t)}\textsf{diag}(\widehat{\bb}_k^{(t)})^{-1}\bK^{(t)\mathrm{T}}\bK^{(t)}\widehat{\bw}_i^{(t)} - \textsf{diag}(\bxi_{01}^{(t)})^{-1}\bw_{0i}^{(t)}\\ 
& = \left\{\bK^{(t)}\textsf{diag}(\widehat{\bb}_k^{(t)})^{-1}\bK^{(t)\mathrm{T}} - \textsf{diag}(\bxi_{01}^{(t)})^{-1}\right\}\bK^{(t)}\widehat{\bw}_i^{(t)}\\ 
&\quad + \textsf{diag}(\bxi_{01}^{(t)})^{-1}(\bK^{(t)}\widehat{\bw}_i^{(t)} - \bw_{0i}^{(t)})\\ 
& = \sum_{j = 1}^nE_{ij}\textsf{diag}(\bxi_{01}^{(t)})^{-1}\begin{bmatrix}\one_d\transpose\\\bV_0^{(t)}\end{bmatrix}^{-1}\begin{bmatrix} 0\\\balpha_{ij}^{(t)}\end{bmatrix} + \Optilde\bigg\{\frac{(\log n)^{2\xi}}{\sqrt{n}}\bigg\}. 
\end{align*}
The proof is therefore completed.
\end{proof}

Now we are able to establish the first-order expansion result of the layer-wise mixed-SCORE estimator $\bar{\bZ}^{(t)}$ for the membership profile matrix by leveraging Lemma \ref{lemma:R_expansion} and Lemma \ref{lemma:w_expansion}. 
\begin{lemma}\label{lemma:Z_expansion}
Suppose Assumptions \ref{assumption:identifiability}--\ref{assumption:likelihood} hold. Then, there exists a permutation matrix $\bK^{(t)}\in\mathbb{O}(d)$, such that
\begin{align*}
\bK^{(t)}\overline{\bz}_i^{(t)} - \bz_{0i}
& = \sum_{j = 1}^nE_{ij}^{(t)}\bbeta_{ij}^{(t)} + \Optilde\bigg\{\frac{(\log n)^{2\xi}}{n}\bigg\},
\end{align*}
where 
\begin{align}\label{eqn:beta_ijt}
\bbeta_{ij}^{(t)} = \left\{\frac{\eye_d}{\one_d\transpose\textsf{diag}(\bxi_{01}^{(t)})^{-1}\bw_{0i}^{(t)}} - \frac{\textsf{diag}(\bxi_{01}^{(t)})^{-1}\bw_{0i}^{(t)}\one_d\transpose}{(\one_d\transpose\textsf{diag}(\bxi_{01}^{(t)})^{-1}\bw_{0i}^{(t)})^2}\right\}\textsf{diag}(\bxi_{01}^{(t)})^{-1}\begin{bmatrix}\one_d\transpose\\\bV_0^{(t)}\end{bmatrix}\begin{bmatrix}0\\\balpha_{ij}^{(t)}\end{bmatrix}
\end{align}
and $\balpha_{ij}^{(t)}$ is given by \eqref{eqn:alpha_ijt}.
\end{lemma}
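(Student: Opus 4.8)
The plan is to express the layer-wise mixed-SCORE membership estimator as the image of a single statistic under a fixed smooth map and then run a delta-method (first-order Taylor) argument. Concretely, write $\widehat{\bp}_i^{(t)} := \textsf{diag}(\widehat{\bb}_1^{(t)})^{-1}\widehat{\bw}_i^{(t)}$ and $\bp_{0i}^{(t)} := \textsf{diag}(\bxi_{01}^{(t)})^{-1}\bw_{0i}^{(t)}$. By the definition of $\overline{z}_{ik}$ in Algorithm \ref{alg:mixed_SCORE}, $\overline{\bz}_i^{(t)} = g(\widehat{\bp}_i^{(t)})$, where $g(\bp) = [\max(0, p_1),\ldots,\max(0, p_d)]\transpose/\sum_{l = 1}^d\max(0, p_l)$ is the coordinatewise rectification-and-normalization map, and the population mixed-SCORE identity of \cite{JIN2023} gives $\bz_{0i} = g(\bp_{0i}^{(t)})$ up to the column permutation $\bK^{(t)}$ furnished by Lemma \ref{lemma:w_expansion}. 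For a vertex whose membership proportions are bounded away from zero, every coordinate of $\bp_{0i}^{(t)}$ is strictly positive; since $\bp_{0i}^{(t)} \asymp \sqrt{n}$ coordinatewise while Lemma \ref{lemma:w_expansion} controls $\|\bK^{(t)}\widehat{\bp}_i^{(t)} - \bp_{0i}^{(t)}\|_2 = \Optilde\{(\log n)^\xi\} = o(\sqrt{n})$, the rectification is inactive with high probability. On this event $g$ coincides with the smooth normalization $\phi(\bp) = \bp/(\one_d\transpose\bp)$, and since $\phi$ is permutation-equivariant we may write $\bK^{(t)}\overline{\bz}_i^{(t)} = \phi(\bK^{(t)}\widehat{\bp}_i^{(t)})$ and $\bz_{0i} = \phi(\bp_{0i}^{(t)})$.

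I would then Taylor-expand $\phi$ about $\bp_{0i}^{(t)}$. Its Jacobian is
\[
D\phi(\bp_{0i}^{(t)}) = \frac{\eye_d}{\one_d\transpose\bp_{0i}^{(t)}} - \frac{\bp_{0i}^{(t)}\one_d\transpose}{(\one_d\transpose\bp_{0i}^{(t)})^2},
\]
which is exactly the bracketed matrix in \eqref{eqn:beta_ijt} after unfolding $\bp_{0i}^{(t)} = \textsf{diag}(\bxi_{01}^{(t)})^{-1}\bw_{0i}^{(t)}$. Substituting the first-order expansion of $\bK^{(t)}\widehat{\bp}_i^{(t)} - \bp_{0i}^{(t)}$ supplied by Lemma \ref{lemma:w_expansion} and premultiplying by $D\phi(\bp_{0i}^{(t)})$ turns the $j$-th summand into $E_{ij}^{(t)}\bbeta_{ij}^{(t)}$, with $\bbeta_{ij}^{(t)}$ as in \eqref{eqn:beta_ijt}. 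The remainder is then controlled by three scalings: $\one_d\transpose\bp_{0i}^{(t)} \asymp \sqrt{n}$, so that $\|D\phi(\bp_{0i}^{(t)})\|_2 \asymp 1/\sqrt{n}$, and the Hessian obeys $\|D^2\phi\|_2 \asymp 1/n$ on a ball of radius $\Optilde\{(\log n)^\xi\}$ about $\bp_{0i}^{(t)}$. Hence the propagated linear remainder $\Optilde\{(\log n)^{2\xi}/\sqrt{n}\}$ of Lemma \ref{lemma:w_expansion}, once premultiplied by $D\phi$, is $\Optilde\{(\log n)^{2\xi}/n\}$, and the second-order Taylor term is at most $\|D^2\phi\|_2\|\bK^{(t)}\widehat{\bp}_i^{(t)} - \bp_{0i}^{(t)}\|_2^2 = \Optilde\{(\log n)^{2\xi}/n\}$; both match the claimed remainder.

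The step I expect to be the main obstacle is the rigorous handling of the rectification $\max(0,\cdot)$. One must establish on a high-probability event that no coordinate of $\widehat{\bp}_i^{(t)}$ is clipped, which hinges on combining the coordinatewise lower bound $\bp_{0i}^{(t)} \asymp \sqrt{n}$ (valid for interior vertices) with the uniform perturbation control of Lemma \ref{lemma:w_expansion}. This positivity is precisely what confines the clean linear expansion to vertices whose membership components are bounded away from zero: at a boundary coordinate with $p_{0k} = 0$ the map $g$ has a non-differentiable kink, $\max(0,\widehat{p}_k)$ fluctuates nonlinearly around zero, and no first-order expansion of the stated form is available. A secondary technical point is the bookkeeping that the Jacobian and Hessian of $\phi$ genuinely carry the factors $n^{-1/2}$ and $n^{-1}$, which is what allows the $\Optilde\{(\log n)^{2\xi}/\sqrt{n}\}$ input error of Lemma \ref{lemma:w_expansion} to be sharpened to the $\Optilde\{(\log n)^{2\xi}/n\}$ output error in the conclusion.
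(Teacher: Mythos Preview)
Your approach is essentially the paper's: both linearize the normalization map $\bp \mapsto \bp/(\one_d\transpose\bp)$ around the population value $\bp_{0i}^{(t)} = \textsf{diag}(\bxi_{01}^{(t)})^{-1}\bw_{0i}^{(t)}$ and feed in the first-order expansion from Lemma~\ref{lemma:w_expansion}; you phrase this as a delta-method computation with explicit Jacobian and Hessian scalings, while the paper carries out the identical decomposition by direct algebraic add-and-subtract on the difference of the two fractions. Your care about the rectification $\max(0,\cdot)$ is a point the paper's proof simply glosses over---it writes $\bK^{(t)}\overline{\bz}_i^{(t)}$ as $\bK^{(t)}\widehat{\bp}_i^{(t)}/(\one_d\transpose\widehat{\bp}_i^{(t)})$ without justifying that no coordinate is clipped---so on that issue you are if anything more scrupulous than the original.
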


\begin{proof}[\bf Proof]
By definition, we have $\bw_{0i}^{(t)} = \textsf{diag}(\bxi_{01}^{(t)})\bz_{0i}/\|\textsf{diag}(\bxi_{01}^{(t)})\bz_{0i}\|_2$. By (C.22) in \cite{JIN2023}, we have $\xi_{01k}(t) \asymp n^{-1/2}$. This implies that $\one_d\transpose\textsf{diag}(\bxi_{01}^{(t)})^{-1}\bw_{0i}^{(t)} = \|\textsf{diag}(\bxi_{01}^{(t)})\bz_{0i}\|_2^{-1} \asymp \sqrt{n}$. By Lemma \ref{lemma:w_expansion}, we obtain
\begin{align*}
\bK^{(t)}\overline{\bz}_i^{(t)} - \bz_{0i}
& = \frac{\bK^{(t)}\textsf{diag}(\widehat{\bb}_1^{(t)})^{-1}\widehat{\bw}_i^{(t)}}{\one_d\transpose\textsf{diag}(\widehat{\bb}_1^{(t)})^{-1}\widehat{\bw}_i^{(t)}} - \frac{\textsf{diag}(\bxi_{01}^{(t)})^{-1}\bw_{0i}^{(t)}}{\one_d\transpose\textsf{diag}(\bxi_{01}^{(t)})^{-1}\bw_{0i}^{(t)}}\\ 
& = \frac{\bK^{(t)}\textsf{diag}(\widehat{\bb}_1^{(t)})^{-1}\widehat{\bw}_i^{(t)}}{\one_d\transpose\textsf{diag}(\widehat{\bb}_1^{(t)})^{-1}\widehat{\bw}_i^{(t)}\one_d\transpose\textsf{diag}(\bxi_{01}^{(t)})^{-1}\bw_{0i}^{(t)}}\left(\one_d\transpose\textsf{diag}(\bxi_{01}^{(t)})^{-1}\bw_{0i}^{(t)} - \one_d\transpose\textsf{diag}(\widehat{\bb}_1^{(t)})^{-1}\widehat{\bw}_i^{(t)}\right)\\ 
&\quad + \frac{\bK^{(t)}\textsf{diag}(\widehat{\bb}_1^{(t)})^{-1}\widehat{\bw}_i^{(t)} - \textsf{diag}(\bxi_{01}^{(t)})^{-1}\bw_{0i}^{(t)}}{\one_d\transpose\textsf{diag}(\bxi_{01}^{(t)})^{-1}\bw_{0i}^{(t)}}\\
& = \frac{\textsf{diag}(\bxi_{01}^{(t)})^{-1}\bw_{0i}^{(t)}}{\one_d\transpose\textsf{diag}(\widehat{\bb}_1^{(t)})^{-1}\widehat{\bw}_i^{(t)}\one_d\transpose\textsf{diag}(\bxi_{01}^{(t)})^{-1}\bw_{0i}^{(t)}}\left(\one_d\transpose\textsf{diag}(\bxi_{01}^{(t)})^{-1}\bw_{0i}^{(t)} - \one_d\transpose\textsf{diag}(\widehat{\bb}_1^{(t)})^{-1}\widehat{\bw}_i^{(t)}\right)\\ 
&\quad + \frac{\bK^{(t)}\textsf{diag}(\widehat{\bb}_1^{(t)})^{-1}\widehat{\bw}_i^{(t)} - \textsf{diag}(\bxi_{01}^{(t)})^{-1}\bw_{0i}^{(t)}}{\one_d\transpose\textsf{diag}(\bxi_{01}^{(t)})^{-1}\bw_{0i}^{(t)}} + \Optilde\bigg\{\frac{(\log n)^{2\xi}}{n}\bigg\}\\ 
& = \frac{\textsf{diag}(\bxi_{01}^{(t)})^{-1}\bw_{0i}^{(t)}}{(\one_d\transpose\textsf{diag}(\bxi_{01}^{(t)})^{-1}\bw_{0i}^{(t)})^2}\one_d\transpose\left(\textsf{diag}(\bxi_{01}^{(t)})^{-1}\bw_{0i}^{(t)} - \bK^{(t)}\textsf{diag}(\widehat{\bb}_1^{(t)})^{-1}\widehat{\bw}_i^{(t)}\right)\\
&\quad\times\left\{1 + \frac{\one_d\transpose\textsf{diag}(\bxi_{01}^{(t)})^{-1}\bw_{0i}^{(t)} - \one_d\transpose\textsf{diag}(\widehat{\bb}_{1}^{(t)})^{-1}\widehat{\bw}_{i}^{(t)}}{\one_d\transpose\textsf{diag}(\widehat{\bb}_{1}^{(t)})^{-1}\widehat{\bw}_{i}^{(t)}}\right\}\\ 
&\quad + \frac{\bK^{(t)}\textsf{diag}(\widehat{\bb}_1^{(t)})^{-1}\widehat{\bw}_i^{(t)} - \textsf{diag}(\bxi_{01}^{(t)})^{-1}\bw_{0i}^{(t)}}{\one_d\transpose\textsf{diag}(\bxi_{01}^{(t)})^{-1}\bw_{0i}^{(t)}} + \Optilde\bigg\{\frac{(\log n)^{2\xi}}{n}\bigg\}\\ 
& = \left\{\frac{\eye_d}{\one_d\transpose\textsf{diag}(\bxi_{01}^{(t)})^{-1}\bw_{0i}^{(t)}} - \frac{\textsf{diag}(\bxi_{01}^{(t)})^{-1}\bw_{0i}^{(t)}\one_d\transpose}{(\one_d\transpose\textsf{diag}(\bxi_{01}^{(t)})^{-1}\bw_{0i}^{(t)})^2}\right\}\\
&\quad\times \left(\bK^{(t)}\textsf{diag}(\widehat{\bb}_1^{(t)})^{-1}\widehat{\bw}_i^{(t)} - \textsf{diag}(\bxi_{01}^{(t)})^{-1}\bw_{0i}^{(t)}\right)
 + \Optilde\bigg\{\frac{(\log n)^{2\xi}}{n}\bigg\}\\ 
& = \sum_{j = 1}^nE_{ij}^{(t)}\left\{\frac{\eye_d}{\one_d\transpose\textsf{diag}(\bxi_{01}^{(t)})^{-1}\bw_{0i}^{(t)}} - \frac{\textsf{diag}(\bxi_{01}^{(t)})^{-1}\bw_{0i}^{(t)}\one_d\transpose}{(\one_d\transpose\textsf{diag}(\bxi_{01}^{(t)})^{-1}\bw_{0i}^{(t)})^2}\right\}\textsf{diag}(\bxi_{01}^{(t)})^{-1}\begin{bmatrix}\one_d\transpose\\\bV_0^{(t)}\end{bmatrix}\begin{bmatrix}0\\\balpha_{ij}^{(t)}\end{bmatrix}\\ 
&\quad + \Optilde\bigg\{\frac{(\log n)^{2\xi}}{n}\bigg\}. 
\end{align*}
The proof is thereby completed. 
\end{proof}

The final expansion result is with regard to the rows of  $\bY^{(t)} := \textsf{diag}(\theta_1^{(t)},\ldots,\theta_n^{(t)})\bZ\bB^{(t)}$. The true value is given by $\bY_0^{(t)} = \textsf{diag}(\theta_{01}^{(t)},\ldots,\theta_{0n}^{(t)})\bZ_0\bB_0^{(t)}$. Lemma \ref{lemma:Y_expansion} below establishes the row-wise expansion of the preliminary estimator $\widetilde{\bY}^{(t)} = \textsf{diag}(\widetilde{\theta}_1^{(t)},\ldots,\widetilde{\theta}_n^{(t)})\widetilde{\bZ}\widetilde{\bB}^{(t)}$, where $\widetilde{\bZ}$, $(\widetilde{\theta}_i^{(t)}:i\in[n],t\in[m])$, $(\widetilde{\bB}^{(t)})_{t = 1}^m$ are computed by Algorithm \ref{alg:aggregated_mixed_SCORE}. 

\begin{lemma}\label{lemma:Y_expansion}
Suppose Assumptions \ref{assumption:identifiability}--\ref{assumption:likelihood} hold. Then there exists a permutation matrix $\bK^{(t)}\in\mathbb{O}(d)$ and deterministic vectors $(\bgamma_{ija}^{(t)}:i,j,a\in[n],t\in[m])\subset\mathbb{R}^d$, $(\bdelta_{ija}^{(s,t)}:i,j,a\in[n],s,t\in[m])\subset\mathbb{R}^d$ satisfying $\max_{j,a\in[n],s,t\in[m]}(\|\bgamma_{ja}^{(t)}\|_2 + \|\bdelta_{ja}^{(s,t)}\|_2)\lesssim 1/n$, such that
\begin{align*}
\bK^{(t)}\widetilde{\by}_{j}^{(t)} - \by_{0j}^{(t)} & = \sum_{a = 1}^nE_{ja}^{(t)}\bgamma_{ja}^{(t)} + \frac{1}{m}\sum_{s = 1}^m\sum_{a = 1}^nE_{ja}^{(s)}\bdelta_{ja}^{(s, t)} + \Optilde\bigg\{\frac{(\log n)^{2\xi}}{n}\bigg\}.
\end{align*}
In particular, we have $\|\bK^{(t)}\widetilde{\by}_{j}^{(t)} - \by_{0j}^{(t)}\|_2 = \Optilde\{(\log n)^\xi/\sqrt{n}\}$. 
\end{lemma}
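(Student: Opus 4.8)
The plan is to linearize the product $\widetilde{\by}_j^{(t)} = \widetilde{\theta}_j^{(t)}\widetilde{\bB}^{(t)}\widetilde{\bz}_j$ around its population value $\by_{0j}^{(t)} = \theta_{0j}^{(t)}\bB_0^{(t)}\bz_{0j}$ and then feed in the first-order expansions of the three factors already established in Lemmas \ref{lemma:R_expansion}--\ref{lemma:Z_expansion}. Writing the permutation-aligned errors $e_\theta = \widetilde{\theta}_j^{(t)} - \theta_{0j}^{(t)}$, $\bW = \bK^{(t)}\widetilde{\bB}^{(t)}\bK^{(t)\mathrm{T}} - \bB_0^{(t)}$, and $\bg = \bK^{(t)}\widetilde{\bz}_j - \bz_{0j}$, the identity $\bK^{(t)}\widetilde{\by}_j^{(t)} = \widetilde{\theta}_j^{(t)}(\bK^{(t)}\widetilde{\bB}^{(t)}\bK^{(t)\mathrm{T}})(\bK^{(t)}\widetilde{\bz}_j)$ together with the product rule gives
\[
\bK^{(t)}\widetilde{\by}_j^{(t)} - \by_{0j}^{(t)} = e_\theta\,\bB_0^{(t)}\bz_{0j} + \theta_{0j}^{(t)}\bW\bz_{0j} + \theta_{0j}^{(t)}\bB_0^{(t)}\bg + (\text{second-order cross terms}).
\]
Since each factor error is $\Optilde\{(\log n)^\xi/\sqrt{n}\}$ by Lemmas \ref{lemma:R_expansion}--\ref{lemma:Z_expansion}, any product of two such errors is $\Optilde\{(\log n)^{2\xi}/n\}$ and is absorbed into the stated remainder.

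Next I would expand the three leading terms and classify every noise contribution as either \emph{layer-specific} (involving only $E_{ja}^{(t)}$) or \emph{aggregated} (involving $E_{ja}^{(s)}$ averaged over $s\in[m]$). The block-matrix error $\bW$ depends only on the layer-$t$ spectral quantities $\widehat{\bb}_1^{(t)}, \widehat{\bV}^{(t)}, \widehat{\lambda}_k^{(t)}$, so combining \eqref{eqn:b_concentration}, \eqref{eqn:Vt_concentration}, and \eqref{eqn:HS_exchange} from the proof of Lemma \ref{lemma:w_expansion} yields a purely layer-$t$ expansion $\bW\bz_{0j} = \sum_a E_{ja}^{(t)}(\cdots) + \Optilde\{(\log n)^{2\xi}/n\}$, feeding only into $\bgamma_{ja}^{(t)}$. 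Similarly, using $\widetilde{\theta}_j^{(t)} = \widehat{u}_{j1}^{(t)}/(\widetilde{\bz}_j\transpose\bC^{(t)}\widehat{\bb}_1^{(t)})$ and the population identity $\theta_{0j}^{(t)} = u_{0j1}^{(t)}/(\bz_{0j}\transpose\bxi_{01}^{(t)})$, the scalar error $e_\theta$ splits into a numerator error $\widehat{u}_{j1}^{(t)}w_1^{(t)} - u_{0j1}^{(t)}$ (layer-$t$, expanded via Result \ref{result:eigenvector_expansion}) and a denominator error, the latter decomposing further into a layer-$t$ piece from $\widehat{\bb}_1^{(t)} - \bxi_{01}^{(t)}$ (controlled by \eqref{eqn:b_concentration}) and an aggregated piece from $\widetilde{\bz}_j - \bz_{0j}$.

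The only genuinely aggregated contribution therefore comes from $\bg$ and from the $\widetilde{\bz}_j$ piece of the denominator of $e_\theta$. Because $\widetilde{\bz}_j = (1/m)\sum_{s=1}^m \overline{\bz}_j^{(s)}\bC^{(s)}$, applying the layer-wise expansion of Lemma \ref{lemma:Z_expansion} to each $\overline{\bz}_j^{(s)}$ and averaging gives $\bg = (1/m)\sum_{s=1}^m\sum_a E_{ja}^{(s)}\bbeta_{ja}^{(s)} + \Optilde\{(\log n)^{2\xi}/n\}$, which after multiplication by $\theta_{0j}^{(t)}\bB_0^{(t)}$ and by the denominator normalization in $e_\theta$ produces exactly the $\frac{1}{m}\sum_{s}\sum_a E_{ja}^{(s)}\bdelta_{ja}^{(s,t)}$ term. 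Collecting the layer-$t$ pieces into $\bgamma_{ja}^{(t)}$ and the aggregated pieces into $\bdelta_{ja}^{(s,t)}$ yields the claimed form. The magnitude bounds $\max\|\bgamma_{ja}^{(t)}\|_2, \max\|\bdelta_{ja}^{(s,t)}\|_2\lesssim 1/n$ follow from $\|\balpha_{ij}^{(t)}\|_2, \|\bbeta_{ij}^{(t)}\|_2 \lesssim 1/n$ (a consequence of $u_{0i1}^{(t)}\asymp n^{-1/2}$, $|\lambda_k^{(t)}|\asymp n$, and the $O(1)$ bounds on the sketch matrices $[\one_d, \bV_0^{(t)\mathrm{T}}]$ and $\textsf{diag}(\bxi_{01}^{(t)})^{-1}$), together with the $O(1)$ operator norms of the linearization factors $\bB_0^{(t)}$, $\theta_{0j}^{(t)}$, and the normalization $\one_d\transpose\textsf{diag}(\bxi_{01}^{(t)})^{-1}\bw_{0i}^{(t)}\asymp\sqrt{n}$. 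The final bound $\|\bK^{(t)}\widetilde{\by}_j^{(t)} - \by_{0j}^{(t)}\|_2 = \Optilde\{(\log n)^\xi/\sqrt{n}\}$ then follows by applying Lemma 3.8 of \cite{10.1214/11-AOP734} to the two noise sums.

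The main obstacle I anticipate is the permutation bookkeeping. Each layer carries its own alignment matrix $\bC^{(s)}$ and its own sign/rotation objects (e.g. $\bH^{(s)}$, $\bK^{(s)}$, $w_1^{(s)}$) arising from the non-uniqueness of eigenvectors and community labels, and one must verify that a single global permutation $\bK^{(t)}$ simultaneously aligns the aggregated $\widetilde{\bz}_j$, the block estimator $\widetilde{\bB}^{(t)}$, and the degree estimator $\widetilde{\theta}_j^{(t)}$ to the truth. Establishing the consistency of these permutations, so that the averaging defining $\widetilde{\bz}_j$ does not combine mismatched community labels across layers, relies on the labeling convention $\min\calN_1<\cdots<\min\calN_d$ in Assumption \ref{assumption:identifiability}(b) and the uniform consistency of the layer-wise estimators; this is where the argument is most delicate.
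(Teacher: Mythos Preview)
Your product-linearization strategy is exactly what the paper does, and the permutation bookkeeping you flag as the delicate step is indeed handled first: the paper shows $\bC^{(t)} = \bK^{(1)\mathrm{T}}\bK^{(t)}$ w.h.p.\ by a short contradiction argument from the Frobenius consistency of each $\overline{\bZ}^{(t)}$, after which a single $\bK := \bK^{(1)}$ aligns the aggregated $\widetilde{\bz}_j$ and $\bK^{(t)}$ aligns $\widetilde{\bB}^{(t)}$.

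Where your outline diverges is in the role of $\bW$. You expect $\bW\bz_{0j}$ to contribute a first-order sum $\sum_a E_{ja}^{(t)}(\cdots)$ feeding into $\bgamma_{ja}^{(t)}$, but this cannot work as written: $\bW$ is a \emph{global} statistic built from $\widehat{\bb}_1^{(t)},\widehat{\bV}^{(t)},\widehat{\bS}^{(t)}$ and carries no special dependence on row $j$ of $\bA^{(t)}$; any genuine first-order expansion of $\bW$ would involve noise from pure nodes $i\in\calN_k$, not from $j$, and so could not take the form the lemma requires. What actually saves the argument is that $\|\bW\|_2 = \Optilde\{(\log n)^{2\xi}/n\}$ is already second-order: once you combine \eqref{eqn:Vt_concentration}, \eqref{eqn:b_concentration}, \eqref{eqn:HS_exchange} with the scalings $\xi_{01k}^{(t)}\asymp n^{-1/2}$ and $\lambda_k^{(t)}\asymp n$, the entire $\theta_{0j}^{(t)}\bW\bz_{0j}$ term is absorbed into the remainder, and no row-$j$ expansion is needed. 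For the same reason the layer-$t$ piece of the denominator of $e_\theta$ arising from $\widehat{\bb}_1^{(t)}-\bxi_{01}^{(t)}$ is also negligible after scaling. The end result is therefore cleaner than your outline suggests: $\bgamma_{ja}^{(t)} = (u_{0a1}^{(t)}/\lambda_1^{(t)})\bB_0^{(t)}\bz_{0j}$ comes \emph{only} from the numerator $\widehat{u}_{j1}^{(t)}w_1^{(t)}-u_{0j1}^{(t)}$ of $e_\theta$, while $\bdelta_{ja}^{(s,t)}$ collects the two aggregated contributions through $\bbeta_{ja}^{(s)}$ from $\theta_{0j}^{(t)}\bB_0^{(t)}\bg$ and from the $\widetilde{\bz}_j$-piece of the denominator of $e_\theta$.
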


\begin{proof}[\bf Proof]
By the definition of $\bC^{(t)}$, we know that $\bC^{(t)} = \bK^{(1)\mathrm{T}}\bK^{(t)}$ w.h.p.. Indeed, if not, then there exists some $k\in [d]$, such that $\|\bZ_0 - \bZ_0\bK^{(t)}\bC^{(t)\mathrm{T}}\bK^{(1)\mathrm{T}}\|_{\mathrm{F}}\gtrsim |\calN_k|^{1/2}$ w.h.p.. By Lemma \ref{lemma:Z_expansion}, we know that
\begin{align*}
\frac{1}{n}\|\overline{\bZ}^{(t)}\bK^{(t)\mathrm{T}} - \bZ_0\|_{\mathrm{F}}^2 = \frac{1}{n}\sum_{i = 1}^n\|\bK^{(t)}\overline{\bz}_i^{(t)} - \bz_{0i}\|_2^2 = \Optilde\bigg\{\frac{(\log n)^{2\xi}}{n}\bigg\}, 
\end{align*}
so that
\begin{align*}
\frac{1}{n}\|\overline{\bZ}^{(t)}\bK^{(t)\mathrm{T}}\bK^{(1)} - \overline{\bZ}^{(1)}\|_{\mathrm{F}}^2
&\leq \frac{2}{n}\|\overline{\bZ}^{(t)}\bK^{(t)\mathrm{T}} - \bZ_0\|_{\mathrm{F}}^2 + \frac{2}{n}\|\bZ_0 - \overline{\bZ}^{(1)}\bK^{(1)\mathrm{T}}\|_{\mathrm{F}}^2 = \Optilde\bigg\{\frac{(\log n)^{2\xi}}{n}\bigg\}. 
\end{align*}
On the other hand, we also have
\begin{align*}
\frac{1}{\sqrt{n}}\|\overline{\bZ}^{(t)}\bC^{(t)\mathrm{T}} - \overline{\bZ}^{(1)}\|_{\mathrm{F}}
& = \frac{1}{\sqrt{n}}\|\overline{\bZ}^{(t)}\bC^{(t)\mathrm{T}}\bK^{(1)\mathrm{T}} - \overline{\bZ}^{(1)}\bK^{(1)\mathrm{T}}\|_{\mathrm{F}}\\
&\geq \frac{1}{\sqrt{n}}\|\bZ_0\bK^{(t)}\bC^{(t)}\bK^{(1)\mathrm{T}} - \bZ_0\|_{\mathrm{F}} - \frac{1}{\sqrt{n}} \|\overline{\bZ}^{(t)}\bK^{(t)\mathrm{T}} - \bZ_0\|_{\mathrm{F}}\\
&\quad - \frac{1}{\sqrt{n}} \|\bZ_0 - \overline{\bZ}^{(1)}\bK^{(1)\mathrm{T}}\|_{\mathrm{F}}\\ 
&\gtrsim \frac{|\calN_k|}{\sqrt{n}} - \bigg|\Optilde\bigg\{\frac{(\log n)^\xi}{\sqrt{n\rho_n}}\bigg\}\bigg|\gtrsim \frac{|\calN_k|}{\sqrt{n}}\quad\text{w.h.p..}
\end{align*}
Namely, we know that $\|\overline{\bZ}^{(t)}\bC^{(t)\mathrm{T}} - \overline{\bZ}^{(1)}\|_{\mathrm{F}} < \|\overline{\bZ}^{(t)}\bK^{(t)\mathrm{T}}\bK^{(1)} - \overline{\bZ}^{(1)}\|_{\mathrm{F}}$ w.h.p.. This contradicts with the assumption that $\bC^{(t)}$ is the minimizer and $\bC^{(t)}\neq \bK^{(1)\mathrm{T}}\bK^{(t)}$.

By Lemma \ref{lemma:Z_expansion} and the construction of $\widetilde{\bz}_i = (1/m)\sum_{t = 1}^m\bC^{(t)}\bar{\bz}_i^{(t)} = (1/m)\sum_{t = 1}^m\bK^{(1)\mathrm{T}}\bK^{(t)}\overline{\bz}_i^{(t)}$, we can take the permutation matrix $\bK$ to be $\bK^{(1)}$, such that
\begin{align*}
\bK\widetilde{\bz}_i - \bz_{0i} & = \frac{1}{m}\sum_{t = 1}^m\sum_{j = 1}^nE_{ij}^{(t)}\bbeta_{ij}^{(t)} + \Optilde\bigg\{\frac{(\log n)^{2\xi}}{n}\bigg\},
\end{align*}
where $\bbeta_{ij}^{(t)}$ is given by \eqref{eqn:beta_ijt}. 
By \eqref{eqn:Vt_concentration}, \eqref{eqn:b_concentration}, and (C.22) in \cite{JIN2023}, we have
\begin{align*}
&\left\|\bK^{(t)}\textsf{diag}(\widehat{\bb}_1^{(t)})
\begin{bmatrix}\one_d & \widehat{\bV}^{(t)\mathrm{T}}\end{bmatrix}\begin{bmatrix}1 & \\ & \bH^{(t)\mathrm{T}}\end{bmatrix} - \textsf{diag}(\bxi_{01}^{(t)})\begin{bmatrix}\one_d & \bV_0^{(t)\mathrm{T}}\end{bmatrix}\right\|_2\\ 
&\quad = \left\|\bK^{(t)}\textsf{diag}(\widehat{\bb}_1^{(t)})\bK^{(t)\mathrm{T}}
\begin{bmatrix}\one_d & \bK^{(t)}\widehat{\bV}^{(t)\mathrm{T}}\bH^{(t)\mathrm{T}}\end{bmatrix} - \textsf{diag}(\bxi_{01}^{(t)})\begin{bmatrix}\one_d & \bV_0^{(t)}\end{bmatrix}\right\|_2\\ 
&\quad\leq \left\|\bK^{(t)}\textsf{diag}(\widehat{\bb}_1^{(t)})\bK^{(t)\mathrm{T}} - \textsf{diag}(\bxi_{01}^{(t)})\right\|_2\left\|
\begin{bmatrix}\one_d & \bK^{(t)}\widehat{\bV}^{(t)\mathrm{T}}\bH^{(t)\mathrm{T}}\end{bmatrix}\right\|_2
\\ &\quad\quad
 + \left\|\textsf{diag}(\bxi_{01}^{(t)})\right\|_2
\left\|\begin{bmatrix}\zero_d & \bK^{(t)}\widehat{\bV}^{(t)\mathrm{T}}\bH^{(t)\mathrm{T}} - \bV_0^{(t)}\end{bmatrix}\right\|_2\\ 
&\quad\leq \left\|\bK^{(t)}\textsf{diag}(\widehat{\bb}_1^{(t)})\bK^{(t)\mathrm{T}}\right\|_2\left\|(\bK^{(t)}\textsf{diag}(\widehat{\bb}_1^{(t)})\bK^{(t)\mathrm{T}})^{-1} - \textsf{diag}(\bxi_{01}^{(t)})^{-1}\right\|_2
\left\| \textsf{diag}(\bxi_{01}^{(t)})\right\|_2
\\ 
&\quad\quad\times 
\left\|
\begin{bmatrix}\one_d & \bK^{(t)}\widehat{\bV}^{(t)\mathrm{T}}\bH^{(t)\mathrm{T}}\end{bmatrix}\right\|_2
+  \left\|\textsf{diag}(\bxi_{01}^{(t)})\right\|_2
\left\|\begin{bmatrix}\zero_d & \bK^{(t)}\widehat{\bV}^{(t)\mathrm{T}}\bH^{(t)\mathrm{T}} - \bV_0^{(t)}\end{bmatrix}\right\|_2\\ 
&\quad = 
\Optilde\bigg\{\frac{(\log n)^{2\xi}}{n^{3/2}}\bigg\}.
\end{align*}
Let
\[
\overline{\bB}^{(t)} = \textsf{diag}(\widehat{\bb}_1^{(t)})
\begin{bmatrix}\one_d & \widehat{\bV}^{(t)\mathrm{T}}\end{bmatrix}\widehat{\bS}^{(t)}\begin{bmatrix}\one_d\transpose\\\widehat{\bV}^{(t)}\end{bmatrix}\textsf{diag}(\widehat{\bb}_{1}^{(t)}).
\]
By the definition $\bB_0^{(t)}$ and \eqref{eqn:HS_exchange}, we have
\begin{align*}
&\bK^{(t)}\overline{\bB}^{(t)}\bK^{(t)\mathrm{T}} - \bB_0^{(t)}\\
&\quad = \bK^{(t)}\textsf{diag}(\widehat{\bb}_1^{(t)})
\begin{bmatrix}\one_d & \widehat{\bV}^{(t)\mathrm{T}}\end{bmatrix}\widehat{\bS}^{(t)}\begin{bmatrix}\one_d\transpose\\\widehat{\bV}^{(t)}\end{bmatrix}\textsf{diag}(\widehat{\bb}_{1}^{(t)})\bK^{(t)\mathrm{T}} - \bB_0^{(t)}\\ 
&\quad = \left(\bK^{(t)}\textsf{diag}(\widehat{\bb}_1^{(t)})
\begin{bmatrix}\one_d & \widehat{\bV}^{(t)\mathrm{T}}\end{bmatrix}
\begin{bmatrix}1 & \\ & \bH^{(t)\mathrm{T}} \end{bmatrix} - \textsf{diag}(\bxi_{01}^{(t)})\begin{bmatrix}\one_d & \bV_0^{(t)\mathrm{T}}\end{bmatrix}\right)\\
&\quad\quad\times
\begin{bmatrix}1 & \\ & \bH^{(t)} \end{bmatrix}
\widehat{\bS}^{(t)}\begin{bmatrix}\one_d\transpose\\\widehat{\bV}^{(t)}\end{bmatrix}\textsf{diag}(\widehat{\bb}_{1}^{(t)})\bK^{(t)\mathrm{T}}\\ 
&\quad\quad + \textsf{diag}(\bxi_{01}^{(t)})\begin{bmatrix}\one_d & \bV_{0}^{(t)\mathrm{T}}\end{bmatrix}\left(\begin{bmatrix} 1 & \\ & \bH^{(t)}\end{bmatrix}\widehat{\bS}^{(t)} - \bS_0^{(t)}\begin{bmatrix} 1 & \\ & \bH^{(t)}\end{bmatrix}\right)\begin{bmatrix}\one_d\transpose \\ \widehat{\bV}^{(t)}\end{bmatrix}\textsf{diag}(\widehat{\bb}_1^{(t)})\bK^{(t)\mathrm{T}}\\ 
&\quad\quad + \textsf{diag}(\bxi_{01}^{(t)})\begin{bmatrix}\one_d & \bV_0^{(t)\mathrm{T}}\end{bmatrix}\bS_0^{(t)}\left(\begin{bmatrix} 1 & \\ & \bH^{(t)}\end{bmatrix}\begin{bmatrix}\one_d\transpose\\\widehat{\bV}^{(t)}\end{bmatrix}\textsf{diag}(\widehat{\bb}_1^{(t)})\bK^{(t)\mathrm{T}} - \begin{bmatrix}\one_d\transpose\\\bV_0^{(t)}\end{bmatrix}\textsf{diag}(\bxi_{01}^{(t)})\right)\\ 
&\quad = \Optilde\bigg\{\frac{(\log n)^{2\xi}}{n}\bigg\}.
\end{align*}
Therefore, 
\begin{align*}
\bK\widetilde{\bB}^{(t)}\widetilde{\bz}_i - \bB_0^{(t)}\bz_{0i}
& = \bK\bC^{(t)}\overline{\bB}^{(t)}\bC^{(t)\mathrm{T}}\widetilde{\bz}_i - \bB_0^{(t)}\bz_{0i}\\ 
& = \bK^{(t)}\overline{\bB}^{(t)}\bK^{(t)\mathrm{T}}\bK^{(1)}\widetilde{\bz}_i - \bB_0^{(t)}\bz_{0i}\\ 
& = \left(\bK^{(t)}\overline{\bB}^{(t)}\bK^{(t)\mathrm{T}} - \bB_0^{(t)}\right)\bK^{(1)}\widetilde{\bz}_i + \bB_0^{(t)}(\bK^{(1)}\widetilde{\bz}_i - \bz_{0i})\\
& = \frac{1}{m}\sum_{s = 1}^m\sum_{a = 1}^nE_{ia}^{(s)}\bB_0^{(t)}\bbeta_{ia}^{(s)} + \Optilde\bigg\{\frac{(\log n)^{2\xi}}{n}\bigg\} = \Optilde\bigg\{\frac{(\log n)^\xi}{\sqrt{n}}\bigg\}.
\end{align*}
For $\widetilde{\theta}_i^{(t)}$, by definition, Result \ref{result:eigenvector_expansion}, and Lemma \ref{lemma:Z_expansion}, we have
\begin{align*}
\widetilde{\bz}_i\transpose\bC^{(t)}\widehat{\bb}_1^{(t)} - \bz_{0i}\transpose\bxi_{01}^{(t)}
& = \widetilde{\bz}_i\transpose\bK^{(1)\mathrm{T}}\bK^{(t)}\widehat{\bb}_1^{(t)} - \bz_{0i}\transpose\bxi_{01}^{(t)} \\ 
& = \left(\widetilde{\bz}_i\transpose\bK^{(1)\mathrm{T}} - \bz_{0i}\transpose\right)\left(\bK^{(t)}\widehat{\bb}_1^{(t)} - \bxi_{01}^{(t)}\right) + \left(\widetilde{\bz}_i\transpose\bK^{(1)\mathrm{T}} - \bz_{0i}\transpose\right)\bxi_{01}^{(t)}\\
&\quad + \bz_{0i}\transpose\left(\bK^{(t)}\widehat{\bb}_1^{(t)} - \bxi_{01}^{(t)}\right)\\
& = \left(\widetilde{\bz}_i\transpose\bK^{(1)\mathrm{T}} - \bz_{0i}\transpose\right)\bxi_{01}^{(t)} + \Optilde\bigg\{\frac{(\log n)^{\xi}}{n^{3/2}}\bigg\} = \Optilde\bigg\{\frac{(\log n)^\xi}{n}\bigg\},\\
\frac{1}{\widetilde{\bz}_i\transpose\bC^{(t)}\widehat{\bb}_1^{(t)}} - \frac{1}{\bz_{0i}\transpose\bxi_{01}^{(t)}}
& = \frac{\bz_{0i}\transpose\bxi_{01}^{(t)} - \widetilde{\bz}_i\transpose\bC^{(t)}\widehat{\bb}_1^{(t)}}{(\bz_{0i}\transpose\bxi_{01}^{(t)})^2}\left(1 + \frac{\bz_{0i}\transpose\bxi_{01}^{(t)} - \widetilde{\bz}_i\transpose\bC^{(t)}\widehat{\bb}_1^{(t)}}{\widetilde{\bz}_i\transpose\bC^{(t)}\widehat{\bb}_1^{(t)}}\right)\\
& = \frac{\bz_{0i}\transpose\bxi_{01}^{(t)} - \widetilde{\bz}_i\transpose\bC^{(t)}\widehat{\bb}_1^{(t)}}{(\bz_{0i}\transpose\bxi_{01}^{(t)})^2} + \Optilde\bigg\{\frac{(\log n)^{2\xi}}{\sqrt{n}}\bigg\}\\
& = \frac{(\bK^{(1)}\widetilde{\bz}_i - \bz_{0i})\transpose\bxi_{01}^{(t)}}{(\bz_{0i}\transpose\bxi_{01}^{(t)})^2} + \Optilde\bigg\{\frac{(\log n)^{2\xi}}{\sqrt{n}}\bigg\} = \Optilde\{(\log n)^\xi\}.
\end{align*}
It follows that
\begin{align*}
\widetilde{\theta}_i^{(t)} - \theta_{0i}^{(t)}
& = \frac{\widehat{u}_{i1}^{(t)}}{\widetilde{\bz}_i\transpose\bC^{(t)}\widehat{\bb}_1^{(t)}} - \frac{u_{0i1}^{(t)}}{\bz_{0i}\transpose\bxi_{01}^{(t)}}\\
& = \frac{\widehat{u}_{i1}^{(t)} - u_{0i1}^{(t)}}{\bz_{0i}\transpose\bxi_{01}^{(t)}} + u_{0i1}^{(t)}\left(\frac{1}{\widetilde{\bz}_i\transpose\bC^{(t)}\widehat{\bb}_1^{(t)}} - \frac{1}{\bz_{0i}\transpose\bxi_{01}^{(t)}}\right) + (\widehat{u}_{i1}^{(t)} - u_{0i1}^{(t)})\left(\frac{1}{\widetilde{\bz}_i\transpose\bC^{(t)}\widehat{\bb}_1^{(t)}} - \frac{1}{\bz_{0i}\transpose\bxi_{01}^{(t)}}\right)\\ 
& = \frac{\widehat{u}_{i1}^{(t)} - u_{0i1}^{(t)}}{\bz_{0i}\transpose\bxi_{01}^{(t)}}
 + \frac{u_{0i1}^{(t)}(\bK^{(1)}\widetilde{\bz}_i - \bz_{0i})\transpose\bxi_{01}^{(t)}}{(\bz_{0i}\transpose\bxi_{01}^{(t)})^2} + \Optilde\bigg\{\frac{(\log n)^{2\xi}}{n\rho_n^{1/2}}\bigg\} = \Optilde\bigg\{\frac{(\log n)^\xi}{\sqrt{n}}\bigg\}.
\end{align*}
Therefore, 
\begin{align*}
\bK\widetilde{\by}_{j}^{(t)} - \by_{0j}^{(t)}
& = \widetilde{\theta}_j^{(t)}\bK\widetilde{\bB}^{(t)}\widetilde{\bz}_j - \theta_{0j}^{(t)}\bB_0^{(t)}\bz_{0j}\\ 
& = (\widetilde{\theta}_j^{(t)} - \theta_{0j}^{(t)})(\bK\widetilde{\bB}^{(t)}\widetilde{\bz}_j - \bB_0^{(t)}\bz_{0j})
 + (\widetilde{\theta}_j^{(t)} - \theta_{0j}^{(t)})\bB_0^{(t)}\bz_{0j}
 + \theta_{0j}^{(t)}(\bK\widetilde{\bB}^{(t)}\widetilde{\bz}_i - \bB_0^{(t)}\bz_{0j})\\ 
& = (\widetilde{\theta}_j^{(t)} - \theta_{0j}^{(t)})\bB_0^{(t)}\bz_{0j} + \theta_{0j}^{(t)}(\bK\widetilde{\bB}^{(t)}\widetilde{\bz}_j - \bB_0^{(t)}\bz_{0j}) + \Optilde\bigg\{\frac{(\log n)^{2\xi}}{n}\bigg\}\\ 
& = \sum_{a = 1}^nE_{ja}^{(t)}\frac{u_{0a1}^{(t)}}{\lambda_1^{(t)}}\bB_0^{(t)}\bz_{0j}
 + \frac{1}{m}\frac{\theta_{0j}^{(t)}u_{0j1}^{(t)}}{(\bz_{0j}\transpose\bxi_{01}^{(t)})^2}\sum_{s = 1}^m\sum_{a = 1}^nE_{ja}^{(s)}\bbeta_{ja}^{(s)\mathrm{T}}\bxi_{01}^{(t)}\bB_0^{(t)}\bz_{0j}\\ 
&\quad + \frac{1}{m}\sum_{s = 1}^m\sum_{a = 1}^nE_{ja}^{(s)}\theta_{0j}^{(t)}\bB_0^{(t)}\bbeta_{ja}^{(s)} + 
\Optilde\bigg\{\frac{(\log n)^{2\xi}}{n}\bigg\}\\ 
& = \sum_{a = 1}^nE_{ja}^{(t)}\bgamma_{ja}^{(t)} + \frac{1}{m}\sum_{s = 1}^m\sum_{a = 1}^nE_{ja}^{(s)}\bdelta_{ja}^{(s, t)} + \Optilde\bigg\{\frac{(\log n)^{2\xi}}{n}\bigg\},
\end{align*}
where
\begin{equation}\label{eqn:gamma_delta}
\begin{aligned}
\bgamma_{ja}^{(t)} & = \frac{u_{0a1}^{(t)}}{\lambda_1^{(t)}}\bB_0^{(t)}\bz_{0j},\quad
\bdelta_{ja}^{(s, t)}  = \frac{\theta_{0j}^{(t)}u_{0j1}^{(t)}}{(\bz_{0j}\transpose\bxi_{01}^{(t)})^2}\bbeta_{ja}^{(s)\mathrm{T}}\bxi_{01}^{(t)}\bB_0^{(t)}\bz_{0j} + \theta_{0j}^{(t)}\bB_0^{(t)}\bbeta_{ja}^{(s)}.
\end{aligned}
\end{equation}
The proof is thus completed.
\end{proof}

\section{Proof of Theorem \ref{thm:MLE}}
\label{sec:proof_of_theorem_mle}

This section proves the asymptotic normality of a local maximum likelihood estimator associated with the spectral-assisted likelihood \eqref{eqn:spectral_assisted_loglik}. In preparation for the proof, we first introduce two technical lemmas. The first lemma (Lemma \ref{lemma:LLN}) is a law-of-large-numbers-type result on the spectral-assisted log-likelihood function \eqref{eqn:spectral_assisted_loglik} at the true value $(\bkappa^{-1}(\bz_{0i}), \btheta_{0i})$, and the second lemma (Lemma \ref{lemma:CLT}) states the first-order expansion of the score function associated with \eqref{eqn:spectral_assisted_loglik}. 

\begin{lemma}[Law of Large Numbers]
\label{lemma:LLN}
Suppose Assumptions \ref{assumption:identifiability}--\ref{assumption:likelihood} hold. Further assume that $\psi(\cdot)$ is a continuous function defined on $\calI$. Then, for the permutation matrix $\bK\in\mathbb{O}(d)$ in Lemma \ref{lemma:Y_expansion}, for each $t\in[m]$ and $i\in[n]$, 
\begin{align*}
&\bigg\|\frac{1}{n}\sum_{j = 1}^n\psi(\theta_{0i}^{(t)}\bz_{0i}\transpose\bK\widetilde{\by}_{j}^{(t)})(A_{ij}^{(t)} - \theta_{0i}^{(t)}\bz_{0i}\transpose\bK\widetilde{\by}_{j}^{(t)})\bK\widetilde{\by}_{j}^{(t)}\bigg\|_2\\
&\quad + \bigg\|\frac{1}{n}\sum_{j = 1}^n\{\psi(\theta_{0i}^{(t)}\bz_{0i}\transpose\bK\transpose\widetilde{\by}_{j}^{(t)})(A_{ij}^{(t)} - \theta_{0i}^{(t)}\bz_{0i}\transpose\bK\widetilde{\by}_{j}^{(t)}) - \eta'(\theta_{0i}^{(t)}\bz_{0i}\transpose\bK\widetilde{\by}_{j}^{(t)})\}\bK\widetilde{\by}_{j}^{(t)}\widetilde{\by}_{j}^{(t)\mathrm{T}}\bK\transpose + \bG_{0in}^{(t)}\bigg\|_2\\
&\quad = \Optilde\bigg\{\frac{(\log n)^{\xi}}{\sqrt{n}}\bigg\}.
\end{align*}
\end{lemma}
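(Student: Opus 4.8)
The plan is to write $A_{ij}^{(t)} = P_{0ij}^{(t)} + E_{ij}^{(t)}$ with $P_{0ij}^{(t)} = \theta_{0i}^{(t)}\bz_{0i}\transpose\by_{0j}^{(t)}$, and to introduce the plug-in mean $\widehat{P}_{ij}^{(t)} = \theta_{0i}^{(t)}\bz_{0i}\transpose\bK\widetilde{\by}_j^{(t)}$, which is exactly the argument appearing inside $\psi$ and $\eta'$. First I would assemble the uniform deterministic ingredients: from the ``in particular'' clause of Lemma~\ref{lemma:Y_expansion} together with a union bound over $j\in[n]$ (which preserves the $\Optilde$ rate, since the tail probabilities decay faster than any polynomial), one gets $\max_{j}\|\bK\widetilde{\by}_j^{(t)} - \by_{0j}^{(t)}\|_2 = \Optilde\{(\log n)^\xi/\sqrt n\}$ and hence $\max_{j}|\widehat{P}_{ij}^{(t)} - P_{0ij}^{(t)}| = \Optilde\{(\log n)^\xi/\sqrt n\}$, while $\|\by_{0j}^{(t)}\|_2 = O(1)$ and $\max_j\|\bK\widetilde{\by}_j^{(t)}\|_2 = O(1)$. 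I would also record the exponential-family identity $\eta'(P_{0ij}^{(t)}) = 1/\var(A_{ij}^{(t)})$, obtained by differentiating the score and using $B'(\mu) = \mu\eta'(\mu)$; this rewrites $\bG_{0in}^{(t)} = n^{-1}\sum_{j=1}^n \eta'(P_{0ij}^{(t)})\by_{0j}^{(t)}\by_{0j}^{(t)\mathrm{T}}$, matching the subtracted term in the second display. Since $P_{0ij}^{(t)}\in\calI$ and the perturbations vanish, w.h.p.\ all plug-in arguments lie in a fixed compact neighborhood of $\calI$ on which $\psi$ and $\eta'$ extend continuously (indeed $\eta'\in C^2$ by Assumption~\ref{assumption:likelihood}(b)) and are therefore bounded and Lipschitz.

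For the first quantity I would split the residual as $A_{ij}^{(t)} - \widehat{P}_{ij}^{(t)} = E_{ij}^{(t)} + (P_{0ij}^{(t)} - \widehat{P}_{ij}^{(t)})$. The \emph{bias} contribution $n^{-1}\sum_j \psi(\widehat{P}_{ij}^{(t)})(P_{0ij}^{(t)} - \widehat{P}_{ij}^{(t)})\bK\widetilde{\by}_j^{(t)}$ is bounded directly by the triangle inequality using the uniform bound on $|\widehat{P}_{ij}^{(t)} - P_{0ij}^{(t)}|$ and $\max_j\|\bK\widetilde{\by}_j^{(t)}\|_2 = O(1)$, giving $\Optilde\{(\log n)^\xi/\sqrt n\}$. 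In the \emph{noise} contribution $n^{-1}\sum_j \psi(\widehat{P}_{ij}^{(t)})E_{ij}^{(t)}\bK\widetilde{\by}_j^{(t)}$ I would replace the random weight $\psi(\widehat{P}_{ij}^{(t)})\bK\widetilde{\by}_j^{(t)}$ by the deterministic $\psi(P_{0ij}^{(t)})\by_{0j}^{(t)}$. The main term $n^{-1}\sum_j \psi(P_{0ij}^{(t)})\by_{0j}^{(t)}E_{ij}^{(t)}$ is, for fixed $(i,t)$, an average of independent mean-zero sub-exponential vectors (Assumption~\ref{assumption:likelihood}(d)) with bounded deterministic coefficients, so a coordinatewise Bernstein inequality with a union bound over the $d$ fixed coordinates yields $\Optilde(\sqrt{\log n/n}) = \Optilde\{(\log n)^\xi/\sqrt n\}$ because $\xi>1$. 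The remaining error $n^{-1}\sum_j[\psi(\widehat{P}_{ij}^{(t)})\bK\widetilde{\by}_j^{(t)} - \psi(P_{0ij}^{(t)})\by_{0j}^{(t)}]E_{ij}^{(t)}$ I would bound crudely by $(\max_j\|\psi(\widehat{P}_{ij}^{(t)})\bK\widetilde{\by}_j^{(t)} - \psi(P_{0ij}^{(t)})\by_{0j}^{(t)}\|_2)\cdot n^{-1}\sum_j|E_{ij}^{(t)}|$, which is $\Optilde\{(\log n)^\xi/\sqrt n\}\cdot\Optilde(1)$ since uniform continuity of $\psi$ and the perturbation bounds control the first factor while $n^{-1}\sum_j|E_{ij}^{(t)}|$ concentrates around $\expect|E_{ij}^{(t)}| = O(1)$.

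The second quantity is handled identically, now with the rank-one weights $\bK\widetilde{\by}_j^{(t)}\widetilde{\by}_j^{(t)\mathrm{T}}\bK\transpose = (\bK\widetilde{\by}_j^{(t)})(\bK\widetilde{\by}_j^{(t)})\transpose$. Using the variance identity, the subtracted $\bG_{0in}^{(t)}$ pairs with $-n^{-1}\sum_j\eta'(\widehat{P}_{ij}^{(t)})(\bK\widetilde{\by}_j^{(t)})(\bK\widetilde{\by}_j^{(t)})\transpose$ to form $-n^{-1}\sum_j[\eta'(\widehat{P}_{ij}^{(t)})(\bK\widetilde{\by}_j^{(t)})(\bK\widetilde{\by}_j^{(t)})\transpose - \eta'(P_{0ij}^{(t)})\by_{0j}^{(t)}\by_{0j}^{(t)\mathrm{T}}]$, whose spectral norm is $\Optilde\{(\log n)^\xi/\sqrt n\}$ by the Lipschitz property of $\eta'$ and the uniform control of $(\bK\widetilde{\by}_j^{(t)})(\bK\widetilde{\by}_j^{(t)})\transpose - \by_{0j}^{(t)}\by_{0j}^{(t)\mathrm{T}}$. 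The residual $n^{-1}\sum_j\psi(\widehat{P}_{ij}^{(t)})(A_{ij}^{(t)}-\widehat{P}_{ij}^{(t)})(\bK\widetilde{\by}_j^{(t)})(\bK\widetilde{\by}_j^{(t)})\transpose$ is treated exactly as in the first quantity (bias by triangle inequality; noise as a deterministic-coefficient matrix Bernstein sum plus a crude error bound), all at rate $\Optilde\{(\log n)^\xi/\sqrt n\}$. Summing the finitely many pieces and invoking closure of $\Optilde$ under finite sums completes the argument.

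The main obstacle I anticipate is the statistical dependence between the plug-in weights $\widehat{P}_{ij}^{(t)}$, $\bK\widetilde{\by}_j^{(t)}$ and the per-edge noise $E_{ij}^{(t)}$: because $\widetilde{\by}_j^{(t)}$ is a function of the entire dataset, including $A_{ij}^{(t)}$, one cannot directly treat $\psi(\widehat{P}_{ij}^{(t)})E_{ij}^{(t)}\bK\widetilde{\by}_j^{(t)}$ as a sum of independent mean-zero terms. The resolution, and the reason the argument stays elementary, is that the target rate $(\log n)^\xi/\sqrt n$ is exactly the order of the weight perturbation, so the dependent error terms require no cancellation and may be bounded crudely by (uniform perturbation)~$\times~n^{-1}\sum_j|E_{ij}^{(t)}|$; genuine independence is exploited only in the deterministic-coefficient main terms. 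The only remaining care is the minor technical point of keeping the random arguments inside a compact set on which $\psi$ and $\eta'$ are controlled, handled by the continuous extension noted above.
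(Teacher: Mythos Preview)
Your proposal is correct and follows essentially the same approach as the paper: decompose $A_{ij}^{(t)}-\widehat P_{ij}^{(t)}$ into a noise piece $E_{ij}^{(t)}$ and a bias piece $P_{0ij}^{(t)}-\widehat P_{ij}^{(t)}$, control the main noise term with deterministic coefficients via a Bernstein-type bound, and absorb all dependence between $\widetilde{\by}_j^{(t)}$ and $E_{ij}^{(t)}$ into a crude remainder bounded by $(\text{uniform perturbation})\times n^{-1}\sum_j|E_{ij}^{(t)}|$; your exponential-family identity $\eta'(\mu)=1/\var(X)$ rewriting $\bG_{0in}^{(t)}=n^{-1}\sum_j\eta'(P_{0ij}^{(t)})\by_{0j}^{(t)}\by_{0j}^{(t)\mathrm T}$ is exactly what the paper uses implicitly. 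One small caveat: you assert that $\psi$ is ``bounded and Lipschitz'' on a compact neighborhood, but the hypothesis only gives continuity, and uniform continuity alone does not yield the rate $\Optilde\{(\log n)^\xi/\sqrt n\}$ for $\psi(\widehat P)-\psi(P_0)$; the paper's own proof silently upgrades to ``continuously differentiable $\psi$'' before invoking the mean-value theorem, and in every application $\psi\in\{\eta',\eta''\}$ is indeed $C^1$, so this is a shared imprecision rather than a gap in your argument.
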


\begin{proof}[\bf Proof]
By Lemma \ref{lemma:Y_expansion}, there exists a permutation matrix $\bK\in\mathbb{O}(d)$, such that 
\begin{align}\label{eqn:Pij_concentration}
\|\bx_{0i}^{(t)\mathrm{T}}\bK\widetilde{\by}_{j}^{(t)} - P_{0ij}^{(t)}| = |\bx_{0i}^{(t)\mathrm{T}}\bK\widetilde{\by}_{j}^{(t)} - \bx_{0i}^{(t)\mathrm{T}}\by_{0ij}^{(t)}|\leq \|\bx_{0i}^{(t)}\|_2\|\bK\widetilde{\by}_{j}^{(t)} - \by_{0j}^{(t)}\|_2 = \Optilde\bigg\{\frac{(\log n)^\xi}{\sqrt{n}}\bigg\}.
\end{align}
By Assumption \ref{assumption:likelihood} and the mean-value theorem, for any continuously differentiable function $\psi(\cdot)$ over $\calI$, we have
\begin{align}
\label{eqn:psi_concentration}
\psi(\bx_{0i}^{(t)\mathrm{T}}\bK\widetilde{\by}_{j}^{(t)}) - \psi(P_{0ij}^{(t)}) = \Optilde\bigg\{\frac{(\log n)^{\xi}}{\sqrt{n}}\bigg\}. 
\end{align}
This implies that
\begin{equation}
\label{eqn:psi_y_concentration}
\begin{aligned}
\left\|\psi(\bx_{0i}^{(t)\mathrm{T}}\bK\widetilde{\by}_{j}^{(t)})\bK\widetilde{\by}_{j}^{(t)} - \psi(P_{0ij})\by_{0j}^{(t)}\right\|_2
&\leq |\psi(\bx_{0i}^{(t)\mathrm{T}}\bK\widetilde{\by}_{j}^{(t)}) - \psi(P_{0ij})|\|\widetilde{\by}_{j}^{(t)}\|_2 + \psi(P_{0ij})\|\bK\widetilde{\by}_{j}^{(t)} - \by_{0j}^{(t)}\|_2\\ 
& = \Optilde\bigg\{\frac{(\log n)^\xi}{\sqrt{n}}\bigg\}. 
\end{aligned}
\end{equation}
By Lemma 3.7 in \cite{10.1214/11-AOP734} applied to $(|E_{ij}^{(t)}| - \expect|E_{ij}^{(t)}|)_{j = 1}^n$, we know that 
\begin{align}
\label{eqn:noise_absolute_bound}
\frac{1}{n}\sum_{j = 1}^n|E_{ij}^{(t)}| = \frac{1}{n}\sum_{j = 1}^n\expect|E_{ij}^{(t)}| + \frac{1}{n}\sum_{j = 1}^n(|E_{ij}^{(t)}| - \expect|E_{ij}^{(t)}|) = \Optilde(1).
\end{align}
It follows that
\begin{align*}
&
\bigg\|\frac{1}{n}\sum_{j = 1}^n\psi(\theta_{0i}^{(t)}\bz_{0i}\transpose\bK\widetilde{\by}_{j}^{(t)})(A_{ij}^{(t)} - \theta_{0i}^{(t)}\bz_{0i}\transpose\bK\widetilde{\by}_{j}^{(t)})\bK\widetilde{\by}_{j}^{(t)}\bigg\|_2\\ 
&\quad\leq \bigg\|\frac{1}{n}\sum_{j = 1}^n\{\psi(\theta_{0i}^{(t)}\bz_{0i}\transpose\bK\widetilde{\by}_{j}^{(t)})\bK\widetilde{\by}_{j}^{(t)} - \psi(P_{0ij})\by_{0j}^{(t)}\}E_{ij}^{(t)}\bigg\|_2 + \bigg\|\frac{1}{n}\sum_{j = 1}^nE_{ij}^{(t)}\psi(P_{0ij}^{(t)})\by_{0j}^{(t)}\bigg\|_2\\ 
&\qquad + \bigg\|\frac{1}{n}\sum_{j = 1}^n\psi(\theta_{0i}^{(t)}\bz_{0i}\transpose\bK\widetilde{\by}_{j}^{(t)})\bz_{0i}\transpose(\by_{0j}^{(t)} - \bK\widetilde{\by}_{j}^{(t)})\bK\widetilde{\by}_{j}^{(t)}\bigg\|_2\\
&\quad\leq \bigg(\frac{1}{n}\sum_{j = 1}^n|E_{ij}^{(t)}|\bigg)\max_{i,j\in[n]}\left\|\psi(\theta_{0i}^{(t)}\bz_{0i}\transpose\bK\widetilde{\by}_{j}^{(t)})\bK\widetilde{\by}_{j}^{(t)} - \psi(P_{0ij})\by_{0j}^{(t)}\right\|_2 + \Optilde\bigg\{\frac{(\log n)^\xi}{\sqrt{n}}\bigg\}\\ 
&\qquad + \max_{i,j\in[n]}\bigg[\psi(P_{0ij}) + \Optilde\bigg\{\frac{(\log n)^\xi}{\sqrt{n}}\bigg\}\bigg]\|\theta_{0i}^{(t)}\bz_{0i}\|_2\|\bK\widetilde{\by}_{j}^{(t)} - \by_{0j}^{(t)}\|_2\|\widetilde{\by}_{j}^{(t)}\|_2 = \Optilde\bigg\{\frac{(\log n)^{\xi}}{\sqrt{n}}\bigg\}. 
\end{align*}
For the second assertion, by \eqref{eqn:psi_concentration}, \eqref{eqn:psi_y_concentration}, \eqref{eqn:noise_absolute_bound}, and Lemma \ref{lemma:Y_expansion}, we have
\begin{align*}
&\bigg\|\frac{1}{n}\sum_{j = 1}^n\{\psi(\theta_{0i}^{(t)}\bz_{0i}\transpose\bK\transpose\widetilde{\by}_{j}^{(t)})(A_{ij}^{(t)} - \theta_{0i}^{(t)}\bz_{0i}\transpose\bK\widetilde{\by}_{j}^{(t)}) - \eta'(\theta_{0i}^{(t)}\bz_{0i}\transpose\bK\widetilde{\by}_{j}^{(t)})\}\bK\widetilde{\by}_{j}^{(t)}\widetilde{\by}_{j}^{(t)\mathrm{T}}\bK\transpose + \bG_{0in}^{(t)}\bigg\|_2\\
&\quad\leq \bigg\|\frac{1}{n}\sum_{j = 1}^n\psi(\theta_{0i}^{(t)}\bz_{0i}\transpose\bK\transpose\widetilde{\by}_{j}^{(t)})(A_{ij}^{(t)} - \theta_{0i}^{(t)}\bz_{0i}\transpose\bK\widetilde{\by}_{j}^{(t)})\widetilde{\by}_{j}^{(t)}\widetilde{\by}_{j}^{(t)\mathrm{T}}\bigg\|_2\\ 
&\qquad + \bigg\|\frac{1}{n}\sum_{j = 1}^n\eta'(\theta_{0i}^{(t)}\bz_{0i}\transpose\bK\widetilde{\by}_{j}^{(t)})\bK\widetilde{\by}_{j}^{(t)}\widetilde{\by}_{j}^{(t)\mathrm{T}}\bK\transpose + \bG_{0in}^{(t)}\bigg\|_2\\
&\quad\leq 
\frac{1}{n}\sum_{j = 1}^n\|\psi(\theta_{0i}^{(t)}\bz_{0i}\transpose\bK\widetilde{\by}_{j}^{(t)})\bK\widetilde{\by}_{j}^{(t)} - \psi(P_{0ij})\by_{0j}^{(t)}\|_2\|\widetilde{\by}_{j}^{(t)\mathrm{T}}\bK\transpose\|_2 | E_{ij}^{(t)}|
\\
&\qquad + \frac{1}{n}\sum_{j = 1}^n|\psi(P_{0ij})| \|\by_{0j}^{(t)}\|_2\|\bK\widetilde{\by}_{j}^{(t)} - \by_{0j}^{(t)}\|_2|E_{ij}^{(t)}|
+ \bigg\|\frac{1}{n}\sum_{j = 1}^nE_{ij}^{(t)}\psi(P_{0ij})\by_{0j}^{(t)}\by_{0j}^{(t)\mathrm{T}}\bigg\|_2
\\
&\qquad + \bigg\|\frac{1}{n}\sum_{j = 1}^n\psi(\theta_{0i}^{(t)}\bz_{0i}\transpose\bK\widetilde{\by}_{j}^{(t)})\theta_{0i}^{(t)}\bz_{0i}\transpose(\by_{0j}^{(t)} - \bK\widetilde{\by}_{j}^{(t)})\widetilde{\by}_{j}^{(t)}\widetilde{\by}_{j}^{(t)\mathrm{T}}\bigg\|_2
\\
&\qquad + \frac{1}{n}\sum_{j = 1}^n|\eta'(\theta_{0i}^{(t)}\bz_{0i}\transpose\bK\widetilde{\by}_{j}^{(t)}) - \eta'(P_{0ij})|\bigg\|\widetilde{\by}_{j}^{(t)}\widetilde{\by}_{j}^{(t)\mathrm{T}}\bigg\|_2 + \frac{1}{n}\sum_{j = 1}^n|\eta'(P_{0ij})|\bigg\|(\bK\widetilde{\by}_{j}^{(t)} - \by_{0j}^{(t)})\widetilde{\by}_{j}^{(t)\mathrm{T}}\bK\transpose\bigg\|_2\\ 
&\qquad + \frac{1}{n}\sum_{j = 1}^n|\eta'(P_{0ij}^{(t)})|\bigg\|
\by_{0j}^{(t)}(\bK\widetilde{\by}_{j}^{(t)} - \by_{0j}^{(t)})\transpose
\bigg\|_2\\ 
&\quad = \Optilde\bigg\{\frac{(\log n)^{\xi}}{\sqrt{n}}\bigg\}. 
\end{align*}
The proof is thus completed. 
\end{proof}

\begin{lemma}[Central Limit Theorem]
\label{lemma:CLT}
Suppose Assumptions \ref{assumption:identifiability}--\ref{assumption:likelihood} hold. Further assume $m\lesssim n^\alpha$ for some $\alpha > 0$. Then, for the permutation matrix $\bK$ in Lemma \ref{lemma:Y_expansion}, for each $t\in[m]$ and $i\in[n]$, we have
\begin{align*}
\bigg\|\frac{1}{n}\sum_{j = 1}^n\eta'(\theta_{0i}^{(t)}\bz_{0i}\transpose\bK\widetilde{\by}_{j}^{(t)})(A_{ij}^{(t)} - \theta_{0i}^{(t)}\bz_{0i}\transpose\bK\widetilde{\by}_{j}^{(t)})\bK\widetilde{\by}_{j}^{(t)} - \frac{1}{n}\sum_{j = 1}^nE_{ij}^{(t)}\eta'(P_{0ij}^{(t)})\by_{0j}^{(t)}\bigg\|_2& = \Optilde\bigg\{\frac{(\log n)^{2\xi}}{n}\bigg\}.
\end{align*}
\end{lemma}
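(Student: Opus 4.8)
The plan is to Taylor-expand each score contribution around the truth, isolate the stated linear-in-noise sum as the leading term, and push everything else into the $\Optilde\{(\log n)^{2\xi}/n\}$ remainder. Writing $\bh_j^{(t)} := \bK\widetilde{\by}_{j}^{(t)} - \by_{0j}^{(t)}$ and $\varepsilon_{ij}^{(t)} := \theta_{0i}^{(t)}\bz_{0i}\transpose\bh_j^{(t)}$, Lemma \ref{lemma:Y_expansion} gives $\bh_j^{(t)} = \sum_{a}E_{ja}^{(t)}\bgamma_{ja}^{(t)} + \tfrac1m\sum_{s,a}E_{ja}^{(s)}\bdelta_{ja}^{(s,t)} + \Optilde\{(\log n)^{2\xi}/n\}$, with the remainder uniform over $j$ after a union bound, so $\|\bh_j^{(t)}\|_2 = \Optilde\{(\log n)^\xi/\sqrt n\}$ and $|\varepsilon_{ij}^{(t)}| = \Optilde\{(\log n)^\xi/\sqrt n\}$ uniformly. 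Since $P_{0ij}^{(t)} = \theta_{0i}^{(t)}\bz_{0i}\transpose\by_{0j}^{(t)}$, one has $\theta_{0i}^{(t)}\bz_{0i}\transpose\bK\widetilde{\by}_{j}^{(t)} = P_{0ij}^{(t)} + \varepsilon_{ij}^{(t)}$ and $A_{ij}^{(t)} - \theta_{0i}^{(t)}\bz_{0i}\transpose\bK\widetilde{\by}_{j}^{(t)} = E_{ij}^{(t)} - \varepsilon_{ij}^{(t)}$. Using the three-times differentiability of $\eta$ (Assumption \ref{assumption:likelihood}(b)) and compactness of $\calI$, I expand $\eta'(P_{0ij}^{(t)} + \varepsilon_{ij}^{(t)}) = \eta'(P_{0ij}^{(t)}) + \eta''(P_{0ij}^{(t)})\varepsilon_{ij}^{(t)} + O((\varepsilon_{ij}^{(t)})^2)$ and multiply out $\eta'(P_{0ij}^{(t)}+\varepsilon_{ij}^{(t)})(E_{ij}^{(t)} - \varepsilon_{ij}^{(t)})(\by_{0j}^{(t)} + \bh_j^{(t)})$. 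The product $\eta'(P_{0ij}^{(t)})E_{ij}^{(t)}\by_{0j}^{(t)}$ reproduces the claimed leading sum, and the task reduces to bounding the $n^{-1}\sum_j$ of each remaining product by $\Optilde\{(\log n)^{2\xi}/n\}$.

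I then sort the remaining products into three groups. (i) Terms quadratic or higher in the perturbations $\varepsilon_{ij}^{(t)}$ and $\bh_j^{(t)}$ — such as $\eta'(P_{0ij}^{(t)})\varepsilon_{ij}^{(t)}\bh_j^{(t)}$, $\eta''(P_{0ij}^{(t)})(\varepsilon_{ij}^{(t)})^2\by_{0j}^{(t)}$, and the cubic Taylor remainder — are handled deterministically: each perturbation factor is $\Optilde\{(\log n)^\xi/\sqrt n\}$ uniformly while $\eta',\eta'',\eta'''$ and $\by_{0j}^{(t)}$ are bounded, so their average over $j$ is $\Optilde\{(\log n)^{2\xi}/n\}$. (ii) The term $-\eta'(P_{0ij}^{(t)})\varepsilon_{ij}^{(t)}\by_{0j}^{(t)}$, which carries no fresh $E_{ij}^{(t)}$ factor: substituting the Lemma \ref{lemma:Y_expansion} expansion of $\bh_j^{(t)}$ makes it an exact linear form $\sum_{s,a}\bc_{ja}^{(s)}E_{ja}^{(s)}$ in the independent mean-zero sub-exponential noise, with deterministic coefficients of size $\|\bc_{ja}^{(s)}\|_2\lesssim n^{-2}$ (respectively $(mn^2)^{-1}$); a sub-exponential Bernstein bound in the form of Lemma 3.7 in \cite{10.1214/11-AOP734} then gives standard deviation $\Optilde(n^{-1})$, hence $\Optilde\{\mathrm{poly}(\log n)/n\}$. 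The $\Optilde\{(\log n)^{2\xi}/n\}$ remainder of $\bh_j^{(t)}$ here enters only through bounded averaging and is again $\Optilde\{(\log n)^{2\xi}/n\}$.

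The genuine obstacle is group (iii): the bilinear-in-noise cross terms $\eta'(P_{0ij}^{(t)})E_{ij}^{(t)}\bh_j^{(t)}$ and $\eta''(P_{0ij}^{(t)})\varepsilon_{ij}^{(t)}E_{ij}^{(t)}\by_{0j}^{(t)}$, in which the fresh edge noise $E_{ij}^{(t)}$ multiplies the noise already present in $\bh_j^{(t)}$ (equivalently $\varepsilon_{ij}^{(t)}$). Substituting the expansion of $\bh_j^{(t)}$ produces double sums $n^{-1}\sum_{j}\sum_{a}(\cdots)E_{ij}^{(t)}E_{ja}^{(t)}$ and mixed-layer analogues weighted by $m^{-1}$. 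I would first peel off the diagonal $a=i$, where $E_{ij}^{(t)}E_{ji}^{(t)} = (E_{ij}^{(t)})^2$ and $\|\bgamma_{ji}^{(t)}\|_2\lesssim 1/n$, so the contribution is at most $n^{-2}\sum_j(E_{ij}^{(t)})^2 = \Optilde(1/n)$ using $\sum_j(E_{ij}^{(t)})^2 = \Optilde(n)$. For the off-diagonal part (the single $j=i$ term being negligible) I condition on the $\sigma$-field generated by all noise entries not incident to vertex $i$: then the inner sums $\bb_j^{(t)} := \sum_{a\neq i}E_{ja}^{(t)}\bgamma_{ja}^{(t)}$ become measurable with $\var(\bb_j^{(t)})\lesssim 1/n$ and $\sum_j\|\bb_j^{(t)}\|_2^2 = \Optilde\{\mathrm{poly}(\log n)\}$, while the outer weights $\{E_{ij}^{(t)}\}_{j\neq i}$ remain independent, mean-zero and independent of the conditioning, so a conditional Bernstein bound gives $\Optilde\{\mathrm{poly}(\log n)/n\}$; the residual $n^{-1}\sum_j\eta'(P_{0ij}^{(t)})E_{ij}^{(t)}\cdot\Optilde\{(\log n)^{2\xi}/n\}$ is controlled by $n^{-1}\sum_j|E_{ij}^{(t)}| = \Optilde(1)$ from \eqref{eqn:noise_absolute_bound}. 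Alternatively, exploiting the explicit factorization $\bgamma_{ja}^{(t)} = (u_{0a1}^{(t)}/\lambda_1^{(t)})\bB_0^{(t)}\bz_{0j}$ from \eqref{eqn:gamma_delta}, each component of the same-layer bilinear form equals $(n\lambda_1^{(t)})^{-1}\be_i\transpose\bE^{(t)}\bD_\ell\bE^{(t)}\bu_{01}^{(t)}$ for a bounded diagonal $\bD_\ell$, which can be bounded directly via the product-of-noise estimates in Result \ref{result:preliminary} together with $\lambda_1^{(t)}\asymp n$ and $\|\bu_{01}^{(t)}\|_{2\to\infty}\lesssim 1/\sqrt n$. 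The main effort lies in checking that the accumulated powers of $\log n$ in these two-stage concentration bounds do not exceed $2\xi$, so that the entire remainder collapses to $\Optilde\{(\log n)^{2\xi}/n\}$; once group (iii) is settled, collecting the three groups completes the proof.
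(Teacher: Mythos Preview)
Your proposal is correct and follows essentially the same decomposition as the paper: expand around the truth, identify the leading linear-in-noise sum, and sort the remainder into deterministic quadratic pieces, a linear-in-perturbation piece handled by Bernstein after substituting Lemma~\ref{lemma:Y_expansion}, and a bilinear-in-noise cross term. The only substantive difference is in how you treat the same-layer bilinear sum $\tfrac{1}{n}\sum_{j,a}E_{ij}^{(t)}E_{ja}^{(t)}(\cdots)$. The paper dispatches this in one line by invoking an external bilinear concentration bound (Result 3 in \cite{XieWu2024}), which directly yields $\Optilde\{(\log n)^{2\xi}\max_{j,a}\|\cdot\|_2\}$. Your route---peel off $a=i$, then condition on the $\sigma$-field of edges not incident to $i$ so that $\bb_j^{(t)}=\sum_{a\neq i}E_{ja}^{(t)}\bgamma_{ja}^{(t)}$ is measurable and apply Bernstein in the remaining $E_{ij}^{(t)}$---is a valid self-contained substitute that produces the same $(\log n)^{2\xi}/n$ rate, at the cost of a little more bookkeeping. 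Your ``alternative'' spectral route via the factorization $\bgamma_{ja}^{(t)}=(u_{0a1}^{(t)}/\lambda_1^{(t)})\bB_0^{(t)}\bz_{0j}$ cleanly handles only the $\bgamma$ piece; the $\bdelta_{ja}^{(t,t)}$ contribution from \eqref{eqn:gamma_delta} does not factor that way, so you would still need the conditioning argument (or the paper's external lemma) for it.
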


\begin{proof}[\bf Proof]
By definition, we have
\begin{align*}
&\eta'(\theta_{0i}^{(t)}\bz_{0i}\transpose\bK\widetilde{\by}_{j}^{(t)})(A_{ij}^{(t)} - \theta_{0i}^{(t)}\bz_{0i}\transpose\bK\widetilde{\by}_{j}^{(t)})\bK\widetilde{\by}_{j}^{(t)} \\
&\quad = \eta'(\theta_{0i}^{(t)}\bz_{0i}\transpose\bK\widetilde{\by}_{j}^{(t)})(E_{ij}^{(t)} + \theta_{0i}^{(t)}\bz_{0i}\transpose\by_{0j}^{(t)} - \theta_{0i}^{(t)}\bz_{0i}\transpose\bK\widetilde{\by}_{j}^{(t)})\bK\widetilde{\by}_{j}^{(t)}\\ 
&\quad = E_{ij}^{(t)}\eta'(P_{0ij})\by_{0j}^{(t)} + E_{ij}^{(t)}\{\eta'(\theta_{0i}^{(t)}\bz_{0i}\transpose\bK\widetilde{\by}_{j}^{(t)})\bK\widetilde{\by}_{j}^{(t)} - \eta'(P_{0ij}^{(t)})\by_{0j}^{(t)}\}\\ 
&\qquad + \theta_{0i}^{(t)}\bz_{0i}\transpose(\by_{0j}^{(t)} - \bK\widetilde{\by}_{j}^{(t)})\eta'(P_{0ij}^{(t)})\by_{0j}^{(t)}\\ 
&\qquad + \theta_{0i}^{(t)}\bz_{0i}\transpose(\by_{0j}^{(t)} - \bK\widetilde{\by}_{j}^{(t)})
\{\eta'(\theta_{0i}^{(t)}\bz_{0i}\transpose\bK\widetilde{\by}_{j}^{(t)})\bK\widetilde{\by}_{j}^{(t)} - \eta'(P_{0ij}^{(t)})\by_{0j}^{(t)}\}.
\end{align*}
Namely,
\begin{align}
&\bigg\|\frac{1}{n}\sum_{j = 1}^n\eta'(\theta_{0i}^{(t)}\bz_{0i}\transpose\bK\widetilde{\by}_{j}^{(t)})(A_{ij}^{(t)} - \theta_{0i}^{(t)}\bz_{0i}\transpose\bK\widetilde{\by}_{j}^{(t)})\bK\widetilde{\by}_{j}^{(t)} - \frac{1}{n}\sum_{j = 1}^nE_{ij}^{(t)}\eta'(P_{0ij}^{(t)})\by_{0j}^{(t)}\bigg\|_2\nonumber\\ 
&\label{eqn:CLT_term1}
\quad 
\leq \bigg\|\frac{1}{n}\sum_{j = 1}^nE_{ij}^{(t)}\{\eta'(\theta_{0i}^{(t)}\bz_{0i}\transpose\bK\widetilde{\by}_{j}^{(t)})\bK\widetilde{\by}_{j}^{(t)} - \eta'(P_{0ij}^{(t)})\by_{0j}^{(t)}\}\bigg\|_2\\ 
&\label{eqn:CLT_term2}
\qquad + \bigg\|\frac{1}{n}\sum_{j = 1}^n\eta'(P_{0ij}^{(t)})\by_{0j}^{(t)}\theta_{0i}^{(t)}\bz_{0i}\transpose(\bK\widetilde{\by}_{j}^{(t)} - \by_{0j}^{(t)})\bigg\|_2\\ 
&\label{eqn:CLT_term3}
\qquad + \bigg\|\frac{1}{n}\sum_{j = 1}^n\theta_{0i}^{(t)}\bz_{0i}\transpose(\by_{0j}^{(t)} - \bK\widetilde{\by}_{j}^{(t)})
\{\eta'(\theta_{0i}^{(t)}\bz_{0i}\transpose\bK\widetilde{\by}_{j}^{(t)})\bK\widetilde{\by}_{j}^{(t)} - \eta'(P_{0ij}^{(t)})\by_{0j}^{(t)}\}\bigg\|_2.
\end{align}
For the term on line \eqref{eqn:CLT_term3}, by Lemma \ref{lemma:Y_expansion} and \eqref{eqn:psi_y_concentration}, we have
\begin{align*}
&\bigg\|\frac{1}{n}\sum_{j = 1}^n\theta_{0i}^{(t)}\bz_{0i}\transpose(\by_{0j} - \bK\widetilde{\by}_{j}^{(t)})
\{\eta'(\theta_{0i}^{(t)}\bz_{0i}\transpose\bK\widetilde{\by}_{j}^{(t)})\bK\widetilde{\by}_{j}^{(t)} - \eta'(P_{0ij}^{(t)})\by_{0j}^{(t)}\}\bigg\|_2\\ 
&\quad\leq \max_{i,j\in[n]}\|\theta_{0i}^{(t)}\bz_{0i}\|_2\|\bK\widetilde{\by}_{j}^{(t)} - \by_{0j}^{(t)}\|_2\|\eta'(\theta_{0i}^{(t)}\bz_{0i}\transpose\bK\widetilde{\by}_{j}^{(t)})\bK\widetilde{\by}_{j}^{(t)} - \eta'(P_{0ij}^{(t)})\by_{0j}^{(t)}\|_2 = \Optilde\bigg\{\frac{(\log n)^{2\xi}}{n}\bigg\}. 
\end{align*}
For the term on line \eqref{eqn:CLT_term2}, by Lemma \ref{lemma:Y_expansion} and Lemma 3.7 in \cite{10.1214/11-AOP734}, we have
\begin{align*}
&\bigg\|\frac{1}{n}\sum_{j = 1}^n\eta'(P_{0ij}^{(t)})\by_{0j}^{(t)}\theta_{0i}^{(t)}\bz_{0i}\transpose(\bK\widetilde{\by}_{j}^{(t)} - \by_{0j}^{(t)})\bigg\|_2\\ 
&\quad\leq \bigg\|\frac{1}{n}\sum_{j = 1}^n\eta'(P_{0ij}^{(t)})\by_{0j}^{(t)}\theta_{0i}^{(t)}\bz_{0i}\transpose\sum_{a = 1}^nE_{ja}^{(t)}\bgamma_{ja}^{(t)}\bigg\|_2 
 + \bigg\|\frac{1}{mn}\sum_{j = 1}^n\eta'(P_{0ij}^{(t)})\by_{0j}^{(t)}\theta_{0i}^{(t)}\bz_{0i}\transpose\sum_{s = 1}^m\sum_{a = 1}^nE_{ja}^{(s)}\bdelta_{ja}^{(s, t)}\bigg\|_2\\ 
&\qquad + \Optilde\bigg\{\frac{(\log n)^{2\xi}}{n}\bigg\}\\ 
&\quad = \Optilde\bigg\{n(\log n)^\xi\max_{j,a\in[n]}\frac{1}{n}\|\eta'(P_{0ij}^{(t)})\by_{0j}^{(t)}\theta_{0i}^{(t)}\bz_{0i}\transpose\bgamma_{ja}^{(t)}\|_2\bigg\}\\ 
&\qquad + \Optilde\bigg\{(mn^2)^{1/2}(\log n)^\xi\max_{j,a\in[n],s\in[m]}\frac{1}{mn}\|\eta'(P_{0ij}^{(t)})\by_{0j}^{(t)}\theta_{0i}^{(t)}\bz_{0i}\transpose\bdelta_{ja}^{(s, t)}\|_2\bigg\} + \Optilde\bigg\{\frac{(\log n)^{2\xi}}{n}\bigg\}\\ 
&\quad = \Optilde\bigg\{\frac{(\log n)^\xi}{n}\bigg\}. 
\end{align*}
We now focus on the term on line \eqref{eqn:CLT_term1}. 
By \eqref{eqn:Pij_concentration} and Taylor's theorem, there exists some $\bar{P}_{0ij}^{(t)}$ between $\theta_{0i}^{(t)}\bz_{0i}\transpose\bK\widetilde{\by}_{j}^{(t)}$ and $P_{0ij}^{(t)}$, such that $|\eta'''(\bar{P}_{0ij})|\lesssim 1$ and
\begin{align*}
&\eta'(\theta_{0i}^{(t)}\bz_{0i}\transpose\bK\widetilde{\by}_{j}^{(t)})\bK\widetilde{\by}_{j}^{(t)} - \eta'(P_{0ij}^{(t)})\by_{0j}^{(t)}\\
&\quad = \{\eta'(\theta_{0i}^{(t)}\bz_{0i}\transpose\bK\widetilde{\by}_{j}^{(t)}) - \eta'(P_{0ij}^{(t)})\}\by_{0j}^{(t)}
 + \{\eta'(\theta_{0i}^{(t)}\bz_{0i}\transpose\bK\widetilde{\by}_{j}^{(t)}) - \eta'(P_{0ij}^{(t)})\}(\bK\widetilde{\by}_{j}^{(t)} - \by_{0j}^{(t)})
 + \eta'(P_{0ij}^{(t)})(\bK\widetilde{\by}_{j}^{(t)} - \by_{0j}^{(t)})\\ 
&\quad = \eta''(P_{0ij}^{(t)})\theta_{0i}^{(t)}\bz_{0i}\transpose(\bK\widetilde{\by}_{j}^{(t)} - \by_{0j}^{(t)})\by_{0j}^{(t)} + \eta'''(\bar{P}_{0ij}^{(t)})(\bK\widetilde{\by}_{j}^{(t)} - \by_{0j}^{(t)})\transpose\theta_{0i}^{2(t)}\bz_{0i}\bz_{0i}\transpose(\bK\widetilde{\by}_{j}^{(t)} - \by_{0j}^{(t)})\by_{0j}^{(t)}
\\ &\qquad
 + \Optilde\bigg\{\frac{(\log n)^{2\xi}}{n}\bigg\} + \eta'(P_{0ij}^{(t)})(\bK\widetilde{\by}_{j}^{(t)} - \by_{0j}^{(t)})\\ 
&\quad = \{\eta''(P_{0ij})\by_{0j}^{(t)}\theta_{0i}^{(t)}\bz_{0i}\transpose + \eta'(P_{0ij}^{(t)})\eye_d\}(\bK\widetilde{\by}_{j}^{(t)} - \by_{0j}^{(t)}) + \Optilde\bigg\{\frac{(\log n)^{2\xi}}{n}\bigg\}\\ 
&\quad = \{\eta''(P_{0ij})\by_{0j}^{(t)}\theta_{0i}^{(t)}\bz_{0i}\transpose + \eta'(P_{0ij}^{(t)})\eye_d\}\bigg(\sum_{a = 1}^nE_{ja}^{(t)}\bgamma_{ja}^{(t)} + \frac{1}{m}\sum_{s = 1}^m\sum_{a = 1}^nE_{ja}^{(s)}\bdelta_{ja}^{(s, t)}\bigg) + \Optilde\bigg\{\frac{(\log n)^{2\xi}}{n}\bigg\}. 
\end{align*}
Then, by Lemma 3.7 in \cite{10.1214/11-AOP734} and Result 3 in \cite{XieWu2024}, the term on line \eqref{eqn:CLT_term1} satisfies
\begin{align*}
&\bigg\|\frac{1}{n}\sum_{j = 1}^nE_{ij}^{(t)}\{\eta'(\theta_{0i}^{(t)}\bz_{0i}\transpose\bK\widetilde{\by}_{j}^{(t)})\bK\widetilde{\by}_{j}^{(t)} - \eta'(P_{0ij}^{(t)})\by_{0j}^{(t)}\}\bigg\|_2\\ 
&\quad\leq \bigg\|\frac{1}{n}\sum_{j = 1}^n\sum_{a = 1}^nE_{ij}^{(t)}E_{ja}^{(t)}\{\eta''(P_{0ij})\by_{0j}^{(t)}\theta_{0i}^{(t)}\bz_{0i}\transpose + \eta'(P_{0ij}^{(t)})\eye_d\}\bigg(\bgamma_{ja}^{(t)} + \frac{1}{m}\bdelta_{ja}^{(t, t)}\bigg)\bigg\|_2\\ 
&\qquad + \bigg\|\frac{1}{n}\sum_{j = 1}^nE_{ij}^{(t)}\sum_{s\in[m]\backslash\{t\}}\sum_{a = 1}^n\{\eta''(P_{0ij})\by_{0j}^{(t)}\theta_{0i}^{(t)}\bz_{0i}\transpose + \eta'(P_{0ij}^{(t)})\eye_d\}\bdelta_{ja}^{(s, t)}E_{ja}^{(s)}\bigg\|_2\\ 
&\quad = \Optilde\bigg\{(\log n)^{2\xi}\max_{j,a\in[n]}\bigg\|\{\eta''(P_{0ij})\by_{0j}^{(t)}\theta_{0i}^{(t)}\bz_{0i}\transpose + \eta'(P_{0ij}^{(t)})\eye_d\}\bigg(\bgamma_{ja}^{(t)} + \frac{1}{m}\bdelta_{ja}^{(t, t)}\bigg)\bigg\|_2\bigg\}\\ 
&\qquad + \Optilde\bigg\{\frac{(\log n)^\xi}{\sqrt{n}}\max_{j\in[n]}\bigg\|\frac{1}{m}\sum_{s\in[m]\backslash\{t\}}\sum_{a = 1}^n\{\eta''(P_{0ij})\by_{0j}^{(t)}\theta_{0i}^{(t)}\bz_{0i}\transpose + \eta'(P_{0ij}^{(t)})\eye_d\}\bdelta_{ja}^{(s, t)}E_{ja}^{(s)}\bigg\|_2\bigg\}\\ 
&\quad = \Optilde\bigg\{\frac{(\log n)^{2\xi}}{n}\bigg\} + \Optilde\bigg\{\frac{(\log n)^\xi}{\sqrt{n}}\times\frac{(\log n)^\xi}{\sqrt{mn}}\bigg\} = \Optilde\bigg\{\frac{(\log n)^{2\xi}}{n}\bigg\}.
\end{align*}
The proof is thus completed.
\end{proof}
We are now in a position to prove Theorem \ref{thm:MLE}. Recall that in Assumption \ref{assumption:identifiability}, the requirement that $\min\calN_1 < \ldots < \min\calN_d$ in item (b) is a labeling convention. In practice, this convention allows us to implement the following label alignment algorithm.
\begin{breakablealgorithm}
\caption{Community Label Alignment}
\label{alg:label_alignment}
\begin{algorithmic}[1]
\State \textbf{Input:} Preliminary estimators $\widetilde{\bZ} = [\widetilde{\bz}_1,\ldots,\widetilde{\bz}_n]\transpose$, $(\widetilde{\bB}^{(t)})_{t = 1}^m$.

\State Set tuning parameter $\tau_n$ (default $\tau_n = n^{-1/3}$).
\For{$k = 1,2,\ldots,d$}
    \State Compute $\widetilde{\calN}_k = \{i\in[n]:\|\widetilde{\bz}_i - \be_k\|_2 < \tau_n\}$. 
    \State Compute $i_k = \min\calN_k$
\EndFor
\State Sort $\{i_1,\ldots,i_d\}$ such that $i_{k_1} < \ldots < i_{k_d}$ and $\{i_1,\ldots,i_d\} = \{i_{k_1},\ldots,i_{k_d}\}$.
\State Compute $\widetilde{\bZ} \longleftarrow \widetilde{\bZ}[\be_{i_{k_1}},\ldots,\be_{i_{k_d}}]$.
\For{$t = 1,2,\ldots,m$}
    \State Compute $\widetilde{\bB}^{(t)} \longleftarrow [\be_{i_{k_1}},\ldots,\be_{i_{k_d}}]\transpose\widetilde{\bB}^{(t)}[\be_{i_{k_1}},\ldots,\be_{i_{k_d}}]$. 
\EndFor
\State \textbf{Output: } Permutation aligned preliminary estimators $\widetilde{\bZ}$, $(\widetilde{\bB}^{(t)})_{t = 1}^m$
\end{algorithmic}
\end{breakablealgorithm}
Note that when the preliminary estimators are permutation aligned according to Algorithm \ref{alg:label_alignment}, one can directly take $\bK$ in Lemma \ref{lemma:Z_expansion}, Lemma \ref{lemma:Y_expansion}, Lemma \ref{lemma:LLN}, and Lemma \ref{lemma:CLT} as the identity matrix. For the sake of technical preparation for proofs of Theorems \ref{thm:BvM} and \ref{thm:variational_BvM}, we state and proof the following lemma that automatically proves Theorem \ref{thm:MLE}. 
\begin{lemma}\label{lemma:aggregated_MLE_theory}
Under the conditions of Theorem \ref{thm:MLE}, we have:
\begin{enumerate}[(i)]
    \item $\widehat{\bz}_i$ is in the interior of $\calS^{d - 1}$ w.h.p. and satisfies the likelihood equation 
    \[
        \frac{\partial\widetilde{\ell}_{in}}{\partial\bz_i^*}(\bz_i^*, \btheta_i)\mathrel{\Bigg|_{\bz_i^* = \widehat{\bz}_i^*, \btheta_i = \widehat{\btheta}_i}} = \zero_{d - 1},\quad
        \frac{\partial\widetilde{\ell}_{in}}{\partial\btheta_i}(\bz_i^*, \btheta_i)\mathrel{\Bigg|_{\bz_i^* = \widehat{\bz}_i^*, \btheta_i = \widehat{\btheta}_i}} = \zero_m.
    \]

    \item $\widehat{\bz}_i^*$ and $\widehat{\btheta}_i$ satisfies the following joint expansion:
    \begin{align*}
    \begin{bmatrix}
    \widehat{\bz}_i^* - \bz_{0i}^*\\
    \widehat{\btheta}_i - \btheta_{0i}
    \end{bmatrix}
    & = \frac{1}{mn}\bGamma_{in}^{-1}
    \begin{bmatrix}
    \sum_{t = 1}^m\sum_{j = 1}^nE_{ij}^{(t)}\theta_{0i}^{2(t)}\eta'(P_{0ij}^{(t)})\bJ\transpose\by_{0j}^{(t)}\\
    \sum_{j = 1}^nE_{ij}^{(1)}\eta'(P_{0ij}^{(1)})\by_{0j}^{(1)}\\
    \vdots \\
    \sum_{j = 1}^nE_{ij}^{(m)}\eta'(P_{0ij}^{(m)})\by_{0j}^{(m)}
    \end{bmatrix}
     + \begin{bmatrix}\Optilde\left\{\frac{(\log n)^{3\xi}}{n}\right\}\\ 
    \Optilde\left\{\frac{(\log n)^{3\xi}}{n}\right\}\\
    \vdots \\
    \Optilde\left\{\frac{(\log n)^{3\xi}}{n}\right\}
    \end{bmatrix}.
    \end{align*}
    In particular, we have 
    \[
    \|\widehat{\btheta}_i - \btheta_{0i}\|_\infty + \|\widehat{\bz}_i^* - \bz_{0i}^*\|_2 = \Optilde\bigg\{\frac{(\log n)^{\xi}}{\sqrt{mn}} + \frac{(\log n)^{3\xi}}{n}\bigg\}.
    \] 
    and
    \begin{align*}
    \widehat{\bz}_i^* - \bz_{0i}^*
    & = \bigg(\frac{1}{m}\sum_{t = 1}^m\theta_{0i}^{2(t)}\bJ\transpose\bDelta_{in}^{(t)}\bJ\bigg)^{-1}\frac{1}{mn}\sum_{t = 1}^m\sum_{j = 1}^nE_{ij}^{(t)}\theta_{0i}^{(t)}\eta'(P_{0ij}^{(t)})\bJ\transpose\bigg(\eye_d - \frac{\bG_{0in}^{(t)}\bz_{0i}\bz_{0i}\transpose}{\bz_{0i}\transpose\bG_{0in}^{(t)}\bz_{0i}}\bigg)\by_{0j}\\
    &\quad + \Optilde\bigg\{\frac{(\log n)^{3\xi}}{n}\bigg\}.
    \end{align*}
\end{enumerate}
\end{lemma}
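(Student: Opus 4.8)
The plan is to treat the score equations as a perturbed linear system, invert it via a quantitative fixed-point argument, and then read off the reduced expansion for $\widehat{\bz}_i^*$ by a Schur-complement computation. First I would record the exponential-family identity $\partial_\mu\log f(x;\mu)=\eta'(\mu)(x-\mu)$, which follows from $B'(\mu)=\mu\eta'(\mu)$. Writing $\mu_{ij}^{(t)}(\bz_i^*,\btheta_i)=\theta_i^{(t)}\bkappa(\bz_i^*)\transpose\widetilde{\by}_{j}^{(t)}$ and using $\partial\bkappa/\partial\bz_i^*=\bJ$, the score components are
\begin{align*}
\frac{\partial\widetilde{\ell}_{in}}{\partial\bz_i^*}&=\sum_{t=1}^m\sum_{j=1}^n\eta'(\mu_{ij}^{(t)})(A_{ij}^{(t)}-\mu_{ij}^{(t)})\theta_i^{(t)}\bJ\transpose\widetilde{\by}_{j}^{(t)},\\
\frac{\partial\widetilde{\ell}_{in}}{\partial\theta_i^{(t)}}&=\sum_{j=1}^n\eta'(\mu_{ij}^{(t)})(A_{ij}^{(t)}-\mu_{ij}^{(t)})\bkappa(\bz_i^*)\transpose\widetilde{\by}_{j}^{(t)}.
\end{align*}
Evaluating these at the truth $(\bz_{0i}^*,\btheta_{0i})$ (with the aligning permutation $\bK$ taken to be the identity after Algorithm \ref{alg:label_alignment}) and invoking Lemma \ref{lemma:CLT} layer by layer replaces each normalized block $\frac{1}{n}\sum_j\eta'(\theta_{0i}^{(t)}\bz_{0i}\transpose\widetilde{\by}_{j}^{(t)})(A_{ij}^{(t)}-\theta_{0i}^{(t)}\bz_{0i}\transpose\widetilde{\by}_{j}^{(t)})\widetilde{\by}_{j}^{(t)}$ by $\frac{1}{n}\sum_jE_{ij}^{(t)}\eta'(P_{0ij}^{(t)})\by_{0j}^{(t)}$ up to an $\Optilde\{(\log n)^{2\xi}/n\}$ error, producing exactly the stochastic vector on the right-hand side of part (ii).

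Next I would control the Hessian uniformly on a neighborhood of the truth. Differentiating the score once more yields blocks of the two types handled by Lemma \ref{lemma:LLN}: terms carrying a factor $(A_{ij}^{(t)}-\mu_{ij}^{(t)})$, which are mean-zero noise and vanish at rate $\Optilde\{(\log n)^\xi/\sqrt{n}\}$, and terms carrying $\eta'(\mu_{ij}^{(t)})$, which concentrate on the population quantities $\bG_{0in}^{(t)}$, $\bG_{0in}^{(t)}\bz_{0i}\theta_{0i}^{(t)}$, and $\bz_{0i}\transpose\bG_{0in}^{(t)}\bz_{0i}$. Assembling the blocks and inserting the Jacobian factors $\bJ$ from the reparameterization shows that $-\frac{1}{mn}\nabla^2\widetilde{\ell}_{in}$ equals $\bGamma_{in}$ up to an $\Optilde\{(\log n)^\xi/\sqrt{n}\}$ perturbation, uniformly over a ball of radius $O\{(\log n)^{2\xi}/\sqrt{n}\}$ about the truth; the uniformity uses that $\eta$ is three-times continuously differentiable (Assumption \ref{assumption:likelihood}(b)), giving a Lipschitz bound on the third derivative. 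With the score at truth of size $\Optilde\{(\log n)^\xi/\sqrt{mn}\}$ and the Hessian uniformly invertible (its limit $\bGamma_{in}$ is bounded below by Assumption \ref{assumption:likelihood}(c)), a Newton--Kantorovich or Brouwer fixed-point argument applied to the map $\bdelta\mapsto\bdelta+\bGamma_{in}^{-1}\frac{1}{mn}\nabla\widetilde{\ell}_{in}(\text{truth}+\bdelta)$ produces a zero of the score inside that ball w.h.p. Since $\min_k z_{0ik}\geq c$ and $\theta_{0i}^{(t)}$ lies in the interior of $\calI$, the ball sits inside $\calS^{d-1}\times\calI^m$, so the solution $(\widehat{\bz}_i^*,\widehat{\btheta}_i)$ is interior and satisfies the likelihood equations, establishing (i).

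The joint expansion in (ii) then follows from a first-order Taylor expansion of the score about the truth: inverting $-\nabla^2\widetilde{\ell}_{in}\approx mn\,\bGamma_{in}$ and substituting the score expansion of Step 1 gives $(\widehat{\bz}_i^*-\bz_{0i}^*,\,\widehat{\btheta}_i-\btheta_{0i})=\frac{1}{mn}\bGamma_{in}^{-1}(\text{leading vector})+\text{remainder}$. The remainder aggregates three sources: the $\Optilde\{(\log n)^{2\xi}/n\}$ error from Lemma \ref{lemma:CLT}, the product of the $\Optilde\{(\log n)^\xi/\sqrt{n}\}$ Hessian error with the $\Optilde\{(\log n)^\xi/\sqrt{mn}\}$ leading term, and the quadratic Taylor remainder of order $\|\widehat{\bz}_i^*-\bz_{0i}^*\|_2^2$; careful bookkeeping collapses these to the stated $\Optilde\{(\log n)^{3\xi}/n\}$, and the stated norm bound on $\|\widehat{\btheta}_i-\btheta_{0i}\|_\infty+\|\widehat{\bz}_i^*-\bz_{0i}^*\|_2$ is then immediate. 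The reduced form for $\widehat{\bz}_i^*-\bz_{0i}^*$ comes from block-inverting $\bGamma_{in}$: eliminating the diagonal $\btheta$-block by its Schur complement reduces the effective information for $\bz_i^*$ to $\frac{1}{m}\sum_t\theta_{0i}^{2(t)}\bJ\transpose\bDelta_{in}^{(t)}\bJ$ and projects each summand $\by_{0j}^{(t)}$ onto its component orthogonal, in the $\bG_{0in}^{(t)}$ metric, to $\bz_{0i}$, giving the factor $(\eye_d-\bG_{0in}^{(t)}\bz_{0i}\bz_{0i}\transpose/(\bz_{0i}\transpose\bG_{0in}^{(t)}\bz_{0i}))$; this yields the final display.

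Finally, for the asymptotic-normality conclusion of Theorem \ref{thm:MLE} I would verify that the covariance of the leading sum $\frac{1}{mn}\sum_{t,j}E_{ij}^{(t)}\theta_{0i}^{(t)}\eta'(P_{0ij}^{(t)})\bJ\transpose(\eye_d-\bG_{0in}^{(t)}\bz_{0i}\bz_{0i}\transpose/(\bz_{0i}\transpose\bG_{0in}^{(t)}\bz_{0i}))\by_{0j}^{(t)}$ equals $\frac{1}{mn}(\frac{1}{m}\sum_t\theta_{0i}^{2(t)}\bJ\transpose\bDelta_{in}^{(t)}\bJ)^{-1}$, using $\var(E_{ij}^{(t)})=1/\eta'(P_{0ij}^{(t)})$ and $\frac{1}{n}\sum_j\eta'(P_{0ij}^{(t)})\by_{0j}^{(t)}\by_{0j}^{(t)\mathrm{T}}=\bG_{0in}^{(t)}$, so that premultiplying by the matrix square root produces identity covariance; a Lyapunov central limit theorem for the independent mean-zero summands (sub-exponential by Assumption \ref{assumption:likelihood}(d)) then delivers the normal limit. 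The remainder $\Optilde\{(\log n)^{3\xi}/n\}$ is of smaller order than the $1/\sqrt{mn}$ fluctuation scale precisely when $m\ll n/(\log n)^{6\xi}$, the assumed scaling, so it does not affect the limit. The main obstacle will be the existence step together with the remainder bookkeeping: controlling the Hessian uniformly over the neighborhood, closing the fixed-point argument, and tracking how the $(\log n)^\xi$ and $(\log n)^{2\xi}$ factors combine into exactly $(\log n)^{3\xi}$ while retaining the $1/n$ rate.
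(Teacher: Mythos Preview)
Your proposal is correct and follows essentially the same route as the paper: compute the score and Hessian in the exponential-family form, use Lemmas \ref{lemma:CLT} and \ref{lemma:LLN} to replace them by their population targets, establish existence of an interior critical point, Taylor-expand the score equation, and read off the reduced $\bz_i^*$-expansion via the Schur complement of $\bGamma_{in}$. The only packaging difference is the existence step: the paper shows directly that $\widetilde{\ell}_{in}$ is strictly smaller on the boundary of the cube $\calQ_n=\{\|\bz_i^*-\bz_{0i}^*\|_2=\eps_n,\ |\theta_i^{(t)}-\theta_{0i}^{(t)}|=\eps_n\}$ with $\eps_n=(\log n)^{2\xi}/\sqrt{n}$ than at the center, whereas you cast the same Hessian/score estimates as a Newton--Kantorovich/Brouwer argument; both are standard and use identical inputs.

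One minor bookkeeping point: in your ``second source'' for the remainder, the Hessian perturbation $\Optilde\{(\log n)^\xi/\sqrt{n}\}$ multiplies the \emph{crude} bound $\|\widehat{\bz}_i^*-\bz_{0i}^*\|_2+\|\widehat{\btheta}_i-\btheta_{0i}\|_\infty\leq\eps_n=(\log n)^{2\xi}/\sqrt{n}$ coming out of the existence step (this is all that is known at that stage), not the leading-term size $(\log n)^\xi/\sqrt{mn}$. That product is the origin of the $(\log n)^{3\xi}/n$ rate in the statement.
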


\begin{proof}[\bf Proof]
Since the eigenvalues of $\bJ\transpose\bDelta_{in}^{(t)}\bJ:=\bJ\transpose(\bG_{0in}^{(t)} - \bG_{0in}^{(t)}\bz_{0i}\bz_{0i}\transpose\bG_{0in}^{(t)}/\bz_{0i}\transpose\bG_{0in}^{(t)}\bz_{0i})\bJ$ are bounded away from $0$ and $\infty$, by the block matrix inverse formula, the matrix
\begin{align*}
\bGamma_{in}^{-1}
& =\begin{bmatrix}
\frac{1}{m}\sum_{t = 1}^m\theta_{0i}^{2(t)}\bJ\transpose \bG_{0in}^{(t)}\bJ & \frac{1}{m}\bJ\transpose\bG_{0in}^{(1)}\bz_{0i}\theta_{0i}^{(1)} & \ldots & \frac{1}{m}\bJ\transpose\bG_{0in}^{(m)}\bz_{0i}\theta_{0i}^{(m)}\\ 
\frac{1}{m}\theta_{0i}^{(1)}\bz_{0i}\transpose\bG_{0in}^{(1)}\bJ & \frac{1}{m}\bz_{0i}\transpose\bG_{0in}^{(1)}\bz_{0i} & \ldots & 0\\
\vdots & \vdots & \ddots & \vdots \\
\frac{1}{m}\theta_{0i}^{(m)}\bz_{0i}\transpose\bG_{0in}^{(m)}\bJ & 0 & \ldots & \frac{1}{m}\bz_{0i}\transpose\bG_{0in}^{(m)}\bz_{0i}
\end{bmatrix}^{-1}\\ 
& = \begin{bmatrix}
(\frac{1}{m}\sum_{t = 1}^m\theta_{0i}^{2(t)}\bJ\transpose\bDelta_{in}^{(t)}\bJ)^{-1} & 
-(\frac{1}{m}\sum_{t = 1}^m\theta_{0i}^{2(t)}\bJ\transpose\bDelta_{in}^{(t)}\bJ)^{-1}\bJ\transpose\bM_{in}\transpose \\
-\bM_{in}\bJ(\frac{1}{m}\sum_{t = 1}^m\theta_{0i}^{2(t)}\bJ\transpose\bDelta_{in}^{(t)}\bJ)^{-1}
&
\bL_{in}^{-1} + \bM_{in}\bJ(\frac{1}{m}\sum_{t = 1}^m\theta_{0i}^{2(t)}\bJ\transpose\bDelta_{in}^{(t)}\bJ)^{-1}\bJ\transpose
\bM_{in}\transpose
\end{bmatrix}
\end{align*}
also has eigenvalues bounded away from $0$ and $\infty$, where
\[
  \bM_{in} = \begin{bmatrix} \frac{\bG_{0in}^{(1)}\bz_{0i}\theta_{0i}^{(1)}}{\bz_{0i}\transpose\bG_{0in}^{(1)}\bz_{0i}} & \ldots & \frac{\bG_{0in}^{(m)}\bz_{0i}\theta_{0i}^{(m)}}{\bz_{0i}\transpose\bG_{0in}^{(m)}\bz_{0i}}\end{bmatrix}\transpose,\quad\bL_{in} = \textsf{diag}\bigg(\frac{1}{m}\bz_{0i}\transpose\bG_{0in}^{(t)}\bz_{0i}:t\in[m]\bigg).
\]
Denote by $\bdelta_{\bz_i} = \bz_i^* - \bz_{0i}^*$, $\bdelta_{\btheta_i} = \btheta_i - \btheta_{0i}$, and let $\delta_{z_{ik}}$ be the $k$th element of $\bdelta_{\bz_i}$, $\delta_{\theta_i^{(t)}}$ be the $t$th element of $\bdelta_{\btheta_i}$. Let $(\eps_n)_{n = 1}^\infty$ be a sequence to be determined later that converges to zero as $n\to\infty$ and let 
\[
\calQ_n = \{(\bz^*, \btheta_i)\in\calS^{d - 1}\times\calI^m:\|\bz^* - \bz_{0i}^*\|_2 = \eps_n, |\theta_i^{(t)} - \theta_{0i}^{(t)}| = \eps_n\text{ for all }t\in[m]\}.
\] 
Since $\eta$ is three times continuously differentiable over $\theta\in\calI$, for any $k\in[d]$ and any $(\bar{\bz}_i, \bar{\btheta}_i)$ between $(\bz_{0i}, \btheta_{0i})$ and $(\bkappa(\bz_i^*), \btheta_i)$ where $\bz_i^*\in\calQ_n$, there exists some $\omega\in[0, 1]$, such that (with $\bar{\btheta}_i = [\bar{\theta}_i^{(1)},\ldots,\bar{\theta}_i^{(m)}]\transpose$)
\begin{align*}
&|\bar{\theta}_i^{(t)}\bar{\bz}_i\transpose\widetilde{\by}_{j}^{(t)} - P_{0ij}^{(t)}|\\
&\quad = |\{\omega\theta_{0i}^{(t)} + (1 - \omega)\theta_i^{(t)}\}\{\omega\bz_{0i} + (1 - \omega)\bkappa(\bz_i^*)\}\transpose\widetilde{\by}_j^{(t)} - \theta_{0i}^{(t)}\bz_{0i}\transpose\by_{0j}^{(t)}|\\
&\quad = |\omega^2\theta_{0i}^{(t)}\bz_{0i}\transpose\widetilde{\by}_j^{(t)} + \omega(1 - \omega)\theta_i^{(t)}\bz_{0i}\transpose\widetilde{\by}_j^{(t)} + \omega(1 - \omega)\theta_{0i}^{(t)}\bkappa(\bz_i^*)\transpose\widetilde{\by}_j^{(t)}
 + (1 - \omega)^2\bkappa(\bz_i^*)\transpose\widetilde{\by}_j^{(t)}   - \theta_{0i}^{(t)}\bz_{0i}\transpose\by_{0j}^{(t)}|\\
&\quad\leq \omega^2\theta_{0i}^{(t)}\|\bz_{0i}\transpose(\widetilde{\by}_{ij}^{(t)} - \by_{0ij}^{(t)})\|_2
 + \omega(1 - \omega)(|\theta_i^{(t)} - \theta_{0i}^{(t)}|\bz_{0i}\transpose\widetilde{\by}_j^{(t)} + \theta_{0i}^{(t)}\|\bz_{0i}\|_2\|\widetilde{\by}_j - \by_{0j}\|_2)\\ 
&\qquad + \omega(1 - \omega)\theta_{0i}(\|\bkappa(\bz_i^*) - \bz_{0i}\|_2\|\widetilde{\by}_j^{(t)}\|_2 + \|\bz_{0i}\|_2\|\widetilde{\by}_j^{(t)} - \by_{0j}^{(t)}\|_2)\\
&\qquad + (1 - \omega)^2(|\theta_i^{(t)} - \theta_{0i}^{(t)}||\bkappa(\bz_i^*)\transpose\widetilde{\by}_j| + \theta_{0i}\|\bkappa(\bz_i^*) - \bz_{0i}\|_2\|\widetilde{\by}_j^{(t)}\|_2 + \theta_{0i}^{(t)}\|\bz_{0i}\|_2\|\widetilde{\by}_j^{(t)} - \by_{0j}^{(t)}\|_2)\\ 
&\quad = \Optilde\bigg\{\frac{(\log n)^\xi}{\sqrt{n}} + \eps_n\bigg\},
\end{align*}
where we have applied the fact that $\bkappa$ is an affine transformation and Lemma \ref{lemma:Y_expansion}. 
This implies that $|\eta'''(\bar{\theta}_i^{(t)}\bar{\bz}_i\transpose\widetilde{\by}_{j}^{(t)})| = \Optilde(1)$. 

By definition, we have
\begin{align*}
&\frac{\partial^2}{\partial\bz_i^*\partial\bz_i^{*\mathrm{T}}}\sum_{t = 1}^m\sum_{j = 1}^n\log f(A_{ij}^{(t)}, \theta_i^{(t)}\bkappa(\bz_i^*)\transpose\widetilde{\by}_j^{(t)})\\
&\quad = \sum_{t = 1}^m\sum_{j = 1}^n\theta_i^{2(t)}\{\eta''(\theta_i^{(t)}\bkappa(\bz_i^*)\transpose\widetilde{\by}_j^{(t)})(A_{ij}^{(t)} - \theta_i^{(t)}\bkappa(\bz_i^*)\transpose\widetilde{\by}_j^{(t)}) - \eta'(\theta_i^{(t)}\bkappa(\bz_i^*)\transpose\widetilde{\by}_j^{(t)})\}\bJ\transpose\widetilde{\by}_j^{(t)}\widetilde{\by}_j^{(t)\mathrm{T}}\bJ,\\
&\frac{\partial^2}{\partial\btheta_i\partial\btheta_i\transpose}\sum_{t = 1}^m\sum_{j = 1}^n\log f(A_{ij}^{(t)}, \theta_i^{(t)}\bkappa(\bz_i^*)\transpose\widetilde{\by}_j^{(t)})\\
&\quad = \textsf{diag}\bigg(\sum_{j = 1}^n \{\eta''(\theta_i^{(t)}\bkappa(\bz_i^*)\transpose\widetilde{\by}_j^{(t)})(A_{ij}^{(t)} - \theta_i^{(t)}\bkappa(\bz_i^*)\transpose\widetilde{\by}_j^{(t)}) - \eta'(\theta_i^{(t)}\bkappa(\bz_i^*)\transpose\widetilde{\by}_j^{(t)})\}(\bkappa(\bz_i^*)\transpose\widetilde{\by}_j^{(t)})^2: t\in[m]\bigg),\\
&\frac{\partial^2}{\partial\bz_i^*\partial\btheta_i\transpose}\sum_{t = 1}^m\sum_{j = 1}^n\log f(A_{ij}^{(t)}, \theta_i^{(t)}\bkappa(\bz_i^*)\transpose\widetilde{\by}_j^{(t)})\\
&\quad = \bigg[\sum_{j = 1}^n \theta_i^{(t)}\{\eta''(\theta_i^{(t)}\bkappa(\bz_i^*)\transpose\widetilde{\by}_j^{(t)})(A_{ij}^{(t)} - \theta_i^{(t)}\bkappa(\bz_i^*)\transpose\widetilde{\by}_j^{(t)}) - \eta'(\theta_i^{(t)}\bkappa(\bz_i^*)\transpose\widetilde{\by}_j^{(t)})\}\bJ\transpose\widetilde{\by}_j^{(t)}\widetilde{\by}_j^{(t)\mathrm{T}}\bkappa(\bz_i^*)\\ 
&\qquad\quad - \sum_{j = 1}^n\eta'(\theta_i^{(t)}\bkappa(\bz_i^*)\transpose\widetilde{\by}_j)(A_{ij}^{(t)} - \theta_i^{(t)}\bkappa(\bz_i^*)\transpose\widetilde{\by}_j^{(t)})\bJ\transpose\widetilde{\by}_j^{(t)} : t\in[m]\bigg].
\end{align*}
Furthermore, for any $k\in[d - 1]$ and $s\in[m]$, we have
\begin{align*}
&\frac{\partial^3}{\partial z_{ik}^*\partial\bz_i^*\partial\bz_i^{*\mathrm{T}}}\sum_{t = 1}^m\sum_{j = 1}^n\log f(A_{ij}^{(t)}, \theta_i^{(t)}\bkappa(\bz_i^*)\transpose\widetilde{\by}_j^{(t)})\\
&\quad = \sum_{t = 1}^m\sum_{j = 1}^n\theta_i^{3(t)}\{\eta'''(\theta_i^{(t)}\bkappa(\bz_i^*)\transpose\widetilde{\by}_j^{(t)})(A_{ij}^{(t)} - \theta_i^{(t)}\bkappa(\bz_i^*)\transpose\widetilde{\by}_j^{(t)}) - 2\eta''(\theta_i^{(t)}\bkappa(\bz_i^*)\transpose\widetilde{\by}_j^{(t)})\}\bJ\transpose\widetilde{\by}_j^{(t)}\widetilde{\by}_j^{(t)\mathrm{T}}\bJ\\
&\qquad \times\frac{\partial}{\partial z_{ik}^*}\bkappa(\bz_i^*)\transpose\widetilde{\by}_j,\\
&\frac{\partial^3}{\partial \theta_{i}^{(s)}\partial\bz_i^*\partial\bz_i^{*\mathrm{T}}}\sum_{t = 1}^m\sum_{j = 1}^n\log f(A_{ij}^{(t)}, \theta_i^{(t)}\bkappa(\bz_i^*)\transpose\widetilde{\by}_j^{(t)})\\
&\quad = \sum_{j = 1}^n\theta_i^{2(s)}\{\eta'''(\theta_i^{(s)}\bkappa(\bz_i^*)\transpose\widetilde{\by}_j^{(s)})(A_{ij}^{(s)} - \theta_i^{(s)}\bkappa(\bz_i^*)\transpose\widetilde{\by}_j^{(s)}) - 2\eta''(\theta_i^{(s)}\bkappa(\bz_i^*)\transpose\widetilde{\by}_j^{(s)})\}\bJ\transpose\widetilde{\by}_j^{(s)}\widetilde{\by}_j^{(s)\mathrm{T}}\bJ\\
&\qquad \times \bkappa(\bz_i^*)\transpose\widetilde{\by}_j\\ 
&\qquad + \sum_{j = 1}^n2\theta_i^{(s)}\{\eta''(\theta_i^{(s)}\bkappa(\bz_i^*)\transpose\widetilde{\by}_j^{(s)})(A_{ij}^{(s)} - \theta_i^{(s)}\bkappa(\bz_i^*)\transpose\widetilde{\by}_j^{(s)}) - \eta'(\theta_i^{(s)}\bkappa(\bz_i^*)\transpose\widetilde{\by}_j^{(s)})\}\bJ\transpose\widetilde{\by}_j^{(s)}\widetilde{\by}_j^{(s)\mathrm{T}}\bJ
,\\
&\frac{\partial^3}{\partial\theta_i^{(s)}\partial\btheta_i\partial\btheta_i\transpose}\sum_{t = 1}^m\sum_{j = 1}^n\log f(A_{ij}^{(t)}, \theta_i^{(t)}\bkappa(\bz_i^*)\transpose\widetilde{\by}_j^{(t)})\\
&\quad = \sum_{j = 1}^n \{\eta'''(\theta_i^{(s)}\bkappa(\bz_i^*)\transpose\widetilde{\by}_j^{(s)})(A_{ij}^{(s)} - \theta_i^{(s)}\bkappa(\bz_i^*)\transpose\widetilde{\by}_j^{(s)}) - 2\eta''(\theta_i^{(s)}\bkappa(\bz_i^*)\transpose\widetilde{\by}_j^{(s)})\}(\bkappa(\bz_i^*)\transpose\widetilde{\by}_j^{(s)})^3\times \be_s\be_s\transpose,\\
&\frac{\partial^3}{\partial z_{ik}^*\partial\btheta_i\partial\btheta_i\transpose}\sum_{t = 1}^m\sum_{j = 1}^n\log f(A_{ij}^{(t)}, \theta_i^{(t)}\bkappa(\bz_i^*)\transpose\widetilde{\by}_j^{(t)})\\
&\quad = \textsf{diag}\bigg(\sum_{j = 1}^n \theta_i^{(t)}\{\eta'''(\theta_i^{(t)}\bkappa(\bz_i^*)\transpose\widetilde{\by}_j^{(t)})(A_{ij}^{(t)} - \theta_i^{(t)}\bkappa(\bz_i^*)\transpose\widetilde{\by}_j^{(t)}) - 2\eta''(\theta_i^{(t)}\bkappa(\bz_i^*)\transpose\widetilde{\by}_j^{(t)})\}(\bkappa(\bz_i^*)\transpose\widetilde{\by}_j^{(t)})^2\\
&\qquad\qquad\quad\times\frac{\partial}{\partial z_{ik}}\bkappa(\bz_i^*)\transpose\widetilde{\by}_j^{(t)}\\
&\qquad\qquad\quad + \sum_{j = 1}^n\{\eta''(\theta_i^{(t)}\bkappa(\bz_i^*)\transpose\widetilde{\by}_j^{(t)})(A_{ij}^{(t)} - \theta_i^{(t)}\bkappa(\bz_i^*)\transpose\widetilde{\by}_j^{(t)}) - \eta'(\theta_i^{(t)}\bkappa(\bz_i^*)\transpose\widetilde{\by}_j^{(t)})\}\\
&\qquad\qquad\quad\times \frac{\partial}{\partial z_{ik}^*}(\bkappa(\bz_i^*)\transpose\widetilde{\by}_j^{(t)})^2: t\in[m]\bigg).
\end{align*}
By Lemma \ref{lemma:LLN}, for any $s\in[m]$, we know that
\begin{align*}
&\bigg\|\frac{1}{mn}\sum_{t = 1}^m\sum_{j = 1}^n\frac{\partial}{\partial\bz_i^*}\log f(A_{ij}^{(t)}, \theta_{0i}^{(t)}\bkappa(\bz_{0i}^*)\transpose\widetilde{\by}_j^{(t)})\bigg\|_2 = \Optilde\bigg\{\frac{(\log n)^\xi}{\sqrt{n}}\bigg\},\\
&\frac{1}{mn}\sum_{t = 1}^m\sum_{j = 1}^n\frac{\partial}{\partial\theta_i^{(s)}}\log f(A_{ij}^{(t)}, \theta_{0i}^{(t)}\bkappa(\bz_{0i}^*)\transpose\widetilde{\by}_j^{(t)}) = \Optilde\bigg\{\frac{(\log n)^\xi}{m\sqrt{n}}\bigg\}\\
&\bigg\|\frac{1}{mn}\frac{\partial^2}{\partial\bz_i^*\partial\bz_i^{*\mathrm{T}}}\sum_{t = 1}^m\sum_{j = 1}^n\log f(A_{ij}^{(t)}, \theta_{0i}^{(t)}\bkappa(\bz_{0i}^*)\transpose\widetilde{\by}_j^{(t)}) + \frac{1}{m}\sum_{t = 1}^m\theta_{0i}^{2(t)}\bJ\transpose\bG_{0in}^{(t)}\bJ\bigg\|_2
 = \Optilde\bigg\{\frac{(\log n)^\xi}{\sqrt{n}}\bigg\},\\
&\bigg\|\frac{1}{mn}\frac{\partial^2}{\partial\btheta_i\partial\btheta_i\transpose}\sum_{t = 1}^m\sum_{j = 1}^n\log f(A_{ij}^{(t)}, \theta_{0i}^{(t)}\bkappa(\bz_{0i}^*)\transpose\widetilde{\by}_j^{(t)}) + \frac{1}{m}\textsf{diag}(\bz_{0i}\transpose\bG_{0in}^{(t)}\bz_{0i}:t\in[m])\bigg\|_2 = \Optilde\bigg\{\frac{(\log n)^\xi}{m\sqrt{n}}\bigg\},\\
&\frac{1}{mn}\frac{\partial^2}{\partial\bz_i^*\partial\theta_i^{(s)}}\sum_{t = 1}^m\sum_{j = 1}^n\log f(A_{ij}^{(t)}, \theta_{0i}^{(t)}\bkappa(\bz_{0i}^*)\transpose\widetilde{\by}_j^{(t)}) + \frac{1}{m}\bJ\transpose\bG_{0in}^{(t)}\bz_{0i}\theta_{0i}^{(t)} = \Optilde\bigg\{\frac{(\log n)^\xi}{m\sqrt{n}}\bigg\}.
\end{align*}
By \eqref{eqn:noise_absolute_bound}, we can compute
\begin{align*}
&\frac{1}{mn}\bigg\|\sum_{t = 1}^m\sum_{j = 1}^n\frac{\partial^3}{\partial z_{ik}^*\partial \bz_i^*\partial\bz_i^{*\mathrm{T}}}\log f(A_{ij}^{(t)}; \bar{\theta}_i^{(t)}\bkappa(\bar{\bz}_i^*)\transpose\widetilde{\by}_{j}^{(t)})\bigg\|_2\\
&\quad\leq \frac{1}{mn}\sum_{t = 1}^m\sum_{j = 1}^n|E_{ij}^{(t)}|\max_{t\in[m],j\in[n]}|\eta'''(\bar{\theta}_i^{(t)}\bkappa(\bar{\bz}_i^*)\transpose\widetilde{\by}_{j}^{(t)})
\\
&\qquad + \frac{1}{mn}\sum_{t = 1}^m\sum_{j = 1}^n(|P_{0ij}^{(t)} + |\bar{\theta}_i^{(t)}\bkappa(\bar{\bz}_i^*)\transpose\widetilde{\by}_{j}^{(t)}| + 2|\eta''(\bar{\theta}_i^{(t)}\bkappa(\bar{\bz}_i^*)\transpose\widetilde{\by}_{j}^{(t)})|)\|\bJ\transpose\widetilde{\by}_{j}^{(t)}\|_2^3 = \Optilde(1),
\end{align*}
and similarly,
\begin{align*}
\frac{1}{mn}\bigg\|\frac{\partial^3}{\partial \theta_{i}^{(s)}\partial\bz_i^*\partial\bz_i^{*\mathrm{T}}}\sum_{t = 1}^m\sum_{j = 1}^n\log f(A_{ij}^{(t)}, \bar{\theta}_i^{(t)}\bkappa(\bar{\bz}_i^*)\transpose\widetilde{\by}_j^{(t)})\bigg\|_2
& = \Optilde\bigg(\frac{1}{m}\bigg),\\
\frac{1}{mn}\frac{\partial^3}{\partial \theta_{i}^{(s)}\partial\btheta_i\partial\btheta_i^{\mathrm{T}}}\sum_{t = 1}^m\sum_{j = 1}^n\log f(A_{ij}^{(t)}, \bar{\theta}_i^{(t)}\bkappa(\bar{\bz}_i^*)\transpose\widetilde{\by}_j^{(t)})
& = \Optilde\bigg(\frac{1}{m}\bigg)\times\be_s\be_s\transpose,\\
\frac{1}{mn}\bigg\|\frac{\partial^3}{\partial z_{ik}^*\partial\btheta_i\partial\btheta_i\transpose}\sum_{t = 1}^m\sum_{j = 1}^n\log f(A_{ij}^{(t)}, \bar{\theta}_i^{(t)}\bkappa(\bar{\bz}_i^*)\transpose\widetilde{\by}_j^{(t)})\bigg\|_2
& = \Optilde\bigg(\frac{1}{m}\bigg),\\
\frac{1}{mn}\bigg\|\frac{\partial^3}{\partial z_{ik}^*\partial\bz_i^*\partial\btheta_i\transpose}\sum_{t = 1}^m\sum_{j = 1}^n\log f(A_{ij}^{(t)}, \bar{\theta}_i^{(t)}\bkappa(\bar{\bz}_i^*)\transpose\widetilde{\by}_j^{(t)})\bigg\|_2
& = \Optilde(1).
\end{align*}
By Taylor's expansion, there exists some $\bar{\bz}_i^*$ between $\bz_{0i}^*$ and $\bz_i^*$ and $\bar{\theta}_i^{(t)}$ between $\theta_i^{(t)}$ and $\theta_{0i}^{(t)}$, such that, with $\eps_n = (\log n)^{2\xi}/\sqrt{n}$, 
\begin{align*}
&\frac{1}{mn}\widetilde{\ell}_{in}(\bz_i^*, \btheta_i) - \frac{1}{mn}\widetilde{\ell}_{in}(\bz_{0i}^*, \btheta_{0i})\\
&\quad = \frac{1}{mn}\sum_{t = 1}^m\sum_{j = 1}^n\frac{\partial}{\partial\bz_i^{*\mathrm{T}}}\log f(A_{ij}^{(t)}, \theta_{0i}^{(t)}\bkappa(\bz_{0i}^*)\transpose\widetilde{\by}_{j}^{(t)})\bdelta_{\bz_i}\\
&\qquad + \frac{1}{mn}\sum_{t = 1}^m\sum_{j = 1}^n\frac{\partial}{\partial\btheta_i\transpose}\log f(A_{ij}^{(t)}, \theta_{0i}^{(t)}\bkappa(\bz_{0i}^*)\transpose\widetilde{\by}_{j}^{(t)})\bdelta_{\btheta_i}
\\ &\qquad
 + \frac{1}{2}\bdelta_{\bz_i}\transpose\frac{1}{mn}\sum_{t = 1}^m\sum_{j = 1}^n\frac{\partial^2}{\partial\bz_i^*\partial\bz_i^{*\mathrm{T}}}\log f(A_{ij}^{(t)}, \theta_{0i}^{(t)}\bkappa(\bz_{0i}^*)\transpose\widetilde{\by}_{j}^{(t)})\bdelta_{\bz_i}\\ 
&\qquad + \frac{1}{2}\bdelta_{\btheta_i}\transpose\frac{1}{mn}\sum_{t = 1}^m\sum_{j = 1}^n\frac{\partial^2}{\partial\btheta_i\partial\btheta_i^{\mathrm{T}}}\log f(A_{ij}^{(t)}, \theta_{0i}^{(t)}\bkappa(\bz_{0i}^*)\transpose\widetilde{\by}_{j}^{(t)})\bdelta_{\btheta_i}\\
&\qquad + \bdelta_{\bz_i}\transpose\frac{1}{mn}\sum_{t = 1}^m\sum_{j = 1}^n\frac{\partial^2}{\partial\bz_i^*\partial\btheta_i^{\mathrm{T}}}\log f(A_{ij}^{(t)}, \theta_{0i}^{(t)}\bkappa(\bz_{0i}^*)\transpose\widetilde{\by}_{j}^{(t)})\bdelta_{\btheta_i}\\
&\qquad + \frac{1}{6}\bdelta_{\bz_i}\transpose\frac{1}{mn}\sum_{k = 1}^d\sum_{t = 1}^m\sum_{j = 1}^n\frac{\partial^3}{\partial z_{ik}^*\partial\bz_i^*\partial\bz_i^{*\mathrm{T}}}\log f(A_{ij}^{(t)}, \bar{\theta}_i^{(t)}\bkappa(\bar{\bz}_{i}^{*})\transpose\widetilde{\by}_{j}^{(t)})\bdelta_{\bz_i}\delta_{z_{ik}}\\ 
&\qquad + \frac{1}{6}\bdelta_{\btheta_i}\transpose\frac{1}{mn}\sum_{s = 1}^m\sum_{t = 1}^m\sum_{j = 1}^n\frac{\partial^3}{\partial \theta_{i}^{(s)}\partial\btheta_i\partial\btheta_i^{\mathrm{T}}}\log f(A_{ij}^{(t)}, \bar{\theta}_i^{(t)}\bkappa(\bar{\bz}_{i}^{*})\transpose\widetilde{\by}_{j}^{(t)})\bdelta_{\btheta_i}\delta_{\theta_{i}^{(s)}}\\ 
&\qquad + \frac{1}{3}\bdelta_{\btheta_i}\frac{1}{mn}\sum_{k = 1}^d\sum_{t = 1}^m\sum_{j = 1}^n\frac{\partial^3}{\partial z_{ik}^*\partial\btheta_i\partial\btheta_i\transpose}\log f(A_{ij}^{(t)}, \bar{\theta}_i^{(t)}\bkappa(\bar{\bz}_{i}^{*})\transpose\widetilde{\by}_{j}^{(t)})\bdelta_{\btheta_i}\delta_{z_{ik}}\\
&\qquad + \frac{1}{3}\bdelta_{\bz_i}\frac{1}{mn}\sum_{s = 1}^m\sum_{t = 1}^m\sum_{j = 1}^n\frac{\partial^3}{\partial \theta_{i}^{(s)}\partial\bz_i^*\partial\bz_i^{*\mathrm{T}}}\log f(A_{ij}^{(t)}, \bar{\theta}_i^{(t)}\bkappa(\bar{\bz}_{i}^{*})\transpose\widetilde{\by}_{j}^{(t)})\bdelta_{\bz_i}\delta_{\theta_{i}^{(t)}}\\
&\quad = \Optilde\bigg\{(\|\bdelta_{\bz_i}\|_2 + \frac{1}{\sqrt{m}}\|\bdelta_{\btheta_i}\|_2)\frac{(\log n)^{\xi}}{\sqrt{n}}\bigg\} - \frac{1}{2m}\bdelta_{\bz_i}\transpose\sum_{t = 1}^m\bJ\transpose\bG_{0in}^{(t)}\bJ\bdelta_{\bz_i} - \frac{1}{2m}\sum_{t = 1}^m\delta_{\theta_i^{(s)}}^2\bz_{0i}\transpose\bG_{0in}^{(t)}\bz_{0i}\\ 
&\qquad -\frac{1}{m}\sum_{t = 1}^m\bdelta_{\bz_i}\transpose\bJ\transpose\bG_{0in}^{(t)}\bz_{0i}\delta_{\theta_i^{(t)}} + \Optilde\bigg\{\|\bdelta_{z_i}\|_2^2\frac{(\log n)^{\xi}}{\sqrt{n}} + \|\bdelta_{\btheta_i}\|_2^2\frac{(\log n)^\xi}{m\sqrt{n}} + \|\bdelta_{\bz_i}\|_2\|\bdelta_{\btheta_i}\|_2\frac{(\log n)^\xi}{\sqrt{mn}}\bigg\}\\ 
&\qquad + \Optilde\bigg(\|\bdelta_{\bz_i}\|_2^3 + \sum_{s = 1}^m\frac{|\delta_{\theta_i^{(s)}}|^3}{m}+ \frac{1}{\sqrt{m}}\|\bdelta_{\btheta_i}\|_2\|\bdelta_{\bz_i}\|_2^2 + \frac{1}{m}\|\bdelta_{\btheta_i}\|_2^2\|\bdelta_{\bz_i}\|_2\bigg)\\ 
&\quad = - \frac{1}{2}\begin{bmatrix}\bdelta_{\bz_i}\transpose & \bdelta_{\btheta_i}\end{bmatrix}
\begin{bmatrix}
\frac{1}{m}\sum_{t = 1}^m\theta_{0i}^{2(t)}\bJ\transpose\bG_{0in}^{(t)}\bJ & \frac{1}{m}\bJ\transpose\bG_{0in}^{(1)}\bz_{0i}\theta_{0i}^{(t)} & \ldots & \frac{1}{m}\bJ\transpose\bG_{0in}^{(m)}\bz_{0i}\theta_{0i}^{(t)}\\ 
\frac{1}{m}\theta_{0i}^{(t)}\bz_{0i}\transpose\bG_{0in}^{(1)}\bJ & \frac{1}{m}\bz_{0i}\transpose\bG_{0in}^{(1)}\bz_{0i} & \ldots & 0\\
\vdots & \vdots & \ddots & \vdots \\
\frac{1}{m}\theta_{0i}^{(t)}\bz_{0i}\transpose\bG_{0in}^{(m)}\bJ & 0 & \ldots & \frac{1}{m}\bz_{0i}\transpose\bG_{0in}^{(m)}\bz_{0i}
\end{bmatrix}
\begin{bmatrix}\bdelta_{\bz_i}\\\bdelta_{\btheta_i}\end{bmatrix}\\
&\qquad + \Optilde\bigg\{\eps_n\frac{(\log n)^\xi}{\sqrt{n}} + \eps_n^3\bigg\}\\
&\quad \leq -\frac{1}{4}\eps_n^2\lambda_{d + m - 1}\left(\bGamma_{in}^{-1}\right)\quad\text{w.h.p.}.
\end{align*}
This implies that $\widetilde{\ell}_{in}(\cdot, \cdot)$ has a local maximizer $(\widehat{\bz}_i^*, \widehat{\btheta}_i)$ inside the interior of $\calQ_n$ satisfying the likelihood equation
\[
    \frac{\partial\widetilde{\ell}_{in}}{\partial\bz_i^*}(\bz_i^*, \btheta_i)\mathrel{\Bigg|_{\bz_i^* = \widehat{\bz}_i^*, \btheta_i = \widehat{\btheta}_i }} = \zero_{d - 1},\quad
    \frac{\partial\widetilde{\ell}_{in}}{\partial\btheta_i}(\bz_i^*, \btheta_i)\mathrel{\Bigg|_{\bz_i^* = \widehat{\bz}_i^*, \btheta_i = \widehat{\btheta}_i }} = \zero_m. 
\]
By the construction of $\calQ_n$, we obtain
$\|\widehat{\bz}_i^* - \bz_{0i}^*\|_2 = \Optilde\{(\log n)^{2\xi}/\sqrt{n}\}$, $|\widehat{\theta}_i^{(t)} - \theta_{0i}^{(t)}| = \Optilde\{(\log n)^{2\xi}/\sqrt{n}\}$. 
Now by Taylor's theorem again,
\begin{align*}
\begin{bmatrix}
\zero_{d - 1}\\
\zero_m
\end{bmatrix} & = \frac{1}{mn}\sum_{t = 1}^m\sum_{j = 1}^n\begin{bmatrix}
\frac{\partial}{\partial\bz_i^{*}}\log f(A_{ij}^{(t)}; \widehat{\theta}_i^{(t)}\bkappa(\widehat{\bz}_i^{*\mathrm{T}})\widetilde{\by}_j^{(t)})\\
\frac{\partial}{\partial\btheta_i}\log f(A_{ij}^{(t)}; \widehat{\theta}_i^{(t)}\bkappa(\widehat{\bz}_i^{*\mathrm{T}})\widetilde{\by}_j^{(t)})
\end{bmatrix}
\\ 
& = \frac{1}{mn}\sum_{t = 1}^m\sum_{j = 1}^n\begin{bmatrix}
\frac{\partial}{\partial\bz_i^*}\log f(A_{ij}^{(t)}; \theta_{0i}^{(t)}\bkappa(\bz_i^*)\widetilde{\by}_j^{(t)})\\ 
\frac{\partial}{\partial\btheta_i}\log f(A_{ij}^{(t)}; \theta_{0i}^{(t)}\bkappa(\bz_i^*)\widetilde{\by}_j^{(t)})
\end{bmatrix}\mathrel{\Bigg|_{\bz_i^* = \bz_{0i}}}
\\ 
&\quad + \frac{1}{mn}\sum_{t = 1}^m\sum_{j = 1}^n
\begin{bmatrix}
\frac{\partial^2}{\partial\bz_{i}^{*}\partial\bz_i^{*\mathrm{T}}}\log f(A_{ij}^{(t)};\theta_{0i}^{(t)}\bkappa(\bz_i^*)\widetilde{\by}_j^{(t)}) & 
\frac{\partial^2}{\partial\bz_{i}^{*}\partial\btheta_i^{\mathrm{T}}}\log f(A_{ij}^{(t)};\theta_{0i}^{(t)}\bkappa(\bz_i^*)\widetilde{\by}_j^{(t)})\\ 
\frac{\partial^2}{\partial\btheta_{i}\partial\bz_i^{*\mathrm{T}}}\log f(A_{ij}^{(t)};\theta_{0i}^{(t)}\bkappa(\bz_i^*)\widetilde{\by}_j^{(t)})
&\frac{\partial^2}{\partial\btheta_{i}\partial\btheta_i^{\mathrm{T}}}\log f(A_{ij}^{(t)};\theta_{0i}^{(t)}\bkappa(\bz_i^*)\widetilde{\by}_j^{(t)})
\end{bmatrix}\mathrel{\Bigg|_{\bz_i^* = \bz_{0i}}}\\
&\quad\times\begin{bmatrix}
\widehat{\bz}_i^* - \bz_{0i}^*\\ 
\widehat{\btheta}_i - \btheta_{0i}
\end{bmatrix}
 + \begin{bmatrix}\bzeta_{1i}\\ \bzeta_{2i}\end{bmatrix},
\end{align*}
where $\bzeta_{1i} = [\zeta_{1i1},\ldots,\zeta_{1i(d - 1)}]\transpose$ and for each $k\in[d - 1]$ and $\bzeta_{2i} = [\zeta_{2i1},\ldots,\zeta_{2im}]\transpose$, there exists $(\bar{\bz}_i^{*(k)},\bar{\btheta}_i^{(t, k)})$ and $(\bar{\bz}_i^{*(t)},\bar{\btheta}_i^{(t)})$ between $(\widehat{\bz}_i^*, \widehat{\btheta}_i)$ and $(\bz_{0i}, \btheta_{0i})$, such that
\begin{align*}
\zeta_{1ik}& = (\widehat{\bz}_i^{*} - \bz_{0i}^*)\transpose\\
&\quad\times \frac{1}{2mn}\sum_{t = 1}^m\sum_{j = 1}^n\bar{\theta}_i^{2(t, k)}\{\eta'''(\bar{\theta}_i^{(t, k)}\bar{\bz}_i^{*(k)\mathrm{T}}\widetilde{\by}_j^{(t)})(A_{ij}^{(t)} - \bar{\theta}_i^{(t, k)}\bar{\bz}_i^{*(k)\mathrm{T}}\widetilde{\by}_j^{(t)}) - 2\eta''(\bar{\theta}_i^{(t, k)}\bar{\bz}_i^{*(k)\mathrm{T}}\widetilde{\by}_j^{(t)})\}\\ 
&\quad\times\bJ\transpose\widetilde{\by}_j^{(t)}\widetilde{\by}_j^{(t)\mathrm{T}}\bJ (\be_k\transpose\bJ\transpose\widetilde{\by}_j^{(t)})
(\widehat{\bz}_i^* - \bz_{0i}^*)\\ 
&\quad + \frac{1}{2mn}\sum_{t = 1}^m\sum_{j = 1}^n\{\eta'''(\bar{\theta}_i^{(t, k)}\bar{\bz}_i^{*(k)\mathrm{T}}\widetilde{\by}_j^{(t)})(A_{ij}^{(t)} - \bar{\theta}_i^{(t, k)}\bar{\bz}_i^{*(k)\mathrm{T}}\widetilde{\by}_j^{(t)}) - 2\eta''(\bar{\theta}_i^{(t, k)}\bar{\bz}_i^{*(k)\mathrm{T}}\widetilde{\by}_j^{(t)})\}\\ 
&\quad\times (\bkappa(\bar{\bz}_i^{*(k)})\transpose\widetilde{\by}_j^{(t)})^2
(\be_k\transpose\bJ\transpose\widetilde{\by}_j^{(t)})(\widehat{\theta}_i^{(t)} - \theta_{0i}^{(t)})^2\\
&\quad + \frac{1}{mn}\sum_{t = 1}^m\sum_{j = 1}^n\bar{\theta}_i^{(t, k)}\{\eta'''(\bar{\theta}_i^{(t, k)}\bar{\bz}_i^{*(k)\mathrm{T}}\widetilde{\by}_j^{(t)})(A_{ij}^{(t)} - \bar{\theta}_i^{(t, k)}\bar{\bz}_i^{*(k)\mathrm{T}}\widetilde{\by}_j^{(t)}) - 2\eta''(\bar{\theta}_i^{(t, k)}\bar{\bz}_i^{*(k)\mathrm{T}}\widetilde{\by}_j^{(t)})\}\\ 
&\quad\times\bkappa(\bar{\bz}_i^{*(k)})\transpose\widetilde{\by}_j^{(t)}\widetilde{\by}_j^{(t)\mathrm{T}}\bJ(\be_k\transpose\bJ\transpose\widetilde{\by}_j^{(t)})(\widehat{\bz}_i^* - \bz_{0i}^*)
\end{align*}
and
\begin{align*}
\zeta_{2it}& = \frac{1}{2mn}\sum_{j = 1}^n\{\eta'''(\bar{\theta}_i^{(t)}\bar{\bz}_i^{*(t)\mathrm{T}}\widetilde{\by}_j^{(t)})(A_{ij}^{(t)} - \bar{\theta}_i^{(t)}\bar{\bz}_i^{*(t)\mathrm{T}}\widetilde{\by}_j^{(t)}) - 2\eta''(\bar{\theta}_i^{(t)}\bar{\bz}_i^{*(t)\mathrm{T}}\widetilde{\by}_j^{(t)})\}\\ 
&\quad\times (\bkappa(\bz_i^*)\transpose\widetilde{\by}_j^{(t)})^3(\widehat{\theta}_i^{(t)} - \theta_{0i}^{(t)})^2\\ 
&\quad + \frac{1}{2mn}\sum_{j = 1}^n\bar{\theta}_i^{2(t)}\{\eta'''(\bar{\theta}_i^{(t)}\bar{\bz}_i^{*(t)\mathrm{T}}\widetilde{\by}_j^{(t)})(A_{ij}^{(t)} - \bar{\theta}_i^{(t)}\bar{\bz}_i^{*(t)\mathrm{T}}\widetilde{\by}_j^{(t)}) - 2\eta''(\bar{\theta}_i^{(t)}\bar{\bz}_i^{*(t)\mathrm{T}}\widetilde{\by}_j^{(t)})\}\\ 
&\qquad\times (\widehat{\bz}_i^* - \bz_{0i}^*)\transpose\bJ\transpose\widetilde{\by}_j^{(t)}\widetilde{\by}_j^{(t)\mathrm{T}}\bJ(\bkappa(\bar{\bz}_i^{*(t)})\transpose\widetilde{\by}_j^{(t)})(\widehat{\bz}_i^* - \bz_{0i}^*)\\ 
&\quad + \frac{1}{2mn}\sum_{j = 1}^n2\bar{\theta}_i^{(t)}\{\eta''(\bar{\theta}_i^{(t)}\bar{\bz}_i^{*(t)\mathrm{T}}\widetilde{\by}_j^{(t)})(A_{ij}^{(t)} - \bar{\theta}_i^{(t)}\bar{\bz}_i^{*(t)\mathrm{T}}\widetilde{\by}_j^{(t)}) - \eta'(\bar{\theta}_i^{(t)}\bar{\bz}_i^{*(t)\mathrm{T}}\widetilde{\by}_j^{(t)})\}\\ 
&\qquad\times (\widehat{\bz}_i^* - \bz_{0i}^*)\transpose\bJ\transpose\widetilde{\by}_j^{(t)}\widetilde{\by}_j^{(t)\mathrm{T}}\bJ(\widehat{\bz}_i^* - \bz_{0i}^*)\\ 
&\quad + \frac{1}{mn}\sum_{j = 1}^n\bar{\theta}_i^{(t)}\{\eta'''(\bar{\theta}_i^{(t)}\bar{\bz}_i^{*(t)\mathrm{T}}\widetilde{\by}_j^{(t)})(A_{ij}^{(t)} - \bar{\theta}_i^{(t)}\bar{\bz}_i^{*(t)\mathrm{T}}\widetilde{\by}_j^{(t)}) - 2\eta''(\bar{\theta}_i^{(t)}\bar{\bz}_i^{*(t)\mathrm{T}}\widetilde{\by}_j^{(t)})\}\\
&\qquad\times(\bkappa(\bar{\bz}_i^{*(t)})\transpose\widetilde{\by}_j)^2(\widehat{\theta}_i^{(t)} - \theta_{0i}^{(t)})\widetilde{\by}_j^{(t)\mathrm{T}}\bJ(\widehat{\bz}_i^* - \bz_{0i}^*)\\ 
&\quad + \frac{1}{mn}\sum_{j = 1}^n\{\eta''(\bar{\theta}_i^{(t)}\bar{\bz}_i^{*(t)\mathrm{T}}\widetilde{\by}_j^{(t)})(A_{ij}^{(t)} - \bar{\theta}_i^{(t)}\bar{\bz}_i^{*(t)\mathrm{T}}\widetilde{\by}_j^{(t)}) - \eta'(\bar{\theta}_i^{(t)}\bar{\bz}_i^{*(t)\mathrm{T}}\widetilde{\by}_j^{(t)})\}(\bkappa(\bar{\bz}_i^{*(t)})\transpose\widetilde{\by}_j)\\
&\qquad\times(\widehat{\theta}_i^{(t)} - \theta_{0i}^{(t)})\widetilde{\by}_j^{(t)\mathrm{T}}\bJ(\widehat{\bz}_i^* - \bz_{0i}^*). 
\end{align*}
Clearly, $\bar{\theta}_i^{(t, k)}\bar{\bz}_i^{*(k)\mathrm{T}}\widetilde{\by}_j^{(t)}$ and $\bar{\theta}_i^{(t)}\bar{\bz}_i^{*(t)\mathrm{T}}\widetilde{\by}_j^{(t)}$ are in a neighborhood of $P_{0ij}^{(t)}$ w.h.p., so that 
\begin{align*}
\|\bzeta_{1i}\|_2 = \Optilde\bigg\{\frac{(\log n)^{3\xi}}{n}\bigg\},\quad \max_{t\in[m]}|\zeta_{2it}| = \Optilde\bigg\{\frac{(\log n)^{3\xi}}{n}\bigg\}
\end{align*}
by \eqref{eqn:noise_absolute_bound} and Lemma \ref{lemma:Y_expansion}. 
Together with Lemma \ref{lemma:LLN} and Lemma \ref{lemma:CLT}, we obtain
\begin{align*}
&\begin{bmatrix}
-\bJ\transpose\frac{1}{mn}\sum_{t = 1}^m\sum_{j = 1}^nE_{ij}^{(t)}\theta_{0i}^{(t)}\eta'(P_{0ij})\by_{0j}^{(t)} + \Optilde\left\{\frac{(\log n)^{2\xi}}{n}\right\}\\ 
-\frac{1}{mn}\sum_{j = 1}^nE_{ij}^{(1)}\eta'(P_{0ij}^{(1)})\bz_{0i}\transpose\by_{0j}^{(1)} + \Optilde\left\{\frac{(\log n)^{2\xi}}{mn}\right\}\\
\vdots\\ 
-\frac{1}{mn}\sum_{j = 1}^nE_{ij}^{(m)}\eta'(P_{0ij}^{(m)})\bz_{0i}\transpose\by_{0j}^{(m)} + \Optilde\left\{\frac{(\log n)^{2\xi}}{mn}\right\}
\end{bmatrix}\\
&\quad = -\begin{bmatrix}
\bJ\transpose\frac{1}{m}\sum_{t = 1}^m\theta_{0i}^{2(t)}\bG_{0in}^{(t)}\bJ
& \frac{1}{m}\bJ\transpose\theta_{0i}^{(1)}\bG_{0in}^{(1)}\bz_{0i}
& \ldots 
& \frac{1}{m}\bJ\transpose\theta_{0i}^{(m)}\bG_{0in}^{(m)}\bz_{0i}
\\ 
\frac{1}{m}\theta_{0i}^{(1)}\bz_{0i}\transpose\bG_{0in}^{(1)}\bJ& \frac{1}{m}\bz_{0i}\transpose\bG_{0in}^{(1)}\bz_{0i} & \ldots & 0\\
\vdots & \vdots & \ddots & \vdots\\ 
\frac{1}{m}\theta_{0i}^{(1)}\bz_{0i}\transpose\bG_{0in}^{(m)}\bJ
& 0 & \ldots & \frac{1}{m}\bz_{0i}\transpose\bG_{0in}^{(m)}\bz_{0i}
\end{bmatrix}
\begin{bmatrix}
\widehat{\bz}_i^* - \bz_{0i}\\ 
\widehat{\btheta}_i - \btheta_{0i}\\ 
\end{bmatrix}\\
&\qquad + 
\begin{bmatrix}
\Optilde\left\{\frac{(\log n)^\xi}{\sqrt{n}}\right\} & \Optilde\left\{\frac{(\log n)^\xi}{m\sqrt{n}}\right\} & \ldots & \Optilde\left\{\frac{(\log n)^\xi}{m\sqrt{n}}\right\}\\ 
\Optilde\left\{\frac{(\log n)^\xi}{m\sqrt{n}}\right\} & \Optilde\left\{\frac{(\log n)^\xi}{m\sqrt{n}}\right\} & \ldots & 0\\ 
\vdots & \vdots & \ddots & \vdots \\ 
\Optilde\left\{\frac{(\log n)^\xi}{m\sqrt{n}}\right\} & 0 & \ldots & \Optilde\left\{\frac{(\log n)^\xi}{m\sqrt{n}}\right\}
\end{bmatrix}
\begin{bmatrix}
\Optilde\left\{\frac{(\log n)^{2\xi}}{\sqrt{n}}\right\}\\ 
\vdots \\ 
\Optilde\left\{\frac{(\log n)^{2\xi}}{\sqrt{n}}\right\}
\end{bmatrix}
 + \Optilde\bigg\{\frac{(\log n)^{2\xi}}{n}\bigg\}\\ 
&\quad = -\bGamma_{in}
\begin{bmatrix}
\widehat{\bz}_i^* - \bz_{0i}^*\\ 
\widehat{\btheta}_i - \btheta_{0i}\\ 
\end{bmatrix} + \begin{bmatrix}
\Optilde\left\{\frac{(\log n)^{3\xi}}{n}\right\}\\ 
\Optilde\left\{\frac{(\log n)^{3\xi}}{n}\right\}\\
\vdots\\ 
\Optilde\left\{\frac{(\log n)^{3\xi}}{n}\right\}
\end{bmatrix},
\end{align*}
which further implies that
\begin{align*}
&\begin{bmatrix}
\widehat{\bz}_i^* - \bz_{0i}^*\\ 
\widehat{\btheta}_i - \btheta_{0i}\\ 
\end{bmatrix}\\
&\quad = \begin{bmatrix}
(\frac{1}{m}\sum_{t = 1}^m\theta_{0i}^{2(t)}\bJ\transpose\bDelta_{in}^{(t)}\bJ)^{-1} & -
(\frac{1}{m}\sum_{t = 1}^m\theta_{0i}^{2(t)}\bJ\transpose\bDelta_{in}^{(t)}\bJ)^{-1}\bJ\transpose\bM_{in}\transpose \\
-\bM_{in}\bJ(\frac{1}{m}\sum_{t = 1}^m\theta_{0i}^{2(t)}\bJ\transpose\bDelta_{in}^{(t)}\bJ)^{-1}
&
\bL_{in}^{-1} + \bM_{in}\bJ(\frac{1}{m}\sum_{t = 1}^m\theta_{0i}^{2(t)}\bJ\transpose\bDelta_{in}^{(t)}\bJ)^{-1}
\bM_{in}\transpose\bJ\transpose
\end{bmatrix}\\ 
&\qquad\times 
\begin{bmatrix}
\frac{1}{mn}\sum_{t = 1}^m\sum_{j = 1}^nE_{ij}^{(t)}\theta_{0i}^{(t)}\eta'(P_{0ij})\bJ\transpose\by_{0j}^{(t)} + \Optilde\left\{\frac{(\log n)^{2\xi}}{n}\right\}\\ 
\frac{1}{mn}\sum_{j = 1}^nE_{ij}^{(1)}\eta'(P_{0ij}^{(1)})\bz_{0i}\transpose\by_{0j}^{(1)} + \Optilde\left\{\frac{(\log n)^{2\xi}}{mn}\right\}\\
\vdots\\ 
\frac{1}{mn}\sum_{j = 1}^nE_{ij}^{(m)}\eta'(P_{0ij}^{(m)})\bz_{0i}\transpose\by_{0j}^{(m)} + \Optilde\left\{\frac{(\log n)^{2\xi}}{mn}\right\}
\end{bmatrix} + \begin{bmatrix}
\Optilde\left\{\frac{(\log n)^{3\xi}}{n}\right\}\\ 
\Optilde\left\{\frac{(\log n)^{3\xi}}{n}\right\}\\
\vdots\\ 
\Optilde\left\{\frac{(\log n)^{3\xi}}{n}\right\}
\end{bmatrix}\\
    &\quad = \frac{1}{mn}\bGamma_{in}
    ^{-1}\begin{bmatrix}
    \sum_{t = 1}^m\sum_{j = 1}^nE_{ij}^{(t)}\theta_{0i}^{2(t)}\eta'(P_{0ij}^{(t)})\by_{0j}^{(t)}\\
    \sum_{j = 1}^nE_{ij}^{(1)}\eta'(P_{0ij}^{(1)})\by_{0j}^{(1)}\\
    \vdots \\
    \sum_{j = 1}^nE_{ij}^{(m)}\eta'(P_{0ij}^{(m)})\by_{0j}^{(m)}
    \end{bmatrix} + \begin{bmatrix}\Optilde\left\{\frac{(\log n)^{3\xi}}{n}\right\}\\ 
    \Optilde\left\{\frac{(\log n)^{3\xi}}{n}\right\}\\
    \vdots \\
    \Optilde\left\{\frac{(\log n)^{3\xi}}{n}\right\}
    \end{bmatrix}.
\end{align*}
In particular, we obtain
\begin{align*}
\widehat{\bz}_i^* - \bz_{0i}^*
& = \bigg(\frac{1}{m}\sum_{t = 1}^m\theta_{0i}^{2(t)}\bJ\transpose\bDelta_{in}^{(t)}\bJ\bigg)^{-1}\frac{1}{mn}\sum_{t = 1}^m\sum_{j = 1}^nE_{ij}^{(t)}\theta_{0i}^{(t)}\eta'(P_{0ij}^{(t)})\bJ\transpose\by_{0j}^{(t)}\\ 
&\quad - \bigg(\frac{1}{m}\sum_{t = 1}^m\theta_{0i}^{2(t)}\bJ\transpose\bDelta_{in}^{(t)}\bJ\bigg)^{-1}\frac{1}{mn}\sum_{t = 1}^m\sum_{j = 1}^n\frac{E_{ij}^{(t)}\eta'(P_{0ij}^{(t)})\bJ\transpose\bG_{0in}^{(t)}\bz_{0i}\bz_{0i}\transpose\by_{0j}^{(t)}}{\bz_{0i}\transpose\bG_{0in}^{(t)}\bz_{0i}}
 + \Optilde\bigg\{\frac{(\log n)^{3\xi}}{n}\bigg\}\\ 
& = \bigg(\frac{1}{m}\sum_{t = 1}^m\theta_{0i}^{2(t)}\bJ\transpose\bDelta_{in}^{(t)}\bJ\bigg)^{-1}\frac{1}{mn}\sum_{t = 1}^m\sum_{j = 1}^nE_{ij}^{(t)}\theta_{0i}^{(t)}\eta'(P_{0ij}^{(t)})\bJ\transpose
\bigg(\eye_d - \frac{\bG_{0in}^{(t)}\bz_{0i}\bz_{0i}\transpose}{\bz_{0i}\transpose\bG_{0in}^{(t)}\bz_{0i}}\bigg)\by_{0j}\\ 
&\quad + \Optilde\bigg\{\frac{(\log n)^{3\xi}}{n}\bigg\}.
\end{align*}
and
\begin{align*}
\widehat{\theta}_i^{(t)} - \theta_{0i}^{(t)}
& = \frac{1}{mn}\sum_{j = 1}^n E_{ij}^{(t)}\theta_{0i}^{(t)}\eta'(P_{0ij}^{(t)})\by_{0j}^{(t)}\\ &\quad - \frac{\theta_{0i}^{(t)}\bG_{0in}^{(t)}}{\bz_{0i}\transpose\bG_{0in}^{(t)}\bz_{0i}}\bigg(\frac{1}{m}\sum_{s = 1}^m\theta_{0i}^{2(s)}\bG_{0in}^{(s)}\bigg)^{-1}\frac{1}{mn}\sum_{s = 1}^m\sum_{j = 1}^nE_{ij}^{(s)}\theta_{0i}^{(s)}\eta'(P_{0ij}^{(s)})\bigg(\eye_d - \frac{\bG_{0in}^{(s)}\bz_{0i}\bz_{0i}\transpose}{\bz_{0i}\transpose\bG_{0in}^{(s)}\bz_{0i}}\bigg)\by_{0j}^{(s)}\\ 
&\quad + \Optilde\bigg\{\frac{(\log n)^{3\xi}}{n}\bigg\}\\
& = \Optilde\bigg\{\frac{(\log n)^\xi}{\sqrt{mn}} + \frac{(\log n)^{3\xi}}{n}\bigg\}.
\end{align*}
The proof is therefore completed.
\end{proof}

\section{Proof of Theorem \ref{thm:BvM}}
\label{sec:proof_of_theorem_ref_bvm}

We now provide the proof of Theorem \ref{thm:BvM}, namely, the BvM theorem of the exact posterior distribution. Because a BvM theorem typically requires a uniform local asymptotic normality result on the likelihood and is stronger than the conditions necessary for the asymptotic normality of the maximum likelihood estimator, we need to first establish the following uniform law of large numbers result. 

\begin{lemma}[Uniform Law of Large Numbers]
\label{lemma:ULLN}
Suppose Assumptions \ref{assumption:identifiability}--\ref{assumption:likelihood} hold. Further assume that $\psi(\cdot)$ is a continuous function defined on $\calI$. Then, 
\begin{align*}
&\sup_{\substack{\bz_i^*\in\calS^{d - 1}\\\theta_i^{(t)}\in\calK}}
\bigg\|\frac{1}{n}\sum_{j = 1}^n\psi(\theta_i^{(t)}\bkappa(\bz_i^*)\transpose\widetilde{\by}_j^{(t)})(A_{ij}^{(t)} - \theta_i^{(t)}\bkappa(\bz_i^*)\transpose\widetilde{\by}_j)\widetilde{\by}_j^{(t)}\\
&\qquad\qquad - \frac{1}{n}\sum_{j = 1}^n\psi(\theta_i^{(t)}\bkappa(\bz_i^*)\transpose\by_{0j}^{(t)})(P_{0ij}^{(t)} - \theta_i^{(t)}\bkappa(\bz_i^*)\transpose\by_{0j})\by_{0j}^{(t)}\bigg\|_2 = \Optilde\bigg\{\frac{(\log n)^\xi}{\sqrt{n}}\bigg\},\\
&\sup_{\substack{\bz_i^*\in\calS^{d - 1}\\ \theta_i^{(t)}\in\calK}}
\bigg\|\frac{1}{n}\sum_{j = 1}^n\{\psi(\theta_i^{(t)}\bkappa(\bz_i^*)\transpose\widetilde{\by}_j^{(t)})(A_{ij}^{(t)} - \theta_i^{(t)}\bkappa(\bz_i^*)\transpose\widetilde{\by}_j) - \eta'(\theta_i^{(t)}\bkappa(\bz_i^*)\transpose\widetilde{\by}_j^{(t)})\}\widetilde{\by}_j^{(t)}\widetilde{\by}_j^{(t)\mathrm{T}}\\
&\qquad\qquad - \frac{1}{n}\sum_{j = 1}^n\{\psi(\theta_i^{(t)}\bkappa(\bz_i^*)\transpose\by_{0j})(P_{0ij}^{(t)} - \theta_i^{(t)}\bkappa(\bz_i^*)\transpose\by_{0j}^{(t)}) - \eta'(\theta_i^{(t)}\bkappa(\bz_i^*)\transpose\by_{0j}^{(t)})\}\by_{0j}^{(t)}\by_{0j}^{(t)\mathrm{T}}\bigg\|_2\\ 
&\qquad\qquad = \Optilde\bigg\{\frac{(\log n)^\xi}{\sqrt{n}}\bigg\},\\
&\sup_{\substack{\bz_i^*\in\calS^{d - 1}\\ \theta_i^{(t)}\in\calK}}
\bigg\|\frac{1}{n}\sum_{j = 1}^n\{\eta(\theta_i^{(t)}\bkappa(\bz_i^*)\transpose\widetilde{\by}_j^{(t)})A_{ij}^{(t)} - B(\theta_i^{(t)}\bkappa(\bz_i^*)\transpose\widetilde{\by}_j^{(t)})\}
\\&\qquad\qquad
 - \frac{1}{n}\sum_{j = 1}^n\{\eta(\theta_i^{(t)}\bkappa(\bz_i^*)\transpose\by_{0j})P_{0ij}^{(t)} -B(\theta_i^{(t)}\bkappa(\bz_i^*)\transpose\by_{0j}^{(t)})\}\bigg\|_2
 = \Optilde\bigg\{\frac{(\log n)^\xi}{\sqrt{n}}\bigg\}
\end{align*}
\end{lemma}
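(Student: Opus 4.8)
The plan is to prove all three assertions by a single two-stage reduction, carrying out the third (scalar) assertion in detail and noting that the first and second follow line-for-line once the scalar summand is replaced by the vector $\psi(\cdot)(\cdot)\by_{0j}^{(t)}$ and the matrix $\{\psi(\cdot)(\cdot)-\eta'(\cdot)\}\by_{0j}^{(t)}\by_{0j}^{(t)\mathrm{T}}$ and $|\cdot|$ by $\|\cdot\|_2$. Fix $t\in[m]$ and $i\in[n]$; since the preliminary estimators are label-aligned by Algorithm \ref{alg:label_alignment}, take $\bK^{(t)}$ to be the identity. Writing $\widetilde{P}_{ij}=\theta_i^{(t)}\bkappa(\bz_i^*)\transpose\widetilde{\by}_j^{(t)}$ and $P_{ij}=\theta_i^{(t)}\bkappa(\bz_i^*)\transpose\by_{0j}^{(t)}$ and substituting $A_{ij}^{(t)}=P_{0ij}^{(t)}+E_{ij}^{(t)}$, the identity $\eta(\widetilde{P}_{ij})A_{ij}^{(t)}-\eta(P_{ij})P_{0ij}^{(t)}=\eta(\widetilde{P}_{ij})E_{ij}^{(t)}+\{\eta(\widetilde{P}_{ij})-\eta(P_{ij})\}P_{0ij}^{(t)}$ splits the target difference into a \emph{noise part} $n^{-1}\sum_j\eta(\widetilde{P}_{ij})E_{ij}^{(t)}$ and a deterministic \emph{plug-in part} $n^{-1}\sum_j\{\eta(\widetilde{P}_{ij})-\eta(P_{ij})\}P_{0ij}^{(t)}-n^{-1}\sum_j\{B(\widetilde{P}_{ij})-B(P_{ij})\}$.

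The plug-in part is dispatched by smoothness. By Assumption \ref{assumption:likelihood}(a)--(b), both $\widetilde{P}_{ij}$ and $P_{ij}$ lie in a fixed compact subinterval of $\calI$ on which $\eta,B,\eta',B'$ are bounded (using $B'=\mu\eta'$), so a mean-value expansion gives $|\eta(\widetilde{P}_{ij})-\eta(P_{ij})|+|B(\widetilde{P}_{ij})-B(P_{ij})|\lesssim|\widetilde{P}_{ij}-P_{ij}|\le\|\theta_i^{(t)}\bkappa(\bz_i^*)\|_2\,\|\widetilde{\by}_j^{(t)}-\by_{0j}^{(t)}\|_2$, with a constant independent of the parameters. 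Lemma \ref{lemma:Y_expansion} supplies $\max_j\|\widetilde{\by}_j^{(t)}-\by_{0j}^{(t)}\|_2=\Optilde\{(\log n)^\xi/\sqrt{n}\}$, and boundedness of $P_{0ij}^{(t)}$ then yields a plug-in bound of order $(\log n)^\xi/\sqrt{n}$ that is uniform over $(\bz_i^*,\theta_i^{(t)})$. The same estimate also disposes of the ``easy'' half of the noise part: writing $\eta(\widetilde{P}_{ij})E_{ij}^{(t)}=\{\eta(\widetilde{P}_{ij})-\eta(P_{ij})\}E_{ij}^{(t)}+\eta(P_{ij})E_{ij}^{(t)}$, the first piece is bounded by $\max_j|\eta(\widetilde{P}_{ij})-\eta(P_{ij})|\cdot n^{-1}\sum_j|E_{ij}^{(t)}|=\Optilde\{(\log n)^\xi/\sqrt{n}\}\cdot\Optilde(1)$ using \eqref{eqn:noise_absolute_bound}.

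The crux is the remaining term $\sup_{\bz_i^*\in\calS^{d-1},\,\theta_i^{(t)}\in\calK}|n^{-1}\sum_{j}\eta(P_{ij})E_{ij}^{(t)}|$, whose summands are, for fixed parameters, independent mean-zero multiples of the sub-exponential $E_{ij}^{(t)}$ with bounded coefficients, so that Lemma 3.7 in \cite{10.1214/11-AOP734} gives a \emph{pointwise} bound of order $(\log n)^\xi/\sqrt{n}$. Upgrading to the supremum is the main obstacle. I would cover the compact set $\calS^{d-1}\times\calK$, of fixed dimension $d$, by an $\epsilon$-net of cardinality $O(\epsilon^{-d})$ with $\epsilon=n^{-1}$; a union bound over the net preserves the $\Optilde$ rate because the net is only polynomially large while the $\Optilde$ tails decay faster than any polynomial. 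The oscillation between net points is controlled by the uniform Lipschitz estimate $|\eta(\theta\bkappa(\bz^*)\transpose\by_{0j}^{(t)})-\eta(\theta'\bkappa(\bz'^*)\transpose\by_{0j}^{(t)})|\lesssim\|(\theta,\bz^*)-(\theta',\bz'^*)\|_2$, valid uniformly in $j$ by boundedness of $\eta'$ and $\|\by_{0j}^{(t)}\|_2$, whence the residual term is at most $\epsilon\cdot n^{-1}\sum_j|E_{ij}^{(t)}|=\Optilde(n^{-1})$, which is negligible. The key points to verify are that all Lipschitz constants and the net dimension are independent of $n$, so the union bound does not degrade the rate; combining the plug-in, easy-noise, and uniform-noise bounds gives the claimed $\Optilde\{(\log n)^\xi/\sqrt{n}\}$, and the vector- and matrix-valued first and second assertions go through identically since their entrywise summands enjoy the same boundedness and Lipschitz properties.
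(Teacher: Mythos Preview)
Your proposal is correct and arrives at the same conclusion as the paper, with the same overall decomposition into a plug-in part (handled by Lemma \ref{lemma:Y_expansion} and smoothness of $\eta,B,\psi$ on $\calI$) and a noise part whose hard piece is the uniform-in-parameter control of $n^{-1}\sum_j \eta(P_{ij})E_{ij}^{(t)}$ (or its vector/matrix analogues). The difference lies in how the supremum over the low-dimensional parameter set is handled. The paper does not use a finite $\epsilon$-net plus union bound; instead it establishes $\psi_1$-norm increment bounds $\|J_{ink}^{(t)}(\bz_1^*,\theta_1)-J_{ink}^{(t)}(\bz_2^*,\theta_2)\|_{\psi_1}\lesssim n^{-1/2}(\|\bz_1^*-\bz_2^*\|_2^2+|\theta_1-\theta_2|^2)^{1/2}$ via Bernstein-type tail bounds, and then invokes the generic chaining/maximal inequality of Corollary~2.2.5 in \cite{van2023weak} to bound $\|\sup_{\bz^*,\theta}|J_{ink}^{(t)}|\|_{\psi_1}$ by an entropy integral over $\calS^{d-1}\times\calI$, yielding the sharper bound $\Optilde(\log n/\sqrt{n})$. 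Your $\epsilon$-net argument is more elementary and entirely adequate here because the parameter set has fixed dimension $d$ and the $\Optilde$ tails absorb any polynomial union bound; the chaining route is more standard in empirical process theory and would remain useful if the parameter dimension were allowed to grow. Either way, the reduction to the deterministic summand $\psi(P_{ij})E_{ij}^{(t)}\by_{0j}^{(t)}$ and the uniform Lipschitz control are identical in both arguments.
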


\begin{proof}[\bf Proof]
The proof closely resembles that of Lemma S2 in \cite{XieWu2024}, albeit a more straightforward application of a maximal inequality for stochastic processes. By definition, we have
\begin{align}
&\frac{1}{n}\sum_{j = 1}^n\psi(\theta_i^{(t)}\bkappa(\bz_i^*)\transpose\widetilde{\by}_j^{(t)})(A_{ij}^{(t)} - \theta_i^{(t)}\bkappa(\bz_i^*)\transpose\widetilde{\by}_j)\widetilde{\by}_j^{(t)}\nonumber\\
&\quad - \frac{1}{n}\sum_{j = 1}^n\psi(\theta_i^{(t)}\bkappa(\bz_i^*)\transpose\by_{0j}^{(t)})(P_{0ij}^{(t)} - \theta_i^{(t)}\bkappa(\bz_i^*)\transpose\by_{0j}^{(t)})\by_{0j}^{(t)}\nonumber\\
\label{eqn:ULLN1_term1}
&\quad = \frac{1}{n}\sum_{j = 1}^nE_{ij}^{(t)}\psi(\theta_i^{(t)}\bkappa(\bz_i^*)\transpose\by_{0j}^{(t)})\by_{0j}^{(t)}\\
\label{eqn:ULLN1_term2}
&\qquad + \frac{1}{n}\sum_{j = 1}^n(E_{ij}^{(t)} + P_{0ij}^{(t)})\{\psi(\theta_i^{(t)}\bkappa(\bz_i^*)\transpose\widetilde{\by}_{j}^{(t)}) - \psi(\theta_i^{(t)}\bkappa(\bz_i^*)\transpose\by_{0j}^{(t)})\}\widetilde{\by}_j^{(t)}\\
\label{eqn:ULLN1_term3}
&\qquad + \frac{1}{n}\sum_{j = 1}^n(E_{ij}^{(t)} + P_{0ij}^{(t)})\psi(\theta_i^{(t)}\bkappa(\bz_i^*)\transpose\by_{0j}^{(t)})(\widetilde{\by}_j^{(t)} - \by_{0j}^{(t)})\\ 
\label{eqn:ULLN1_term4}
&\qquad + \frac{1}{n}\sum_{j = 1}^n\{\psi(\theta_i^{(t)}\bkappa(\bz_i^*)\transpose\widetilde{\by}_{j}^{(t)})\theta_i^{(t)}\bkappa(\bz_i^*)\transpose\widetilde{\by}_{j}^{(t)} - \psi(\theta_i^{(t)}\bkappa(\bz_i^*)\transpose\by_{0j}^{(t)})\theta_i^{(t)}\bkappa(\bz_i^*)\transpose\by_{0j}^{(t)}\}\widetilde{\by}_j^{(t)}\\ 
\label{eqn:ULLN1_term5}
&\qquad + \frac{1}{n}\sum_{j = 1}^n\psi(\theta_i^{(t)}\bkappa(\bz_i^*)\transpose\by_{0j}^{(t)})\theta_i^{(t)}\bkappa(\bz_i^*)\transpose\by_{0j}^{(t)}(\widetilde{\by}_j^{(t)} - \by_{0j}^{(t)}).
\end{align}
By Lemma \ref{lemma:Y_expansion}, we know that
\begin{align*}
|\theta_i^{(t)}\bkappa(\bz_i^*)\widetilde{\by}_j^{(t)} - \theta_i^{(t)}\bkappa(\bz_i^*)\by_{0j}^{(t)}|\leq \theta_i^{(t)}\|\bkappa(\bz_i^*)\|_2\|\widetilde{\by}_j^{(t)} - \by_{0j}^{(t)}\|_2 = \Optilde\bigg\{\frac{(\log n)^\xi}{\sqrt{n}}\bigg\}.
\end{align*}
By Assumption \ref{assumption:likelihood}, 
\begin{equation}
\begin{aligned}
\max_{j\in[n]}\sup_{\substack{\theta_i^{(t)}\in\calI \\ \bz_i^*\in\calS^{d - 1}}}\theta_i^{(t)}\bkappa(\bz_i^*)\by_{0j}^{(t)}
& = \max_{j\in[n]}\sup_{\substack{\theta_i^{(t)}\in\calI \\ \bz_i\in\Delta^{d - 1}}}\theta_i^{(t)}\bz_i\transpose\bB_0^{(t)}\bz_{0j}\theta_{0j}^{(t)}\leq \sup_{\theta_i^{(t)}\in\calI}\theta_i^{(t)}\theta_{0j}b < b,\\
\min_{j\in[n]}\inf_{\substack{\theta_i^{(t)}\in\calI \\ \bz_i^*\in\calS^{d - 1}}}\theta_i^{(t)}\bkappa(\bz_i^*)\by_{0j}^{(t)}
& \geq \inf_{\theta_i^{(t)}\in\calI}\theta_i^{(t)}\theta_{0j}a > c_1a\inf\calI > 0,\\ 
\max_{j\in[n]}\sup_{\substack{\theta_i^{(t)}\in\calI \\ \bz_i^*\in\calS^{d - 1}}}\theta_i^{(t)}\bkappa(\bz_i^*)\widetilde{\by}_{j}^{(t)}
& \leq \max_{j\in[n]}\sup_{\substack{\theta_i^{(t)}\in\calI \\ \bz_i^*\in\calS^{d - 1}}}\theta_i^{(t)}\bkappa(\bz_i^*)\by_{0j}^{(t)} + \Optilde\bigg\{\frac{(\log n)^\xi}{\sqrt{n}}\bigg\} < b,\\ 
\min_{j\in[n]}\inf_{\substack{\theta_i^{(t)}\in\calI \\ \bz_i^*\in\calS^{d - 1}}}\theta_i^{(t)}\bkappa(\bz_i^*)\widetilde{\by}_{j}^{(t)}
&\geq \min_{j\in[n]}\inf_{\substack{\theta_i^{(t)}\in\calI \\ \bz_i^*\in\calS^{d - 1}}}\theta_i^{(t)}\bkappa(\bz_i^*)\by_{0j}^{(t)} - \Optilde\bigg\{\frac{(\log n)^\xi}{\sqrt{n}}\bigg\} \geq \frac{c_1a\inf\calI}{2}.
\end{aligned}
\end{equation}
This implies that
\begin{align*}
&\max_{j\in[n]}\sup_{\substack{\theta_i^{(t)}\in\calI \\ \bz_i^*\in\calS^{d - 1}}}\psi(\theta_i^{(t)}\bkappa(\bz_i^*)\transpose\by_{0j}^{(t)})\lesssim 1,\quad
\max_{j\in[n]}\sup_{\substack{\theta_i^{(t)}\in\calI \\ \bz_i^*\in\calS^{d - 1}}}\psi(\theta_i^{(t)}\bkappa(\bz_i^*)\transpose\widetilde{\by}_{j}^{(t)}) = \Optilde(1),\\ 
&\max_{j\in[n]}\sup_{\substack{\theta_i^{(t)}\in\calI \\ \bz_i^*\in\calS^{d - 1}}}|\psi(\theta_i^{(t)}\bkappa(\bz_i^*)\transpose\widetilde{\by}_{j}^{(t)}) - \psi(\theta_i^{(t)}\bkappa(\bz_i^*)\transpose\by_{0j}^{(t)})|\\ 
&\quad = \max_{j\in[n]}\sup_{\substack{\theta_i^{(t)}\in\calI \\ \bz_i^*\in\calS^{d - 1}}}|\psi'(\theta_i^{(t)}\bkappa(\bz_i^*)\transpose\by_{j}^{(t)})|\|\widetilde{\by}_j^{(t)} - \by_{0j}^{(t)}\|_2 = \Optilde\bigg\{\frac{(\log n)^\xi}{\sqrt{n}}\bigg\}.
\end{align*}
Therefore, by Lemma \ref{lemma:Y_expansion} and \eqref{eqn:psi_concentration}, the terms on line \eqref{eqn:ULLN1_term2} through \eqref{eqn:ULLN1_term5} are $\Optilde\{(\log n)^\xi/\sqrt{n}\}$. It remains to work on the term on line \eqref{eqn:ULLN1_term1}. It is sufficient to show that for any $k\in [d]$, 
\begin{align*}
\sup_{\substack{\bz_i^*\in\calS^{d - 1} \\ \theta_i^{(t)}\in\calK}}|J_{ink}^{(t)}(\bz_i^*, \theta_i^{(t)})| = \Optilde\bigg\{\frac{(\log n)^\xi}{\sqrt{n}}\bigg\},\quad\text{where}\quad J_{ink}^{(t)}(\bz^*, \theta) = \frac{1}{n}\sum_{j = 1}^nE_{ij}^{(t)}\psi(\theta\bkappa(\bz^*)\transpose\by_{0j}^{(t)})\by_{0j}^{(t)}. 
\end{align*}
For any $\bz_{1}^*,\bz_{2}^*\in\calS^{d - 1}$, $\theta_1,\theta_2\in\calK$, there exists some $\omega\in[0, 1]$, such that
\begin{align*}
&|\psi(\theta_1\bkappa(\bz^*_1)\transpose\by_{0j}^{(t)}) - \psi(\theta_2\bkappa(\bz^*_2)\transpose\by_{0j}^{(t)})|\\ 
&\quad = |\psi'((\omega\theta_1\bkappa(\bz_1^*) + (1 - \omega)\theta_2\bkappa(\bz_2^*))\transpose\by_{0j}^{(t)})(\theta_1\bkappa(\bz_1^*) - \theta_2\bkappa(\bz_2^*))\transpose\by_{0j}|\\
&\quad \leq C(|\theta_1 - \theta_2| + \|\bz_1^* - \bz_2^*\|_2).
\end{align*}
Then, by Proposition 5.16, for any $\tau > 0$, 
\begin{align*}
&\prob_0(|J_{ink}^{(t)}(\bz_1^*, \theta_1) - J_{ink}^{(t)}(\bz_2^*, \theta_2)| \geq \tau)\\
&\quad\leq 2\exp\bigg\{-C\min\bigg(\frac{n\tau^2}{|\theta_1 - \theta_2|^2 + \|\bz_1^* - \bz_2^*\|_2^2}, \frac{n\tau}{(|\theta_1 - \theta_2|^2 + \|\bz_1^* - \bz_2^*\|_2^2)^{1/2}}\bigg)\bigg\}.
\end{align*}
It follows that
\begin{align*}
&\prob_0\left[|J_{ink}^{(t)}(\bz_1^*, \theta_1) - J_{ink}^{(t)}(\bz_2^*, \theta_2)|\mathbbm{1}\left\{|J_{ink}^{(t)}(\bz_1^*, \theta_1) - J_{ink}^{(t)}(\bz_2^*, \theta_2)|\leq (|\theta_1 - \theta_2|^2 + \|\bz_1^* - \bz_2^*\|_2^2)^{1/2}\right\} \geq \tau\right]\\
&\quad\leq 2\exp\bigg(-\frac{Cn\tau^2}{|\theta_1 - \theta_2|^2 + \|\bz_1^* - \bz_2^*\|_2^2}\bigg)\mathbbm{1}\left\{\tau\leq (|\theta_1 - \theta_2|^2 + \|\bz_1^* - \bz_2^*\|_2^2)^{1/2}\right\}\\
&\quad\leq 2\exp\bigg(-\frac{Cn\tau^2}{|\theta_1 - \theta_2|^2 + \|\bz_1^* - \bz_2^*\|_2^2}\bigg),\\
&\prob_0\left[|J_{ink}^{(t)}(\bz_1^*, \theta_1) - J_{ink}^{(t)}(\bz_2^*, \theta_2)|\mathbbm{1}\left\{|J_{ink}^{(t)}(\bz_1^*, \theta_1) - J_{ink}^{(t)}(\bz_2^*, \theta_2)| > (|\theta_1 - \theta_2|^2 + \|\bz_1^* - \bz_2^*\|_2^2)^{1/2}\right\} \geq \tau\right]\\
&\quad = \prob_0\bigg[|J_{ink}^{(t)}(\bz_1^*, \theta_1) - J_{ink}^{(t)}(\bz_2^*, \theta_2)|\\
&\qquad\qquad \times\mathbbm{1}\left\{|J_{ink}^{(t)}(\bz_1^*, \theta_1) - J_{ink}^{(t)}(\bz_2^*, \theta_2)| > (|\theta_1 - \theta_2|^2 + \|\bz_1^* - \bz_2^*\|_2^2)^{1/2}\right\} \geq \tau\bigg]\\ 
&\qquad\times \mathbbm{1}\left\{\tau\leq (|\theta_1 - \theta_2|^2 + \|\bz_1^* - \bz_2^*\|_2^2)^{1/2}\right\}\\
&\quad + \prob_0\bigg[|J_{ink}^{(t)}(\bz_1^*, \theta_1) - J_{ink}^{(t)}(\bz_2^*, \theta_2)|\\
&\qquad\qquad \times\mathbbm{1}\left\{|J_{ink}^{(t)}(\bz_1^*, \theta_1) - J_{ink}^{(t)}(\bz_2^*, \theta_2)| > (|\theta_1 - \theta_2|^2 + \|\bz_1^* - \bz_2^*\|_2^2)^{1/2}\right\} \geq \tau\bigg]\\ 
&\qquad\times \mathbbm{1}\left\{\tau > (|\theta_1 - \theta_2|^2 + \|\bz_1^* - \bz_2^*\|_2^2)^{1/2}\right\}\\
&\quad = \prob_0\left\{|J_{ink}^{(t)}(\bz_1^*, \theta_1) - J_{ink}^{(t)}(\bz_2^*, \theta_2)| \geq (|\theta_1 - \theta_2|^2 + \|\bz_1^* - \bz_2^*\|_2^2)^{1/2}\right\}\\
&\qquad\times \mathbbm{1}\left\{\tau\leq (|\theta_1 - \theta_2|^2 + \|\bz_1^* - \bz_2^*\|_2^2)^{1/2}\right\}\\
&\qquad + \prob_0\left\{|J_{ink}^{(t)}(\bz_1^*, \theta_1) - J_{ink}^{(t)}(\bz_2^*, \theta_2)| \geq \tau\right\}\mathbbm{1}\left\{\tau > (|\theta_1 - \theta_2|^2 + \|\bz_1^* - \bz_2^*\|_2^2)^{1/2}\right\}\\
&\quad\leq 2e^{-Cn}\mathbbm{1}\left\{\tau\leq (|\theta_1 - \theta_2|^2 + \|\bz_1^* - \bz_2^*\|_2^2)^{1/2}\right\}\\ 
&\qquad + 2\exp\bigg\{-\frac{Cn\tau}{(|\theta_1 - \theta_2|^2 + \|\bz_1^* - \bz_2^*\|_2^2)^{1/2}}\bigg\}\mathbbm{1}\left\{\tau > (|\theta_1 - \theta_2|^2 + \|\bz_1^* - \bz_2^*\|_2^2)^{1/2}\right\}\\
&\quad\leq 2\exp\bigg\{-\frac{Cn\tau}{(|\theta_1 - \theta_2|^2 + \|\bz_1^* - \bz_2^*\|_2^2)^{1/2}}\bigg\}.
\end{align*}
By the fact that $\|X\|_{\psi_1}\leq (\log 2)^{-1/2}\|X\|_{\psi_2}$ (where $\|X\|_{\psi_2}:= \sup_{p\geq 1}p^{-1/2}(\expect|X|^p)^{1/p}$) due to Problem 2.2.5 and Lemma 2.2.1 in \cite{van2023weak}, 
\[
\|J_{ink}^{(t)}(\bz_1^*, \theta_1) - J_{ink}^{(t)}(\bz_2^*, \theta_2)\|_{\psi_1}\leq \frac{C}{\sqrt{n}}(|\theta_1 - \theta_2|^2 + \|\bz_1^* - \bz_2^*\|_2^2)^{1/2}. 
\]
Similarly, we also have
\begin{align*}
\|J_{ink}^{(t)}(\bz_{0i}^*, \theta_{0i}^{(t)})\|_{\psi_1}\lesssim \frac{1}{\sqrt{n}}.
\end{align*}
Consider the metric introduced by $d_J((\bz_1^*, \theta_1), (\bz_2^*, \theta_2)) := C(|\theta_1 - \theta_2|^2 + \|\bz_1^* - \bz_2^*\|_2^2)^{1/2}/\sqrt{n}$ over $\calS^{d - 1}\times \calI$. Denote by $D(\eps, d_J)$ the $\eps$-packing number defined by the maximum number of possible points in $\calS^{d - 1}\times\calI$ that are at least $\eps$-away from each other under the $d_J$ metric, and it is clear that $\log D(\eps, d_J) \lesssim -d\log (\sqrt{n}\eps)$. Under the $d_J$ metric, the diameter $T$ of $\calS^{d - 1}\times \calI$ is upper bounded by a constant multiple of $1/\sqrt{n}$. Then, by Corollary 2.2.5 of \cite{van2023weak}, we have
\begin{align*}
\bigg\|\sup_{\substack{\bz_1^*,\bz_2^*\in\calS^{d - 1} \\
 \theta_1,\theta_2\in\calI}}|J_{ink}^{(t)}(\bz_1^*, \theta_1) - J_{ink}^{(t)}(\bz_2^*, \theta_2)|\bigg\|_{\psi_1}
\lesssim \int_0^T \log(1 + D(\eps, d_J))\mathrm{d}\eps\lesssim \frac{1}{\sqrt{n}}.
\end{align*}
It follows that
\begin{align*}
\bigg\|\sup_{\substack{\bz\in\calS^{d - 1}\\ \theta\in\calI}}|J_{ink}^{(t)}(\bz^*, \theta)|\bigg\|_{\psi_1}
&\leq \bigg\|\sup_{\substack{\bz^*\in\calS^{d - 1}\\ \theta\in\calI}}|J_{ink}^{(t)}(\bz^*, \theta) - J_{ink}^{(t)}(\bz_{0i}^*, \theta_{0i}^{(t)})|\bigg\|_{\psi_1} + \|J_{ink}^{(t)}(\bz_{0i}^*, \theta_{0i}^{(t)})\|_{\psi_1}\lesssim \frac{1}{\sqrt{n}}.
\end{align*}
In particular, this implies that
\begin{align*}
\expect_0\bigg|\sup_{\substack{\bz\in\calS^{d - 1}\\ \theta\in\calI}}|J_{ink}^{(t)}(\bz^*, \theta)|\bigg| \lesssim \frac{1}{\sqrt{n}}.
\end{align*}
Hence, we obtain
\begin{align*}
\bigg\|\sup_{\substack{\bz\in\calS^{d - 1} \\ \theta\in\calI}}|J_{ink}^{(t)}(\bz^*, \theta)| - \expect_0\bigg|\sup_{\substack{\bz\in\calS^{d - 1} \\ \theta\in\calI}}|J_{ink}^{(t)}(\bz^*, \theta)|\bigg|\bigg\|_{\psi_1}\lesssim \frac{1}{\sqrt{n}},
\end{align*}
implying that
\[
  \sup_{\substack{\bz\in\calS^{d - 1}\\ \theta\in\calI}}|J_{ink}^{(t)}(\bz^*, \theta)| = \Optilde\bigg(\frac{\log n}{\sqrt{n}}\bigg).
\]
The proof of the first assertion is therefore completed. The proofs of the second and the third assertions proceed almost exactly the same. 
\end{proof}

\begin{lemma}[Local Convergence of Hessian]
\label{lemma:Hessian_local_convergence}
Suppose the conditions of Theorem \ref{thm:BvM} hold. Further assume $m\lesssim n^\alpha$ for some $\alpha > 0$. Then, 
for any $i\in[n]$, we have
\begin{align*}
\sup_{\substack{\|\bkappa(\bz_i^*) - \bz_{0i}\|_2\leq \eps_n\\ \|\btheta_i - \btheta_{0i}\|_\infty \leq \eps_n}}
&
\bigg\|\frac{1}{mn}
\sum_{t = 1}^m\sum_{j = 1}^n\begin{bmatrix}
\frac{\partial^2}{\partial\bz_i^*\partial\bz_i^{*\mathrm{T}}}\log f(A_{ij}^{(t)}; \theta_{i}^{(t)}\bkappa(\bz_i^*)\transpose\widetilde{\by}_j^{(t)})
& \frac{\partial^2}{\partial\bz_i^*\partial\btheta_i^{\mathrm{T}}}\log f(A_{ij}^{(t)}; \theta_{i}^{(t)}\bkappa(\bz_i^*)\transpose\widetilde{\by}_j^{(t)})\\
\frac{\partial^2}{\partial\btheta_i\partial\bz_i^{*\mathrm{T}}}\log f(A_{ij}^{(t)}; \theta_{i}^{(t)}\bkappa(\bz_i^*)\transpose\widetilde{\by}_j^{(t)})
&
\frac{\partial^2}{\partial\btheta_i\partial\btheta_i^{\mathrm{T}}}\log f(A_{ij}^{(t)}; \theta_{i}^{(t)}\bkappa(\bz_i^*)\transpose\widetilde{\by}_j^{(t)})
\end{bmatrix}\\
& + \begin{bmatrix}\bJ\transpose & \\ & \eye_m\end{bmatrix}\bF_{in}\begin{bmatrix}\bJ & \\ & \eye_m\end{bmatrix}\bigg\|_2 = \Optilde\bigg\{\frac{(\log n)^\xi}{\sqrt{n}} + \eps_n\bigg\},
\end{align*}
where $(\eps_n)_{n = 1}^\infty$ is a positive sequence converging to $0$ as $n\to\infty$. 
\end{lemma}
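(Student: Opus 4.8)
The plan is to split the empirical Hessian into a stochastic fluctuation term, handled uniformly by the uniform law of large numbers (Lemma~\ref{lemma:ULLN}), and a deterministic population Hessian, which I identify exactly with the Fisher information at the truth and control over the $\eps_n$-ball by a Taylor estimate. Write $\widehat{\bH}_{in}(\bz_i^*,\btheta_i)$ for the matrix appearing in the statement (the $(mn)^{-1}$-normalized Hessian of $\widetilde{\ell}_{in}$ in the coordinates $(\bz_i^*,\btheta_i)$), and let $\overline{\bH}_{in}(\bz_i^*,\btheta_i)$ denote its population analogue obtained by substituting $A_{ij}^{(t)}\to P_{0ij}^{(t)}$ and $\widetilde{\by}_j^{(t)}\to\by_{0j}^{(t)}$ throughout.

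First I would recall the explicit second-order partial-derivative formulas for $\widetilde{\ell}_{in}$ already derived in the proof of Lemma~\ref{lemma:aggregated_MLE_theory}. Every block of $\widehat{\bH}_{in}$ is a sum over $t\in[m]$ and $j\in[n]$ of terms of the form $\{\eta''(\cdot)(A_{ij}^{(t)}-\cdot)-\eta'(\cdot)\}$ or $\eta'(\cdot)(A_{ij}^{(t)}-\cdot)$ multiplied by outer products of $\widetilde{\by}_j^{(t)}$ projected through $\bJ$ and/or $\bkappa(\bz_i^*)$. These are precisely the quantities controlled by the three assertions of Lemma~\ref{lemma:ULLN} (applied with $\psi\in\{\eta',\eta''\}$), which give, uniformly over the whole domain and for each layer $t$,
\[
\sup_{\bz_i^*\in\calS^{d-1},\,\btheta_i\in\calI^m}\big\|\widehat{\bH}_{in}(\bz_i^*,\btheta_i)-\overline{\bH}_{in}(\bz_i^*,\btheta_i)\big\|_2 = \Optilde\bigg\{\frac{(\log n)^\xi}{\sqrt n}\bigg\}.
\]

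Next I would assemble the per-layer bounds into the spectral norm of the full $(d-1+m)$-dimensional error matrix via a triangle inequality over its bordered-plus-diagonal block structure. The $\bz_i^*$--$\bz_i^*$ block averages $m$ layer terms and is $\Optilde\{(\log n)^\xi/\sqrt n\}$; each $\bz_i^*$--$\theta_i^{(t)}$ cross term carries only a single-layer $(mn)^{-1}\sum_j$ and is $\Optilde\{(\log n)^\xi/(m\sqrt n)\}$, so the $(d-1)\times m$ off-diagonal block has $\|\cdot\|_2\le\|\cdot\|_{\mathrm{F}}=\Optilde\{(\log n)^\xi/(\sqrt m\sqrt n)\}$; the $\btheta_i$--$\btheta_i$ block is diagonal with entries $\Optilde\{(\log n)^\xi/(m\sqrt n)\}$. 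Hence the fluctuation error is dominated by the $\bz_i^*$--$\bz_i^*$ block. Here the hypothesis $m\lesssim n^\alpha$ is exactly what permits a union bound over the $m$ layers to preserve the $\Optilde$ statements, since the polynomial failure probabilities may be driven below any prescribed rate by increasing the exponent $c$.

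Finally I would identify the population Hessian at the truth and bound its local variation. Because $P_{0ij}^{(t)}=\theta_{0i}^{(t)}\bz_{0i}\transpose\by_{0j}^{(t)}$, the residual $P_{0ij}^{(t)}-\theta_{0i}^{(t)}\bkappa(\bz_{0i}^*)\transpose\by_{0j}^{(t)}$ vanishes, the $\eta''$ term drops out, and using the exponential-family identity $\eta'(\mu)=1/\var$ (from $B'(\mu)=\mu\eta'(\mu)$) together with $\by_{0j}^{(t)}\by_{0j}^{(t)\transpose}=(\theta_{0j}^{(t)})^2\bB_0^{(t)}\bz_{0j}\bz_{0j}\transpose\bB_0^{(t)}$ and the definition of $\bG_{0in}^{(t)}$ yields
\[
\overline{\bH}_{in}(\bz_{0i}^*,\btheta_{0i}) = -\begin{bmatrix}\bJ\transpose & \\ & \eye_m\end{bmatrix}\bF_{in}\begin{bmatrix}\bJ & \\ & \eye_m\end{bmatrix}.
\]
A first-order Taylor expansion of the smooth map $(\bz_i^*,\btheta_i)\mapsto\overline{\bH}_{in}(\bz_i^*,\btheta_i)$, with remainder bounded by $\sup_{\calI}(|\eta''|+|\eta'''|)<\infty$ (Assumption~\ref{assumption:likelihood}(b)) and the boundedness of $\by_{0j}^{(t)}$ and $P_{0ij}^{(t)}$, controls its variation over the $\eps_n$-ball by $O(\eps_n)$; adding this to the fluctuation bound gives the claimed $\Optilde\{(\log n)^\xi/\sqrt n+\eps_n\}$. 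I expect the main obstacle to be the assembly step: converting the per-layer, vector- or matrix-valued uniform bounds of Lemma~\ref{lemma:ULLN} into a single spectral-norm estimate on the joint Hessian while tracking the $1/m$ normalizations, so that the cross and $\btheta_i$ blocks are genuinely subdominant and the union bound over layers survives.
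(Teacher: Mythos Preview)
Your proposal is correct. The decomposition into a stochastic fluctuation $\widehat{\bH}_{in}-\overline{\bH}_{in}$ controlled uniformly by Lemma~\ref{lemma:ULLN}, followed by a deterministic Taylor bound on $\overline{\bH}_{in}(\cdot)-\overline{\bH}_{in}(\bz_{0i}^*,\btheta_{0i})$, is valid; the block-norm bookkeeping and the use of $m\lesssim n^\alpha$ for the layer-wise union bound are both accurate, and the identification $\overline{\bH}_{in}(\bz_{0i}^*,\btheta_{0i})=-\bGamma_{in}$ via $\eta'(\mu)=1/\var$ is exactly right.

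The paper takes a slightly different route: rather than invoking Lemma~\ref{lemma:ULLN} (a chaining result uniform over the whole domain) for the stochastic piece, it works directly on the $\eps_n$-ball and decomposes each Hessian block as a fixed-point random sum $\frac{1}{n}\sum_j E_{ij}^{(t)}\psi(P_{0ij}^{(t)})\by_{0j}^{(t)}\by_{0j}^{(t)\mathrm{T}}$ plus correction terms that are bounded by the local closeness $|\theta_i^{(t)}\bkappa(\bz_i^*)\transpose\widetilde{\by}_j^{(t)}-P_{0ij}^{(t)}|=\Optilde\{(\log n)^\xi/\sqrt n+\eps_n\}$. The leading random term is then handled by pointwise sub-exponential concentration (Lemma~3.7 of \cite{10.1214/11-AOP734}) rather than a uniform LLN. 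Your approach is more modular---it cleanly separates randomness from the deterministic population drift and reuses an already-established uniform lemma---whereas the paper's approach is more self-contained and avoids appealing to the stronger chaining argument. Both yield the same rate and neither is materially shorter.
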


\begin{proof}[\bf Proof]
By definition, we have
\begin{align*}
&\frac{\partial^2}{\partial\bz_i^*\partial\bz_i^{*\mathrm{T}}}\sum_{t = 1}^m\sum_{j = 1}^n\log f(A_{ij}^{(t)}, \theta_i^{(t)}\bkappa(\bz_i^*)\transpose\widetilde{\by}_j^{(t)})\\
&\quad = \sum_{t = 1}^m\sum_{j = 1}^n\theta_i^{2(t)}\{\eta''(\theta_i^{(t)}\bkappa(\bz_i^*)\transpose\widetilde{\by}_j^{(t)})(A_{ij}^{(t)} - \theta_i^{(t)}\bkappa(\bz_i^*)\transpose\widetilde{\by}_j^{(t)}) - \eta'(\theta_i^{(t)}\bkappa(\bz_i^*)\transpose\widetilde{\by}_j^{(t)})\}\bJ\transpose\widetilde{\by}_j^{(t)}\widetilde{\by}_j^{(t)\mathrm{T}}\bJ,\\
&\frac{\partial^2}{\partial\theta_i^{2(t)}}\sum_{t = 1}^m\sum_{j = 1}^n\log f(A_{ij}^{(t)}, \theta_i^{(t)}\bkappa(\bz_i^*)\transpose\widetilde{\by}_j^{(t)})\\
&\quad =\sum_{j = 1}^n \{\eta''(\theta_i^{(t)}\bkappa(\bz_i^*)\transpose\widetilde{\by}_j^{(t)})(A_{ij}^{(t)} - \theta_i^{(t)}\bkappa(\bz_i^*)\transpose\widetilde{\by}_j^{(t)}) - \eta'(\theta_i^{(t)}\bkappa(\bz_i^*)\transpose\widetilde{\by}_j^{(t)})\}(\bkappa(\bz_i^*)\transpose\widetilde{\by}_j^{(t)})^2,\\
&\frac{\partial^2}{\partial\bz_i^*\partial\theta_i^{(t)\mathrm{T}}}\sum_{t = 1}^m\sum_{j = 1}^n\log f(A_{ij}^{(t)}, \theta_i^{(t)}\bkappa(\bz_i^*)\transpose\widetilde{\by}_j^{(t)})\\
&\quad = \sum_{j = 1}^n \theta_i^{(t)}\{\eta''(\theta_i^{(t)}\bkappa(\bz_i^*)\transpose\widetilde{\by}_j^{(t)})(A_{ij}^{(t)} - \theta_i^{(t)}\bkappa(\bz_i^*)\transpose\widetilde{\by}_j^{(t)}) - \eta'(\theta_i^{(t)}\bkappa(\bz_i^*)\transpose\widetilde{\by}_j^{(t)})\}\bJ\transpose\widetilde{\by}_j^{(t)}\widetilde{\by}_j^{(t)\mathrm{T}}\bkappa(\bz_i^*)\\ 
&\qquad\quad - \sum_{j = 1}^n\eta'(\theta_i^{(t)}\bkappa(\bz_i^*)\transpose\widetilde{\by}_j)(A_{ij}^{(t)} - \theta_i^{(t)}\bkappa(\bz_i^*)\transpose\widetilde{\by}_j^{(t)})\bJ\transpose\widetilde{\by}_j^{(t)}.
\end{align*}
It is sufficient to show that, for any continuously differentiable function $\psi(\cdot)$ over $\calI$ and any $t\in[m]$,
\begin{align*}
\sup_{\substack{\|\bkappa(\bz_i^*) - \bz_{0i}\|_2 \leq \eps_n\\
                \|\btheta - \btheta_i\|_\infty\leq \eps_n}}
\bigg\|\frac{1}{n}\sum_{j = 1}^n(A_{ij}^{(t)} - \theta_i^{(t)}\bkappa(\bz_i^*)\transpose\widetilde{\by}_j^{(t)})\psi(\theta_i^{(t)}\bkappa(\bz_i^*)\transpose\widetilde{\by}_j^{(t)})\widetilde{\by}_j^{(t)}\widetilde{\by}_j^{(t)\mathrm{T}}\bigg\|_2
& = \Optilde\bigg\{\frac{(\log n)^\xi}{\sqrt{n}} + \eps_n\bigg\},\\
\sup_{\substack{\|\bkappa(\bz_i^*) - \bz_{0i}\|_2 \leq \eps_n\\
                \|\btheta - \btheta_i\|_\infty\leq \eps_n }}
\bigg\|\frac{1}{n}\sum_{j = 1}^n\eta'(\theta_i^{(t)}\bkappa(\bz_i^*)\transpose\widetilde{\by}_j^{(t)})\widetilde{\by}_j^{(t)}\widetilde{\by}_j^{(t)\mathrm{T}} - \bG_{0in}^{(t)}\bigg\|_2
& = \Optilde\bigg\{\frac{(\log n)^\xi}{\sqrt{n}} + \eps_n\bigg\}.
\end{align*}
Observe that
\begin{align*}
&\sup_{\substack{\|\bkappa(\bz_i^*) - \bz_{0i}\|_2 \leq \eps_n\\
                \|\btheta - \btheta_i\|_\infty\leq \eps_n}}
                |\theta_i^{(t)}\bkappa(\bz_i^*)\transpose\widetilde{\by}_j^{(t)} - \theta_{0i}^{(t)}\bz_{0i}\transpose\by_{0j}|\\
&\quad\leq 
\sup_{\substack{\|\bkappa(\bz_i^*) - \bz_{0i}\|_2 \leq \eps_n\\
                \|\btheta - \btheta_i\|_\infty\leq \eps_n }}
                \left\{
                |\theta_i^{(t)} - \theta_{0i}|\bkappa(\bz_i^*)\transpose\widetilde{\by}_j^{(t)} + \theta_{0i}^{(t)}|\{\bkappa(\bz_i^*)\transpose - \bz_{0i}\transpose\}\widetilde{\by}_j^{(t)}| + 
               \theta_{0i}^{(t)}|\bz_{0i}\transpose(\widetilde{\by}_j^{(t)} - \by_{0j})|
               \right\}\\
&\quad = \Optilde\bigg\{\frac{(\log n)^\xi}{\sqrt{n}} + \eps_n\bigg\},
\end{align*}
so that $|\psi(\theta_i^{(t)}\bkappa(\bz_i^*)\transpose\widetilde{\by}_j^{(t)})| = \Optilde(1)$ and 
\begin{align*}
\max_{j\in[n]}\sup_{\substack{\|\bkappa(\bz_i^*) - \bz_{0i}\|_2 \leq \eps_n\\
                \|\btheta - \btheta_i\|_\infty \leq \eps_n }}
|\psi(\theta_i^{(t)}\bkappa(\bz_i^*)\transpose\widetilde{\by}_j^{(t)}) - \psi(P_{0ij}^{(t)})|& = \Optilde\bigg\{\frac{(\log n)^\xi}{\sqrt{n}} + \eps_n\bigg\},\\
\max_{j\in[n]}\sup_{\substack{\|\bkappa(\bz_i^*) - \bz_{0i}\|_2 \leq \eps_n\\
                \|\btheta - \btheta_i\|_\infty \leq \eps_n}}
|\psi(\theta_i^{(t)}\bkappa(\bz_i^*)\transpose\widetilde{\by}_j^{(t)}) - \psi(P_{0ij}^{(t)})|& = \Optilde\bigg\{\frac{(\log n)^\xi}{\sqrt{n}} + \eps_n\bigg\}.
\end{align*}
by mean-value theorem.
By definition, Lemma 3.7 in \cite{10.1214/11-AOP734}, and Lemma \ref{lemma:Y_expansion}, we have
\begin{align*}
&\sup_{\substack{\|\bkappa(\bz_i^*) - \bz_{0i}\|_2 \leq \eps_n\\
                \|\btheta - \btheta_i\|_\infty\leq \eps_n }}\bigg\|\frac{1}{n}\sum_{j = 1}^n(A_{ij}^{(t)} - \theta_i^{(t)}\bkappa(\bz_i^*)\transpose\widetilde{\by}_j^{(t)})\psi(\theta_i^{(t)}\bkappa(\bz_i^*)\transpose\widetilde{\by}_j^{(t)})\widetilde{\by}_j^{(t)}\widetilde{\by}_j^{(t)\mathrm{T}}\bigg\|_2\\
&\quad \leq \bigg\|\frac{1}{n}\sum_{j = 1}^nE_{ij}^{(t)}\psi(P_{0ij}^{(t)})\by_{0j}^{(t)}\by_{0j}^{(t)\mathrm{T}}\bigg\|_2\\
&\qquad + \sup_{\substack{\|\bkappa(\bz_i^*) - \bz_{0i}\|_2 \leq \eps_n\\
                \|\btheta - \btheta_i\|_\infty\leq \eps_n }}\bigg\|\frac{1}{n}\sum_{j = 1}^nE_{ij}^{(t)}\{
\psi(\theta_i^{(t)}\bkappa(\bz_i^*)\transpose\widetilde{\by}_j^{(t)}) - \psi(P_{0ij}^{(t)})\}\widetilde{\by}_j^{(t)}\widetilde{\by}_j^{(t)\mathrm{T}}\bigg\|_2\\
&\qquad + \sup_{\substack{\|\bkappa(\bz_i^*) - \bz_{0i}\|_2 \leq \eps_n\\
                \|\btheta - \btheta_i\|_\infty\leq \eps_n }}\bigg\|\frac{1}{n}\sum_{j = 1}^nE_{ij}^{(t)}
\psi(P_{0ij})(\widetilde{\by}_j^{(t)} - \by_{0j}^{(t)})\widetilde{\by}_j^{(t)\mathrm{T}}\bigg\|_2\\
&\qquad + \sup_{\substack{\|\bkappa(\bz_i^*) - \bz_{0i}\|_2 \leq \eps_n\\
                \|\btheta - \btheta_i\|_\infty\leq \eps_n }}\bigg\|\frac{1}{n}\sum_{j = 1}^nE_{ij}^{(t)}
\psi(P_{0ij})\by_{0j}^{(t)}(\widetilde{\by}_j^{(t)} - \by_{0j}^{(t)})\transpose\bigg\|_2\\
&\qquad + \sup_{\substack{\|\bkappa(\bz_i^*) - \bz_{0i}\|_2 \leq \eps_n\\
                \|\btheta - \btheta_i\|_\infty\leq \eps_n }}\bigg\|\frac{1}{n}\sum_{j = 1}^n(\theta_{0i}^{(t)}\bz_{0i}\transpose\by_{0j}^{(t)} - \theta_i^{(t)}\bkappa(\bz_i^*)\transpose\widetilde{\by}_j^{(t)})\psi(\theta_i^{(t)}\bkappa(\bz_i^*)\transpose\widetilde{\by}_j^{(t)})\widetilde{\by}_j^{(t)}\widetilde{\by}_j^{(t)\mathrm{T}}\bigg\|_2\\
&\quad = \Optilde\bigg\{\frac{(\log n)^\xi}{\sqrt{n}} + \eps_n\bigg\}
\end{align*}
and
\begin{align*}
&\sup_{\substack{\|\bkappa(\bz_i^*) - \bz_{0i}\|_2 \leq \eps_n\\
                \|\btheta - \btheta_i\|_\infty\leq \eps_n 
                }}
\bigg\|\frac{1}{n}\sum_{j = 1}^n\eta'(\theta_i^{(t)}\bkappa(\bz_i^*)\transpose\widetilde{\by}_j^{(t)})\widetilde{\by}_j^{(t)}\widetilde{\by}_j^{(t)\mathrm{T}} - \bG_{0in}^{(t)}\bigg\|_2\\
&\quad\leq \sup_{\substack{\|\bkappa(\bz_i^*) - \bz_{0i}\|_2 \leq \eps_n\\
                \|\btheta - \btheta_i\|_\infty\leq \eps_n }}
\bigg\|\frac{1}{n}\sum_{j = 1}^n\{\eta'(\theta_i^{(t)}\bkappa(\bz_i^*)\transpose\widetilde{\by}_j^{(t)}) - \eta'(P_{0ij}^{(t)})\}\widetilde{\by}_j^{(t)}\widetilde{\by}_j^{(t)\mathrm{T}} - \bG_{0in}^{(t)}\bigg\|_2\\
&\qquad + \sup_{\substack{\|\bkappa(\bz_i^*) - \bz_{0i}\|_2 \leq \eps_n\\
                \|\btheta - \btheta_i\|_\infty \leq \eps_n }}
\bigg\|\frac{1}{n}\sum_{j = 1}^n\eta'(P_{0ij}^{(t)})(\widetilde{\by}_j^{(t)} - \by_{0j}^{(t)})\widetilde{\by}_j^{(t)\mathrm{T}}\bigg\|_2
\\&\qquad
 + \sup_{\substack{\|\bkappa(\bz_i^*) - \bz_{0i}\|_2 \leq \eps_n\\
                \|\btheta - \btheta_i\|_\infty\leq \eps_n }}
\bigg\|\frac{1}{n}\sum_{j = 1}^n\eta'(P_{0ij}^{(t)})\by_{0j}^{(t)}(\widetilde{\by}_j^{(t)} - \by_{0j}^{(t)})\bigg\|_2\\
&\quad = \Optilde\bigg\{\frac{(\log n)^\xi}{\sqrt{n}} + \eps_n\bigg\}.
\end{align*}
The proof is thus completed. 
\end{proof}

We are now in a position to prove Theorem \ref{thm:BvM}. 
\begin{proof}[\bf Proof of Theorem \ref{thm:BvM}]
Denote by $\delta_n = (\log n)^{\xi/2}(\log\log n)/(mn)^{1/4}$ and $\eps_n = (\log n)^{\xi}/\sqrt{mn}$ for convenience. We first show that
\begin{align}\label{eqn:identifiability}
\inf_{\|\bkappa(\bz_i^*) - \bz_{0i}\|_2 + \|\btheta_i - \btheta_{0i}\|_\infty > \delta_n}\{\widetilde{\ell}_{in}(\widehat{\bz}_i^*, \widehat{\btheta}_i) - \widetilde{\ell}_{in}(\bz_i^*, \btheta_i)\}\geq 2(1 + \alpha)dmn^{1/2}(\log mn)\quad\text{w.h.p..}
\end{align}
To see this, first observe that
\begin{align*}
&\|\theta_i^{(t)}\bkappa(\bz_i^*) - \theta_{0i}^{(t)}\bz_{0i}\|_2\\
&\quad = \|\theta_i^{(t)}\bkappa(\bz_i^*) - \theta_{0i}^{(t)}\bz_{0i}\|_2\mathbbm{1}\bigg(
|\theta_i^{(t)} - \theta_{0i}^{(t)}| \geq\frac{\theta_{0i}^{(t)}}{4}\|\bkappa(\bz_i^*) - \bz_{0i}\|_2
\bigg)
\\ 
&\qquad + \|\theta_i^{(t)}\bkappa(\bz_i^*) - \theta_{0i}^{(t)}\bz_{0i}\|_2\mathbbm{1}\bigg(
|\theta_i^{(t)} - \theta_{0i}^{(t)}| < \frac{\theta_{0i}^{(t)}}{4}\|\bkappa(\bz_i^*) - \bz_{0i}\|_2
\bigg)\\ 
&\quad\geq \frac{\|\theta_i^{(t)}\bkappa(\bz_i^*) - \theta_{0i}^{(t)}\bz_{0i}\|_1}{\sqrt{d - 1}}\mathbbm{1}\bigg(
|\theta_i^{(t)} - \theta_{0i}^{(t)}| \geq\frac{\theta_{0i}^{(t)}}{4}\|\bkappa(\bz_i^*) - \bz_{0i}\|_2
\bigg)\\ 
&\qquad + \left(\theta_{0i}^{(t)}\|\bkappa(\bz_i^*) - \bz_{0i}\|_2 - |\theta_i^{(t)} - \theta_{0i}^{(t)}|\|\bkappa(\bz_{i}^*)\|_2\right)\mathbbm{1}\bigg(
|\theta_i^{(t)} - \theta_{0i}^{(t)}| < \frac{\theta_{0i}^{(t)}}{4}\|\bkappa(\bz_i^*) - \bz_{0i}\|_2
\bigg)\\ 
&\quad\geq \frac{|\|\theta_i^{(t)}\bkappa(\bz_i^*)\|_1 - \|\theta_{0i}^{(t)}\bz_{0i}\|_1|}{\sqrt{d - 1}}\mathbbm{1}\bigg(
|\theta_i^{(t)} - \theta_{0i}^{(t)}| \geq\frac{\theta_{0i}^{(t)}}{4}\|\bkappa(\bz_i^*) - \bz_{0i}\|_2
\bigg)\\
&\qquad + \left(\frac{3}{4}\theta_{0i}^{(t)}\|\bkappa(\bz_i^*) - \bz_{0i}\|_2\right)\mathbbm{1}\bigg(
|\theta_i^{(t)} - \theta_{0i}^{(t)}| < \frac{\theta_{0i}^{(t)}}{4}\|\bkappa(\bz_i^*) - \bz_{0i}\|_2
\bigg)\\ 
&\quad\geq \left(\frac{|\theta_i^{(t)} - \theta_{0i}^{(t)}|}{2\sqrt{d - 1}} + \frac{\theta_{0i}^{(t)}\|\bkappa(\bz_i^*) - \bz_{0i}\|_2}{8\sqrt{d - 1}}\right)\mathbbm{1}\bigg(
|\theta_i^{(t)} - \theta_{0i}^{(t)}| \geq\frac{\theta_{0i}^{(t)}}{4}\|\bkappa(\bz_i^*) - \bz_{0i}\|_2
\bigg)\\ 
&\qquad + \left(\frac{1}{2}\theta_{0i}^{(t)}\|\bkappa(\bz_i^*) - \bz_{0i}\|_2
 + |\theta_{0i}^{(t)} - \theta_i^{(t)}|\right)\mathbbm{1}\bigg(
|\theta_i^{(t)} - \theta_{0i}^{(t)}| < \frac{\theta_{0i}^{(t)}}{4}\|\bkappa(\bz_i^*) - \bz_{0i}\|_2\bigg)\\ 
&\quad\geq c\left(|\theta_i^{(t)} - \theta_{0i}^{(t)}|^2 + \|\bkappa(\bz_i^*) - \bz_{0i}\|_2^2\right)^{1/2}.
\end{align*}
Then, by the property of exponential family and mean-value theorem, for any $\delta > 0$,
\begin{align*}
&\inf_{\|\bkappa(\bz_i^*) - \bz_{0i}\|_2 + \|\btheta_i - \btheta_{0i}\|_\infty > \delta}\sum_{t = 1}^m\sum_{j = 1}^n\left\{\expect_0\log f(A_{ij}^{(t)}, P_{0ij}^{(t)}) - \expect_0\log f(A_{ij}^{(t)}; \theta_i^{(t)}\bkappa(\bz_i^*)\transpose\by_{0j}^{(t)})\right\}\\ 
&\quad = \inf_{\|\bkappa(\bz_i^*) - \bz_{0i}\|_2 + \|\btheta_i - \btheta_{0i}\|_\infty > \delta}\sum_{t = 1}^m\sum_{j = 1}^n\int_{\theta_i^{(t)}\bkappa(\bz_i^*)\transpose\by_{0j}}^{P_{0ij}^{(t)}}\eta'(\theta)(P_{0ij}^{(t)} - \theta)\mathrm{d}\theta\\ 
&\quad\gtrsim \inf_{\|\bkappa(\bz_i^*) - \bz_{0i}\|_2 + \|\btheta_i - \btheta_{0i}\|_\infty > \delta}\sum_{t = 1}^m\sum_{j = 1}^n\int_{\theta_i^{(t)}\bkappa(\bz_i^*)\transpose\by_{0j}}^{P_{0ij}^{(t)}}(P_{0ij}^{(t)} - \theta)\mathrm{d}\theta\\ 
&\quad = \inf_{\|\bkappa(\bz_i^*) - \bz_{0i}\|_2 + \|\btheta_i - \btheta_{0i}\|_\infty > \delta}\sum_{t = 1}^m\sum_{j = 1}^n\{P_{0ij}^{(t)} - \theta_i^{(t)}\bkappa(\bz_i^*)\transpose\by_{0j}^{(t)}\}^2\\ 
&\quad = \inf_{\|\bkappa(\bz_i^*) - \bz_{0i}\|_2 + \|\btheta_i - \btheta_{0i}\|_\infty > \delta}\sum_{t = 1}^m(\theta_i^{(t)}\bkappa(\bz_i^*) - \theta_{0i}^{(t)}\bz_{0i})\transpose\sum_{j = 1}^n\by_{0j}^{(t)}\by_{0j}^{(t)\mathrm{T}}(\theta_i^{(t)}\bkappa(\bz_i^*) - \theta_{0i}^{(t)}\bz_{0i})\\ 
&\quad\gtrsim n\inf_{\|\bkappa(\bz_i^*) - \bz_{0i}\|_2 + \|\btheta_i - \btheta_{0i}\|_\infty > \delta}\sum_{t = 1}^m\|\theta_i^{(t)}\bkappa(\bz_i^*) - \theta_{0i}^{(t)}\bz_{0i}\|_2^2\\ 
&\quad\gtrsim n\inf_{\|\bkappa(\bz_i^*) - \bz_{0i}\|_2 + \|\btheta_i - \btheta_{0i}\|_\infty > \delta}\sum_{t = 1}^m\{(\theta_i^{(t)} - \theta_{0i}^{(t)})^2 + \|\bkappa(\bz_i^*) - \bz_{0i}\|_2^2\} \gtrsim  mn\delta^2.
\end{align*}
Denote by 
\[
  \bM_{in}(\bz_i^*, \btheta_i)
  = \frac{1}{mn}\sum_{t = 1}^m\sum_{j = 1}^n\expect\{\log f(A_{ij}^{(t)}; \theta_i^{(t)}\bkappa(\bz_i^*)\by_{0j}^{(t)})\}.
\]
Then, following a similar reasoning, we have
\begin{align*}
|\bM_{in}(\bz_{0i}^*, \btheta_{0i}) - \bM_{in}(\widehat{\bz}_i^*, \widehat{\btheta}_i)|
&\lesssim \frac{1}{m}\sum_{t = 1}^m\{(\widehat{\theta}_i^{(t)} - \theta_{0i}^{(t)})^2 + \|\bkappa(\widehat{\bz}_i^*) - \bz_{0i}\|_2^2\}\\
& = \Optilde\bigg\{\frac{(\log n)^{2\xi}}{mn} + \frac{(\log n)^{6\xi}}{n^2}\bigg\}\leq \Optilde\bigg\{\frac{(\log n)^{2\xi}}{n}\bigg\}.
\end{align*}
Together with Lemma \ref{lemma:ULLN}, we obtain
\begin{align*}
&\inf_{\|\bkappa(\bz_i^*) - \bz_{0i}\|_2 + \|\btheta_i - \btheta_{0i}\|_\infty > \delta_n}\frac{1}{mn}\left\{\widetilde{\ell}_{in}(\widehat{\bz}_i^*, \widehat{\btheta}_i) - \widetilde{\ell}_{in}(\bz_i^*, \btheta_i)\right\}\\ 
&\quad\geq \left\{\frac{1}{mn}\widetilde{\ell}_{in}(\widehat{\bz}_i^*, \widehat{\btheta}_i) - M_{in}(\bz_{0i}, \widehat{\btheta}_i)\right\} + \{M_{in}(\bz_{0i}, \widehat{\btheta}_i) - M_{in}(\bz_{0i}^*, \btheta_{0i})\}\\ 
&\qquad + \inf_{\|\bkappa(\bz_i^*) - \bz_{0i}\|_2 + \|\btheta_i - \btheta_{0i}\|_\infty > \delta_n}\{M_{in}(\bz_{0i}^*, \btheta_{0i}) - M_{in}(\bz_i^*, \btheta_i)\}\\ 
&\qquad + \inf_{\|\bkappa(\bz_i^*) - \bz_{0i}\|_2 + \|\btheta_i - \btheta_{0i}\|_\infty > \delta_n}\bigg\{M_{in}(\bz_i^*, \btheta_i) - \frac{1}{mn}\widetilde{\ell}_{in}(\bz_i^*, \btheta_i)\bigg\}\\ 
&\quad\geq \inf_{\|\bkappa(\bz_i^*) - \bz_{0i}\|_2 + \|\btheta_i - \btheta_{0i}\|_\infty > \delta_n}\{M_{in}(\bz_{0i}^*, \btheta_{0i}) - M_{in}(\bz_i^*, \btheta_i)\}\\
&\qquad - 2\sup_{\bz_i^*\in\calS^{d - 1},\btheta_i\in\calK^m}\bigg|\frac{1}{mn}\widetilde{\ell}_{in}(\bz_i^*, \btheta_i) - M_{in}(\bz_i^*, \btheta_i)\bigg|
 - |\bM_{in}(\bz_{0i}^*, \btheta_{0i}) - \bM_{in}(\widehat{\bz}_i^*, \widehat{\btheta}_i)|\\ 
&\quad \geq c_1\delta_n^2 - \frac{C_2(\log n)^\xi}{\sqrt{n}} - \frac{C_3(\log n)^{2\xi}}{n}\geq \frac{c_1}{2}\delta_n^2\\
&\quad = \frac{c_1}{2}\frac{(\log n)^{\xi}(\log\log n)^2}{\sqrt{mn}}\geq 2(1 + \alpha)d\frac{(\log mn)}{\sqrt{mn}}\quad\text{w.h.p.},
\end{align*}
thereby establishing \eqref{eqn:identifiability}. By Lemma \ref{lemma:Hessian_local_convergence}, it is clear that
\begin{align}
\label{eqn:Hessian_locally_bounded}
&\inf_{\|\bkappa(\bz_i^*) - \bz_{0i}\|_2 + \|\btheta_i - \btheta_{0i}\|_\infty \leq 5d\delta_n}\lambda_{\min}\bigg\{-\frac{1}{mn}\mathscr{H}_{\widetilde{\ell}_{in}}(\bz_i^*, \btheta_i)\bigg\} \gtrsim 1\quad\text{w.h.p.},\\
\label{eqn:Hessian_local_concentration}
&\sup_{\|\bkappa(\bz_i^*) - \bz_{0i}\|_2 + \|\btheta_i - \btheta_{0i}\|_\infty \leq \eps_n}\left\|-\frac{1}{mn}\mathscr{H}_{\widetilde{\ell}_{in}}(\bz_i^*, \btheta_i) + \bGamma_{in}\right\|_2\leq\frac{C(\log n)^{\xi}}{\sqrt{n}}\quad\text{w.h.p.},
\end{align}
where we denote by $\mathscr{H}_{\widetilde{\ell}_{in}}$ the Hessian matrix of $\widetilde{\ell}_{in}(\bz_i^*, \btheta_i)$. 

Now, we are in a position to prove the theorem. By definition, $(\bz_i^*, \btheta_i) = (\widehat{\bz}_i^*, \widehat{\btheta}_i) + \bt/\sqrt{mn}$. Define
\begin{align*}
\calD_{in} = \iint_{\mathbb{R}^{d + m - 1}}&\exp\bigg\{\widetilde{\ell}_{in}\bigg(\widehat{\bz}_i^* + \frac{\bt_\bz}{\sqrt{mn}}, \widehat{\btheta}_i + \frac{\bt_\btheta}{\sqrt{mn}}\bigg)\bigg\}\\
&\times\pi_{\bz^*}\bigg(\widehat{\bz}_i^* + \frac{\bt_\bz}{\sqrt{mn}}\bigg)\prod_{t = 1}^m\pi_\theta\bigg(\widehat{\theta}_i + \frac{t_{\theta_i}}{\sqrt{mn}}\bigg)\mathbbm{1}(\bt\in\widehat{\Theta}_i)\mathrm{d}\bt_\bz\mathrm{d}\bt_\btheta,
\end{align*}
where $\widehat{\Theta}_i = \{\bt = (\bt_\bz, \bt_\btheta):\widehat{\bz}_i^* + \bt_\bz/\sqrt{mn}\in\calS^{d - 1}, \widehat{\btheta}_i + \bt_\btheta/\sqrt{mn}\in\calK^m\}$. Note that $\sup_{\bt\in\widehat{\Theta}_i}(\|\bt_\bz\|_2 + \|\bt_\btheta\|_\infty) \lesssim \sqrt{mn}$. For convenience, denote by $\pi_{0i} = \pi_{\bz^*}(\bz_i^*)\prod_{t = 1}^m\pi_\theta(\theta_{0i}^{(t)})$. By the change of variable formula, it is clear that
\begin{align*}
\pi_{(\bt_\bz, \bt_\btheta)}(\bt_\bz, \bt_\btheta\mid\mathbb{A}) & = \frac{1}{\calD_{in}}\exp\bigg\{\widetilde{\ell}_{in}\bigg(\widehat{\bz}_i^* + \frac{\bt_\bz}{\sqrt{mn}}, \widehat{\btheta}_i + \frac{\bt_\btheta}{\sqrt{mn}}\bigg)\bigg\}\\
&\quad\times\pi_{\bz^*}\bigg(\widehat{\bz}_i^* + \frac{\bt_\bz}{\sqrt{mn}}\bigg)\prod_{t = 1}^m\pi_\theta\bigg(\widehat{\theta}_i + \frac{t_{\theta_i}}{\sqrt{mn}}\bigg)\mathbbm{1}(\bt\in\widehat{\Theta}_i).
\end{align*}
Also, observe that
\begin{align*}
&\frac{1}{\calD_{in}}\iint_{\mathbb{R}^{d + m - 1}}(1 + \|\bt_\bz\|_2^\alpha)\bigg|\exp\bigg\{\widetilde{\ell}_{in}\bigg(\widehat{\bz}_i^* + \frac{\bt_\bz}{\sqrt{mn}}, \widehat{\btheta}_i + \frac{\bt_\btheta}{\sqrt{mn}}\bigg) - \widetilde{\ell}_{in}(\widehat{\bz}_i^*, \widehat{\btheta}_i^*)\bigg\}\pi_{\bz^*}\bigg(\widehat{\bz}_i^* + \frac{\bt_\bz}{\sqrt{mn}}\bigg)\\
&\qquad\qquad\qquad\qquad\qquad\quad \times\prod_{t = 1}^m\pi_\theta\bigg(\widehat{\theta}_i + \frac{t_{\theta_i}}{\sqrt{mn}}\bigg)\mathbbm{1}(\bt\in\widehat{\Theta}_i)
 - \frac{\calD_{in}e^{-\bt\transpose\bGamma_{in}\bt/2}}{\det(2\pi\bGamma_{in}^{-1})^{1/2}}\bigg|\mathrm{d}\bt_\bz\mathrm{d}\bt_\btheta\\
&\quad\leq \frac{1}{\calD_{in}}\iint_{\mathbb{R}^{d + m - 1}}(1 + \|\bt_\bz\|_2^\alpha)\bigg|\exp\bigg\{\widetilde{\ell}_{in}\bigg(\widehat{\bz}_i^* + \frac{\bt_\bz}{\sqrt{mn}}, \widehat{\btheta}_i + \frac{\bt_\btheta}{\sqrt{mn}}\bigg) - \widetilde{\ell}_{in}(\widehat{\bz}_i^*, \widehat{\btheta}_i^*)\bigg\}\pi_{\bz^*}\bigg(\widehat{\bz}_i^* + \frac{\bt_\bz}{\sqrt{mn}}\bigg)\\
&\qquad\qquad\qquad\qquad\qquad\qquad\qquad \times \prod_{t = 1}^m\pi_\theta\bigg(\widehat{\theta}_i + \frac{t_{\theta_i}}{\sqrt{mn}}\bigg)\mathbbm{1}(\bt\in\widehat{\Theta}_i)
 - e^{-\bt\transpose\bGamma_{in}\bt/2}
 \pi_{0i}
 \bigg|\mathrm{d}\bt_\bz\mathrm{d}\bt_\btheta\\ 
&\qquad + \frac{1}{\calD_{in}}\bigg|
\pi_{0i}
 - \frac{\calD_{in}}{\det(2\pi\bGamma_{in}^{-1})^{1/2}}\bigg|\iint_{\mathbb{R}^{d + m - 1}}(1 + \|\bt_\bz\|_2^\alpha)e^{-\bt\transpose\bGamma_{in}\bt/2}\mathrm{d}\bt.
\end{align*}
It is sufficient to show that
\begin{align*}
&\iint_{\mathbb{R}^{d + m - 1}}(1 + \|\bt_\bz\|_2^\alpha)\bigg|\exp\bigg\{\widetilde{\ell}_{in}\bigg(\widehat{\bz}_i^* + \frac{\bt_\bz}{\sqrt{mn}}, \widehat{\btheta}_i + \frac{\bt_\btheta}{\sqrt{mn}}\bigg) - \widetilde{\ell}_{in}(\widehat{\bz}_i^*, \widehat{\btheta}_i^*)\bigg\}\pi_{\bz^*}\bigg(\widehat{\bz}_i^* + \frac{\bt_\bz}{\sqrt{mn}}\bigg)
\\
&\qquad\qquad\qquad\qquad\qquad 
\times\prod_{t = 1}^m\pi_\theta\bigg(\widehat{\theta}_i + \frac{t_{\theta_i}}{\sqrt{mn}}\bigg)\mathbbm{1}(\bt\in\widehat{\Theta}_i) - e^{-\bt\transpose\bGamma_{in}\bt/2}
\pi_{0i}
\bigg|\mathrm{d}\bt_\bz\mathrm{d}\bt_\btheta\times\frac{1}{\det(2\pi\bGamma_{in}^{-1})^{1/2}\pi_{0i}}\\ 
&\quad = \Optilde\bigg\{\frac{\sqrt{m}(\log n)^{3\xi}}{\sqrt{n}}\bigg\}.
\end{align*}
We decompose the integration domain into the following blocks
\begin{align*}
\calA_1 &= \{(\bt_\bz, \bt_\btheta)\in\widehat{\Theta}_i:\|\bt_\bz\|_2 + \|\bt_\btheta\|_\infty < \sqrt{mn}\eps_n/2\},\\ 
\calA_2 &= \{(\bt_\bz, \bt_\btheta)\in\widehat{\Theta}_i:\sqrt{mn}\eps_n/2\leq \|\bt_\bz\|_2 + \|\bt_\btheta\|_\infty\leq 2\sqrt{2mn}\delta_n\},\\
\calA_3 &= \{(\bt_\bz, \bt_\btheta)\in\widehat{\Theta}_i:\|\bt_\bz\|_2 + \|\bt_\btheta\|_\infty > 2\sqrt{2mn}\delta_n\},\\ 
\calA_4 &= \mathbb{R}^{d + m - 1}\backslash\widehat{\Theta}_i.
\end{align*}
$\blacksquare$ \underline{Integral over $\calA_4$.}
Since $(\bz_{0i}^*, \btheta_{0i})$ is inside the interior of $\calS^{d - 1}\times\calK^m$ with respect to the $\|\cdot\|_2\times\|\cdot\|_\infty$ metric, then there exists a constant $\delta_0 > 0$, such that $\{(\bz_i^*, \btheta_{i})\in\mathbb{R}^{d - 1}\times\mathbb{R}^m:\|\bz_i^* - \bz_{0i}\|_2 + \|\btheta - \btheta_{0i}\|_\infty < \delta_0\}\subset\calS^{d - 1}\times\calK^m$, and therefore, for any $(\bt_\bz, \bt_\btheta)\in\calA_4 = \mathbb{R}^{d + m - 1}\backslash\widehat{\Theta}_i$, 
\begin{align*}
\delta_0 \leq \bigg\|\widehat{\bz}_i^* + \frac{\bt_\bz}{\sqrt{mn}} - \bz_{0i}^*\bigg\|_2 + \bigg\|\widehat{\btheta}_i^* + \frac{\bt_\btheta}{\sqrt{mn}} - \btheta_{0i}\bigg\|_\infty\leq \|\widehat{\bz}_i^* - \bz_{0i}\|_2 + \|\widehat{\btheta}_i - \btheta_{0i}\|_\infty + \frac{\|\bt_\bz\|_2 + \|\bt_\btheta\|_\infty}{\sqrt{mn}}.
\end{align*}
By Lemma \ref{lemma:aggregated_MLE_theory}, we know that 
\[
\|\widehat{\bz}_i^* - \bz_{0i}\|_2 + \|\widehat{\btheta}_i - \btheta_{0i}\|_\infty = \Optilde\bigg\{\frac{(\log n)^\xi}{\sqrt{mn}} + \frac{(\log n)^{2\xi + \eps}}{n}\bigg\},
\]
so that $\|\bt_\bz\|_2 + \|\bt_\btheta\|_2 \geq \sqrt{mn}\delta_0/2$ w.h.p. if $t\in\calA_4$, implying that
\begin{align*}
&\iint_{\calA_4}(1 + \|\bt_\bz\|_2^\alpha)\bigg|\exp\bigg\{\widetilde{\ell}_{in}\bigg(\widehat{\bz}_i^* + \frac{\bt_\bz}{\sqrt{mn}}, \widehat{\btheta}_i + \frac{\bt_\btheta}{\sqrt{mn}}\bigg) - \widetilde{\ell}_{in}(\widehat{\bz}_i^*, \widehat{\btheta}_i^*)\bigg\}\pi_{\bz^*}\bigg(\widehat{\bz}_i^* + \frac{\bt_\bz}{\sqrt{mn}}\bigg)\\
&\qquad\qquad\qquad\qquad 
\times\prod_{t = 1}^m\pi_\theta\bigg(\widehat{\theta}_i + \frac{t_{\theta_i}}{\sqrt{mn}}\bigg)\mathbbm{1}(\bt\in\widehat{\Theta}_i) - e^{-\bt\transpose\bGamma_{in}\bt/2}
\pi_{0i}
\bigg|\mathrm{d}\bt_\bz\mathrm{d}\bt_\btheta\times\frac{1}{\det(2\pi\bGamma_{in}^{-1})^{1/2}\pi_{0i}}\\ 
&\quad\leq \frac{1}{\det(2\pi\bGamma_{in}^{-1})^{1/2}}\iint_{\calA_4}(1 + \|\bt_\bz\|_2^\alpha)e^{-\bt\transpose\bGamma_{in}\bt/2}\mathrm{d}\bt_\bz\mathrm{d}\bt_\btheta\\ 
&\quad\leq \frac{1}{\det(2\pi\bGamma_{in}^{-1})^{1/2}}\iint_{\{\bt\in\mathbb{R}^{d + m - 1}:\|\bt\|_2 > \sqrt{mn}\delta_0/2\sqrt{2}\}}(1 + \|\bt_\bz\|_2^\alpha)e^{-\bt\transpose\bGamma_{in}\bt/2}\mathrm{d}\bt_\bz\mathrm{d}\bt_\btheta\\ 
&\quad\lesssim c^m\int_{\{\bt\in\mathbb{R}^{d + m - 1}:\|\bt\|_2 > c\sqrt{mn}\}}(1 + \|\bt\|_2^\alpha)e^{-\|\bt\|_2^2/2}\mathrm{d}\bt\\ 
&\quad\leq c^me^{-c^2mn/8}\iint_{\mathbb{R}^{d + m - 1}}e^{-\|\bt\|_2^2/4}\mathrm{d}\bt \lesssim e^{-cmn}\quad\text{w.h.p..}
\end{align*}
$\blacksquare$ \underline{Integral over $\calA_3$.} For any $\bt\in\calA_3$, by Lemma \ref{lemma:aggregated_MLE_theory}, we have
\begin{align*}
&
\bigg\|\bkappa\bigg(\widehat{\bz}_i^* + \frac{\bt_\bz}{\sqrt{mn}}\bigg) - \bkappa(\bz_{0i})\bigg\|_2 + \bigg\|\widehat{\btheta}_i + \frac{\bt_\btheta}{\sqrt{mn}} - \btheta_{0i}\bigg\|_\infty\\
&\quad \geq \bigg\|\bJ\bigg(\widehat{\bz}_i^* + \frac{\bt_\bz}{\sqrt{mn}} - \bz_{0i}\bigg)\bigg\|_2 + \bigg\|\widehat{\btheta}_i + \frac{\bt_\btheta}{\sqrt{mn}} - \btheta_{0i}\bigg\|_\infty\\
&\quad\geq \frac{\|\bt_\bz\|_2 + \|\bt_\btheta\|_\infty}{\sqrt{mn}} - \|\bJ(\widehat{\bz}_i^* - \bz_{0i})\|_2 - \|\widehat{\btheta}_i - \btheta_{0i}\|_\infty\\ 
&\quad \geq 2{\delta_n} - \bigg|\Optilde\bigg\{\frac{(\log n)^\xi}{\sqrt{mn}} + \frac{(\log n)^{2\xi + \eps}}{n}\bigg\}\bigg|\geq {\delta_n}\quad\text{w.h.p..}
\end{align*}
Therefore, by \eqref{eqn:identifiability}, we obtain
\begin{align*}
&\iint_{\calA_3}(1 + \|\bt_\bz\|_2^\alpha)\bigg|\exp\bigg\{\widetilde{\ell}_{in}\bigg(\widehat{\bz}_i^* + \frac{\bt_\bz}{\sqrt{mn}}, \widehat{\btheta}_i + \frac{\bt_\btheta}{\sqrt{mn}}\bigg) - \widetilde{\ell}_{in}(\widehat{\bz}_i^*, \widehat{\btheta}_i^*)\bigg\}\pi_{\bz^*}\bigg(\widehat{\bz}_i^* + \frac{\bt_\bz}{\sqrt{mn}}\bigg)\\
&\qquad\qquad\qquad\qquad 
\times\prod_{t = 1}^m\pi_\theta\bigg(\widehat{\theta}_i + \frac{t_{\theta_i}}{\sqrt{mn}}\bigg)\mathbbm{1}(\bt\in\widehat{\Theta}_i) - e^{-\bt\transpose\bGamma_{in}\bt/2}
\pi_{0i}
\bigg|\mathrm{d}\bt_\bz\mathrm{d}\bt_\btheta\times\frac{1}{\det(2\pi\bGamma_{in}^{-1})^{1/2}\pi_{0i}}\\ 
&\quad\leq \iint_{\calA_3}(1 + \|\bt_\bz\|_2^\alpha)\exp\bigg\{\sup_{\|\bkappa(\bz_i^*) - \bz_{0i}\|_2 + \|\btheta_i - \btheta_{0i}\|_\infty > \delta_n}\widetilde{\ell}(\bz_i^*, \btheta_i) - \widetilde{\ell}_{in}(\widehat{\bz}_i^*, \widehat{\btheta}_i^*)\bigg\}\pi_{\bz^*}\bigg(\widehat{\bz}_i^* + \frac{\bt_\bz}{\sqrt{mn}}\bigg)\\ 
&\qquad\qquad\qquad\qquad\qquad
\times\prod_{t = 1}^m\pi_\theta\bigg(\widehat{\theta}_i + \frac{t_{\theta_i}}{\sqrt{mn}}\bigg)\mathrm{d}\bt_\bz\mathrm{d}\bt_\btheta\times\frac{1}{\det(2\pi\bGamma_{in}^{-1})^{1/2}\pi_{0i}}\\
&\qquad + \iint_{\calA_3}(1 + \|\bt_\bz\|_2^\alpha)\frac{e^{-\bt\transpose\bGamma_{in}\bt/2}\mathrm{d}\bt_\bz\mathrm{d}\bt_\btheta}{\det(2\pi\bGamma_{in}^{-1})^{1/2}}\\
&\quad\leq C_1C_2^m\iint_{\calA_3}(1 + \|\bt_\bz\|_2^\alpha)\bigg(\frac{1}{mn}\bigg)^{2(1 + \alpha)d(mn)^{1/2}}\mathrm{d}\bt_\bz\mathrm{d}\bt_\btheta
 + C_3\int_{\calA_3}\frac{(1 + \|\bt\|_2^\alpha)e^{-\|\bt\|_2^2}}{\sqrt{(2\pi)^{d + m - 1}}}\mathrm{d}\bt\\
&\quad\leq C_1C_2^m\int_{\{\bt\in\mathbb{R}^{d + m - 1}:\|\bt_\bz\|_2 + \|\bt_\btheta\|_\infty\leq c\sqrt{mn}\}}(1 + \|\bt_\bz\|_2^\alpha)\bigg(\frac{1}{mn}\bigg)^{2(1 + \alpha)d(mn)^{1/2}}\mathrm{d}\bt\\
&\qquad + C_3\int_{\{\bt\in\mathbb{R}^{d + m - 1}:\|\bt\|_2\geq c\sqrt{mn}\delta_n\}}\frac{(1 + \|\bt\|_2^\alpha)e^{-\|\bt\|_2^2}}{\sqrt{(2\pi)^{d + m - 1}}}\mathrm{d}\bt\\
&\quad\leq \frac{(C\sqrt{mn})^{d + m - 1}\{1 + (c\sqrt{mn})^\alpha\}}{(mn)^{2(1 + \alpha)d(mn)^{1/2}}} + C^me^{-c^2mn\delta_n^2/8}\leq \frac{C}{\sqrt{mn}}\quad\text{w.h.p..}
\end{align*}
$\blacksquare$ \underline{Integral over $\calA_2$.} By Taylor's theorem, there exists some $\omega_i(\bt)\in[0, 1]$, such that 
\[
\bar{\bz}_i^* = \widehat{\bz}_i^* + \frac{\omega_i(\bt)\bt_\bz}{\sqrt{mn}},\quad
\bar{\btheta}_i = \widehat{\btheta}_i + \frac{\omega_i(\bt)\bt_\btheta}{\sqrt{mn}},
\]
 and
\begin{align*}
\widetilde{\ell}_{in}\bigg(\widehat{\bz}_i^* + \frac{\bt_\bz}{\sqrt{mn}}, \widehat{\btheta}_i + \frac{\bt_\btheta}{\sqrt{mn}}\bigg) - \widetilde{\ell}_{in}(\widehat{\bz}_i^*, \widehat{\btheta}_i^*)
& = \frac{1}{2}\bt\transpose\mathscr{H}_{\widetilde{\ell}_{in}}(\bar{\bz}_i^*, \bar{\btheta}_i)\bt.
\end{align*}
By Lemma \ref{lemma:aggregated_MLE_theory}, for all $\bt\in\calA_2$, 
\begin{align*}
\|\bkappa(\bar{\bz}_i^*) - \bz_{0i}\|_2 + \|\bar{\btheta}_i - \btheta_{0i}\|_\infty
&\leq \|\bJ\|_2\bigg\|\widehat{\bz}_i^* - \bz_{0i}^* + \frac{\omega_i(\bt)\bt_\bz}{\sqrt{mn}}\bigg\|_2 + \bigg\|\widehat{\btheta}_i - \btheta_{0i} + \frac{\omega_i(\bt)\bt_\btheta}{\sqrt{mn}}\bigg\|_\infty\\
&\leq \|\bJ\|_2\|\widehat{\bz}_i^* - \bz_{0i}\|_2 + \|\widehat{\btheta}_i - \btheta_{0i}\|_\infty + \frac{1}{\sqrt{mn}}(\|\bJ\|_2\|\bt_\bz\|_2 + \|\bt_\btheta\|_\infty)\\
&\leq \Optilde\bigg\{\frac{(\log n)^\xi}{\sqrt{mn}} + \frac{(\log n)^{2\xi + \eps}}{n}\bigg\} + 4d\delta_n\leq 5d\delta_n\quad\text{w.h.p..}
\end{align*}
By \eqref{eqn:Hessian_locally_bounded}, this implies that, for any $\bt\in\calA_2$,
\begin{align*}
&\widetilde{\ell}_{in}\bigg(\widehat{\bz}_i^* + \frac{\bt_\bz}{\sqrt{mn}}, \widehat{\btheta}_i + \frac{\bt_\btheta}{\sqrt{mn}}\bigg) - \widetilde{\ell}_{in}(\widehat{\bz}_i^*, \widehat{\btheta}_i^*)\\
&\quad = -\frac{1}{2}\bt\transpose\bigg\{-\mathscr{H}_{\widetilde{\ell}_{in}}\bigg(\widehat{\bz}_i^* + \frac{\omega_i(\bt)\bt_\bz}{\sqrt{mn}}, \widehat{\btheta}_i + \frac{\omega_i(\bt)\bt_\btheta}{\sqrt{mn}}\bigg)\bigg\}\bt\\
&\quad\leq -\frac{1}{2}\|\bt\|_2^2\lambda_{\min}\bigg\{-\mathscr{H}_{\widetilde{\ell}_{in}}\bigg(\widehat{\bz}_i^* + \frac{\omega_i(\bt)\bt_\bz}{\sqrt{mn}}, \widehat{\btheta}_i + \frac{\omega_i(\bt)\bt_\btheta}{\sqrt{mn}}\bigg)\bigg\}\\
&\quad\leq -\frac{1}{2}\|\bt\|_2^2\inf_{\|\bkappa(\bz_i^*) - \bz_{0i}\|_2 + \|\btheta_i - \btheta_{0i}\|_\infty\leq 5d\delta_n}\lambda_{\min}\left\{-\calH_{\widetilde{\ell}_{in}}(\bz_i^*, \btheta_i)\right\}\\
&\quad\leq -c\|\bt\|_2^2\quad\text{w.h.p.}.
\end{align*}
Therefore, 
\begin{align*}
&\iint_{\calA_2}(1 + \|\bt_\bz\|_2^\alpha)\bigg|\exp\bigg\{\widetilde{\ell}_{in}\bigg(\widehat{\bz}_i^* + \frac{\bt_\bz}{\sqrt{mn}}, \widehat{\btheta}_i + \frac{\bt_\btheta}{\sqrt{mn}}\bigg) - \widetilde{\ell}_{in}(\widehat{\bz}_i^*, \widehat{\btheta}_i^*)\bigg\}\pi_{\bz^*}\bigg(\widehat{\bz}_i^* + \frac{\bt_\bz}{\sqrt{mn}}\bigg)\\
&\qquad\qquad\qquad\qquad 
\times\prod_{t = 1}^m\pi_\theta\bigg(\widehat{\theta}_i + \frac{t_{\theta_i}}{\sqrt{mn}}\bigg)\mathbbm{1}(\bt\in\widehat{\Theta}_i) - e^{-\bt\transpose\bGamma_{in}\bt/2}
\pi_{0i}
\bigg|\mathrm{d}\bt_\bz\mathrm{d}\bt_\btheta\times\frac{1}{\det(2\pi\bGamma_{in}^{-1})^{1/2}\pi_{0i}}\\ 
&\quad\leq \iint_{\calA_2}(1 + \|\bt_\bz\|_2^\alpha)e^{-c\|\bt\|_2^2}\pi_{\bz^*}\bigg(\widehat{\bz}_i^* + \frac{\bt_\bz}{\sqrt{mn}}\bigg)\prod_{t = 1}^m\pi_\theta\bigg(\widehat{\theta}_i + \frac{t_{\theta_i}}{\sqrt{mn}}\bigg)\mathrm{d}\bt\times\frac{1}{\det(2\pi\bGamma_{in}^{-1})^{1/2}\pi_{0i}}\\ 
&\qquad + \frac{1}{\det(2\pi\bGamma_{in}^{-1})^{1/2}}\int_{\calA_2}(1 + \|\bt_\bz\|_2^\alpha)e^{-\bt\transpose\bGamma_{in}\bt/2}\mathrm{d}\bt\\
&\quad\leq C^m\int_{\{\bt\in\mathbb{R}^{d + m - 1}:\|\bt\|_2\geq c\sqrt{mn}\eps_n\}}(1 + \|\bt\|_2^\alpha)e^{-c\|\bt\|_2^2}\mathrm{d}\bt\leq C^me^{-cmn\eps_n^2/8}\leq 2e^{-c_1mn\eps_n^2}.
\end{align*}
$\blacksquare$ \underline{Integral over $\calA_1$.} For all $\bt = (\bt_\bz, \bt_\btheta)\in\calA_1$, by Lemma \ref{lemma:aggregated_MLE_theory}, we have
\begin{align*}
\bigg\|\bkappa(\widehat{\bz}_i^*) + \frac{\omega\bt_\bz}{\sqrt{mn}} - \bz_{0i}\bigg\|_2 + \bigg\|\widehat{\btheta}_i + \frac{\omega\bt_\btheta}{\sqrt{mn}} - \btheta_{0i}\bigg\|_\infty
&\leq \Optilde\bigg\{\frac{(\log n)^\xi}{\sqrt{mn}} + \frac{(\log n)^{2\xi + \eps}}{n}\bigg\} + \frac{\|\bt_\bz\|_2 + \|\bt_\btheta\|_\infty}{\sqrt{mn}}\\
&\leq \frac{C(\log n)^{\xi}}{\sqrt{n}}\quad\text{w.h.p.}
\end{align*}
for any $\omega\in[0, 1]$. Now denote by
\[
\bXi_{in}(\bz_i^*, \btheta_i) = \frac{1}{mn}\mathscr{H}_{\widetilde{\ell}_{in}}(\bz_i^*, \btheta_i) + \bGamma_{in}.   
\]
Then by \eqref{eqn:Hessian_local_concentration}, we obtain
\[
\sup_{\bt\in\calA_1,\omega\in[0, 1]}\bigg\|\bXi_{in}\bigg(\widehat{\bz}_i^* + \frac{\omega\bt_\bz}{\sqrt{mn}}, \widehat{\btheta}_i + \frac{\omega\bt_\btheta}{\sqrt{mn}}\bigg)\bigg\|_2 = \Optilde\bigg\{\frac{(\log n)^{\xi}}{\sqrt{n}}\bigg\},
\]
and therefore, there exists some $\omega_i(\bt)\in[0, 1]$, such that
\begin{align*}
\widetilde{\ell}_{in}\bigg(\widehat{\bz}_i^* + \frac{\bt_\bz}{\sqrt{mn}}, \widehat{\btheta}_i + \frac{\bt_\btheta}{\sqrt{mn}}\bigg) - \widetilde{\ell}_{in}(\widehat{\bz}_i^*, \widehat{\btheta}_i^*)
& = -\frac{1}{2}\bt\transpose\bGamma_{in}\bt + \frac{1}{2}\bt\transpose\bXi_{in}\bigg(\widehat{\bz}_i^* + \frac{\omega_i(\bt)\bt_\bz}{\sqrt{mn}}, \widehat{\btheta}_i + \frac{\omega_i(\bt)\bt_\btheta}{\sqrt{mn}}\bigg)\bt.
\end{align*}
Also, by assumption on $\pi_{\bz^*}$ and $\pi_\theta$, 
\begin{align*}
\sup_{\|\bt_\bz\|_2\leq \sqrt{mn}\eps_n/2}\left|\frac{\pi_{\bz_i^*}\left(\widehat{\bz}^*_i + \frac{\bt_\bz}{\sqrt{mn}}\right)}{\pi_{\bz^*}(\bz_{0i}^*)} - 1\right|
&\leq C\sup_{\bz_i^*\in B(\bz_{0i}^*, C\eps_n)}|\pi_{\bz_i^*}(\bz_i^*) - \pi_{\bz_i^*}(\bz_{0i}^*)|\leq C\eps_n\quad\text{w.h.p.},\\ 
\sup_{\|\bt_\btheta\|_\infty\leq \sqrt{mn}\eps_n/2}\left|\frac{\pi_{\theta}\left(\widehat{\theta}^{(t)}_i + \frac{[\bt_\btheta]_t}{\sqrt{mn}}\right)}{\pi_{\theta}(\theta_{0i}^{(t)})} - 1\right|
&\leq C\sup_{\theta_i^{(t)}\in B(\theta_{0i}^{(t)}, C\eps_n)}|\pi_{\theta}(\theta_i^{(t)}) - \pi_{\theta}(\theta_{0i}^{(t)})|\leq C\eps_n\quad\text{w.h.p.},
\end{align*}
and together with the inequality $|e^x - 1|\leq e^{|x| - 1} - 1 + |x|$ for all $x\in\mathbb{R}$, implying that
\begin{align*}
&\sup_{\|\bt_\bz\|_2 + \|\bt_\btheta\|_\infty\leq \sqrt{mn}\eps_n/2}\left|e^{Cmn\eps_n^2(\log n)^{\xi}/\sqrt{n}}\frac{\pi_{\bz_i^*}\left(\widehat{\bz}^*_i + \frac{\bt_\bz}{\sqrt{mn}}\right)}{\pi_{\bz^*}(\bz_{0i}^*)}\prod_{t = 1}^m\frac{\pi_{\theta}\left(\widehat{\theta}^{(t)}_i + \frac{[\bt_\btheta]_t}{\sqrt{mn}}\right)}{\pi_{\theta}(\theta_{0i}^{(t)})} - 1\right|\\
&\quad = \sup_{\|\bt_\bz\|_2 + \|\bt_\btheta\|_\infty\leq \sqrt{mn}\eps_n/2}\left|e^{Cmn\eps_n^2(\log n)^{\xi}/\sqrt{n}}\left\{1 + \frac{\pi_{\bz_i^*}\left(\widehat{\bz}^*_i + \frac{\bt_\bz}{\sqrt{mn}}\right) - \pi_{\bz^*}(\bz_{0i}^*)}{\pi_{\bz^*}(\bz_{0i}^*)}\right\}\right.\\
&\qquad\qquad\qquad\qquad\qquad\qquad\times\left.\prod_{t = 1}^m\left\{1 + \frac{\pi_{\theta}\left(\widehat{\theta}^{(t)}_i + \frac{[\bt_\btheta]_t}{\sqrt{mn}}\right) - \pi_\theta(\theta_{0i}^{(t)})}{\pi_{\theta}(\theta_{0i}^{(t)})}\right\} - 1\right|\\
&\quad = \sup_{\|\bt_\bz\|_2 + \|\bt_\btheta\|_\infty\leq\sqrt{mn}\eps_n/2}\left|\exp\left[\frac{Cmn\eps_n^2(\log n)^{\xi}}{\sqrt{n}} + \log\bigg\{1 + 
\frac{\pi_{\bz_i^*}\left(\widehat{\bz}^*_i + \frac{\bt_\bz}{\sqrt{mn}}\right) - \pi_{\bz^*}(\bz_{0i}^*)}{\pi_{\bz^*}(\bz_{0i}^*)}
\bigg\}\right]\right.\\
&\qquad\qquad\qquad\qquad\qquad\qquad\times\left.\exp\left[\sum_{t = 1}^m\log\bigg\{1 + \frac{\pi_{\theta}\left(\widehat{\theta}^{(t)}_i + \frac{[\bt_\btheta]_t}{\sqrt{mn}}\right) - \pi_\theta(\theta_{0i}^{(t)})}{\pi_{\theta}(\theta_{0i}^{(t)})}\bigg\}\right] - 1\right|\\ 
&\quad \leq \sup_{\|\bt_\bz\|_2 + \|\bt_\btheta\|_\infty\leq\sqrt{mn}\eps_n/2}\exp\left[\frac{Cmn\eps_n^2(\log n)^{\xi}}{\sqrt{n}} + 
\left|\log\bigg\{1 + 
\frac{\pi_{\bz_i^*}\left(\widehat{\bz}^*_i + \frac{\bt_\bz}{\sqrt{mn}}\right) - \pi_{\bz^*}(\bz_{0i}^*)}{\pi_{\bz^*}(\bz_{0i}^*)}
\bigg\}\right|\right]\\ 
&\qquad\qquad\qquad\qquad\qquad\qquad\times\exp\left[\sum_{t = 1}^m\left|\log\bigg\{1 + \frac{\pi_{\theta}\left(\widehat{\theta}^{(t)}_i + \frac{[\bt_\btheta]_t}{\sqrt{mn}}\right) - \pi_\theta(\theta_{0i}^{(t)})}{\pi_{\theta}(\theta_{0i}^{(t)})}\bigg\}\right|\right] - 1\\ 
&\qquad + \frac{Cmn\eps_n^2(\log n)^{\xi}}{\sqrt{n}} + \sup_{\|\bt_\bz\|_2\leq\sqrt{mn}\eps_n/2}\left|\log\bigg\{1 + 
\frac{\pi_{\bz_i^*}\left(\widehat{\bz}^*_i + \frac{\bt_\bz}{\sqrt{mn}}\right) - \pi_{\bz^*}(\bz_{0i}^*)}{\pi_{\bz^*}(\bz_{0i}^*)}
\bigg\}\right|\\
&\qquad + \sup_{\|\bt_\btheta\|_\infty\leq\sqrt{mn}\eps_n/2}\sum_{t = 1}^m\left|\log\bigg\{1 + \frac{\pi_{\theta}\left(\widehat{\theta}^{(t)}_i + \frac{[\bt_\btheta]_t}{\sqrt{mn}}\right) - \pi_\theta(\theta_{0i}^{(t)})}{\pi_{\theta}(\theta_{0i}^{(t)})}\bigg\}\right|\\
&\quad\leq \sup_{\substack{ 
\|\bt_\bz\|_2\leq\sqrt{mn}\eps_n/2\\
\|\bt_\btheta\|_\infty\leq \sqrt{mn}\eps_n/2
 }}\exp\Bigg[
\frac{Cmn\eps_n^2(\log n)^{\xi}}{\sqrt{n}} + 
C
\left|
\frac{\pi_{\bz_i^*}\left(\widehat{\bz}^*_i + \frac{\bt_\bz}{\sqrt{mn}}\right) - \pi_{\bz^*}(\bz_{0i}^*)}{\pi_{\bz^*}(\bz_{0i}^*)}
\right|\\
&\qquad\qquad\qquad\qquad\qquad\quad
 + C\sum_{t = 1}^m\left|
 \frac{\pi_{\theta}\left(\widehat{\theta}^{(t)}_i + \frac{[\bt_\btheta]_t}{\sqrt{mn}}\right) - \pi_\theta(\theta_{0i}^{(t)})}{\pi_{\theta}(\theta_{0i}^{(t)})}
 \right|\Bigg]\\ 
&\qquad - 1 + C\sup_{\|\bt_\bz\|_2\leq\sqrt{mn}\eps_n/2}\left|
\frac{\pi_{\bz_i^*}\left(\widehat{\bz}^*_i + \frac{\bt_\bz}{\sqrt{mn}}\right) - \pi_{\bz^*}(\bz_{0i}^*)}{\pi_{\bz^*}(\bz_{0i}^*)}\right|\\
&\qquad + C\sup_{\|\bt_\btheta\|_\infty\leq\sqrt{mn}\eps_n/2}\sum_{t = 1}^m\left|\frac{\pi_{\theta}\left(\widehat{\theta}^{(t)}_i + \frac{[\bt_\btheta]_t}{\sqrt{mn}}\right) - \pi_\theta(\theta_{0i}^{(t)})}{\pi_{\theta}(\theta_{0i}^{(t)})}\right|
 + \frac{Cmn\eps_n^2(\log n)^{\xi}}{\sqrt{n}}\\
&\quad\leq e^{Cm\eps_n + Cmn\eps_n^2(\log n)^{\xi}/\sqrt{n}} - 1 + Cm\eps_n\leq\frac{C_1\sqrt{m}(\log n)^{3\xi}}{\sqrt{n}},
\end{align*}
where $B(\bx, \eps) := \{\by:\|\bx - \by\|_2 \leq \eps\}$ and $[\bt_\btheta]_t$ denotes the $t$th element of $\bt_\btheta$.
Hence,
\begin{align*}
&\iint_{\calA_1}(1 + \|\bt_\bz\|_2^\alpha)\bigg|\exp\bigg\{\widetilde{\ell}_{in}\bigg(\widehat{\bz}_i^* + \frac{\bt_\bz}{\sqrt{mn}}, \widehat{\btheta}_i + \frac{\bt_\btheta}{\sqrt{mn}}\bigg) - \widetilde{\ell}_{in}(\widehat{\bz}_i^*, \widehat{\btheta}_i^*)\bigg\}\pi_{\bz^*}\bigg(\widehat{\bz}_i^* + \frac{\bt_\bz}{\sqrt{mn}}\bigg)\\
&\qquad\qquad\qquad\qquad 
\times\prod_{t = 1}^m\pi_\theta\bigg(\widehat{\theta}_i + \frac{t_{\theta_i}}{\sqrt{mn}}\bigg)\mathbbm{1}(\bt\in\widehat{\Theta}_i) - e^{-\bt\transpose\bGamma_{in}\bt/2}
\pi_{0i}
\bigg|\mathrm{d}\bt_\bz\mathrm{d}\bt_\btheta\times\frac{1}{\det(2\pi\bGamma_{in}^{-1})^{1/2}\pi_{0i}}\\
&\quad = \iint_{\calA_1}(1 + \|\bt_\bz\|_2^\alpha)\bigg|\exp\bigg\{\frac{1}{2}\bt\transpose\bXi_{in}\bigg(\widehat{\bz}_i^* + \frac{\omega_i(\bt)\bt_\bz}{\sqrt{mn}}, \widehat{\btheta}_i + \frac{\omega_i(\bt)\bt_\btheta}{\sqrt{mn}}\bigg)\bt\bigg\}e^{-\bt\transpose\bGamma_{in}\bt/2}\pi_{\bz^*}\bigg(\widehat{\bz}_i^* + \frac{\bt_\bz}{\sqrt{mn}}\bigg)\\
&\qquad\qquad\qquad\qquad\qquad
\times\prod_{t = 1}^m\pi_\theta\bigg(\widehat{\theta}_i + \frac{t_{\theta_i}}{\sqrt{mn}}\bigg) - e^{-\bt\transpose\bGamma_{in}\bt/2}
\pi_{0i}
\bigg|\mathrm{d}\bt_\bz\mathrm{d}\bt_\btheta\times\frac{1}{\det(2\pi\bGamma_{in}^{-1})^{1/2}\pi_{0i}}\\  
&\quad \leq \iint_{\calA_1}(1 + \|\bt_\bz\|_2^\alpha)\bigg|
e^{mn\eps_n^2(\log n)^{\xi}/\sqrt{n}}
\pi_{\bz^*}\bigg(\widehat{\bz}_i^* + \frac{\bt_\bz}{\sqrt{mn}}\bigg)\prod_{t = 1}^m\pi_\theta\bigg(\widehat{\theta}_i + \frac{t_{\theta_i}}{\sqrt{mn}}\bigg)\frac{1}{\pi_{0i}} - 1\bigg|\\
&\qquad\times\frac{e^{-\bt\transpose\bGamma_{in}\bt/2}}{\det(2\pi\bGamma_{in}^{-1})^{1/2}}\mathrm{d}\bt_\bz\mathrm{d}\bt_\btheta\\
&\quad\leq  \frac{C_1\sqrt{m}(\log n)^{3\xi}}{\sqrt{n}}\iint_{\mathbb{R}^{d + m - 1}}(1 + \|\bt_\bz\|_2^\alpha)\frac{e^{-\bt\transpose\bGamma_{in}\bt/2}}{\det(2\pi\bGamma_{in}^{-1})^{1/2}}\mathrm{d}\bt_\bz\mathrm{d}\bt_\btheta\leq \frac{C_2\sqrt{m}(\log n)^{3\xi}}{\sqrt{n}}
\end{align*}
The proof is completed by combining the above arguments. 
\end{proof}

\section{Proof of Theorem \ref{thm:variational_BvM}}
\label{sec:proof_of_variational_BvM_theorem}
This section is devoted to the proof of the variational BvM theorem (Theorem \ref{thm:variational_BvM}). First observe that, by the construction of the transformations $\calT_\bz$, $\calT_\btheta$, it is clear that
\begin{align*}
&\sup_{\bx\in\mathbb{R}^{d - 1}}\bigg\|\frac{\partial\calT_\bz(\bx)}{\partial\bx\transpose}\bigg\|_2 < \infty,\quad\sup_{\bnu\in\mathbb{R}^{m}}\bigg\|\frac{\partial\calT_\btheta(\bnu)}{\partial\bnu\transpose}\bigg\|_2 < \infty,\\
&\sup_{\bx\in\mathbb{R}^{d - 1}}\bigg\|\frac{\partial^2\calT_\bz(\bx)}{\partial x_{k}\partial\bx\transpose}\bigg\|_2 < \infty,\quad\max_{t\in[m]}\sup_{\nu_t\in\mathbb{R}}\bigg|\frac{\mathrm{d}^2\calT_\btheta^{(t)}(\nu_{t})}{\mathrm{d}\nu_{t}^2}\bigg|_2 < \infty,\\
&\sup_{\bx\in\mathbb{R}^{d - 1}}\bigg\|\frac{\partial^3\calT_\bz}{\partial x_{k}\partial x_{l}\partial\bx\transpose}\bigg\|_2 < \infty,\quad\max_{t\in[m]}\sup_{\nu_t\in\mathbb{R}}\bigg|\frac{\mathrm{d}^3\calT_\btheta^{(t)}(\nu_t)}{\mathrm{d}\nu_{t}^3}\bigg|_2 < \infty
\end{align*} for any $k\in[d - 1]$ and $t\in[m]$. We first establish two technical lemmas that facilitate the proof. The first two lemmas below (Lemmas \ref{lemma:Lipschitz_Hessian} and \ref{lemma:Hessian_local_convergence_transform}) establishes the global Lipschitz continuity and local convergence of the Hessian of the transformed spectral-assisted log-likelihood, and the third lemma (Lemma \ref{lemma:VB_lemma}) shows that the KL-divergence between the variational posterior distribution and the exact posterior distribution converges to $0$ in probability as $n\to\infty$.
\begin{lemma}[Lipschitz Continuity of Hessian]
\label{lemma:Lipschitz_Hessian}
Suppose the conditions of Theorem \ref{thm:BvM} hold and further assume that $0 < \inf\calK < \sup\calK < \infty$. Then, 
\begin{align*}
\sup_{\substack{\bx_i, \bx_i'\in\mathbb{R}^{d - 1}\\ \bnu_i,\bnu_i\in\mathbb{R}^m}}\frac{1}{\|\bx_i - \bx_i'\|_2 + \|\bnu_i - \bnu_i'\|_\infty}
\left\|\frac{\partial^2\calL_{in}}{\partial\bx_i\partial\bx_i\transpose}(\bx_i, \bnu_i)
 - \frac{\partial^2\calL_{in}}{\partial\bx_i\partial\bx_i\transpose}(\bx_i', \bnu_i')\right\|_2& = \Optilde(mn),\\
 \sup_{\substack{\bx_i, \bx_i'\in\mathbb{R}^{d - 1}\\ \bnu_i,\bnu_i\in\mathbb{R}^m}}\frac{1}{\|\bx_i - \bx_i'\|_2 + \|\bnu_i - \bnu_i'\|_\infty}
\left\|\frac{\partial^2\calL_{in}}{\partial\bnu_i\partial\bnu_i\transpose}(\bx_i, \bnu_i)
 - \frac{\partial^2\calL_{in}}{\partial\bnu_i\partial\bnu_i\transpose}(\bx_i', \bnu_i')\right\|_2& = \Optilde(n),\\
 \sup_{\substack{\bx_i, \bx_i'\in\mathbb{R}^{d - 1}\\ \bnu_i,\bnu_i\in\mathbb{R}^m}}\frac{1}{\|\bx_i - \bx_i'\|_2 + \|\bnu_i - \bnu_i'\|_\infty}
\left\|\frac{\partial^2\calL_{in}}{\partial\bx_i\partial\bnu_i\transpose}(\bx_i, \bnu_i)
 - \frac{\partial^2\calL_{in}}{\partial\bx_i\partial\bnu_i\transpose}(\bx_i', \bnu_i')\right\|_2& = \Optilde(\sqrt{m}n).
\end{align*}
\end{lemma}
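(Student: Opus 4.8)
The plan is to prove each of the three Lipschitz bounds by directly differentiating the spectral-assisted log-likelihood one more time to produce the third-order derivatives, and then to bound those third-order derivatives uniformly over $\mathbb{R}^{d-1}\times\mathbb{R}^m$. Since $\calL_{in}(\bx_i,\bnu_i) = \widetilde{\ell}_{in}(\calT_\bz(\bx_i),\calT_\btheta(\bnu_i))$ is a composition, I would first apply the chain rule to express the second derivatives $\partial^2\calL_{in}/\partial\bx_i\partial\bx_i\transpose$, $\partial^2\calL_{in}/\partial\bnu_i\partial\bnu_i\transpose$, and $\partial^2\calL_{in}/\partial\bx_i\partial\bnu_i\transpose$ in terms of the derivatives of $\widetilde{\ell}_{in}$ with respect to $(\bz_i^*,\btheta_i)$ (already computed explicitly in the proof of Lemma~\ref{lemma:aggregated_MLE_theory}) together with the Jacobians and Hessians of the bounded transformations $\calT_\bz,\calT_\btheta$. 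The boundedness of the first, second, and third derivatives of $\calT_\bz$ and $\calT_\btheta$ stated at the opening of this section is exactly what lets all transformation-dependent factors be absorbed into constants.

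The mean-value inequality reduces each Lipschitz statement to a uniform bound on the appropriate block of the third-order tensor of $\calL_{in}$. The key quantitative input is that every third-order partial derivative of $\widetilde{\ell}_{in}$ is a double sum $\sum_{t=1}^m\sum_{j=1}^n$ of terms of the form (derivative of $\eta$) times (polynomial in $A_{ij}^{(t)} - \theta_i^{(t)}\bkappa(\bz_i^*)\transpose\widetilde{\by}_j^{(t)}$) times (products of $\widetilde{\by}_j^{(t)}$ and $\bJ$); these are precisely the expressions displayed in the proof of Lemma~\ref{lemma:aggregated_MLE_theory}. First I would use Assumption~\ref{assumption:likelihood}(b) to bound $|\eta''|,|\eta'''|\lesssim 1$ on the relevant compact range, then use $\|\widetilde{\by}_j^{(t)}\|_2 = \Optilde(1)$ from Lemma~\ref{lemma:Y_expansion}, and finally control the residual factor $A_{ij}^{(t)} - \theta_i^{(t)}\bkappa(\bz_i^*)\transpose\widetilde{\by}_j^{(t)} = E_{ij}^{(t)} + (P_{0ij}^{(t)} - \theta_i^{(t)}\bkappa(\bz_i^*)\transpose\widetilde{\by}_j^{(t)})$ via $\frac{1}{n}\sum_j|E_{ij}^{(t)}| = \Optilde(1)$ from \eqref{eqn:noise_absolute_bound}. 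The crucial counting is the range of summation: the $\bx_i\bx_i$-block sums over all $m$ layers and $n$ vertices, giving $\Optilde(mn)$; the $\bnu_i\bnu_i$-block is \emph{diagonal} in the layer index because $\theta_i^{(s)}$ enters only layer $s$, so for a fixed perturbation direction only a single layer's inner sum $\sum_{j=1}^n$ survives, giving $\Optilde(n)$; the mixed $\bx_i\bnu_i$-block sums $n$ vertices within each of $m$ layers but the spectral-norm bound over the $m\times(d-1)$ matrix structure contributes a $\sqrt{m}$ rather than $m$, yielding $\Optilde(\sqrt{m}n)$.

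The main obstacle will be getting the layer-counting sharp in the $\bnu_i\bnu_i$ and mixed blocks, since a naive triangle-inequality bound would give $\Optilde(mn)$ uniformly and lose the improvement to $\Optilde(n)$ and $\Optilde(\sqrt{m}n)$. The plan is to exploit the block structure of the Hessian in $\btheta_i$ already exhibited in Lemma~\ref{lemma:aggregated_MLE_theory}: the derivative $\partial^3/\partial\theta_i^{(s)}\partial\btheta_i\partial\btheta_i\transpose$ is supported on $\be_s\be_s\transpose$, so the diagonal structure must be tracked through the composition with $\calT_\btheta$ (whose Jacobian $\partial\calT_\btheta/\partial\bnu\transpose$ is itself diagonal, preserving the decoupling). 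For the mixed block, the $\sqrt{m}$ factor comes from the fact that the relevant tensor, viewed as a map from a perturbation in $(\bx_i,\bnu_i)$ to an $m\times(d-1)$ matrix, has a block structure in which each layer contributes independently, so that a spectral-norm estimate over the stacked layers scales like $\sqrt{m}$ times the per-layer contribution rather than linearly; I would make this precise by bounding the spectral norm via the Frobenius norm of the layer-wise blocks. Once the per-term bounds and the correct summation ranges are in place, the three estimates follow by combining them with the boundedness of the transformation derivatives and \eqref{eqn:noise_absolute_bound}.
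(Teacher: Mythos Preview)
Your proposal is correct and follows essentially the same route as the paper: compute (or assemble via chain rule) the third-order partial derivatives of $\calL_{in}$, bound them uniformly using the compactness of the range of $\calT_\bz,\calT_\btheta$, the boundedness of $\eta',\eta'',\eta'''$ on $\calI$, $\|\widetilde{\by}_j^{(t)}\|_2=\Optilde(1)$, and \eqref{eqn:noise_absolute_bound}, and then invoke the mean-value inequality. Your identification of the sharp layer-counting---diagonality of $\partial\calT_\btheta/\partial\bnu\transpose$ and of the $\btheta_i$-Hessian block to get $\Optilde(n)$ for the $\bnu_i\bnu_i$ block, and the $\sqrt{m}$ gain in the mixed block via $\|\cdot\|_2\leq\sqrt{m(d-1)}\|\cdot\|_{\max}$---matches exactly what the paper does.
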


\begin{proof}[\bf Proof]
By definition, for any $k\in[d - 1]$ and $t\in[m]$, we can compute
\begin{align*}
&\frac{\partial^2}{\partial\bx_i\partial\bx_i\transpose}\frac{\partial\calL_{in}}{\partial x_{ik}}(\bx_i, \bnu_i)\\
&\quad = \sum_{t = 1}^m\sum_{j = 1}^n\bigg[\{A_{ij}^{(t)} - \calT_\btheta^{(t)}(\nu_{it})\bkappa(\calT_\bz(\bx_i))\transpose\widetilde{\by}_j^{(t)}\}\eta''(\calT_\btheta^{(t)}(\nu_{it})\bkappa(\calT_\bz(\bx_i))\transpose\widetilde{\by}_j^{(t)})\\
&\qquad\qquad\qquad - \eta'(\calT_\btheta^{(t)}(\nu_{it})\bkappa(\calT_\bz(\bx_i))\transpose\widetilde{\by}_j^{(t)})\bigg]\\ 
&\qquad\times \bigg\{
\bigg(\frac{\partial^2\calT_\btheta}{\partial x_{k}\partial\bx\transpose}(\bx_i)\bigg)\transpose\bJ\transpose\widetilde{\by}_j^{(t)}\widetilde{\by}_j^{(t)\mathrm{T}}\bJ\bigg(\frac{\partial\calT_\btheta(\bx_i)}{\partial\bx\transpose}\bigg) + 
\bigg(\frac{\partial\calT_\btheta(\bx_i)}{\partial\bx\transpose}\bigg)
\transpose\bJ\transpose\widetilde{\by}_j^{(t)}\widetilde{\by}_j^{(t)\mathrm{T}}\bJ
\bigg(\frac{\partial^2\calT_\btheta}{\partial x_{k}\partial\bx\transpose}(\bx_i)\bigg)
\bigg\}\\ 
&\qquad + \sum_{t = 1}^m\sum_{j = 1}^n\bigg[\{A_{ij}^{(t)} - \calT_\btheta^{(t)}(\nu_{it})\bkappa(\calT_\bz(\bx_i))\transpose\widetilde{\by}_j^{(t)}\}\eta'''(\calT_\btheta^{(t)}(\nu_{it})\bkappa(\calT_\bz(\bx_i))\transpose\widetilde{\by}_j^{(t)})\\
&\qquad\qquad\qquad\quad - 2\eta''(\calT_\btheta^{(t)}(\nu_{it})\bkappa(\calT_\bz(\bx_i))\transpose\widetilde{\by}_j^{(t)})\bigg]\\
&\qquad\times 
\bigg\{\calT_\btheta^{3(t)}(\nu_{it})\be_k\transpose\bigg(\frac{\partial\calT_\bz(\bx_i)}{\partial\bx\transpose}\bigg)\transpose\bJ\transpose\widetilde{\by}_j^{(t)}\bigg\}
\bigg(\frac{\partial\calT_\bz(\bx_i)}{\partial\bx\transpose}\bigg)\transpose\bJ\transpose\widetilde{\by}_j^{(t)}\widetilde{\by}_j^{(t)\mathrm{T}}\bJ\bigg(\frac{\partial\calT_\bz(\bx_i)}{\partial\bx\transpose}\bigg)\\ 
&\qquad + \sum_{t = 1}^m\sum_{j = 1}^n\left[\{A_{ij}^{(t)} - \calT_\btheta^{(t)}(\nu_{it})\bkappa(\calT_\bz(\bx_i))\transpose\widetilde{\by}_j^{(t)}\}\eta'(\calT_\btheta^{(t)}(\nu_{it})\bkappa(\calT_\bz(\bx_i))\transpose\widetilde{\by}_j^{(t)})\right]\\ 
&\qquad\times\calT_\btheta^{(t)}(\nu_{it})\frac{\partial^2}{\partial x_k\partial\bx}\bigg\{\widetilde{\by}_j^{(t)\mathrm{T}}\bJ\transpose\bigg(\frac{\partial\calT_\bz(\bx_i)}{\partial\bx\transpose}\bigg)\bigg\}\\ 
&\qquad + \sum_{t = 1}^m\sum_{j = 1}^n\bigg[\{A_{ij}^{(t)} - \calT_\btheta^{(t)}(\nu_{it})\bkappa(\calT_\bz(\bx_i))\transpose\widetilde{\by}_j^{(t)}\}\eta''(\calT_\btheta^{(t)}(\nu_{it})\bkappa(\calT_\bz(\bx_i))\transpose\widetilde{\by}_j^{(t)})\\
&\qquad\qquad\qquad\quad - \eta'(\calT_\btheta^{(t)}(\nu_{it})\bkappa(\calT_\bz(\bx_i))\transpose\widetilde{\by}_j^{(t)})\bigg]\\ 
&\qquad\times \calT_\btheta^{(t)}(\nu_{it})\frac{\partial}{\partial\bx}\bigg\{\widetilde{\by}_j^{(t)\mathrm{T}}\bJ\transpose\bigg(\frac{\partial\calT_\bz(\bx_i)}{\partial\bx\transpose}\bigg)\bigg\},
\end{align*}
\begin{align*}
&\frac{\partial^2}{\partial\bx_i\partial\bx_i\transpose}\frac{\partial\calL_{in}}{\partial \nu_{it}}(\bx_i, \bnu_i)\\
&\quad = \sum_{j = 1}^n\bigg[\{A_{ij}^{(t)} - \calT_\btheta^{(t)}(\nu_{it})\bkappa(\calT_\bz(\bx_i))\transpose\widetilde{\by}_j^{(t)}\}\eta''(\calT_\btheta^{(t)}(\nu_{it})\bkappa(\calT_\bz(\bx_i))\transpose\widetilde{\by}_j^{(t)})\\
&\qquad\qquad - \eta'(\calT_\btheta^{(t)}(\nu_{it})\bkappa(\calT_\bz(\bx_i))\transpose\widetilde{\by}_j^{(t)})\bigg]\\ 
&\qquad\times \bigg\{2\calT_\btheta^{(t)}(\nu_{it})\frac{\mathrm{d}\calT_\btheta^{(t)}}{\mathrm{d}\nu_{t}}(\nu_{it})
\bigg(\frac{\partial\calT_\btheta}{\partial\bx\transpose}(\bx_i)\bigg)\transpose\bJ\transpose\widetilde{\by}_j^{(t)}\widetilde{\by}_j^{(t)\mathrm{T}}\bJ\bigg(\frac{\partial\calT_\btheta(\bx_i)}{\partial\bx\transpose}\bigg)
\bigg\}\\ 
&\qquad + \sum_{j = 1}^n\bigg[\{A_{ij}^{(t)} - \calT_\btheta^{(t)}(\nu_{it})\bkappa(\calT_\bz(\bx_i))\transpose\widetilde{\by}_j^{(t)}\}\eta'''(\calT_\btheta^{(t)}(\nu_{it})\bkappa(\calT_\bz(\bx_i))\transpose\widetilde{\by}_j^{(t)})\\
&\qquad\qquad\quad - 2\eta''(\calT_\btheta^{(t)}(\nu_{it})\bkappa(\calT_\bz(\bx_i))\transpose\widetilde{\by}_j^{(t)})\bigg]\\
&\qquad\times 
\bigg\{\frac{\mathrm{d}\calT_\btheta^{(t)}}{\mathrm{d}\nu_{t}}(\nu_{it})\bkappa(\calT_\btheta(\bx_i))\transpose\widetilde{\by}_j^{(t)}\bigg\}
\bigg(\frac{\partial\calT_\bz(\bx_i)}{\partial\bx\transpose}\bigg)\transpose\bJ\transpose\widetilde{\by}_j^{(t)}\widetilde{\by}_j^{(t)\mathrm{T}}\bJ\bigg(\frac{\partial\calT_\bz(\bx_i)}{\partial\bx\transpose}\bigg)\\ 
&\qquad + \sum_{j = 1}^n\left[\{A_{ij}^{(t)} - \calT_\btheta^{(t)}(\nu_{it})\bkappa(\calT_\bz(\bx_i))\transpose\widetilde{\by}_j^{(t)}\}\eta'(\calT_\btheta^{(t)}(\nu_{it})\bkappa(\calT_\bz(\bx_i))\transpose\widetilde{\by}_j^{(t)})\right]\\ 
&\qquad\times\frac{\mathrm{d}\calT_\btheta^{(t)}}{\mathrm{d}\nu_{t}}(\nu_{it})\frac{\partial}{\partial\bx}\bigg\{\widetilde{\by}_j^{(t)\mathrm{T}}\bJ\bigg(\frac{\partial\calT_\bz(\bx_i)}{\partial\bx\transpose}\bigg)\bigg\}\\ 
&\qquad + \sum_{j = 1}^n\bigg[\{A_{ij}^{(t)} - \calT_\btheta^{(t)}(\nu_{it})\bkappa(\calT_\bz(\bx_i))\transpose\widetilde{\by}_j^{(t)}\}\eta''(\calT_\btheta^{(t)}(\nu_{it})\bkappa(\calT_\bz(\bx_i))\transpose\widetilde{\by}_j^{(t)})\\
&\qquad\qquad\quad - \eta'(\calT_\btheta^{(t)}(\nu_{it})\bkappa(\calT_\bz(\bx_i))\transpose\widetilde{\by}_j^{(t)})\bigg]\\ 
&\qquad\times \frac{\mathrm{d}\calT_\btheta^{(t)}}{\mathrm{d}\nu_t}(\nu_{it})\bkappa(\calT_\bz(\bx_i))\transpose\widetilde{\by}_j^{(t)}
\bigg\{\calT_\btheta^{2(t)}(\nu_{it})\bigg(\frac{\partial\calT_\bz(\bx_i)}{\partial\bx\transpose}\bigg)\transpose\bJ\transpose\widetilde{\by}_j^{(t)}\widetilde{\by}_j^{(t)\mathrm{T}}\bJ\bigg(\frac{\partial\calT_\bz(\bx_i)}{\partial\bx\transpose}\bigg)\bigg\},
\end{align*}

\begin{align*}
&\frac{\partial^2}{\partial\nu_{it}^2}\frac{\partial\calL_{in}}{\partial \bx_{i}}(\bx_i, \bnu_i)\\
&\quad = \sum_{j = 1}^n\bigg[\{A_{ij}^{(t)} - \calT_\btheta^{(t)}(\nu_{it})\bkappa(\calT_\bz(\bx_i))\transpose\widetilde{\by}_j^{(t)}\}\eta''(\calT_\btheta^{(t)}(\nu_{it})\bkappa(\calT_\bz(\bx_i))\transpose\widetilde{\by}_j^{(t)})\\
&\qquad\qquad\quad - \eta'(\calT_\btheta^{(t)}(\nu_{it})\bkappa(\calT_\bz(\bx_i))\transpose\widetilde{\by}_j^{(t)})\bigg]\\ 
&\qquad\times \bigg\{\bigg(\frac{\mathrm{d}\calT_\btheta^{(t)}}{\mathrm{d}\nu_{t}}(\nu_{it})\bigg)^22\{\bkappa(\calT_\bz(\bx_i))\transpose\widetilde{\by}_j^{(t)}\}
\bigg(\frac{\partial\calT_\btheta}{\partial\bx\transpose}(\bx_i)\bigg)\transpose\bJ\transpose\widetilde{\by}_j^{(t)}
\bigg\}\\ 
&\qquad + \sum_{j = 1}^n\bigg[\{A_{ij}^{(t)} - \calT_\btheta^{(t)}(\nu_{it})\bkappa(\calT_\bz(\bx_i))\transpose\widetilde{\by}_j^{(t)}\}\eta'''(\calT_\btheta^{(t)}(\nu_{it})\bkappa(\calT_\bz(\bx_i))\transpose\widetilde{\by}_j^{(t)})\\
&\qquad\qquad\qquad - 2\eta''(\calT_\btheta^{(t)}(\nu_{it})\bkappa(\calT_\bz(\bx_i))\transpose\widetilde{\by}_j^{(t)})\bigg]\\
&\qquad\times 
\bigg(\frac{\mathrm{d}\calT_\btheta^{(t)}}{\mathrm{d}\nu_{t}}(\nu_{it})\bigg)^2\{\bkappa(\calT_\btheta(\bx_i))\transpose\widetilde{\by}_j^{(t)}\}^2\calT_\btheta^{(t)}(\nu_{it})
\bigg(\frac{\partial\calT_\bz(\bx_i)}{\partial\bx\transpose}\bigg)\transpose\bJ\transpose\widetilde{\by}_j^{(t)}\\ 
&\qquad + \sum_{j = 1}^n\left[\{A_{ij}^{(t)} - \calT_\btheta^{(t)}(\nu_{it})\bkappa(\calT_\bz(\bx_i))\transpose\widetilde{\by}_j^{(t)}\}\eta'(\calT_\btheta^{(t)}(\nu_{it})\bkappa(\calT_\bz(\bx_i))\transpose\widetilde{\by}_j^{(t)})\right]\\ 
&\qquad\times\frac{\mathrm{d}^2\calT_\btheta^{(t)}}{\mathrm{d}\nu_{t}^2}(\nu_{it})\bigg\{\bigg(\frac{\partial\calT_\bz(\bx_i)}{\partial\bx\transpose}\bigg)\transpose\bJ\transpose\widetilde{\by}_j^{(t)}\bigg\}\\ 
&\qquad + \sum_{j = 1}^n\bigg[\{A_{ij}^{(t)} - \calT_\btheta^{(t)}(\nu_{it})\bkappa(\calT_\bz(\bx_i))\transpose\widetilde{\by}_j^{(t)}\}\eta''(\calT_\btheta^{(t)}(\nu_{it})\bkappa(\calT_\bz(\bx_i))\transpose\widetilde{\by}_j^{(t)})\\
&\qquad\qquad\qquad - \eta'(\calT_\btheta^{(t)}(\nu_{it})\bkappa(\calT_\bz(\bx_i))\transpose\widetilde{\by}_j^{(t)})\bigg]\\ 
&\qquad\times \frac{\mathrm{d}^2\calT_\btheta^{(t)}}{\mathrm{d}\nu_t^2}(\nu_{it})\bkappa(\calT_\bz(\bx_i))\transpose\widetilde{\by}_j^{(t)}
\bigg\{\calT_\btheta^{(t)}(\nu_{it})\bigg(\frac{\partial\calT_\bz(\bx_i)}{\partial\bx\transpose}\bigg)\transpose\bJ\transpose\widetilde{\by}_j^{(t)}\bigg\},
\end{align*}

\begin{align*}
&\frac{\partial^3\calL_{in}}{\partial\nu_{it}^3}(\bx_i, \bnu_i)\\
&\quad = \sum_{j = 1}^n\bigg[\{A_{ij}^{(t)} - \calT_\btheta^{(t)}(\nu_{it})\bkappa(\calT_\bz(\bx_i))\transpose\widetilde{\by}_j^{(t)}\}\eta''(\calT_\btheta^{(t)}(\nu_{it})\bkappa(\calT_\bz(\bx_i))\transpose\widetilde{\by}_j^{(t)})\\
&\qquad\qquad\quad - \eta'(\calT_\btheta^{(t)}(\nu_{it})\bkappa(\calT_\bz(\bx_i))\transpose\widetilde{\by}_j^{(t)})\bigg]\\ 
&\qquad\times 2\bigg(\frac{\mathrm{d}\calT_\btheta^{(t)}}{\mathrm{d}\nu_{t}}(\nu_{it})\bigg)\bigg(\frac{\mathrm{d}^2\calT_\btheta^{(t)}}{\mathrm{d}\nu_{t}^2}(\nu_{it})\bigg)\{\bkappa(\calT_\bz(\bx_i))\transpose\widetilde{\by}_j^{(t)}\}^2\\ 
&\qquad + \sum_{j = 1}^n\bigg[\{A_{ij}^{(t)} - \calT_\btheta^{(t)}(\nu_{it})\bkappa(\calT_\bz(\bx_i))\transpose\widetilde{\by}_j^{(t)}\}\eta'''(\calT_\btheta^{(t)}(\nu_{it})\bkappa(\calT_\bz(\bx_i))\transpose\widetilde{\by}_j^{(t)})\\
&\qquad\qquad\qquad - 2\eta''(\calT_\btheta^{(t)}(\nu_{it})\bkappa(\calT_\bz(\bx_i))\transpose\widetilde{\by}_j^{(t)})\bigg]\\
&\qquad\times 
\bigg(\frac{\mathrm{d}\calT_\btheta^{(t)}}{\mathrm{d}\nu_{t}}(\nu_{it})\bigg)^3\{\bkappa(\calT_\btheta(\bx_i))\transpose\widetilde{\by}_j^{(t)}\}^3\\ 
&\qquad + \sum_{j = 1}^n\left[\{A_{ij}^{(t)} - \calT_\btheta^{(t)}(\nu_{it})\bkappa(\calT_\bz(\bx_i))\transpose\widetilde{\by}_j^{(t)}\}\eta'(\calT_\btheta^{(t)}(\nu_{it})\bkappa(\calT_\bz(\bx_i))\transpose\widetilde{\by}_j^{(t)})\right]\\ 
&\qquad\times\frac{\mathrm{d}^3\calT_\btheta^{(t)}}{\mathrm{d}\nu_{t}^3}(\nu_{it})\{\bkappa(\calT_\bz(\bx_i))\transpose\widetilde{\by}_j^{(t)}\}\\ 
&\qquad + \sum_{j = 1}^n\bigg[\{A_{ij}^{(t)} - \calT_\btheta^{(t)}(\nu_{it})\bkappa(\calT_\bz(\bx_i))\transpose\widetilde{\by}_j^{(t)}\}\eta''(\calT_\btheta^{(t)}(\nu_{it})\bkappa(\calT_\bz(\bx_i))\transpose\widetilde{\by}_j^{(t)})\\
&\qquad\qquad\qquad - \eta'(\calT_\btheta^{(t)}(\nu_{it})\bkappa(\calT_\bz(\bx_i))\transpose\widetilde{\by}_j^{(t)})\bigg]\\ 
&\qquad\times \bigg(\frac{\mathrm{d}\calT_\btheta^{(t)}(\nu_{it})}{\mathrm{d}\nu_t}\bigg)\frac{\mathrm{d}^2\calT_\btheta^{(t)}}{\mathrm{d}\nu_t^2}(\nu_{it})\{\bkappa(\calT_\bz(\bx_i))\transpose\widetilde{\by}_j^{(t)}\}^2
\end{align*}
Following the proof of Theorem \ref{thm:BvM}, it is clear that 
\begin{align*}
\sup_{\bx_i\in\mathbb{R}^{d - 1}, \bnu_i\in\mathbb{R}^m}\bigg\|\frac{\partial^3\calL_{in}}{\partial\bx_i\partial\bx_i\transpose\partial x_{ik}}(\bx_i, \bnu_i)\bigg\|_2
& = \Optilde(mn),\\
\max_{t\in[m]}\sup_{\bx_i\in\mathbb{R}^{d - 1}, \bnu_i\in\mathbb{R}^m}\bigg\|\frac{\partial^3\calL_{in}}{\partial\bx_i\partial\bx_i\transpose\partial \nu_{it}}(\bx_i, \bnu_i)\bigg\|_2
& = \Optilde(n),\\
\sup_{\bx_i\in\mathbb{R}^{d - 1}, \bnu_i\in\mathbb{R}^m}\bigg\|\frac{\partial^3\calL_{in}}{\partial\nu_{it}^2\partial\bx_i}(\bx_i, \bnu_i)\bigg\|_2
& = \Optilde(n),\\
\max_{t\in[m]}\sup_{\bx_i\in\mathbb{R}^{d - 1}, \bnu_i\in\mathbb{R}^m}\bigg\|\frac{\partial^3\calL_{in}}{\partial\nu_{it}^3}(\bx_i, \bnu_i)\bigg\|_2
& = \Optilde(n).
\end{align*}
Therefore, by mean-value theorem, we have
\begin{align*}
&\bigg\|\frac{\partial^2\calL_{in}}{\partial\bx_i\partial\bx_i\transpose}(\bx_i, \bnu_i)- \frac{\partial^2\calL_{in}}{\partial\bx_i\partial\bx_i\transpose}(\bx_i', \bnu_i')\bigg\|_2\\
&\quad\leq d\max_{k,l\in[d - 1]}\bigg|\frac{\partial^2\calL_{in}}{\partial x_{ik}\partial x_{il}}(\bx_i, \bnu_i) - \frac{\partial^2\calL_{in}}{\partial x_{ik}\partial x_{il}}(\bx_i', \bnu_i')\bigg|\\
&\quad\leq d\max_{k,l\in[d - 1]}\sup_{\bar{\bx}_i\in\mathbb{R}^{d - 1},\bar{\bnu}_i\in\mathbb{R}^{m}}\bigg\|\frac{\partial^3\calL_{in}}{\partial\bx_i\partial x_{ik}\partial x_{il}}(\bar{\bx}_i, \bar{\bnu}_i)\bigg\|_2\|\bx_i - \bx_i'\|_2 \\ 
&\qquad + d\max_{k,l\in[d - 1]}\sup_{\bar{\bx}_i\in\mathbb{R}^{d - 1},\bar{\bnu}_i\in\mathbb{R}^{m}}\sum_{t = 1}^m\bigg|\frac{\partial^3\calL_{in}}{\partial\nu_{it}\partial x_{ik}\partial x_{il}}(\bar{\bx}_i, \bar{\bnu}_i)\bigg||\nu_{it} - \nu_{it}'|\\
&\quad = \Optilde(mn)\|\bx_i - \bx_i'\|_2 + \Optilde(mn)\|\bnu_i - \bnu_{0i}\|_\infty,\\
&\bigg\|\frac{\partial^2\calL_{in}}{\partial\bx_i\partial\bnu_i\transpose}(\bx_i, \bnu_i)- \frac{\partial^2\calL_{in}}{\partial\bx_i\partial\bnu_i\transpose}(\bx_i', \bnu_i')\bigg\|_2\\
&\quad\leq \sqrt{md}\max_{k\in[d - 1], t\in[m]}\bigg|\frac{\partial^2\calL_{in}}{\partial x_{ik}\partial \nu_{it}}(\bx_i, \bnu_i) - \frac{\partial^2\calL_{in}}{\partial x_{ik}\partial \nu_{it}}(\bx_i', \bnu_i')\bigg|_2\\ 
&\quad\leq \sqrt{md}\max_{k\in[d - 1], t\in[m]}\sup_{\bar{\bx}_i\in\mathbb{R}^{d - 1},\bar{\bnu}_i\in\mathbb{R}^{m}}\bigg\|\frac{\partial^3\calL_{in}}{\partial\bx_i\partial x_{ik}\partial \nu_{it}}(\bar{\bx}_i, \bar{\bnu}_i)\bigg\|_2\|\bx_i - \bx_i'\|_2 \\ 
&\qquad + \sqrt{md}\max_{k\in[d - 1], t\in[m]}\sup_{\bar{\bx}_i\in\mathbb{R}^{d - 1},\bar{\bnu}_i\in\mathbb{R}^{m}}\sum_{s = 1}^m\bigg|\frac{\partial^3\calL_{in}}{\partial\nu_{it}\partial \bx_{i}\partial \nu_{is}}(\bar{\bx}_i, \bar{\bnu}_i)\bigg||\nu_{is} - \nu_{is}'|\\
&\quad = \Optilde(\sqrt{m}n)\|\bx_i - \bx_i'\|_2 + \Optilde(\sqrt{m}n)\|\bnu_i - \bnu_i'\|_\infty,\\
&\bigg\|\frac{\partial^2\calL_{in}}{\partial\bnu_i\partial\bnu_i\transpose}(\bx_i, \bnu_i)- \frac{\partial^2\calL_{in}}{\partial\bnu_i\partial\bnu_i\transpose}(\bx_i', \bnu_i')\bigg\|_2\\
&\quad = \max_{t\in[m]}\bigg|\frac{\partial^2\calL_{in}}{\partial\nu_{it}^2}(\bx_i, \bnu_i)- \frac{\partial^2\calL_{in}}{\partial\nu_{it}^2}(\bx_i', \bnu_i')\bigg|\\
&\quad\leq \max_{t\in[m]}\bigg\|\frac{\partial^3\calL_{in}}{\partial\bx_i\partial^2\nu_{it}}(\bar{\bx}_i, \bar{\bnu}_i)\bigg\|_2\|\bx_i - \bx_i'\|_2
 + \max_{t\in[m]}\sum_{s = 1}^m\bigg\|\frac{\partial^3}{\partial\nu_{is}\partial\nu_{it}^2}(\bar{\bx}_i, \bar{\bnu}_i)\bigg\|_2|\nu_{it} - \nu_{it}'|\\
&\quad = \Optilde(n)(\|\bx_i - \bx_i'\|_2 + \|\bnu_i - \bnu_i'\|_\infty).
\end{align*}
The proof is therefore completed by combining the above Lipschitz error bounds. 
\end{proof}

\begin{lemma}[Local Convergence of Hessian, Transformed]
\label{lemma:Hessian_local_convergence_transform}
Suppose the conditions of Theorem \ref{thm:BvM} hold. Then, for any $i\in[n]$ and any positive sequence $(\eps_n)_{n = 1}^\infty$ converging to $0$, we have
\begin{align*}
&\sup_{\substack{\|\bx_i - \bx_{0i}\|_2\leq \eps_n\\ \|\bnu_i - \bnu_{0i}\|_\infty\leq \eps_n}}
\left\|\frac{1}{mn}
\frac{\partial^2\calL_{in}}{\partial\bx_i\partial\bx_i\transpose}(\bx_i, \bnu_i) + \frac{1}{m}\sum_{t = 1}^m\theta_{0i}^{2(t)}\bigg(\frac{\partial\calT_\bz}{\partial\bx\transpose}(\bx_{0i})\bigg)\transpose\bJ\transpose\bG_{0in}^{(t)}\bJ\bigg(\frac{\partial\calT_\bz}{\partial\bx\transpose}(\bx_{0i})\bigg)\right\|_2\\
&\quad = \Optilde\bigg\{\frac{(\log n)^\xi}{\sqrt{n}} + \eps_n\bigg\},\\
&\sup_{\substack{\|\bx_i - \bx_{0i}\|_2\leq \eps_n\\ \|\bnu_i - \bnu_{0i}\|_\infty\leq \eps_n}}
\left\|\frac{1}{mn}
\frac{\partial^2\calL_{in}}{\partial\bnu_i\partial\bx_i\transpose}(\bx_i, \bnu_i) + 
\frac{1}{m}\bigg(\frac{\partial\calT_\btheta}{\partial\btheta}(\btheta_{0i})\bigg)\transpose
\begin{bmatrix}
\theta_{0i}^{(1)}\bz_{0i}\transpose\bG_{0in}^{(1)}\bJ\\
\vdots\\
\theta_{0i}^{(m)}\bz_{0i}\transpose\bG_{0in}^{(m)}\bJ
\end{bmatrix}\bigg(\frac{\partial\calT_\bz}{\partial\bx\transpose}(\bx_{0i})\bigg)
\right\|_2\\
&\quad = \Optilde\bigg\{\frac{(\log n)^\xi}{\sqrt{mn}} + \frac{\eps_n}{\sqrt{m}}\bigg\},\\
&\sup_{\substack{\|\bx_i - \bx_{0i}\|_2\leq \eps_n\\ \|\bnu_i - \bnu_{0i}\|_\infty\leq \eps_n}}
\left\|\frac{1}{mn}
\frac{\partial^2\calL_{in}}{\partial\bnu_i\partial\bnu_i\transpose}(\bx_i, \bnu_i)
 + \bigg(\frac{\partial\calT_\btheta}{\partial\btheta}(\btheta_{0i})\bigg)\transpose
 \textsf{diag}\bigg[\frac{1}{m}\bz_{0i}\transpose\bG_{0in}^{(t)}\bz_{0i}:t\in[m]\bigg]
 \bigg(\frac{\partial\calT_\btheta}{\partial\btheta}(\btheta_{0i})\bigg)
 \right\|_2\\
&\quad = \Optilde\bigg\{\frac{(\log n)^\xi}{m\sqrt{n}} + \frac{\eps_n}{m}\bigg\}.
\end{align*}
\end{lemma}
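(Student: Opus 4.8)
The plan is to obtain all three bounds by differentiating the composition $\calL_{in}(\bx_i,\bnu_i)=\widetilde{\ell}_{in}(\calT_\bz(\bx_i),\calT_\btheta(\bnu_i))$ through the chain rule and transporting the local Hessian convergence already established in the original coordinates. Because $\calT_\btheta$ acts componentwise and $\theta_i^{(t)}=\calT_\btheta^{(t)}(\nu_{it})$ does not depend on $\bx_i$, the chain rule yields a clean structure: each Hessian block of $\calL_{in}$ equals a \emph{curvature term} --- the corresponding Hessian block of $\widetilde{\ell}_{in}$ sandwiched between the Jacobians $\partial\calT_\bz/\partial\bx\transpose$ and $\partial\calT_\btheta/\partial\bnu\transpose$ --- plus, only in the two diagonal blocks, a \emph{gradient term} of the form (score of $\widetilde{\ell}_{in}$) $\times$ (second derivative of the transform). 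The off-diagonal $(\bnu_i,\bx_i)$ block carries no gradient term since $\partial\theta_i^{(t)}/\partial\bx_i=0$, and the $(\bnu_i,\bnu_i)$ block is diagonal because $\partial^2\widetilde{\ell}_{in}/\partial\theta_i^{(t)}\partial\theta_i^{(s)}=0$ for $s\neq t$; these two structural facts are what ultimately deliver the sharper $m$-dependence in the statement.

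For the curvature terms I would invoke Lemma~\ref{lemma:Hessian_local_convergence} at the per-layer level. Its proof reduces, for each fixed $t$, to uniform bounds of the form $\frac1n\sum_j(\cdots)\widetilde{\by}_j^{(t)}\widetilde{\by}_j^{(t)\mathrm{T}}=\Optilde\{(\log n)^\xi/\sqrt{n}+\eps_n\}$ at the single-layer scale $1/n$, so dividing by the extra factor $m$ shows that each single-layer block $\frac{1}{mn}\partial^2\widetilde{\ell}_{in}/\partial\theta_i^{(t)}\partial\bz_i^{*\mathrm{T}}$ and $\frac{1}{mn}\partial^2\widetilde{\ell}_{in}/\partial(\theta_i^{(t)})^2$ converges to its limit at the refined rate $\Optilde\{(\log n)^\xi/(m\sqrt{n})+\eps_n/m\}$. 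The $(\bnu_i,\bx_i)$ block then stacks $m$ such rows, so taking operator norms loses only a $\sqrt{m}$ factor and gives $\Optilde\{(\log n)^\xi/\sqrt{mn}+\eps_n/\sqrt{m}\}$, whereas the $(\bnu_i,\bnu_i)$ block is diagonal, so its operator norm is the maximum over $t$ and retains the full $\Optilde\{(\log n)^\xi/(m\sqrt{n})+\eps_n/m\}$ rate. Throughout, I replace the Jacobians evaluated at $(\bx_i,\bnu_i)$ by their values at $(\bx_{0i},\bnu_{0i})$; since the transforms have bounded second derivatives (the display preceding the lemma), each replacement costs $\Optilde(\eps_n)$ times the operator norm of the intervening Hessian block, and those norms are exactly $\Optilde(1)$, $\Optilde(1/\sqrt{m})$, and $\Optilde(1/m)$ for the three blocks, hence stay within budget.

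The gradient terms are controlled by the score bounds of Lemma~\ref{lemma:LLN}, which give $\frac{1}{mn}\|\partial\widetilde{\ell}_{in}/\partial\bz_i^*\|_2=\Optilde\{(\log n)^\xi/\sqrt{n}\}$ and $\frac{1}{mn}|\partial\widetilde{\ell}_{in}/\partial\theta_i^{(t)}|=\Optilde\{(\log n)^\xi/(m\sqrt{n})\}$ \emph{at} the true value. To upgrade these to uniform control over the $\eps_n$-ball I would expand along a segment to $(\bx_{0i},\bnu_{0i})$ and use the local Hessian bound $\frac{1}{mn}\|\mathscr{H}_{\widetilde{\ell}_{in}}\|=\Optilde(1)$, so that the score at a nearby point differs from the score at the truth by $\Optilde(\eps_n)$, and by $\Optilde(\eps_n/m)$ in the $\theta$-direction (again from the $1/m$ normalization of the single-layer blocks). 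Multiplying by the bounded second derivatives of $\calT_\bz$ and $\calT_\btheta$ leaves the $(\bx_i,\bx_i)$-block gradient contribution at $\Optilde\{(\log n)^\xi/\sqrt{n}+\eps_n\}$ and the $(\bnu_i,\bnu_i)$-block contribution at $\Optilde\{(\log n)^\xi/(m\sqrt{n})+\eps_n/m\}$, matching the claimed rates.

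The main obstacle is bookkeeping the $m$-dependence rather than any single analytic estimate: one must keep the $\btheta$-direction contributions organized layer-by-layer so that the $1/m$ normalization of each single-layer sum, combined with the row-stacking ($\sqrt{m}$) for the cross block and the diagonal structure (no $\sqrt{m}$) for the $\btheta\btheta$ block, produces precisely the $1/\sqrt{mn}$ and $1/(m\sqrt{n})$ rates. Conflating the blocks, or applying Lemma~\ref{lemma:Hessian_local_convergence} only in its aggregated operator-norm form, would overstate the cross and $\btheta\btheta$ errors by factors of $\sqrt{m}$ and $m$ respectively, so the per-layer reduction is essential.
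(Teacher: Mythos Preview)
Your proposal is correct and matches the paper's approach: the paper likewise writes out the chain-rule Hessian of $\calL_{in}$, observes that the $\eps_n$-ball in $(\bx_i,\bnu_i)$ maps into a $C\eps_n$-ball in $(\bz_i^*,\btheta_i)$ by Lipschitz continuity of $\calT_\bz,\calT_\btheta$, applies the per-layer estimates from the proof of Lemma~\ref{lemma:Hessian_local_convergence} to each block, and then shifts the Jacobians from $(\bx_i,\bnu_i)$ to $(\bx_{0i},\bnu_{0i})$ via Lipschitz continuity of $\partial\calT_\bz/\partial\bx\transpose$ and $\partial\calT_\btheta/\partial\bnu\transpose$. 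Your explicit bookkeeping of the $m$-dependence (row-stacking costing $\sqrt m$ for the cross block, diagonal structure preserving $1/m$ for the $\btheta\btheta$ block) is exactly the mechanism the paper uses implicitly through its explicit per-$t$ formulas.
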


\begin{proof}[\bf Proof]
Direct computation shows
\begin{align*}
&\frac{\partial^2\calL_{in}}{\partial\bx_i\partial\bx_i\transpose}(\bx_i, \bnu_i)\\
&\quad = \sum_{t = 1}^m\sum_{j = 1}^n\bigg[\{A_{ij}^{(t)} - \calT_\btheta^{(t)}(\nu_{it})\bkappa(\calT_\bz(\bx_i))\transpose\widetilde{\by}_j^{(t)}\}\eta''(\calT_\btheta^{(t)}(\nu_{it})\bkappa(\calT_\bz(\bx_i))\transpose\widetilde{\by}_j^{(t)})\\
&\qquad\qquad\qquad - \eta'(\calT_\btheta^{(t)}(\nu_{it})\bkappa(\calT_\bz(\bx_i))\transpose\widetilde{\by}_j^{(t)})\bigg]\\
&\qquad\times\calT_\btheta^{2(t)}(\nu_{it})\bigg(\frac{\partial\calT_\bz}{\partial\bx\transpose}(\bx_i)\bigg)\transpose\bJ\transpose\widetilde{\by}_j^{(t)}\widetilde{\by}_j^{(t)\mathrm{T}}\bJ\bigg(\frac{\partial\calT_\bz}{\partial\bx\transpose}(\bx_i)\bigg)\\ 
&\qquad + \sum_{t = 1}^m\sum_{j = 1}^n\{A_{ij}^{(t)} - \calT_\btheta^{(t)}(\nu_{it})\bkappa(\calT_\bz(\bx_i))\transpose\widetilde{\by}_j^{(t)}\}\eta'(\calT_\btheta^{(t)}(\nu_{it})\bkappa(\calT_\bz(\bx_i))\transpose\widetilde{\by}_j^{(t)})\calT_\btheta^{(t)}(\nu_{it})\\
&\qquad\times \frac{\partial}{\partial\bx_i}\bigg\{\widetilde{\by}_j^{(t)\mathrm{T}}\bJ \bigg(\frac{\partial\calT_\bz}{\partial\bx\transpose}(\bx_i)\bigg)\bigg\},\\
&\frac{\partial^2\calL_{in}}{\partial\bnu_i\partial\bnu_i\transpose}(\bx_i, \bnu_i)\\
&\quad = \sum_{j = 1}^n\left[\{A_{ij}^{(t)} - \calT_\btheta^{(t)}(\nu_{it})\bkappa(\calT_\bz(\bx_i))\transpose\widetilde{\by}_j^{(t)}\}\eta''(\calT_\btheta^{(t)}(\nu_{it})\bkappa(\calT_\bz(\bx_i))\transpose\widetilde{\by}_j^{(t)}) - \eta'(\calT_\btheta^{(t)}(\nu_{it})\bkappa(\calT_\bz(\bx_i))\transpose\widetilde{\by}_j^{(t)})\right]\\
&\qquad\times\bigg(\frac{\partial\calT_\btheta^{(t)}}{\partial\nu_t}(\nu_{it})\bigg)^2\{\kappa(\calT_\bz(\bx_i))\transpose\widetilde{\by}_j^{(t)}\}^2\\ 
&\qquad + \sum_{j = 1}^n\{A_{ij}^{(t)} - \calT_\btheta^{(t)}(\nu_{it})\bkappa(\calT_\bz(\bx_i))\transpose\widetilde{\by}_j^{(t)}\}\eta'(\calT_\btheta^{(t)}(\nu_{it})\bkappa(\calT_\bz(\bx_i))\transpose\widetilde{\by}_j^{(t)})
\frac{\partial^2\calT_\btheta^{(t)}}{\partial\nu_t^2}(\nu_{it})\bkappa(\calT_\bz(\bx_i))\transpose\widetilde{\by}_j^{(t)},\\
&\frac{\partial^2\calL_{in}}{\partial\bx_i\partial\nu_{it}}(\bx_i, \bnu_i)\\
&\quad = \sum_{j = 1}^n\left[\{A_{ij}^{(t)} - \calT_\btheta^{(t)}(\nu_{it})\bkappa(\calT_\bz(\bx_i))\transpose\widetilde{\by}_j^{(t)}\}\eta''(\calT_\btheta^{(t)}(\nu_{it})\bkappa(\calT_\bz(\bx_i))\transpose\widetilde{\by}_j^{(t)}) - \eta'(\calT_\btheta^{(t)}(\nu_{it})\bkappa(\calT_\bz(\bx_i))\transpose\widetilde{\by}_j^{(t)})\right]\\
&\qquad\times\calT_\btheta^{(t)}(\nu_{it}) \bigg(\frac{\mathrm{d}\calT_\btheta^{(t)}}{\partial\nu_t}(\nu_{it})\bigg) \bigg(\frac{\partial\calT_\bz}{\partial\bx\transpose}(\bx_i)\bigg)\transpose\bJ\transpose\widetilde{\by}_j^{(t)}\widetilde{\by}_j^{(t)\mathrm{T}}\bkappa(\calT_\btheta(\bx_i))\\ 
&\qquad + \sum_{j = 1}^n\{A_{ij}^{(t)} - \calT_\btheta^{(t)}(\nu_{it})\bkappa(\calT_\bz(\bx_i))\transpose\widetilde{\by}_j^{(t)}\}\eta'(\calT_\btheta^{(t)}(\nu_{it})\bkappa(\calT_\bz(\bx_i))\transpose\widetilde{\by}_j^{(t)})
\frac{\mathrm{d}\calT_\btheta^{(t)}}{\mathrm{d}\nu_t}(\nu_{it})\bigg(\frac{\partial\calT_\bz}{\partial\bx\transpose}(\bx_i)\bigg)\transpose\bJ\widetilde{\by}_j^{(t)}.
\end{align*}
Note that by the global Lipschitz continuity of $\calT_\bz$ and $\calT_\btheta$, there exists a constant $C > 0$, such that
\begin{align*}
&\{(\bx_i, \bnu_i)\in\mathbb{R}^{d - 1}\times\mathbb{R}^m:\|\bx_i - \bx_{0i}\|_2 \leq \eps_n, \|\bnu_i - \bnu_{0i}\|_\infty \leq \eps_n\}\\ 
&\quad\subset
\{(\bx_i, \bnu_i)\in\mathbb{R}^{d - 1}\times\mathbb{R}^m:\|\bkappa(\calT_\bz(\bx_i)) - \bz_{0i}\|_2 \leq C\eps_n, \|\calT_\btheta(\bnu_i) - \btheta_{0i}\|_\infty \leq C\eps_n\}.
\end{align*}
Then, following the proof of Lemma \ref{lemma:Hessian_local_convergence}, we know that
\begin{align*}
&\sup_{\substack{\|\bx_i - \bx_{0i}\|_2 \leq \eps_n\\\|\bnu_i - \bnu_{0i}\|_\infty\leq \eps_n }}\bigg\|\frac{1}{mn}\frac{\partial^2\calL_{in}}{\partial\bx_i\partial\bx_i\transpose}(\bx_i, \bnu_i) + \frac{1}{m}\sum_{t = 1}^m\calT_\btheta^{2(t)}(\nu_{it})\bigg(\frac{\partial\calT_\bz}{\partial\bx\transpose}(\bx_i)\bigg)\transpose\bJ\transpose
  \bG_{0in}^{(t)}\bJ
  \bigg(\frac{\partial\calT_\bz}{\partial\bx\transpose}(\bx_i)\bigg)\bigg\|_2\\ 
&\quad = \Optilde\bigg\{\frac{(\log n)^\xi}{\sqrt{n}} + \eps_n\bigg\},\\
&\sup_{\substack{\|\bx_i - \bx_{0i}\|_2 \leq \eps_n\\\|\bnu_i - \bnu_{0i}\|_\infty\leq \eps_n }}\bigg\|\frac{1}{mn}\frac{\partial^2\calL_{in}}{\partial\nu_{it}^2}(\bx_i, \bnu_i) + \frac{1}{m}\bigg(\frac{\mathrm{d}\calT_\btheta^{(t)}}{\mathrm{d}\nu_t}(\nu_{it})\bigg)^2\bkappa(\calT_\bz(\bx_i))\transpose
  \bG_{0in}^{(t)}\bkappa(\calT_\bz(\bx_i))
\bigg\|_2\\ 
&\quad = \Optilde\bigg\{\frac{(\log n)^\xi}{m\sqrt{n}} + \frac{\eps_n}{m}\bigg\},\\
&\sup_{\substack{\|\bx_i - \bx_{0i}\|_2 \leq \eps_n\\\|\bnu_i - \bnu_{0i}\|_\infty\leq \eps_n }}\bigg\|\frac{1}{mn}\frac{\partial^2\calL_{in}}{\partial\bx_i\partial\nu_{it}}(\bx_i, \bnu_i) + \frac{1}{m}\calT_\btheta^{(t)}(\nu_{it})\bigg(\frac{\mathrm{d}\calT_\btheta^{(t)}}{\mathrm{d}\nu_t}(\nu_{it})\bigg)\bigg(\frac{\partial\calT_\bz}{\partial\bx\transpose}(\bx_i)\bigg)\bJ\transpose
  \bG_{0in}^{(t)}\bkappa(\calT_\bz(\bx_i))
\bigg\|_2\\
&\quad = \Optilde\bigg\{\frac{(\log n)^\xi}{m\sqrt{n}} + \frac{\eps_n}{m}\bigg\}.
\end{align*}
The proof is then completed by the Lipschitz continuity of $\calT_\btheta$, $\partial\calT_\btheta/\partial\bnu\transpose$, and $\partial\calT_\bz/\partial\bx\transpose$. 
\end{proof}

\begin{lemma}\label{lemma:VB_lemma}
Suppose the conditions of Theorem \ref{thm:BvM} and Assumption \ref{assumption:prior_transform} hold. 
Denote by $\widehat{\bx}_i = \calT_\bz^{-1}(\widehat{\bz}_i^*)$ and $\widehat{\bnu}_i = \calT_\btheta^{-1}(\widehat{\btheta}_i)$. Then,
\begin{align*}
D_{\mathrm{KL}}\bigg(\phi\bigg(\begin{bmatrix}\bx_i\\\bnu_i\end{bmatrix}\mathrel{\bigg|}\begin{bmatrix}\widehat{\bx}_i\\\widehat{\bnu}_i\end{bmatrix}, \frac{1}{{mn}}(\bGamma_{in}^*)^{-1}\bigg)\bigg\|\pi_{(\bx_i, \bnu_i)}(\bx_i, \bnu_i\mid\mathbb{A})\bigg) = \Optilde\bigg\{\frac{\sqrt{m}(\log n)^{3\xi}}{\sqrt{n}}\bigg\}.
\end{align*}
\end{lemma}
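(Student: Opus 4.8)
The plan is to expand the Kullback--Leibler divergence explicitly and reduce it, via a second-order Taylor expansion of $\calL_{in}$ at its maximizer together with a Laplace approximation of the normalizing constant, to a collection of residual terms that cancel to leading order. Write $\bw_i = (\bx_i,\bnu_i)$, $\widehat{\bw}_i = (\widehat{\bx}_i,\widehat{\bnu}_i)$, $\bdelta_i = \bw_i - \widehat{\bw}_i$, let $q_{0i}$ denote the reference Gaussian density $\phi(\,\cdot\mid\widehat{\bw}_i,\tfrac{1}{mn}(\bGamma_{in}^*)^{-1})$ appearing in the statement, and let $\pi_i := \pi_{(\bx_i,\bnu_i)}(\,\cdot\mid\mathbb{A})$. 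Since $\pi_i(\bw_i) = Z_{in}^{-1}\exp\{\calL_{in}(\bw_i)\}\pi_\bx(\bx_i)\pi_\bnu(\bnu_i)$ with $Z_{in} = \iint \exp\{\calL_{in}\}\pi_\bx\pi_\bnu$, the divergence decomposes as
\begin{align*}
D_{\mathrm{KL}}(q_{0i}\,\|\,\pi_i) = -H(q_{0i}) - \expect_{q_{0i}}[\calL_{in}(\bw_i)] - \expect_{q_{0i}}[\log\pi_\bx(\bx_i) + \log\pi_\bnu(\bnu_i)] + \log Z_{in},
\end{align*}
where $H(q_{0i})$ is the Gaussian entropy.

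First I would Taylor-expand $\calL_{in}$ around $\widehat{\bw}_i$. By Lemma~\ref{lemma:aggregated_MLE_theory} the gradient vanishes there, so $\calL_{in}(\bw_i) - \calL_{in}(\widehat{\bw}_i) = \tfrac12\bdelta_i^\transpose\mathscr{H}_{\calL_{in}}(\widehat{\bw}_i)\bdelta_i + R_{3i}$, and by Lemma~\ref{lemma:Hessian_local_convergence_transform} the Hessian at $\widehat{\bw}_i$ equals $-mn\,\bGamma_{in}^*$ up to a controlled error. Substituting, the quadratic part of $\expect_{q_{0i}}[\calL_{in}]$ combines with $H(q_{0i})$ and with the Laplace approximation of $\log Z_{in}$, which I would import from the region-splitting analysis already carried out in the proof of Theorem~\ref{thm:BvM} (the constant $Z_{in}$ is exactly the normalizer $\calD_{in}$ there, up to Jacobian factors from $\calT_\bz,\calT_\btheta$). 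The leading pieces --- the $\log\det$ of the entropy against that of the Laplace expansion, and $\calL_{in}(\widehat{\bw}_i)$ together with the priors evaluated at $\widehat{\bw}_i$ --- cancel, and the $\pm(d+m-1)/2$ terms cancel as well, leaving three residual contributions: (i) the Hessian-mismatch trace $\tfrac12\mathrm{tr}\{(\tfrac1{mn}\mathscr{H}_{\calL_{in}}(\widehat{\bw}_i)+\bGamma_{in}^*)(\bGamma_{in}^*)^{-1}\}$; (ii) the third-order remainder $\expect_{q_{0i}}[R_{3i}]$; and (iii) the prior fluctuation $\expect_{q_{0i}}[\log\pi_\bx+\log\pi_\bnu] - \log\pi_\bx(\widehat{\bx}_i) - \log\pi_\bnu(\widehat{\bnu}_i)$, plus the Laplace error in $\log Z_{in}$.

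Next I would bound each residual. Term (iii) is easiest: expanding each $\log\pi_\nu(\nu_{it})$ to second order and using Assumption~\ref{assumption:prior_transform} with the per-coordinate variance $\expect_{q_{0i}}[(\nu_{it}-\widehat{\nu}_{it})^2]\asymp 1/n$ gives $\Optilde(m/n)$, which is within budget since $m\le n$. For term (i), the $\bx\bx$ block contributes $\Optilde\{(\log n)^\xi/\sqrt{n}\}$ directly from Lemma~\ref{lemma:Hessian_local_convergence_transform}; the delicate part is the $\bnu\bnu$ block, which is diagonal with per-layer entries of order $(\log n)^\xi/(m\sqrt{n})$ paired against diagonal entries of $(\bGamma_{in}^*)^{-1}$ of order $m$. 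A spectral bound would yield only $\Optilde(m/\sqrt{n})$; the improvement to the required $\Optilde(\sqrt{m}/\sqrt{n})$ comes from the fact that the per-layer Hessian fluctuations are mean-zero and \emph{independent across layers} (each is driven by the layer-specific noise $(E_{ij}^{(t)})_j$), so the trace is a sum of $m$ independent centered terms and concentrates at rate $\sqrt{m}$.

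The third-order remainder (ii) will be the main obstacle. Writing $R_{3i}$ in integral form and isolating the exact cubic $\tfrac16\sum_{k,l,p}\partial^3_{klp}\calL_{in}(\widehat{\bw}_i)\,\delta_{ik}\delta_{il}\delta_{ip}$, whose $q_{0i}$-expectation vanishes by the symmetry of the centered Gaussian, reduces matters to the genuine fourth-order piece. Two difficulties arise. The $\bnu$-directional covariance is of order $1/n$ per coordinate rather than $1/(mn)$, so summing the layer-diagonal third derivatives (each of size $\Optilde(n)$ by Lemma~\ref{lemma:Lipschitz_Hessian}) over the $m$ layers must again go through cross-layer independence and the vanishing odd moments to avoid an $m/\sqrt{n}$ loss. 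Moreover, since $\eta$ is only three times continuously differentiable (Assumption~\ref{assumption:likelihood}(b)), the fourth-order remainder cannot be controlled by a crude fourth-derivative bound; instead I would exploit that the terms carrying $\eta'''$ are multiplied by mean-zero factors $A_{ij}^{(t)}-P_{ij}^{(t)}$, which concentrate when summed over $j$, so that their contribution is governed by the noise-concentration estimates (Result~\ref{result:preliminary} and Lemma~3.7 of \cite{10.1214/11-AOP734}) together with the uniform continuity of $\eta'''$, while the remaining terms involve only the Lipschitz functions $\eta',\eta''$. Finally, I would verify that the tail regions are negligible: off an $\Optilde\{(\log n)^\xi/\sqrt{n}\}$-neighborhood of $\widehat{\bw}_i$ the density $q_{0i}$ is exponentially small, and $q_{0i}\log(q_{0i}/\pi_i)$ is controlled using the global identifiability bound~\eqref{eqn:identifiability} from the proof of Theorem~\ref{thm:BvM}, which prevents $\pi_i$ from decaying faster than $q_{0i}$ in a way that would inflate the divergence. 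Assembling (i)--(iii) and the tail estimate yields the stated rate $\Optilde\{\sqrt{m}(\log n)^{3\xi}/\sqrt{n}\}$.
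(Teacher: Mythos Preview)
Your route differs substantially from the paper's. The paper does not decompose the KL into entropy, expected log-likelihood, and log-normalizer; instead it writes the pointwise log-ratio $\log(q_{0i}/\pi_i)$, inserts the Laplace estimate for the normalizer (read off directly from the proof of Theorem~\ref{thm:BvM}), expands $\calL_{in}(\widehat{\bx}_i,\widehat{\bnu}_i)-\calL_{in}(\bx_i,\bnu_i)$ to second order with a mean-value Hessian, and splits the integration domain into the ball $\calB_{in}=\{\sqrt{mn}(\|\bx_i-\widehat{\bx}_i\|_2+\|\bnu_i-\widehat{\bnu}_i\|_\infty)\le(\log n)^\xi\}$ and its complement. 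On $\calB_{in}$ a sup bound via Lemma~\ref{lemma:Hessian_local_convergence_transform} suffices; on $\calB_{in}^c$ the block-wise Lipschitz estimates of Lemma~\ref{lemma:Lipschitz_Hessian} are integrated against $q_{0i}$. Crucially, the paper never forms the trace $\mathrm{tr}\{(\tfrac{1}{mn}\mathscr{H}_{\calL_{in}}+\bGamma_{in}^*)(\bGamma_{in}^*)^{-1}\}$ and never invokes cross-layer independence or odd-moment vanishing --- the sup on the small ball absorbs the layer-wise structure in one stroke, at the cost of the region split. Your expectation-based route trades that split for these probabilistic cancellations.

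Your approach is more structural, but the independence step for term~(i) is where it breaks as written. The diagonal fluctuations $\tfrac{1}{mn}\partial^2_{\nu_{it}^2}\calL_{in}(\widehat{\bx}_i,\widehat{\bnu}_i)+[\bGamma_{in}^*]_{tt}$ are \emph{not} independent across $t$: the evaluation point $(\widehat{\bx}_i,\widehat{\bnu}_i)$ depends on data from all layers, and the preliminary quantities $\widetilde{\by}_j^{(t)}$ inside $\calL_{in}$ carry the aggregated estimator $\widetilde{\bZ}$ and hence noise from every layer (witness the cross-layer term $\bdelta_{ja}^{(s,t)}$ in Lemma~\ref{lemma:Y_expansion}). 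Only the leading piece $\tfrac{1}{mn}\sum_j E_{ij}^{(t)}\psi(P_{0ij}^{(t)})(\bz_{0i}^\transpose\by_{0j}^{(t)})^2$ is genuinely layer-$t$-local. To salvage the argument you must first recenter at the true parameters, isolate this independent piece (which does concentrate at the $\sqrt{m}$ rate when summed over $t$), and then verify separately that the layer-coupled corrections --- of size $\Optilde((\log n)^{2\xi}/(mn))$ per entry --- stay below target even under a crude $m$-fold sum. The same caveat applies in~(ii): the odd-moment cancellation is valid for the exact cubic at $(\widehat{\bx}_i,\widehat{\bnu}_i)$, but the residual difference of third derivatives at the mean-value point versus at $(\widehat{\bx}_i,\widehat{\bnu}_i)$ is again layer-coupled through the preliminary estimators, so the ``independence plus concentration'' bookkeeping must be redone there too. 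These gaps are repairable with the extra decomposition, but the bare independence claim does not stand.
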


\begin{proof}[\bf Proof]
Recall that the proof of Theorem \ref{thm:BvM} implies
\begin{align*}
\bigg|\frac{\iint_{\calS^{d - 1}\times\calK^m} \pi_{\bz^*}(\bz_i^*)\prod_{t = 1}^m\pi_\theta(\theta_i^{(t)})\exp\{\widetilde{\ell}_{in}(\bz_i^*, \btheta_i) - \widetilde{\ell}_{in}(\widehat{\bz}_i^*, \widehat{\btheta}_i)\}\mathrm{d}\bz_i^*\mathrm{d}\btheta_i }{\pi_{\bz^*}(\bz_{0i}^*)\prod_{t = 1}^m\pi_\theta(\theta_{0i}^{(t)}) \det\{2\pi(mn\bGamma_{in})^{-1}\} } - 1\bigg| = \Optilde\bigg\{\frac{\sqrt{m}(\log n)^{3\xi}}{\sqrt{n}}\bigg\}.
\end{align*}  
Since the change of variable on $\bz_i^* = \calT_\bz(\bx_i)$, $\btheta_i = \calT_\btheta(\bnu_i)$ leaves the marginal likelihood invariant, this immediately implies that
\begin{align*}
\bigg|\frac{\iint_{\mathbb{R}^{d - 1}\times\mathbb{R}^m} \pi_{\bx}(\bx_i)\prod_{t = 1}^m\pi_\nu(\nu_{it})\exp\{\calL_{in}(\bx_i, \bnu_i) - \calL_{in}(\widehat{\bx}_i, \widehat{\bnu}_i)\}\mathrm{d}\bx_i\mathrm{d}\bnu_i }{\pi_{\bx}(\bx_{0i})\prod_{t = 1}^m\pi_\nu(\nu_{0it}) \det\{2\pi(mn\bGamma_{in}^*)^{-1}\} } - 1\bigg| = \Optilde\bigg\{\frac{\sqrt{m}(\log n)^{3\xi}}{\sqrt{n}}\bigg\}.
\end{align*}  
Now by the definition of the KL-divergence, we have
\begin{align*}
&D_{\mathrm{KL}}\bigg(\phi\bigg(\begin{bmatrix}\bx_i\\\bnu_i\end{bmatrix}\mathrel{\bigg|}\begin{bmatrix}\widehat{\bx}_i\\\widehat{\bnu}_i\end{bmatrix}, \frac{1}{{mn}}\bGamma_{in}^{-1}\bigg)\bigg\|\pi_{(\bx_i, \bnu_i)}(\bx_i, \bnu_i\mid\mathbb{A})\bigg)\\
&\quad = \bigg|\iint_{\mathbb{R}^{d - 1}\times\mathbb{R}^m}
\log\frac{\phi([\bx_i\transpose,\bnu_i\transpose]\transpose\mid[\widehat{\bx}_i\transpose,\widehat{\bnu}_i\transpose]\transpose,(mn\bGamma_{in}^*)^{-1})}
  {\pi_{(\bx_i, \bnu_i)}(\bx_i, \bnu_i)}
  \phi\bigg(\begin{bmatrix}\bx_i\\\bnu_i\end{bmatrix}\mathrel{\bigg|}\begin{bmatrix}\widehat{\bx}_i\\\widehat{\bnu}_i\end{bmatrix}, \frac{1}{{mn}}\bGamma_{in}^{-1}\bigg)
\mathrm{d}\bx_i\mathrm{d}\bnu_i\bigg|.
\end{align*}
By Taylor's theorem, there exists some $\bar{\bx}_i,\bx_i'$ between $\bx_i$ and $\widehat{\bx}_i$, and $\bar{\bnu}_i,\bnu_i'$ between $\bnu_i$ and $\widehat{\bnu}_i$, such that
\begin{align*}
&\log\frac{\phi([\bx_i\transpose,\bnu_i\transpose]\transpose\mid[\widehat{\bx}_i\transpose,\widehat{\bnu}_i\transpose]\transpose,(mn\bGamma_{in}^*)^{-1})}
  {\pi_{(\bx_i, \bnu_i)}(\bx_i, \bnu_i\mid\mathbb{A})}\\
&\quad = \calL_{in}(\widehat{\bx}_i, \widehat{\bnu}_i) - \calL_{in}(\bx_i, \bnu_i) - \frac{mn}{2}\begin{bmatrix}
  \bx_i\transpose - \widehat{\bx}_i\transpose & 
  \bnu_i\transpose - \widehat{\bnu}_i\transpose
\end{bmatrix}\bGamma_{in}^*\begin{bmatrix}
  \bx_i - \widehat{\bx}_i\\
  \bnu_i - \widehat{\bnu}_i
\end{bmatrix}\\
&\qquad + \log\bigg(1 + \Optilde\bigg\{\frac{\sqrt{m}(\log n)^{3\xi}}{\sqrt{n}}\bigg\}\bigg)
 + \log\pi_\bx(\bx_{0i}) + \sum_{t = 1}^m\log\pi_\nu(\nu_{0it}) - \log\pi_\bx(\bx_i)\\
&\qquad - \sum_{t = 1}^m\log\pi_\nu(\nu_{it})\\
&\quad = -\frac{mn}{2}\begin{bmatrix}
  \bx_i\transpose - \widehat{\bx}_i\transpose & 
  \bnu_i\transpose - \widehat{\bnu}_i\transpose
\end{bmatrix}\{\mathscr{H}_{\calL_{in}}(\bar{\bx}_i, \bar{\bnu}_i) + \bGamma_{in}^*\}\begin{bmatrix}
  \bx_i - \widehat{\bx}_i\\
  \bnu_i - \widehat{\bnu}_i
\end{bmatrix} + \log\bigg(1 + \Optilde\bigg\{\frac{\sqrt{m}(\log n)^{3\xi}}{\sqrt{n}}\bigg\}\bigg)\\ 
&\qquad - \frac{\partial}{\partial\bx_i\transpose}\log\pi_\bx(\bx_i')(\bx_i - \bx_{0i}) - \sum_{t = 1}^m\frac{\mathrm{d}}{\mathrm{d}\nu_{it}}\log\pi_\nu(\nu_{it}')(\nu_{it} - \nu_{0it}).
\end{align*}
Now we consider the event
\begin{align*}
\calB_{in} = \{(\bx_i, \bnu_i)\in\mathbb{R}^{d - 1}\times\mathbb{R}^m:\sqrt{mn}(\|\bx_i - \widehat{\bx}_i\|_2 + \|\bnu_i - \widehat{\bnu}_i\|_\infty)\leq(\log n)^\xi\}.
\end{align*}
Clearly, by Lemma \ref{lemma:aggregated_MLE_theory} by the Lipschitz property of $\calT_\bz$ and $\calT_\btheta$ and, for any $(\bx_i, \bnu_i)\in\calB_{in}$, 
\begin{align*}
\|\bx_i - \bx_{0i}\|_2&\leq \|\bx_i - \widehat{\bx}_i\|_2 + \|\widehat{\bx}_i - \bx_{0i}\|_2
\leq \frac{(\log n)^\xi}{\sqrt{mn}} + \sup_{\bx\in\mathbb{R}^{d - 1}}\bigg\|\frac{\partial\calT_\bz(\bx)}{\partial\bx\transpose}\bigg\|_2\|\widehat{\bz}_i^* - \bz_{0i}^*\|_2\\
&\leq \frac{C(\log n)^\xi}{\sqrt{mn}}\quad\text{w.h.p..},\\
\|\bnu_i - \bnu_{0i}\|_\infty
&\leq \|\bnu_i - \widehat{\bnu}_i\|_\infty + \|\widehat{\bnu}_i - \bnu_{0i}\|_\infty
\leq \frac{(\log n)^\xi}{\sqrt{mn}} + \sup_{\bnu\in\mathbb{R}^{m}}\bigg\|\frac{\partial\calT_\btheta(\btheta)}{\partial\btheta\transpose}\bigg\|_\infty\|\widehat{\btheta}_i - \btheta_{0i}\|_\infty\\
&\leq \frac{C(\log n)^\xi}{\sqrt{mn}}\quad\text{w.h.p..}
\end{align*}
Similarly, the same high-probability error bound also holds for $\|\bar{\bx}_i - \bx_{0i}\|_2$, $\|\bx_i' - \bx_{0i}\|_2$, $\|\bar{\bnu}_i - \bnu_{0i}\|_2$, and $\|\bnu_i' - \bnu_{0i}\|_2$. Then by Lemma \ref{lemma:Hessian_local_convergence_transform} and Assumption \ref{assumption:prior_transform}, 
\begin{align*}
&\bigg|\iint_{\calB_{in}}\log\frac{\phi([\bx_i\transpose,\bnu_i\transpose]\transpose\mid[\widehat{\bx}_i\transpose,\widehat{\bnu}_i\transpose]\transpose,(mn\bGamma_{in}^*)^{-1})}
  {\pi_{(\bx_i, \bnu_i)}(\bx_i, \bnu_i\mid\mathbb{A})}
  \phi\bigg(\begin{bmatrix}\bx_i\\\bnu_i\end{bmatrix}\mathrel{\bigg|}\begin{bmatrix}\widehat{\bx}_i\\\widehat{\bnu}_i\end{bmatrix}, \frac{1}{{mn}}\bGamma_{in}^{-1}\bigg)\mathrm{d}\bx_i\mathrm{d}\nu_i
  \bigg|\\
&\quad\leq \sup_{(\bx_i,\bnu_i)\in\calB_{in}}\bigg|\log\frac{\phi([\bx_i\transpose,\bnu_i\transpose]\transpose\mid[\widehat{\bx}_i\transpose,\widehat{\bnu}_i\transpose]\transpose,(mn\bGamma_{in}^*)^{-1})}
  {\pi_{(\bx_i, \bnu_i)}(\bx_i, \bnu_i\mid\mathbb{A})}\bigg|\\
&\quad\leq \Optilde\bigg\{\frac{(\log n)^\xi}{\sqrt{n}}\bigg\}(\|\bx_i - \widehat{\bx}_i\|_2^2 + \|\bnu_i - \widehat{\bnu}_i\|_2^2) + \Optilde\bigg\{\frac{\sqrt{m}(\log n)^{3\xi}}{\sqrt{n}}\bigg\}\\
&\qquad + C\|\bx_i - \bx_{0i}\|_2 + Cm\|\bnu_i - \bnu_{0i}\|_\infty\\ 
&\quad = \Optilde\bigg\{\frac{\sqrt{m}(\log n)^{3\xi}}{\sqrt{n}}\bigg\}.
\end{align*}
Now over $\calB_{in}^c$, we know that
\begin{align*}
\|\bar\bx_i - \widehat{\bx}_i\|_2
&\leq \|\bar\bx_i - \bx_{0i}\|_2 + \|\widehat{\bx}_i - \bx_{0i}\|_2\leq \|\bx_i - \bx_{0i}\|_2 + \|\widehat{\bx}_i - \bx_{0i}\|_2\leq 2\|\bx_i - \widehat{\bx}_i\|_2 + \|\widehat{\bx}_i - \bx_{0i}\|_2,\\
&\|\bar\bnu_i - \widehat{\bnu}_i\|_\infty\\
&\leq \|\bar\bnu_i - \bnu_{0i}\|_\infty + \|\widehat{\bnu}_i - \bnu_{0i}\|_\infty\leq \|\bnu_i - \bnu_{0i}\|_\infty + \|\widehat{\bnu}_i - \bnu_{0i}\|_\infty\leq 2\|\bnu_i - \widehat{\bnu}_i\|_\infty + \|\widehat{\bnu}_i - \bnu_{0i}\|_\infty.
\end{align*}
By Lemma \ref{lemma:Lipschitz_Hessian} and Lemma \ref{lemma:Hessian_local_convergence_transform}, we obtain
\begin{align*}
&\bigg|\frac{mn}{2}\begin{bmatrix}
  \bx_i\transpose - \widehat{\bx}_i\transpose & 
  \bnu_i\transpose - \widehat{\bnu}_i\transpose
\end{bmatrix}\{\mathscr{H}_{\calL_{in}}(\bar{\bx}_i, \bar{\bnu}_i) + \bGamma_{in}^*\}\begin{bmatrix}
  \bx_i - \widehat{\bx}_i\\
  \bnu_i - \widehat{\bnu}_i
\end{bmatrix}\bigg|\\
&\quad\leq \bigg|\frac{mn}{2}\begin{bmatrix}
  \bx_i\transpose - \widehat{\bx}_i\transpose & 
  \bnu_i\transpose - \widehat{\bnu}_i\transpose
\end{bmatrix}\{\mathscr{H}_{\calL_{in}}(\widehat{\bx}_i, \widehat{\bnu}_i) + \bGamma_{in}^*\}\begin{bmatrix}
  \bx_i - \widehat{\bx}_i\\
  \bnu_i - \widehat{\bnu}_i
\end{bmatrix}\bigg|\\
&\qquad + \bigg|\frac{mn}{2}\begin{bmatrix}
  \bx_i\transpose - \widehat{\bx}_i\transpose & 
  \bnu_i\transpose - \widehat{\bnu}_i\transpose
\end{bmatrix}\{\mathscr{H}_{\calL_{in}}(\widehat{\bx}_i, \widehat{\bnu}_i) - \mathscr{H}_{\calL_{in}}(\bar{\bx}_i, \bar{\bnu}_i)\}\begin{bmatrix}
  \bx_i - \widehat{\bx}_i\\
  \bnu_i - \widehat{\bnu}_i
\end{bmatrix}\bigg|\\
&\quad\leq \Optilde\bigg\{\frac{(\log n)^\xi}{\sqrt{n}}\bigg\}mn(\|\bx_i - \widehat{\bx}_i\|_2^2 + \|\bnu_i - \widehat{\bnu}_i\|_\infty^2)\\
&\qquad + \Optilde(mn)(\|\widehat{\bx}_i - \bx_i\|_2^2 + \|\bnu_i - \widehat{\bnu}_i\|_\infty^2)(\|\bar{\bx}_i - \widehat{\bx}_i\|_2 + \|\bar{\bnu}_i - \widehat{\bnu}_i\|_\infty)\\
&\quad\leq \Optilde\bigg\{\frac{(\log n)^\xi}{\sqrt{n}}\bigg\}mn(\|\bx_i - \widehat{\bx}_i\|_2^2 + \|\bnu_i - \widehat{\bnu}_i\|_\infty^2)\\
&\qquad + \Optilde(mn)(\|\widehat{\bx}_i - \bx_i\|_2^2 + \|\bnu_i - \widehat{\bnu}_i\|_\infty^2)\bigg[\|\bx_i - \widehat{\bx}_i\|_2 + \|\bnu_i - \widehat{\bnu}_i\|_\infty + \Optilde\bigg\{\frac{(\log n)^\xi}{\sqrt{mn}}\bigg\}\bigg]\\
&\quad\leq \Optilde\bigg\{\frac{(\log n)^\xi}{\sqrt{n}}\bigg\}mn(\|\bx_i - \widehat{\bx}_i\|_2^2 + \|\bnu_i - \widehat{\bnu}_i\|_\infty^2) + \Optilde(mn)(\|\widehat{\bx}_i - \bx_i\|_2^3 + \|\bnu_i - \widehat{\bnu}_i\|_\infty^3).
\end{align*}
Also, by Assumption \ref{assumption:prior_transform}, it is clear that
\begin{align*}
&\bigg|\frac{\partial}{\partial\bx_i\transpose}\log\pi_\bx(\bx_i')(\bx_i - \bx_{0i})\bigg| + \bigg|\sum_{t = 1}^m\frac{\mathrm{d}}{\mathrm{d}\nu_{it}}\log\pi_\nu(\nu_{it}')(\nu_{it} - \nu_{0it})\bigg|\\
&\quad\leq C(\|\bx_i - \bx_{0i}\|_2 + m\|\bnu_i - \bnu_{0i}\|_\infty)\leq C(\|\bx_i - \widehat{\bx}_{i}\|_2 + m\|\bnu_i - \widehat{\bnu}_{i}\|_\infty) + \Optilde\bigg\{\frac{\sqrt{m}(\log n)^\xi}{\sqrt{n}}\bigg\}. 
\end{align*}
Note that the elments of $\sqrt{mn}(\bx_i - \widehat{\bx}_i)$ and $\sqrt{mn}(\bnu_i - \widehat{\bnu}_i)$ are mean-zero Gaussian random variables with bounded variance under the integrating distribution. Therefore, 
\begin{align*}
&\bigg|\iint_{\calB_{in}^c}\log\frac{\phi([\bx_i\transpose,\bnu_i\transpose]\transpose\mid[\widehat{\bx}_i\transpose,\widehat{\bnu}_i\transpose]\transpose,(mn\bGamma_{in}^*)^{-1})}
  {\pi_{(\bx_i, \bnu_i)}(\bx_i, \bnu_i\mid\mathbb{A})}
  \phi\bigg(\begin{bmatrix}\bx_i\\\bnu_i\end{bmatrix}\mathrel{\bigg|}\begin{bmatrix}\widehat{\bx}_i\\\widehat{\bnu}_i\end{bmatrix}, \frac{1}{{mn}}\bGamma_{in}^{-1}\bigg)
  \mathrm{d}\bx_i\mathrm{d}\nu_i
  \bigg|\\
&\quad\leq \iint_{\calB_{in}^c}
\bigg[C(\|\bx_i - \widehat{\bx}_i\|_2 + m\|\bnu_i - \widehat{\bnu}_i\|_\infty) + \Optilde\bigg\{\frac{(\log n)^\xi}{\sqrt{n}}\bigg\}mn(\|\bx_i - \widehat{\bx}_i\|_2^2 + \|\bnu_i - \widehat{\bnu}_i\|_\infty^2)\\
&\qquad + \Optilde(mn)(\|\bx_i - \widehat{\bx}_i\|_2^3 + \|\bnu_i - \widehat{\bnu}_i\|_\infty^3) + \Optilde\bigg\{\frac{\sqrt{m}(\log n)^{3\xi}}{\sqrt{n}}\bigg\}\bigg]
\phi\bigg(\begin{bmatrix}\bx_i\\\bnu_i\end{bmatrix}\mathrel{\bigg|}\begin{bmatrix}\widehat{\bx}_i\\\widehat{\bnu}_i\end{bmatrix}, \frac{1}{{mn}}\bGamma_{in}^{-1}\bigg)\mathrm{d}\bx_i\mathrm{d}\bnu_i\\
&\quad = \Optilde\bigg\{\frac{\sqrt{m}(\log n)^{3\xi}}{\sqrt{n}}\bigg\}.
\end{align*}
The proof is thus completed.
\end{proof}
We are now in a position to prove Theorem \ref{thm:variational_BvM}. 
\begin{proof}[\bf Proof of Theorem \ref{thm:variational_BvM}]
By Pinsker's inequality, Lemma \ref{lemma:VB_lemma}, and the definition of $(\bmu_i,\bSigma_i)$, we have
\begin{align*}
&\iint_{\mathbb{R}^{d - 1}\times\mathbb{R}^m}\bigg|\pi_{(\bx_i, \bnu_i)}(\bx_i, \bnu_i\mid\mathbb{A}) - 
  \phi\bigg(\begin{bmatrix}\bx_i\\\bnu_i\end{bmatrix}\mathrel{\bigg|}
  \bmu_i^\star, \bSigma_{in}^\star\bigg)
  \bigg|\mathrm{d}\bx_i\mathrm{d}\bnu_i\\
&\quad\leq D_{\mathrm{KL}}\bigg(\phi\bigg(\begin{bmatrix}\bx_i\\\bnu_i\end{bmatrix}\mathrel{\bigg|}
  \bmu_i^\star, \bSigma_{in}^\star\bigg)\bigg\|\pi_{(\bx_i, \bnu_i)}(\bx_i, \bnu_i\mid\mathbb{A})\bigg)^{1/2}\\
&\quad\leq D_{\mathrm{KL}}\bigg(\phi\bigg(\begin{bmatrix}\bx_i\\\bnu_i\end{bmatrix}\mathrel{\bigg|}
  \begin{bmatrix}\widehat{\bx}_i\\\widehat{\bnu}_i\end{bmatrix}, \frac{1}{mn}(\bGamma_{in}^*)^{-1}\bigg)\bigg\|\pi_{(\bx_i, \bnu_i)}(\bx_i, \bnu_i\mid\mathbb{A})\bigg)^{1/2} = \Optilde\bigg\{\frac{m^{1/4}(\log n)^{3\xi/2}}{n^{1/4}}\bigg\}.
\end{align*}
Then, by triangle inequality and Theorem \ref{thm:BvM}, we further have
\begin{align*}
&\iint_{\mathbb{R}^{d - 1}\times\mathbb{R}^m}\bigg|\phi\bigg(\begin{bmatrix}\bx_i\\\bnu_i\end{bmatrix}\mathrel{\bigg|}
  \bmu_i^\star, \bSigma_{in}^\star\bigg)
   - 
  \phi\bigg(\begin{bmatrix}\bx_i\\\bnu_i\end{bmatrix}\mathrel{\bigg|}
  \begin{bmatrix}\widehat{\bx}_i\\\widehat{\bnu}_i\end{bmatrix}, \frac{1}{mn}(\bGamma_{in}^*)^{-1}\bigg)
  \bigg|\mathrm{d}\bx_i\mathrm{d}\bnu_i\\
&\quad
\leq \iint_{\mathbb{R}^{d - 1}\times\mathbb{R}^m}\bigg|\pi_{(\bx_i, \bnu_i)}(\bx_i, \bnu_i\mid\mathbb{A}) - 
  \phi\bigg(\begin{bmatrix}\bx_i\\\bnu_i\end{bmatrix}\mathrel{\bigg|}
  \begin{bmatrix}\widehat{\bx}_i\\\widehat{\bnu}_i\end{bmatrix}, \frac{1}{mn}(\bGamma_{in}^*)^{-1}\bigg)
  \bigg|\mathrm{d}\bx_i\mathrm{d}\bnu_i\\
&\qquad + \iint_{\mathbb{R}^{d - 1}\times\mathbb{R}^m}\bigg|\pi_{(\bx_i, \bnu_i)}(\bx_i, \bnu_i\mid\mathbb{A}) - 
  \phi\bigg(\begin{bmatrix}\bx_i\\\bnu_i\end{bmatrix}\mathrel{\bigg|}
  \bmu_i^\star, \bSigma_{in}^\star\bigg)
  \bigg|\mathrm{d}\bx_i\mathrm{d}\bnu_i\\
&\quad = \Optilde\bigg\{\frac{m^{1/4}(\log n)^{3\xi/2}}{n^{1/4}}\bigg\}.
\end{align*}
For the second assertion, it is sufficient to show that $\sqrt{mn}(\widehat{\bx}_i - \bx_i^*) = o_p(1)$ by delta method. We prove it using the characteristic function. 
Let $\varphi_{in}(\bt)$ denote the characteristic function of the marginal distribution of $\bx_i$ induced from the Bernstein-von Mises limit distribution $[\bx_i\transpose, \bnu_i\transpose]\transpose\sim\mathrm{N}([\widehat{\bx}_i\transpose, \widehat{\bnu}_i\transpose]\transpose, (mn\bGamma_{in}^*)^{-1})$, 
and $\varphi_{in}^*$ denote the characteristic function of the marginal distribution of $\bx_i$ induced from the Gaussian VB distribution $[\bx_i\transpose,\bnu_i\transpose]\transpose\sim\mathrm{N}(\bmu_i^\star, \bSigma_{in}^\star)$. 
Since $|e^{\mathbbm{i}x}| = 1$ for any $x\in\mathbb{R}$, where $\mathbbm{i} = \sqrt{-1}$, it follows that $\sup_{\bt\in\mathbb{R}^{d - 1}}|\varphi_{in}(\bt) - \varphi_{in}^*(\bt)| = \Optilde(\tau_n)$ for some $\tau_n\to 0$ as $n\to\infty$. 
Also, because $\varphi_{in}(\cdot)$ and $\varphi_{in}^*(\cdot)$ are associated with multivariate normal distributions, it is clear that
\begin{align*}
\varphi_{in}(\bt) = \exp\bigg(\mathbbm{i}\bt\transpose\widehat{\bx}_i - \frac{1}{2mn}\bt\transpose\bGamma_{1in}^{-1}\bt\bigg),\quad
\varphi_{in}^*(\bt) = \exp\bigg(\mathbbm{i}\bt\transpose{\bx}_i^* - \frac{1}{2}\bt\transpose\bSigma_{1in}\bt\bigg),
\end{align*}
where $\bGamma_{1in}$ is some $(d - 1)\times (d - 1)$ positive definite matrix whose eigenvalues are bounded away from $0$ and $\infty$, and $\bSigma_{1in}$ is some $(d - 1)\times(d - 1)$ positive definite matrices.
Furthermore, the convergence of the characteristic functions implies
\[
\sup_{\bt\in\mathbb{R}^{d - 1}}\bigg|\exp\bigg(-\frac{1}{2mn}\bt\transpose\bGamma_{1in}^{-1}\bt\bigg) - \exp\bigg(-\frac{1}{2}\bt\transpose\bSigma_{1in}\bt\bigg)\bigg| = \Optilde(\tau_n).
\]
 For any $\bu\in\mathbb{R}^{d - 1}$ with $\|\bu\|_2 = O(1)$, we know that $(1/2)\bu\transpose\bGamma_{in}^{-1}\bu \asymp 1$. Then, 
\begin{align*}
&\exp\bigg(-\frac{1}{2mn}\bt\transpose\bGamma_{1in}^{-1}\bt\bigg)|\exp(\mathbbm{i}\bt\transpose\bx_i^*) - \exp(\mathbbm{i}\bt\transpose\widehat{\bx}_i)|\\
&\quad = \bigg|\exp\bigg(-\frac{1}{2mn}\bt\transpose\bGamma_{1in}^{-1}\bt + \mathbbm{i}\bt\transpose\bx_i^*\bigg) - \exp\bigg(-\frac{1}{2mn}\bt\transpose\bGamma_{1in}^{-1}\bt + \mathbbm{i}\bt\transpose\widehat{\bx}_i\bigg)\bigg|\\
&\quad \leq \bigg|\exp\bigg(-\frac{1}{2mn}\bt\transpose\bGamma_{1in}^{-1}\bt + \mathbbm{i}\bt\transpose\bx_i^*\bigg) - \exp\bigg(-\frac{1}{2}\bt\transpose\bSigma_{1in}\bt + \mathbbm{i}\bt\transpose{\bx}_i^*\bigg)\bigg|\\
&\qquad + \bigg|\exp\bigg(-\frac{1}{2mn}\bt\transpose\bGamma_{1in}^{-1}\bt + \mathbbm{i}\bt\transpose\widehat{\bx}_i\bigg) - \exp\bigg(-\frac{1}{2}\bt\transpose\bSigma_{1in}\bt + \mathbbm{i}\bt\transpose{\bx}_i^*\bigg)\bigg|\\
&\quad\leq \bigg|\exp\bigg(-\frac{1}{2mn}\bt\transpose\bGamma_{1in}^{-1}\bt\bigg) - \exp\bigg(-\frac{1}{2}\bt\transpose\bSigma_{1in}\bt\bigg)\bigg| + |\varphi_{in}(\bt) - \varphi_{in}^*(\bt)|. 
\end{align*}
With $\bt = \sqrt{mn}\bu$, we further obtain
\begin{align*}
|e^{\mathbbm{i}\bu\transpose\sqrt{mn}\bx_i^*} - e^{\mathbbm{i}\bu\transpose\sqrt{mn}\widehat{\bx}_i}| = \Optilde(\tau_n).
\end{align*}
Hence, 
\begin{align*}
\left|\expect_0\left\{\exp(\mathbbm{i}\sqrt{mn}(\bx_i^* - \widehat{\bx}_i))\right\} - 1\right|
& = \left|\expect_0\left\{\exp(\mathbbm{i}\sqrt{mn}\bx_i^*) - \exp(\mathbbm{i}\sqrt{mn}\widehat{\bx}_i)\right\}\exp(-\mathbbm{i}\sqrt{mn}\widehat{\bx}_i)\right|\\
& =  \left|\expect_0\left\{\exp(\mathbbm{i}\sqrt{mn}\bx_i^*) - \exp(\mathbbm{i}\sqrt{mn}\widehat{\bx}_i)\right\}\right| \to 0
\end{align*}
as $n\to\infty$. Namely, we obtain $\sqrt{mn}(\bx_i^* - \widehat{\bx}_i) = o_p(1)$, thereby completing the proof. 
\end{proof}

\end{appendices}

\bibliographystyle{abbrvnat}
\bibliography{reference_inference_MLDCMM,reference_MG}

\end{document}